\newcommand{\bra}[1]{\left\langle{#1}\right\vert}
\newcommand{\ket}[1]{\left\vert{#1}\right\rangle}
\newcommand{\qw}[1][-1]{\ar @{-} [0,#1]}
\newcommand{\gate}[1]{*{\xy *+<.6em>{#1};p\save+LU;+RU **\dir{-}\restore\save+RU;+RD **\dir{-}\restore\save+RD;+LD **\dir{-}\restore\POS+LD;+LU **\dir{-}\endxy} \qw}
\newcommand{\push}[1]{*{#1}}
\newcommand{\gategroup}[6]{\POS"#1,#2"."#3,#2"."#1,#4"."#3,#4"!C*+<#5>\frm{#6}}
\newcommand{\ustick}[1]{*!D!<0em,-.5em>=<0em>{#1}}
\newcommand{\Qcircuit}[1][0em]{\xymatrix @*[o] @*=<#1>}
\DeclareMathAlphabet{\mathpzc}{OT1}{pzc}{m}{it}
\newtheorem{theorem}{Theorem}[chapter]
\newtheorem{definition}{Definition}[chapter]
\newtheorem{proposition}{Proposition}
\newtheorem{corollary}{Corollary}[theorem]
\newtheorem{lemma}{Lemma}[chapter]
\newtheorem{conjecture}{Conjecture}[section]
\newtheorem{problem}{Problem}
\newenvironment{principle}{\vspace{0.3cm}\begin{it}}{\end{it}\vspace{0.3cm}}
\def\openone{\mathds{1}}
\def\defeq{\mathrel{\mathop:}=}
\def\defeqinv{\mathrel=\!\!{\mathop:}}
\def\brho{\overline{\rho}}
\definecolor{gray}{rgb}{0.35,0.35,0.35}
\definecolor{graytable}{rgb}{0.45,0.45,0.45}
\DeclareMathAlphabet\EuScript{U}{eus}{m}{n}
\SetMathAlphabet\EuScript{bold}{U}{eus}{b}{n}
\newcommand{\Y}{\EuScript{Y}}
\newcommand{\X}{\EuScript{X}}
\newcommand{\A}{\EuScript{A}}
\newcommand{\F}{\EuScript{F}}
\newcommand{\Fn}{\EuScript{F}_N}
\newcommand{\D}{\EuScript{D}}
\renewcommand{\O}{\EuScript{O}}
\renewcommand{\H}{\EuScript{H}}
\DeclareMathOperator{\tr}{Tr}
\newcommand {\etal} {\emph{et~al.~}}
\def\braa#1{\mathinner{\langle{#1}|}}
\def\kett#1{\mathinner{|{#1}\rangle}}
\def\aver#1{\mathinner{\langle{#1}\rangle}}
\definecolor{umbra}{rgb}{0.8,0.8,0.5}
\definecolor{gray}{rgb}{0.85,0.85,0.85}
\def\colCell#1#2{\multicolumn{1}{>{\columncolor{#1}}c}{#2}}
\newcommand\TT{\rule{0pt}{3.4ex}}
\newcommand\BB{\rule[-2ex]{0pt}{0pt}}
\newif\ifpdf
\begin{document}

\frontmatter

\title{Optimal Control of Finite Dimensional Quantum Systems}
\author{Paulo Eduardo Marques Furtado de Mendon\c{c}a}
\department{Department of Physics}  
\principaladvisor{Dr. Andrew C. Doherty}

\titlepage

\chapter{List of Publications}

\textbf{Publications by the Author Incorporated into the Thesis}

\begin{itemize}
    \item Paulo E. M. F. Mendon\c ca, Reginaldo~d.~J.~Napolitano, Marcelo A. Marchiolli, Christopher~J.~Foster and Yeong-Cherng~Liang, {\em Alternative fidelity measure between quantum states}. Physical Review A {\bf 78}, 052330 (2008).

        Incorporated as Section~\ref{sec:altfid}.

    \item Agata M. Bra\'nczyk, Paulo E. M. F. Mendon\c ca, Alexei Gilchrist, Andrew C. Doherty and Stephen D. Bartlett, {\em Quantum control of a single qubit}. Physical Review A {\bf 75}, 012329 (2007).

        Incorporated as Chapter~\ref{chap:aggie}.

    \item Paulo E. M. F. Mendon\c ca, Alexei Gilchrist and Andrew C. Doherty, {\em Optimal tracking for pairs of qubit states}.  Physical Review A {\bf 78}, 012319 (2008).

        Incorporated as Chapter~\ref{chap:tracking}.
\end{itemize}

\chapter{Acknowledgements}

This thesis is the result of four years of hard --- but not lonely --- work.  Here, I would like to express my gratitude to all of those who lent a hand and helped me to come this far.

Following the chronology of the facts, I would first like to thank my associate supervisor Dr. Stephen Bartlett. Steve enabled my PhD by accepting my application, proposing a really interesting research project, going through a lot of paper work and closely supervising me in the initial stage of the program. Unfortunately, Steve moved to Sydney after a while; fortunately, he was substituted by two other great supervisors: Dr. Andrew Doherty and Dr. Alexei Gilchrist. Eventually, Alexei also moved to Sydney, but even from distance he managed to follow my research progress. I am sincerely grateful for his encouragement, useful suggestions and the major role he played in transforming my poorly written manuscripts into readable material.

I owe my principal advisor, Dr. Andrew Doherty, a HUGE thank you! To start with, he has never moved to Sydney. Besides, Andrew marked his presence in Brisbane with endless patience, knowledge and willingness to help. Despite his busy schedule, he closely followed my research and made major contributions to it. It was from him that I learnt one of the most unmistakable lessons of my PhD: when Andrew speculates that something ``has got to be true'' and you prove it is not, you better check your proof.

I would like to thank Dr. Kurt Jacobs and Dr. Andrew Scott for kindly accepting to be on my reading committee and for providing much valuable comments and feedback on my work.

Apart from collaboration with my supervisors, I have also appreciated the opportunity of engaging joint work with Agata M. Bra\'nczyck, Dr. Reginaldo d. J. Napolitano, Dr. Marcelo A. Marchiolli, Dr. Yeong-Cherng Liang and Chris Foster. I am indebted to Dr. Jens Eisert, Dr. Navin Khaneja and Dr. John Gough for the hospitality and opportunity to expand my horizons in visits to their research groups. Thanks also are due to Dr. Luc Bouten, who generously missed most of the coffee breaks of QPIC/2006 and PRACQSYS/2006 to introduce me to the basics of quantum filtering theory. The work presented in this thesis has also benefited from helpful discussions and comments by Marco Barbieri, Joshua Combes, Robin Blume-Kohout, Howard Wiseman, Armin Uhlmann and Karol \.Zyckzkowsky.

Even when around my fellow PhD students, I was privileged to be surrounded by brilliant and kind people. It was a pleasure to share an office and/or time with Joshua Combes, Sukhwinder Singh, Mark de Burgh, Mark Dowling, Eric Cavalcanti, Andy Ferris, Geoff Lee, Andrew Sykes, Terry McRae, Chris Foster and Yeong-Cherng Liang. Special mention is due to Chris and Yeong-Cherng whose contribution towards my PhD do not restrict to the work of Ref.~\cite{08Mendonca1150}. Chris, for example, has not only alleviated my many computational, mathematical and \emph{communicational} shortcomings, but has also introduced and guided me through the terribly addictive practice of juggling (with juggling balls provided!). It would be fair to say that Yeong-Cherng was my fourth supervisor (which explains why he has also moved to Sydney...). I learnt a lot from our many discussions, both the scientific and the non-scientific ones. I am also indebted to him for his generous revision of this thesis.

Terry McRae took Yeong-Cheng's desk when he left, and soon became another dear friend. His random sense of humor and constant presence in the office from early to late hours made the preparation of this thesis a rather amusing time. Thanks are also due to Terry for his kindness in helping me organizing my thoughts and my writing in certain critical moments. Next, I would like to express my gratitude for the friendship of Sukhwinder Singh --- my favorite Indian cook and spiritual mentor (in the lack of \emph{any} other Indian around). We had great curries together, during which I could contemplate his personal way of facing science and everything else. Thanks also to Marcelo Marchiolli for our online conversations, scientific collaboration, and advice on the most diverse circumstances.

Good friends were also outside the Physics department. I could not forget to mention the compatriot families of Wander Barbosa and Gl\'aucia, Jo\~ao Marinho and Cida, Zorano de Souza and Marieta, Paulo Schneider and Bete. We all had many memorable get-togethers, usually accompanied by \emph{Brazilian-style} barbecue. I am specially grateful to Marinho, Cida, Wander and Gl{\'a}ucia for ``taking good care of me'' when I was on my own. Their support was essential in the final stage of this project.

I am deeply thankful to my beloved mother Sonia and sisters Beatriz and Raquel, who managed to make me feel close even from $16,000$ Km away. Their encouragement and enthusiasm have always been crucial in making me move forward. Thank you so much for going through great lengths and literally traveling halfway around the world to visit me in $2007$.

Any attempt to express how grateful I am to my wife Suely and children Beatriz and Laura is almost certainly doomed to failure. Nevertheless, it is worth a try: I am lucky to have the most comprehensive, supportive and loving family ever seen. At the most difficult times, it was reassuring to know that somewhere in town there was this little place where three adorable girls lived, and no matter how dark everything else could get, they would still be there with arms wide open waiting for me to come back. The love and support of my family was certainly \emph{the} crucial ingredient to the completion of this project. For this, it is to them that I dedicate this thesis.

Finally, I thank the Brazilian agency Coordena\c c\~ao de Aperfei\c coamento de Pessoal de N\' ivel Superior (CAPES) for the financial support without which this project would not have been initiated. A special thank you to Vanda Lucena, whose administrative support and kindness saved a lot of my time, allowing me to focus more on physics and less on paper work.

\chapter{Abstract}

This thesis addresses the problem of developing a quantum counter-part of the well established classical theory of control. We dwell on the fundamental fact that quantum states are generally not perfectly distinguishable, and quantum measurements typically introduce noise in the system being measured. Because of these, it is generally not clear whether the central concept of the classical control theory --- that of observing the system and then applying feedback --- is always useful in the quantum setting.

We center our investigations around the problem of transforming the state of a quantum system into a given target state, when the system can be prepared in different ways, and the target state depends on the choice of preparation. We call this the \emph{quantum tracking problem} and show how it can be formulated as an optimization problem that can be approached both numerically and analytically. This problem provides a simple route to the characterization of the quantum trade-off between information gain and disturbance, and is seen to have several applications in quantum information.

In order to characterize the optimality of our tracking procedures, some figure-of-merit has to be specified. Naturally, distance measures for quantum states are the ideal candidates for this purpose. We investigated several possibilities, and found that there is usually a compromise between physically motivated and mathematically tractable measures. We also introduce an alternative to the Uhlmann-Jozsa fidelity for mixed quantum states, which besides reproducing a number of properties of the standard fidelity, is especially attractive because it is simpler to compute.

We employ some ideas of convex analysis to construct optimal control schemes analytically. In particular, we obtain analytic forms of optimal controllers for stabilizing and tracking any pair of states of a single-qubit. In the case of stabilization, we find that feedback control is always useful, but because of the trade-off between information gain and disturbance, somewhat different from the type of feedback performed in classical systems. In the case of tracking, we find that feedback is not always useful, meaning that depending on the choice of states one wants to achieve, it may be better not to introduce any noise by the application of quantum measurements. We also demonstrate that our optimal controllers are immediately applicable in several quantum information applications such as state-dependent cloning, purification, stabilization, and discrimination. In all of these cases, we were able to recover and extend previously known optimal strategies and performances.

Finally we show how optimal single-step control schemes can be concatenated to provide multi-step strategies that usually over-perform optimal control protocols based on a single interaction between the controller and the system.

\newpage
\section*{Keywords}

quantum control, state transformation, quantum channels, distance measures, semidefinite programming.

\section*{ANZSRC (FOR) Classification}

\begin{tabular}{ll}
020603 & Quantum Information, Computation and Communication (50\%);\\
010503 & Mathematical Aspects of Classical Mechanics, Quantum Mechanics and \\
& Quantum Information Theory (50\%).
\end{tabular}

\setcounter{secnumdepth}{3}
\setcounter{tocdepth}{2}

\tableofcontents
\listoffigures
\listoftables

\chapter{List of Abbreviations}

\begin{list}{}{
\setlength{\labelwidth}{24mm}
\setlength{\leftmargin}{35mm}}
\item[lhs\dotfill] left-hand-side
\item[rhs\dotfill] right-hand-side
\item[CP\dotfill] Completely Positive
\item[CPTP\dotfill] Completely Positive and Trace Preserving
\item[EBTP\dotfill] Entanglement Breaking and Trace Preserving
\item[EBTD\dotfill] Entanglement Breaking and Trace Decreasing
\item[HS\dotfill] Hilbert-Schmidt
\item[N/A\dotfill] Not Applicable
\item[POVM\dotfill] Positive Operator Valued Measure
\item[PSD\dotfill] Positive Semidefinite
\item[PPT\dotfill] Positive Partial Transpose
\item[QND\dotfill] Quantum Non Demolition
\item[SDP\dotfill] Semidefinite Program
\item[SVD\dotfill] Singular Value Decomposition
\end{list}

\mainmatter

\chapter{Introduction}\label{chap:intro}

Broadly speaking, to control a physical system is to modify its natural evolution towards some preferred dynamics. An obvious example is steering a car along a sinuous highway. An \emph{attentive} driver \emph{rotates} the steering wheel in order to \emph{preserve} the car on the highway --- obviously preferable than the natural straight line trajectory. Even in this simple example, three pillars of Control Theory can already be recognized:
\begin{enumerate}
    \item Characterization of the possible ways of influencing the dynamics of the system (in the example, a rotation of the steering wheel);
    \item A clear description of the control goal (in the example, to preserve the car on the highway);
    \item A method to determine a control action that, if does not precisely achieve it, at least approximates the goal (in the example this is implemented in the driver's brain, which induces a suitable movement of the arms based on the observation of the highway).
\end{enumerate}

Feedback control, a central concept of this theory, appears in the third point of our example: the direction taken by the car is conditioned on the \emph{observation} of the highway. In fact, a driver who is asked to close their eyes and change lanes usually drives the car off the road at an angle.
This very same idea of conditioning an action on the outcome of a measurement has found many technological applications in aircraft flight control, fabrication of fiber optics cables, robotics and many others.

Behind all these triumphs of classical feedback is the fact that --- at least in principle --- there is no cost associated to the extraction of information from classical systems. In contrast, quantum measurements cannot perfectly distinguish all states and necessarily disturb the measured system. In practice, the information gained from a quantum measurement may not compensate for the disturbance caused by it. As a result, the characterization of an unknown quantum system is fundamentally more difficult, rendering the role of feedback in the quantum domain disputable.

In this thesis we study how measurements on finite dimensional quantum systems can be designed in order to optimize this intrinsic trade-off between information gain and disturbance, aiming at the design of optimal feedback schemes for \emph{quantum} control. As demonstrated by Fuchs and Peres \cite{96Fuchs2038}, there exists an entire range of generalized measurements that trade-off information gain and disturbance, and as recognized by Doherty, Jacobs and Fuchs \cite{01Doherty062306,01Fuchs062305}, it is the basic problem of quantum control to choose one such that feedback offers more help than hinderance.

The theory of quantum feedback control started to be developed in the 70's with the work of Belavkin in the mathematical physics literature \cite{83Belavkin178,79Belavkin3}. However, due to the experimental limitations of that time, the field developed almost exclusively along a mathematically abstract direction, until reappearing in the 90's in the quantum optics literature \cite{94Wiseman2133,93Wiseman548}, encountering much richer experimental possibilities and practical problems where the theory could be successfully applied \cite{02Armen133602,06Bushev043003,04Geremia270,04LaHaye74,04Reiner023819,02Smith133601}.

Here, we introduce a toy feedback control problem that brings new perspectives into the study of the trade-off between information gain and disturbance. We consider the \emph{quantum tracking} task of transforming the state of a quantum system into a given target state, when the system can be prepared in different ways, and the target state depends on the choice of preparation. It is not difficult to see that tracking can be formulated as a transformation between two sequences of density matrices, as illustrated in Fig.~\ref{fig:singlestep_intro}. The source sequence models the uncertainty of the initial preparation; each density matrix has some probability of being the actual state of the system. The target sequence, in turn, is formed by those states that we would like to output for each initial preparation. Ideally, we would like to be able to perfectly map one sequence into the other, no matter what the initial preparation was. Since this is usually impossible, we look for optimal approximations of these sequences. A choice of measurement and feedback that minimizes some notion of distance between these sequences is optimizing some trade-off between information gain and disturbance \cite{96Fuchs2038,01Doherty062306,01Fuchs062305}.
\begin{figure}[h]
\centering
\includegraphics[width=6cm]{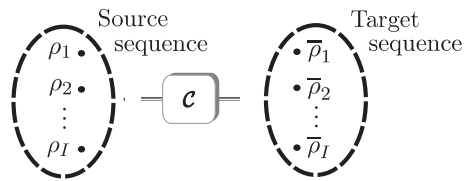}
\caption[Schematic of the basic tracking problem approached in the thesis.]{Schematic of the basic tracking problem approached in the thesis.}\label{fig:singlestep_intro}
\end{figure}

The approach adopted for the discussion of the tracking problem is based on the three pillars of the control theory enumerated before: (1) the theory of \emph{Quantum Operations} (or \emph{Quantum Channels}) provides a mathematically rigorous description of which operations are and are not possible from a physical perspective. (2) The task of enforcing a particular dynamic is formalized via an optimization problem (over the set of quantum channels) that attempts to minimize some \emph{distance measure} between the output states and the target states. (3) Depending on certain technical details, the resulting mathematical problem can be cast as, or approximated by a  \emph{Semidefinite Program} (SDP) \cite{96Vandenberghe49} --- a special type of convex optimization problem that, computationally, can be efficiently solved in polynomial time. The theories of quantum operations, distance measures and convex optimization form the key background material underlying the formulation and solution of the quantum control problems tackled in this thesis.

Regarding applications, the problem of approximating sequences of density matrices turns out to be sufficiently general to encompass a number of interesting problems in quantum information science. Its optimal solution provides optimal schemes for tasks such as state-dependent quantum cloning, quantum error correction, quantum state purification and quantum state discrimination. Besides, it illustrates in a clear-cut way some key departures between the optimal quantum and classical control theories.

In the remainder of this chapter we outline the organization of the thesis and the new contributions made to the field. Loosely, it can be divided in two parts: In Chapters~\ref{chap:background} and~\ref{chap:dist_measure}, we review some background material useful for the formulation of the single-step tracking problem as an optimization problem. From Chapters~\ref{chap:assemble} to \ref{chap:multistep}, we focus on solving this problem and a multi-step variation. A more detailed description of the content of each chapter is as follows.

In Chapter~\ref{chap:background} we review the relevant concepts of quantum operations and semidefinite programming. Although most of this chapter is devoted to well established results of these two fields, some new and more specific material is also introduced here. For example, some of the adopted notation is defined, non-standard representations for quantum operations are derived and, most importantly, the first steps towards the formulation of the single-step tracking problem as a SDP are given in the last section.

Chapter~\ref{chap:dist_measure} merges a review on standard distance measures for quantum states with the introduction of a new measure identified in the course of the present work~\cite{08Mendonca1150}. After a thorough evaluation of its properties, the latter is argued to be an easy-to-compute alternative definition of the Uhlmann-Jozsa fidelity between mixed states. In our quantum control framework, distance measures between quantum states characterize the objective function to be optimized by the controller. In this sense, the results of this chapter are relevant for the identification of figures-of-merit that are both physically motivated and mathematically tractable when used in the formulation of optimization problems for quantum control.

Chapter~\ref{chap:assemble} brings together the results of the two preceding chapters by assembling a number of optimization problems, each of which is associated with one of the distance measures from Chapter~\ref{chap:dist_measure}. The realizability of each problem as an SDP is discussed and their solutions are compared in an attempt to measure the sensitivity of optimal quantum control to different choices of figures-of-merit.

In Chapter~\ref{chap:aggie} we present our first analytical results on the solution of a specific type of the single-step tracking. The source and target sequences are restricted to contain two pure states of a single qubit, in such a way that the tracking task reduces to the stabilization of an uncertain preparation of a single qubit against dephasing noise. Exploiting the symmetry of the states involved and a particular representation of the dephasing map, quantum and ``classical'' channels are analytically constructed and proved to be optimal for the stabilization task. Most results of this chapter were presented in Ref.~\cite{07Branczyk012329}.

In Chapter~\ref{chap:tracking} the single-step tracking problem for pairs of qubits is approached in its full generality and, nevertheless, analytically solved. Several applications of this solution are provided in the grounds of quantum information science. In particular, reproduction and extension of optimal schemes for quantum state discrimination, state purification, state stabilization and state-dependent quantum cloning are obtained. Except for some minor adjustments, this chapter reproduces Ref. \cite{08Mendonca012319}.

Chapter~\ref{chap:multistep} presents some preliminary results in the formulation and solution of a multi-step control variant of the tracking problem. Here, we face the problem of designing multiple control interactions that attempt to track the (uncertain) input density matrices to some desired target. This problem is sufficiently difficult not to be approached from an analytical perspective and we describe a heuristic that, based on the application of the optimal single-step tracking solution, usually leads to multi-step control strategies that over-perform the optimal single-step control scheme.

Chapter~\ref{chap:conclusion} summarizes our main results and presents some possibilities of future research. 

\chapter{Quantum Operations, Semidefinite Programs and Quantum Control}\label{chap:background}
\section{Introduction}
This chapter covers some background material on quantum operations and semidefinite programming. For a long while, these topics were developed as independent research fields of quantum mechanics and optimization theory, respectively. Recently, Audernaert and de Moor~\cite{02Audenaert030302} recognized that the problem of determining optimal quantum operations for a given task can \emph{sometimes} be cast as a semidefinite program. Ever since, connections between these two topics have been further developed and explored in quantum information science. This thesis is one result of this symbiosis.\par

The selection of topics reviewed in this chapter aims to achieve the competing purposes of making the thesis as objective and self-contained as possible: only the essential concepts and a few technical tools required for the remaining chapters are presented. Along these lines, many important theorems are stated without complete proofs, in which cases references are given. In fact, derivations of mathematical results are only presented when they are, on their own, a revision of a useful concept/tool.  In contrast, a rather complete account of the theory of convex optimization and semidefinite programming can be found in the book by Boyd and Varderberghe~\cite{04Boyd} and in the review article by the same authors~\cite{96Vandenberghe49}. Modern reviews on the theory of quantum operations can be found in the Ch.~8 of Ref.~\cite{00Nielsen}, Ch.~10 and~11 of Ref.~\cite{06Bengtsson} and Ch.~5 of Ref.~\cite{06Hayashi}.

This chapter is divided as follows. In Secs.~\ref{sec:QO} and~\ref{sec:SDP} we review, respectively, the key concepts of quantum operations and semidefinite programming. Sec.~\ref{sec:controlSDP} merges the results of the previous sections to show that the optimization problems over the set of quantum operations typically have a ``Semidefinite Programming flavor'', which may or may not be confirmed depending on the specific details of the control task of interest.

\section{Quantum operations}\label{sec:QO}

Is there a well defined mathematical model for every possible dynamics a quantum system can incur? When the
system is \emph{closed}, then the Schr{\"o}dinger equation provides a widely accepted model for quantum evolution, namely, any unitary conjugation of the density operator. The question becomes more subtle for \emph{open} quantum systems. This section gives a short account on the dynamical model we adopt in this case.\par

In Sec.~\ref{sec:CP} we review the concept of \emph{completely positive} maps from a mathematical viewpoint. These maps will be shown, in Sec. \ref{sec:QOasCP}, to provide an adequate model for describing open quantum dynamics. In Sec. \ref{sec:CPuseful} we present a number of technical results related to the set of CP maps and some subsets of interest.\par

\subsection{Completely Positive Maps}\label{sec:CP}
In the most general framework, Completely Positive (CP) maps are certain types of transformations defined between abstract $C^\ast$-algebras \cite{55Stinespring211}. For our purposes, it will suffice to restrict to endomorphisms of the finite dimensional algebra $\mathcal{M}_{\rm d}$ of ${\rm d} \times {\rm d}$ complex matrices\footnote{This is a $C^\ast$-algebra of operators on the Hilbert space ${\sf H}_{\rm d}\cong\mathbb{C}^{\rm d}$ if each matrix $A\in\mathcal{M}_{\rm d}$ is seen as an operator on $\mathbb{C}^{\rm d}$. The algebra is equipped with the standard \emph{operator norm} $\|A\|$ (the largest singular value of the matrix $A$) and the \emph{$\ast$-involution} is taken as the conjugate-transpose, so that the required \emph{$C^\ast$-norm property} $\|A^\dagger A\|=\|A\|^2$ holds.}. In what follows, we define CP maps within this particular framework.


\begin{definition}\label{def:PSD}
A matrix $A \in \mathcal{M}_{\rm d}$ is \emph{positive semidefinite (PSD)}, denoted $A \geq 0$, if $A^\dagger=A$ and every eigenvalue of $A$ is non-negative.
\end{definition}

\begin{definition}\label{def:P}
A linear map $\mathcal{P} : \mathcal{M}_{\rm d} \to \mathcal{M}_{\rm d}$ is positive if $\mathcal{P}(A)\geq 0$ for
every $A \geq 0$.
\end{definition}

\begin{definition}\label{def:CP}
Let $\mathcal{K} : \mathcal{M}_{\rm d} \to \mathcal{M}_{\rm d}$ and define $\mathcal{K}_{\rm n} : \mathcal{M}_{\rm n}\otimes\mathcal{M}_{\rm d} \to \mathcal{M}_{\rm n}\otimes \mathcal{M}_{\rm d}$ by $\mathcal{K}_{\rm n}=\mathcal{I}_{\rm n} \otimes \mathcal{K}$, where $\mathcal{I}_{\rm n}$ is the identity map on the elements of $\mathcal{M}_{\rm n}$. Then $\mathcal{K}$ is called ${\rm n}$-positive if
$\mathcal{K}_{\rm n}$ is positive. A map that is ${\rm n}$-positive for all values of ${\rm n}$ is termed
\emph{completely positive}.
\end{definition}

The definitions above do not make it clear whether positive but not CP maps exist. In addition, definition \ref{def:CP} does not provide a viable way of \emph{confirming} a given linear map as CP: testing {\rm n}-positivity for increasing values of {\rm n} can only be conclusive if a violation is found at some stage.\par

These two points were first addressed by Stinespring \cite{55Stinespring211}, who provided examples of positive maps which failed to be $2$-positive and derived a new criterion for CP-ness, stated below as Theorem \ref{teo:stinespring}. Further developments came by with the work by Kraus \cite{71Kraus311}, where an alternative CP criterion was obtained, Theorem \ref{teo:kraus}. In 1975, Choi \cite{75Choi285} rediscovered Kraus earlier result with an independent proof that led to yet another new criterion, Theorem \ref{teo:choi}. In addition, in Ref.~\cite{72Choi520} Choi proved that for every ${\rm d} \geq 2$ there exists $({\rm d}-1)$-positive maps which are not {\rm d}-positive, but every {\rm d}-positive map is automatically CP. With this, Choi proved it possible to \emph{confirm} a map as CP with ${\rm d}$ ``${\rm n}$-positivity tests''.

\begin{theorem}[Stinespring \cite{55Stinespring211}]\label{teo:stinespring}
A linear map $\mathcal{K}:\mathcal{M}_{\rm d}\to\mathcal{M}_{\rm d}$ is CP if and only if there exists, for some dimension ${\rm d}^\prime$, a rectangular matrix $V$ of size ${\rm d} \times {\rm d d}^\prime $ 
such that $\mathcal{K}$  can be ``factorized'' as
\begin{equation}\label{eq:stinespring_rep}
\mathcal{K}(A)=V(A\otimes \openone_{\rm d^\prime})V^\dagger\qquad\forall A \in \mathcal{M}_{\rm d}\,.
\end{equation}
\end{theorem}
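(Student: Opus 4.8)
The plan is to prove the two implications separately, with the reverse direction routine and the forward direction carrying the real content.

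For the \emph{if} direction, suppose a $V$ of size ${\rm d}\times{\rm d}{\rm d}^\prime$ exists with $\mathcal{K}(A)=V(A\otimes\openone_{\rm d^\prime})V^\dagger$. I would verify complete positivity directly from Definition~\ref{def:CP}: identifying $\mathbb{C}^{\rm dd^\prime}\cong\mathbb{C}^{\rm d}\otimes\mathbb{C}^{\rm d^\prime}$, a short computation on product elements $C\otimes A$ followed by extension by linearity gives, for every ${\rm n}$ and every $B\in\mathcal{M}_{\rm n}\otimes\mathcal{M}_{\rm d}$, the identity $(\mathcal{I}_{\rm n}\otimes\mathcal{K})(B)=(\openone_{\rm n}\otimes V)(B\otimes\openone_{\rm d^\prime})(\openone_{\rm n}\otimes V)^\dagger$. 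For $B\geq0$ we have $B\otimes\openone_{\rm d^\prime}\geq0$, and conjugation by a fixed matrix preserves positive semidefiniteness, so the right-hand side is PSD for all ${\rm n}$; hence $\mathcal{K}$ is CP.

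For the \emph{only if} direction, the strategy is to compress the whole action of $\mathcal{K}$ into a single positive matrix and then factorize it. Let $\ket{\Omega}=\sum_{i=1}^{\rm d}\ket{i}\ket{i}$ and form the \emph{Choi matrix} $M=(\mathcal{I}_{\rm d}\otimes\mathcal{K})(\ket{\Omega}\bra{\Omega})=\sum_{i,j}E_{ij}\otimes\mathcal{K}(E_{ij})$, where $E_{ij}=\ket{i}\bra{j}$. Using only that $\mathcal{K}$ is ${\rm d}$-positive---the case ${\rm n}={\rm d}$ of Definition~\ref{def:CP}---together with $\ket{\Omega}\bra{\Omega}\geq0$, I conclude $M\geq0$. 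Being PSD, $M$ admits a decomposition $M=\sum_{k=1}^{\rm d^\prime}\ket{w_k}\bra{w_k}$ with $\ket{w_k}\in\mathbb{C}^{\rm d}\otimes\mathbb{C}^{\rm d}$ and ${\rm d}^\prime$ no larger than the rank of $M$.

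The key step reshapes these vectors into operators. Writing $\ket{w_k}=\sum_i\ket{i}\otimes K_k\ket{i}$ defines operators $K_k\in\mathcal{M}_{\rm d}$; sandwiching $M$ between $\bra{p}\otimes\openone_{\rm d}$ and $\ket{q}\otimes\openone_{\rm d}$ recovers $\mathcal{K}(E_{pq})$ from the first expression for $M$ and $\sum_k K_k E_{pq}K_k^\dagger$ from the second, so by linearity $\mathcal{K}(A)=\sum_{k=1}^{\rm d^\prime}K_k A K_k^\dagger$. I then assemble the dilation $V=\sum_{k=1}^{\rm d^\prime}K_k\otimes\bra{k}$, viewed as a map $\mathbb{C}^{\rm d}\otimes\mathbb{C}^{\rm d^\prime}\to\mathbb{C}^{\rm d}$ of size ${\rm d}\times{\rm d}{\rm d}^\prime$; since $V^\dagger=\sum_l K_l^\dagger\otimes\ket{l}$, one finds $V(A\otimes\openone_{\rm d^\prime})V^\dagger=\sum_{k,l}K_k A K_l^\dagger\,\langle k|l\rangle=\sum_k K_k A K_k^\dagger=\mathcal{K}(A)$, completing the construction.

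The main obstacle I anticipate is the reshaping and verification step: one must track index conventions carefully to confirm that the PSD factorization of the Choi matrix really does translate into the conjugation-sum representation of $\mathcal{K}$. A conceptual point worth emphasizing is that finite ${\rm d}$-positivity already suffices to build the dilation---the full strength of complete positivity is never invoked in the construction---which foreshadows why a finite number of positivity tests can certify CP-ness.
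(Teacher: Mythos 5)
Your proof is correct, and it takes a genuinely different route from the paper: the paper's own treatment of this theorem consists of declaring the ``if'' direction immediate and outsourcing the converse entirely to Ref.~\cite[pp.~357-361]{06Hayashi}, so it supplies no self-contained argument at all. You do. Your converse runs through the Choi matrix: you form $M=(\mathcal{I}_{\rm d}\otimes\mathcal{K})(\ket{\Omega}\!\bra{\Omega})$, observe that ${\rm d}$-positivity alone forces $M\geq 0$, factor $M=\sum_k \ket{w_k}\!\bra{w_k}$, reshape the vectors $\ket{w_k}$ into Kraus operators $K_k$, and assemble the dilation $V=\sum_k K_k\otimes\bra{k}$. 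This is exactly the reverse of the paper's logical ordering (Theorem~\ref{teo:stinespring} $\Rightarrow$ Theorem~\ref{teo:kraus} $\Rightarrow$ Theorem~\ref{teo:choi}, with the base case cited): you effectively prove Choi's criterion first and derive both the Kraus form and the Stinespring factorization from it. What your approach buys is considerable: a short, fully finite-dimensional, self-contained proof that simultaneously establishes the content of Theorems~\ref{teo:kraus} and~\ref{teo:choi}, and that makes explicit Choi's point (cited in the paper from Ref.~\cite{72Choi520}) that ${\rm d}$-positivity already implies complete positivity, since the full CP hypothesis is never used in the construction. What the paper's route (via Hayashi) buys instead is generality --- the operator-algebraic Stinespring argument extends beyond matrix algebras to $C^\ast$-algebras and infinite dimensions, where the Choi-matrix device is unavailable. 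Your index bookkeeping in the reshaping step is sound: sandwiching $M$ between $\bra{p}\otimes\openone_{\rm d}$ and $\ket{q}\otimes\openone_{\rm d}$ does recover $\mathcal{K}(\mathbb{E}_{p,q})$ on one side and $\sum_k K_k\mathbb{E}_{p,q}K_k^\dagger$ on the other, and the final verification $V(A\otimes\openone_{\rm d^\prime})V^\dagger=\sum_{k,l}K_k A K_l^\dagger\langle k|l\rangle=\sum_k K_k A K_k^\dagger$ is exactly right.
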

%
%

\begin{proof}
That any map of the form~\eqref{eq:stinespring_rep} is CP is immediate. For a proof of the converse, we refer the reader to Ref.~\cite[pp. 357-361]{06Hayashi}.
\end{proof}
\begin{theorem}[Kraus \cite{71Kraus311}, Choi \cite{75Choi285}]\label{teo:kraus}
A linear map $\mathcal{K}:\mathcal{M}_{\rm d}\to\mathcal{M}_{\rm d}$ is CP if and only if there exists $K_1,\ldots,K_k \in \mathcal{M}_{\rm d}$ such that $\mathcal{K}$ admits a decomposition of the form
\begin{equation}\label{eq:kraus_form}
\mathcal{K}(A)=\sum_{i=1}^k K_i A K_i^\dagger\qquad\forall A \in \mathcal{M}_{\rm d}\,.
\end{equation}
\end{theorem}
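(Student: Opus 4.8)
The plan is to prove Theorem~\ref{teo:kraus} by leveraging the Stinespring representation (Theorem~\ref{teo:stinespring}), which we are entitled to assume. The forward direction --- that any map of the Kraus form~\eqref{eq:kraus_form} is CP --- is the easy half. First I would verify positivity: if $A\geq 0$, write $A=B^\dagger B$, so that each term $K_i A K_i^\dagger = (K_i B^\dagger)(K_i B^\dagger)^\dagger$ is manifestly PSD, and a sum of PSD matrices is PSD. To upgrade this to complete positivity, I would observe that the induced map $\mathcal{K}_{\rm n}=\mathcal{I}_{\rm n}\otimes\mathcal{K}$ acts as $(\mathcal{I}_{\rm n}\otimes\mathcal{K})(M)=\sum_i (\openone_{\rm n}\otimes K_i)\,M\,(\openone_{\rm n}\otimes K_i)^\dagger$, which is again of Kraus form with operators $\openone_{\rm n}\otimes K_i$. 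Hence the same one-line argument shows $\mathcal{K}_{\rm n}$ is positive for every $\rm n$, establishing CP-ness.

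For the converse, the strategy is to start from the Stinespring factorization and extract the Kraus operators by slicing the rectangular matrix $V$ into blocks. Given $\mathcal{K}(A)=V(A\otimes\openone_{\rm d^\prime})V^\dagger$ with $V$ of size ${\rm d}\times{\rm dd}^\prime$, I would fix an orthonormal basis $\{\ket{j}\}_{j=1}^{\rm d^\prime}$ of $\mathbb{C}^{\rm d^\prime}$ and define the ${\rm d}\times{\rm d}$ blocks $K_j \defeq V(\openone_{\rm d}\otimes\ket{j})$, so that $V=\sum_j K_j\otimes\bra{j}$ under the natural identification. The key computation is to substitute this block decomposition into the Stinespring expression and use $(\openone_{\rm d}\otimes\ket{j})^\dagger (A\otimes\openone_{\rm d^\prime})(\openone_{\rm d}\otimes\ket{k}) = A\,\langle j|k\rangle = \delta_{jk} A$. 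This orthogonality collapses the double sum over $j,k$ to a single sum, yielding exactly $\mathcal{K}(A)=\sum_j K_j A K_j^\dagger$, which is the desired Kraus form with $k={\rm d}^\prime$ terms.

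The main obstacle is largely bookkeeping rather than conceptual: one must be careful with the tensor-product index conventions so that the slicing $K_j = V(\openone_{\rm d}\otimes\ket{j})$ correctly reproduces $V$ and the cross terms vanish. A secondary subtlety worth flagging is that the Kraus decomposition is highly non-unique; any isometric recombination of the $K_j$ gives another valid decomposition. Since the theorem only asserts \emph{existence}, I would not belabor this, though it is the natural point to remark that the minimal number of Kraus operators equals the rank of the associated Choi matrix, foreshadowing Theorem~\ref{teo:choi}. Given that Stinespring is invoked as a black box, this proof reduces entirely to the elementary tensor-algebra manipulation above, so no genuinely hard step remains.
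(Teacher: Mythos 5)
Your proposal is correct and follows essentially the same route as the paper: the forward direction is the standard "immediate" Kraus-form argument, and the converse invokes Stinespring and slices $V$ into ${\rm d}\times{\rm d}$ blocks, your definition $K_j = V(\openone_{\rm d}\otimes\ket{j})$ being exactly the adjoint form of the paper's defining relation $V^\dagger \bm{r}=\sum_i (K_i^\dagger \bm{r})\otimes \bm{e}_i$. The only difference is that you spell out the orthogonality computation that the paper delegates to a reference.
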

\begin{proof} That any map of the form (\ref{eq:kraus_form}) is CP is immediate. To see that every CP map can be written in this form, one can simply define the operators $K_i$ such that
$V^\dagger \bm{r}=\sum_i (K_i^\dagger \bm{r})\otimes \bm{e}_i$, where $V$ is the ${\rm d} \times {\rm d d}^\prime$ matrix from Eq.~\eqref{eq:stinespring_rep}, $\bm{r}$ is an arbitrary vector of size ${\rm d}\times 1$ and $\{\bm{e}_i\}$ is an orthonormal basis of vectors of size ${\rm d}^\prime\times 1$. It is then straightforward to show that Eq.~\eqref{eq:stinespring_rep} reduces to Eq.~\eqref{eq:kraus_form} (see, e.g., Ref. \cite[p. 20]{02Raginsky} for details). Choi's proof of this result follows a different direction and has been nicely revisited by Leung in Ref.~\cite{03Leung528}.
\end{proof}

\begin{theorem}[Choi \cite{75Choi285}]\label{teo:choi}
Let $\mathcal{L}(\mathcal{M}_{\rm d},\mathcal{M}_{\rm d})$ denote the set of linear maps from $\mathcal{M}_{\rm d}$ to $\mathcal{M}_{\rm d}$ and consider the following map from $\mathcal{L}(\mathcal{M}_{\rm d},\mathcal{M}_{\rm d})$ to $\mathcal{M}_{\rm d}\otimes\mathcal{M}_{\rm d}$:
\begin{equation}\label{eq:jamisomorphism}
\mathcal{K} \mapsto \mathfrak{K} = (\mathcal{I}_{\rm d}\otimes\mathcal{K})\left(\sum_{i,j=1}^{\rm d}{\mathbb{E}_{i,j}\otimes \mathbb{E}_{i,j}}\right)\,,
\end{equation}
where each $\mathbb{E}_{i,j}$ is a ${\rm d}\times {\rm d}$ matrix of elements $1$ in the $(i,j)$ position and $0$ elsewhere\footnote{Clearly, $\{\mathbb{E}_{i,j}\}_{i,j=1}^{\rm d}$ forms a basis for $\mathcal{M}_{\rm d}$, the so-called Weyl basis. In physics notation, $\mathbb{E}_{i,j}=\ket{i}\bra{j}$, where $\{\ket{i}\}_{i=1}^{\rm d}$ is an orthonormal basis for the Hilbert space ${\sf H}_{\rm d}\cong\mathbb{C}^{\rm d}$ onto which the matrices of $\mathcal{M}_{\rm d}$ act as the linear (bounded) operators of $\mathcal{B}(\textsf{H}_{\rm d})$. The (unnormalized) density matrix $\ket{\Psi}\!\bra{\Psi}$ defined in this basis via $\ket{\Psi}=\sum_{i=1}^{\rm d}\ket{i}\otimes\ket{i}$ is thus identical to $\sum_{i,j=1}^{\rm d}{\mathbb{E}_{i,j}\otimes \mathbb{E}_{i,j}}$ and Eq.~\eqref{eq:jamisomorphism} is typically written as $\mathfrak{K} = (\mathcal{I}_{\rm d}\otimes\mathcal{K})\ket{\Psi}\!\bra{\Psi}$.\label{foot:choi}}.\par
A linear map $\mathcal{K}:\mathcal{M}_{\rm d}\to\mathcal{M}_{\rm d}$ is CP if and only if $\mathfrak{K} \geq 0$.
\end{theorem}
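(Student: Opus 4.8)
The plan is to dispatch the easy direction immediately and then reduce the converse to the Kraus criterion (Theorem~\ref{teo:kraus}) by reading Kraus operators off the spectrum of $\mathfrak{K}$. The forward implication is almost free: using the physics notation of footnote~\ref{foot:choi}, write $\sum_{i,j}\mathbb{E}_{i,j}\otimes\mathbb{E}_{i,j}=\ket{\Psi}\!\bra{\Psi}$, which is manifestly PSD as a rank-one projector (up to normalization). If $\mathcal{K}$ is CP then in particular it is ${\rm d}$-positive, so $\mathcal{I}_{\rm d}\otimes\mathcal{K}$ is a positive map (Definition~\ref{def:CP}); applying it to the PSD matrix $\ket{\Psi}\!\bra{\Psi}$ yields $\mathfrak{K}\geq 0$ by Definition~\ref{def:P}.

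For the converse I would exploit that the assignment $\mathcal{K}\mapsto\mathfrak{K}$ of Eq.~\eqref{eq:jamisomorphism} is a linear bijection. Assuming $\mathfrak{K}\geq 0$, first take its spectral decomposition $\mathfrak{K}=\sum_{i=1}^{k}\ket{v_i}\!\bra{v_i}$, where the eigenvalues have been absorbed into the (unnormalized) eigenvectors $\ket{v_i}\in\mathbb{C}^{\rm d}\otimes\mathbb{C}^{\rm d}$ and $k=\mathrm{rank}(\mathfrak{K})\leq{\rm d}^2$. This is exactly where positivity enters: it guarantees that each term enters with a $+$ sign, i.e. that $\ket{v_i}\!\bra{v_i}$ rather than $-\ket{v_i}\!\bra{v_i}$ appears.

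Next I would set up the \emph{reshaping} correspondence between vectors of $\mathbb{C}^{\rm d}\otimes\mathbb{C}^{\rm d}$ and matrices of $\mathcal{M}_{\rm d}$: to each $\ket{v_i}=\sum_{m,n}(v_i)_{mn}\,\ket{m}\otimes\ket{n}$ associate the matrix $K_i\in\mathcal{M}_{\rm d}$ with entries $(K_i)_{nm}=(v_i)_{mn}$, chosen so that $\ket{v_i}=(\openone_{\rm d}\otimes K_i)\ket{\Psi}$. Substituting into the spectral decomposition gives $\mathfrak{K}=\sum_i(\openone_{\rm d}\otimes K_i)\ket{\Psi}\!\bra{\Psi}(\openone_{\rm d}\otimes K_i^\dagger)$, which by footnote~\ref{foot:choi} is precisely the Choi matrix of the map $A\mapsto\sum_i K_i A K_i^\dagger$. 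To close the loop I would invoke injectivity of $\mathcal{K}\mapsto\mathfrak{K}$: the ${\rm d}\times{\rm d}$ blocks of $\mathfrak{K}$ are exactly the matrices $\mathcal{K}(\mathbb{E}_{i,j})$, and since $\{\mathbb{E}_{i,j}\}$ is a basis of $\mathcal{M}_{\rm d}$ these blocks determine $\mathcal{K}$ uniquely. Hence $\mathcal{K}(A)=\sum_i K_i A K_i^\dagger$ for all $A$, and Theorem~\ref{teo:kraus} then certifies that $\mathcal{K}$ is CP.

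I expect the main obstacle to be the bookkeeping in the reshaping step: verifying the identity $\ket{v_i}=(\openone_{\rm d}\otimes K_i)\ket{\Psi}$ with a fixed index convention, and the companion claim that the blocks of $\mathfrak{K}$ reproduce the $\mathcal{K}(\mathbb{E}_{i,j})$ so that injectivity genuinely holds. These computations are routine but index-heavy, and keeping the conjugation/transpose conventions consistent across the reshaping map is the one place where real care is required.
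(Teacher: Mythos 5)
Your proof is correct and follows essentially the same route as the paper: the forward direction is the same one-line observation that $\mathcal{I}_{\rm d}\otimes\mathcal{K}$ applied to the PSD matrix $\ket{\Psi}\!\bra{\Psi}$ stays PSD, and the converse is the reduction to Theorem~\ref{teo:kraus} that the paper invokes but outsources to a citation. The only difference is that you actually execute that reduction (spectral decomposition of $\mathfrak{K}$ plus the reshaping $\ket{v_i}=(\openone_{\rm d}\otimes K_i)\ket{\Psi}$ and injectivity of the Choi correspondence), and your bookkeeping is consistent; this is precisely the vec/mat machinery the paper itself develops later in Sec.~\ref{sec:krausnonunique}, cf.\ Eq.~\eqref{eq:choikrauvec}.
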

\begin{proof}
That $\mathfrak{K}\geq 0$ if $\mathcal{K}$ is CP follows directly from the observation that $\sum_{i,j=1}^{\rm d}{\mathbb{E}_{i,j}\otimes \mathbb{E}_{i,j}}$ is a PSD matrix (see footnote \ref{foot:choi}).
 The converse, is typically proved by relying upon theorem \ref{teo:kraus}; see, for example, Ref.~\cite{05Salgado55}.
\end{proof}

In this thesis, the theorems above will not be used for testing the CP-ness of a given map. Instead, Eqs.~\eqref{eq:stinespring_rep},~\eqref{eq:kraus_form} and~\eqref{eq:jamisomorphism} (with $\mathfrak{K}\geq 0$) are taken to be general representations of a CP map.

\subsection{Quantum operations as CP maps}\label{sec:QOasCP}

In this section, we argue that a certain class of CP maps yields a suitable mathematical model for general state changes in quantum theory --- an idea introduced long ago by Kraus \cite{71Kraus311}. In order to make this precise, we start with a brief clarification of what is meant when a quantum operation is said to be a CP map.

In the Schr\"odinger picture, a quantum operation can be thought of as a map on the set of density matrices of dimension {\rm d}, $\mathcal{S}(\textsf{H}_{\rm d})$. Since this is not a linear space (it is not closed under arbitrary linear combinations), a quantum operation is not a linear map, let alone a CP map~\cite{04Salgado054102}. Nevertheless, it is widely accepted (although also debatable, see e.g., Refs. \cite{07Jordan,06Jordan022101} and references therein) that any quantum operation $\mathcal{Q}$ satisfies a weaker form of linearity called \emph{convex-linearity}:
\begin{equation}\label{eq:cvxlnr}
\mathcal{Q}[\lambda\rho + (1-\lambda)\sigma]=\lambda\mathcal{Q}(\rho)+(1-\lambda)\mathcal{Q}(\sigma)\quad\mbox{for}\quad \lambda \in [0,1]\,,
\end{equation}
where $\rho,\sigma \in \mathcal{S}(\textsf{H}_{\rm d})$. In this thesis we adopt the common practice of saying that $\mathcal{Q}$ is a CP map if the map obtained from its linear extension to $\mathcal{B}(\textsf{H}_{\rm d})\cong\mathcal{M}_{\rm d}$ is CP.

Having clarified this subtlety, we now follow justifying why quantum operations are suitably described by certain types of CP maps, namely, those that preserve the trace of their input. These are called completely positive and trace preserving (CPTP) maps.


\subsubsection{Equivalence between quantum operations and CPTP maps}\label{sec:CPTP_qo}

It is constructive to proceed in three steps: first we discuss why the set of quantum operations should lie within the set of positive maps, then we motivate the restriction to the subset of CP maps and finally we justify the equivalence with the set of CPTP maps.

\paragraph{Quantum operations $\Rightarrow$ Positive maps.}
Once Born's rule is accepted as a fundamental axiom of quantum mechanics, quantum states must be modeled by PSD matrices in order to guarantee that they will always yield well-defined probabilities. As a result, any map converting from arbitrary quantum states to arbitrary quantum states must necessarily be positive.

\paragraph{Quantum operations $\Rightarrow$ CP maps.} Not every positive map can model the dynamics of an arbitrary input density matrix. This can be easily understood as follows: Start by assuming that every positive map $\mathcal{P}$ represents a quantum operation. From a physical viewpoint, this means that an experimentalist should (at least in principle) be able to implement such a map in the lab. Obviously, if $\mathcal{P}$ can be implemented, there is no reason why the map $\mathcal{I}_{\rm d}\otimes\mathcal{P}$ could not be built as well, since it corresponds to evolving a bipartite ${\rm d}\otimes {\rm d}$ system by applying $\mathcal{P}$ to the second part and leaving the first part alone. However, as we learnt from Stinespring~\cite{55Stinespring211}, positivity of $\mathcal{P}$ does not imply positivity of $\mathcal{I}_{\rm d}\otimes\mathcal{P}$, or in other words, there exist input states to $\mathcal{I}_{\rm d}\otimes\mathcal{P}$ that are not mapped to PSD operators\footnote{A standard example arises from ${\rm d}=2$, $\rho$ any bipartite ${\rm 2}\otimes{\rm 2}$ entangled density matrix and $\mathcal{P}$ the transposition map with respect to the basis of $\rho$~\cite{96Peres1413,96Horodecki1}.}. As discussed above, this is inconsistent with Born's rule, leading to the conclusion that not every positive map can model the dynamics of an arbitrary input state, as claimed before.

An easy solution is to restrict the set of quantum operations to the set of positive maps $\mathcal{P}$ for which $\mathcal{I}_{\rm d}\otimes\mathcal{P}$ is still a positive map. As demonstrated by Choi in Ref.~\cite{72Choi520}, this restriction characterizes the set of CP maps.

It is worth mentioning an alternative (and more comprehensive) solution. No conflict with Born's rule arises if we allow every positive map to be a quantum operation, but under the understanding that this is only physically meaningful with respect to some particular domains of density matrices. A further step is to let every linear operation mapping density matrices to density matrices to model quantum operations. These ideas are advocated and developed in Refs.~\cite{94Pechukas1060,95Alicki3020,95Pechukas3021,01Stelmachovic062106,03Stelmachovic029902,04Jordan052110,05Shaji48}.

\paragraph{Quantum operations $\Leftrightarrow$ CPTP maps.} It follows from the statistical structure of quantum mechanics that density matrices are PSD matrices of \emph{unit trace}. Because of this, any map transforming density matrices into density matrices must be trace preserving\footnote{Although CP maps that decrease the trace are used to model \emph{stochastic} measurements processes, i.e., those that are selective of certain specific outcomes. In these cases, the trace of the resulting ``density matrix'' gives the probability associated to the specific measurement outcome. This allows for a more comprehensive definition of the term ``quantum operation'' as any completely-positive-trace-non-increasing map, while the term ``quantum channel'' is generally reserved for the \emph{deterministic} subset of trace preserving CP maps. In this thesis, though, both terminologies are used interchangeably to refer to CPTP maps. Whenever needed, we explicitly write \emph{stochastic quantum operations} to refer to CP maps that decrease the trace.}
. This requirement, along with the CP condition explained above, validates the implication quantum operations $\Rightarrow$ CPTP maps.

In order to see that the converse also holds, let us equip the Kraus decomposition of Eq.~\eqref{eq:kraus_form} with the trace preserving condition $\tr \mathcal{K}(\rho)=\tr\rho$. It is a simple exercise to show that every CPTP map must satisfy
\begin{equation}\label{eq:TP_cond}
\sum_i K_i^\dagger K_i=\openone_{\rm d}\,.
\end{equation}
We now demonstrate that any transformation of the form of Eq.~\eqref{eq:kraus_form} satisfying Eq.~\eqref{eq:TP_cond} can be regarded as the evolution of a part of a larger system which evolves unitarily. Since unitary evolution is an obviously valid type of evolution, we conclude that every CPTP map describes a valid quantum operation.

To prove our point, we follow Ref.~\cite[p. 365]{00Nielsen}: Let $U$ be an operator such that $U\ket{\psi}\ket{e_0}=\sum_i K_i\ket{\psi}\ket{e_i}$ where $\rho=\ket{\psi}\!\bra{\psi}$ is an arbitrary state of the system of interest, $\ket{e_0}$ is an arbitrary state of an ancillary system $E$ and $\{\ket{e_i}\}$ is an orthonormal basis of $E$. Using Eq.~\eqref{eq:TP_cond} it is easy to show that $U$ preserves inner-products, i.e., $\bra{\psi}\bra{e_0}U^\dagger U\ket{\varphi}\ket{e_0}=\langle\psi|\varphi\rangle$, therefore $U$ can be realized as a unitary operation on the joint system. Moreover, from the definition of $U$ and the normalization of the basis $\{\ket{e_i}\}$, a straightforward computation gives
\begin{equation}
\tr_E\left[U\left(\rho\otimes\ket{e_0}\!\bra{e_0}\right)U^\dagger\right]=\sum_iK_i\rho K_i^\dagger\,,
\end{equation}
where $\tr_E$ denotes the partial trace over the system $E$. Since the rhs of the equation above is precisely the Kraus decomposition of Eq.~\eqref{eq:kraus_form}, we have proved the implication CPTP maps $\Rightarrow$ quantum operations.

\subsection{Some useful results}\label{sec:CPuseful}

In this section we gather a number of standard results related with quantum operations and their representations. The intention is to offer the reader a self-contained and concise review of tools and concepts employed in the developments to be presented in the following chapters. The experienced reader should feel free to skip the material of this section and refer to it for notational purposes while progressing through the rest of the thesis.

\subsubsection{Entanglement Breaking and Trace Preserving maps}

As argued in the last section, the restriction to set of CPTP maps is necessary and sufficient to provide a mathematical description of every physically admissible transformation of an arbitrary quantum state. Experimentally, though, it is conceivable that not every CPTP can be implemented due to practical (but not fundamental) limitations. From a theoretical viewpoint, it is interesting to consider the impact of additional constraints on the set of realizable quantum state transformations. In this section, we introduce a relevant subset of quantum operations: the set of entanglement breaking and trace preserving (EBTP) maps \cite{98Holevo1295,02Verstraete,03Horodecki629,03Ruskai643}.

Conceptually, EBTP maps convert quantum information into quantum information via a classically mediated process. Quoting Fuchs and Sasaki~\cite{03Fuchs377}, ``the reader should be left with the imagery of a quantum state initially living in a large river of Hilbert space, later to be squeezed through a very small outlet'', and then discharged in a large river of Hilbert space once again. In what follows we give a construction of an arbitrary EBTP map that more precisely formulates the above description.

First, the initial quantum state $\rho$ is measured with a POVM of elements $\{P_\alpha\}_{\alpha=1}^{\rm d}$, and depending on the classical outcome $\alpha$, a pure state $\ket{h_{\rm d}^\alpha}$ from an orthonormal basis set $\{\ket{h_{\rm d}^\alpha}\}_{\alpha=1}^{\rm d}$ of the Hilbert space $\textsf{H}_{\rm d}$ is prepared. The output state of such a process is of the form
\begin{equation}
\mathcal{QC}(\rho)=\sum_{\alpha=1}^{\rm d}\tr\left(P_\alpha\rho\right)\ket{h_{\rm d}^\alpha}\!\bra{h_{\rm d}^\alpha}\,.
\end{equation}
Holevo~\cite{98Holevo1295} coined the term \emph{quantum-classical channel} to designate maps of this form, since they express a decision rule transforming quantum states into probability distributions on an output alphabet $\{\alpha\}$.

After this, the POVM $\{\kett{h_{\rm d}^\beta}\!\braa{h_{\rm d}^\beta}\}_{\beta=1}^{\rm d}$ is applied and a density matrix $Q_\beta$ is prepared conditioned on the classical outcome $\beta$. The resulting state is then
\begin{equation}
\mathcal{CQ}(\varrho)=\sum_{\beta=1}^{\rm d}\tr\left(\varrho \kett{h_{\rm d}^\beta}\!\braa{h_{\rm d}^\beta}\right)Q_\beta\,.
\end{equation}
Maps of this form are termed \emph{classical-quantum} channels \cite{98Holevo1295}. With a straightforward computation, the composition of these two maps can be shown to be of the form
\begin{equation}\label{eq:Holevo_form}
\mathcal{CQ}\circ\mathcal{QC}(\rho)=\sum_{\beta=1}^{\rm d}\tr\left(\rho P_\beta\right)Q_\beta
\end{equation}
which is known as the \emph{Holevo form} --- a defining expression of every EBTP map \cite{03Horodecki629}.\par

Because of the intermediary measurement, any entanglement present between the system evolving under the EBTP dynamics and other degrees of freedom is destroyed. Since the $\mathcal{CQ}$ channel does not allow these systems to interact again, the outcome of an EBTP map is always separable with respect to this partition, motivating the name ``entanglement breaking'' coined in Ref.~\cite{03Horodecki629}.

An important property of EBTP maps is that they form a convex set. This can be promptly verified by noting that any convex sum of Holevo forms is still in the Holevo form, since any convex sum of POVMs is still a POVM. Every $\mathcal{CQ}$ map with a pure density matrix $Q_\beta$ is an extreme point of this convex set, however, for ${\rm d}\geq 3$ there are extreme points which are not $\mathcal{CQ}$ maps. These and other structural characterizations of the set of EBTP maps are provided in Ref.~\cite{03Horodecki629}. In Ref.~\cite{03Ruskai643} the case ${\rm d}=2$ was studied in detail.

In the next section we provide a representation of EBTP maps via certain types of matrices --- analogous to the representation of CP maps via the PSD matrices $\mathfrak{K}$ (cf. Theorem~\ref{teo:choi}). Because this way of characterizing maps is at the heart of most of the developments in this thesis, we also summarize similar results for the sets of positive and CPTP maps.

\subsubsection{Isomorphisms between maps and matrices}\label{sec:isomorphisms}

A key idea to be employed in this thesis is that the problem of searching over a set of maps can be recast as searching over a set of matrices. Mathematically, this arises when these two sets are related through a bijective correspondence, i.e., they are isomorphic. In less technical language, each map in the first set is associated with a different matrix in the second set (injection), and the matrices obtained from this relation fully span the second set (surjection). In this section, we sketch this isomorphism for the sets of positive, CP, CPTP and EBTP maps. An excellent review of the duality between maps and matrices is given in Ref.~\cite{04Zyczkowski3}.

In 1972, Jamio{\l}kowski showed that the set $\mathcal{P}_{\rm d}^{\rm set}$ of positive maps $\mathcal{P}:\mathcal{B}(\textsf{H}_{\rm d})\to\mathcal{B}(\textsf{H}_{\rm d})$ is isomorphic to the set $\mathfrak{P}_{{\rm d}^2}^{\rm set}$ of matrices $\mathfrak{P}\in\mathcal{B}(\textsf{H}_{\rm d}\otimes\textsf{H}_{\rm d})$ such that
\begin{equation}\label{eq:jamconst}
\bra{h_{\rm d}^\alpha}\braa{h_{\rm d}^\beta}\mathfrak{P}\ket{h_{\rm d}^\alpha}\kett{h_{\rm d}^\beta}\geq 0\qquad \mbox{for}\quad\alpha,\beta=1,\ldots,{\rm d}\,, \end{equation}
where $\{\ket{h_{\rm d}^\alpha}\}_{\alpha=1}^{\rm d}$ is some orthonormal bases of $\textsf{H}_{\rm d}$. The isomorphism $J: \mathcal{P}_{\rm d}^{\rm set}\to \mathfrak{P}_{{\rm d}^2}^{\rm set}$ is given by
\begin{equation}
\mathcal{P}\stackrel{J}{\mapsto}\mathfrak{P} = (\mathcal{I}_{\rm d}\otimes\mathcal{P})\left(\ket{\Psi}\!\bra{\Psi}\right)\,,
\end{equation}
where we have defined the unnormalized, bipartite (and maximally entangled) state $\ket{\Psi} \defeq \sum_{\alpha=1}^{\rm d}{\ket{h_{\rm d}^\alpha}\otimes \ket{h_{\rm d}^\alpha}}\in \textsf{H}_{\rm d}\otimes\textsf{H}_{\rm d}$. The reader should recognize the similarity of the map above with the one introduced in Theorem \ref{teo:choi} on page~\pageref{teo:choi} (see also the footnote~\ref{foot:choi} on the same page).\par

A few years later, in his landmark 1975 paper \cite{75Choi285}, Choi obtained a analogous result between the set $\mathcal{K}_{\rm d}^{\rm set}$ of \emph{completely positive} maps $\mathcal{K}:\mathcal{B}(\textsf{H}_{\rm d})\to\mathcal{B}(\textsf{H}_{\rm d})$ and the set $\mathfrak{K}_{{\rm d}^2}^{\rm set}$ of matrices $\mathfrak{K}\in\mathcal{B}(\textsf{H}_{\rm d}\otimes\textsf{H}_{\rm d})$ such that
\begin{equation}\label{eq:choiconst}
\bra{v}\mathfrak{K}\ket{v}\geq 0\qquad \forall \ket{v} \in \textsf{H}_{\rm d}\otimes\textsf{H}_{\rm d}\,,
\end{equation}
that is, the cone of positive semidefinite matrices $\mathfrak{K}$ of dimension ${\rm d}^2$. Once again, the map introduced in Eq.~\eqref{eq:jamisomorphism} --- defined between $\mathcal{K}_{\rm d}^{\rm set}$ and $\mathfrak{K}_{{\rm d}^2}^{\rm set}$ --- gives the expression of this isomorphism. A more recent and very neat demonstration of the isomorphism between $\mathcal{K}_{\rm d}^{\rm set}$ and $\mathfrak{K}_{{\rm d}^2}^{\rm set}$ is given in Ref.~\cite{01DAriano042308}.\par

Remarkably, condition~\eqref{eq:choiconst} is more demanding than condition~\eqref{eq:jamconst}, since the set of product states $\ket{x}\otimes\ket{y}$ is just a subset of the set of states that can be formed in the full Hilbert space $\textsf{H}_{\rm d}\otimes\textsf{H}_{\rm d}$. This is hardly surprising, since complete positivity is a more demanding constraint than mere positivity, as discussed in Sec.~\ref{sec:CP}. It is curious, though, that it is the set of completely positive maps (as opposed to the set of positive maps) that is isomorphic to the cone of positive semidefinite matrices.

For the purposes of this thesis, our main interest is on an isomorphic correspondence for the set $\mathcal{Q}_{\rm d}^{\rm set}$ of CPTP maps on $\mathcal{Q}:\mathcal{B}(\textsf{H}_{\rm d})\to\mathcal{B}(\textsf{H}_{\rm d})$. This is because the elements of this set model quantum operations, and for reasons that will become clearer later, it will be crucial to have a matrix characterization of how quantum mechanical systems can evolve. In Refs. \cite{99Fujiwara3290,99Horodecki1888}, $\mathcal{Q}_{\rm d}^{\rm set}$ was shown to be isomorphic to the set $\mathfrak{Q}^{\rm set}_{{\rm d}^2}$ of matrices $\mathfrak{Q}\in \mathcal{B}(\textsf{H}_{\rm d}\otimes\textsf{H}_{\rm d})$ such that
\begin{equation}\label{eq:CPTPconstraints}
\mathfrak{Q} \geq 0 \quad \mbox{and} \quad \tr_{\rm 2} \mathfrak{Q} = \openone_{\rm d}\,.
\end{equation}
Here, $\mathfrak{Q}$ can be regarded  as an (unnormalized) bipartite density matrix and $\tr_{\rm 2}$ denotes the partial trace over the second subsystem. Clearly, the positivity constraint $\mathfrak{Q}\geq 0$ arises from the CP character of every CPTP map, the partial trace constraint, in turn, follows from the additional trace preserving requirement.\par

Yet another set of maps of interest is $\mathcal{B}_{\rm d}^{\rm set}$, formed by the EBTP maps defined in the last section. Even before the characterization of Eq.~\eqref{eq:Holevo_form} given in Ref.~\cite{03Horodecki629}, these maps had long been employed as CPTP maps whose matrices $\mathfrak{Q}$ are always separable across the two parties of dimension {\rm d} \cite{96Bennett3824,99Horodecki1888,01Rains2921,01Cirac544,02Verstraete}. That is, $\mathcal{B}_{\rm d}^{\rm set}$ is isomorphic to the set of matrices $\mathfrak{B}_{{\rm d}^2}^{\rm set}$ whose elements satisfy
\begin{equation}\label{eq:EBTPisomorphism}
\mathfrak{B}\geq 0\,,\quad\tr_2\mathfrak{B}=\openone_{\rm d}\quad\mbox{and}\quad \mathfrak{B}\mbox{ is separable across }{\rm d}\otimes{\rm d}.
\end{equation}

The matrix elements forming the image of each one of the isomorphisms discussed above are obtained from
\begin{equation}\label{eq:jam_dir}
\mathfrak{C}=(\mathcal{I}_{\rm d}\otimes\mathcal{C})\left(\ket{\Psi}\!\bra{\Psi}\right)=\sum_{\alpha,\beta=1}^{\rm d}\kett{h_{\rm d}^\alpha}\!\braa{h_{\rm d}^\beta}\otimes\mathcal{C}(\kett{h_{\rm d}^\alpha}\!\braa{h_{\rm d}^\beta})
\end{equation}
where $\mathcal{C}$ represents any one of the maps $\mathcal{P}$, $\mathcal{K}$, $\mathcal{Q}$ or $\mathcal{B}$ while $\mathfrak{C}$ corresponds to $\mathfrak{P}$, $\mathfrak{K}$, $\mathfrak{Q}$ or $\mathfrak{B}$, respectively. Henceforth, for any linear map $\mathcal{C}$, we shall refer to the matrix $\mathfrak{C}$ computed via Eq.~\eqref{eq:jam_dir} as the \emph{Choi matrix of $\mathcal{C}$}.

On the other hand, Eq. (\ref{eq:jam_dir}) can be ``inverted'' to characterize the action of any linear map $\mathcal{C}$ from its Choi matrix $\mathfrak{C}$ \cite{01DAriano042308}
\begin{equation}\label{eq:jam_inv}
\mathcal{C}(\varrho)=\tr_1\left[(\varrho^{\sf T}\otimes\openone_{\rm d})\mathfrak{C}\right]\quad\forall \varrho \in \mathcal{B}(\textsf{H}_{\rm d})
\end{equation}
where the transpose operation {\sf T} is taken with respect to $\varrho$ written in the basis $\{\ket{h_{\rm d}^\alpha}\}_{\alpha=1}^{\rm d}$. To see that Eq.~\eqref{eq:jam_inv} actually inverts Eq.~\eqref{eq:jam_dir} for an arbitrary linear map $\mathcal{C}$, note that the substitution of the latter into the former immediately leads to a tautology:
\begin{equation}
\mathcal{C}(\varrho)=\tr_{\rm 1}\left[\left(\varrho^{\sf T}\otimes\openone_{\rm d}\right)\sum_{\alpha,\beta=1}^{\rm d}\kett{h_{\rm d}^\alpha}\!\braa{h_{\rm d}^\beta}\otimes\mathcal{C}(\kett{h_{\rm d}^\alpha}\!\braa{h_{\rm d}^\beta})\right]= \sum_{\alpha,\beta=1}^{\rm d}\braa{h_{\rm d}^\beta}\varrho^{\sf T}\kett{h_{\rm d}^\alpha}\mathcal{C}(\kett{h_{\rm d}^\alpha}\!\braa{h_{\rm d}^\beta})=\mathcal{C}(\varrho)
\end{equation}
where the linearity of $\mathcal{C}$ was used in the last equality.\par

A good example of application of Eq.~\eqref{eq:jam_inv} is the straightforward derivation of the expression of the trace preserving condition of a CPTP map $\mathcal{Q}$ in terms of its Choi matrix $\mathfrak{Q}$. Simple substitution of Eq.~\eqref{eq:jam_inv} into $\tr\mathcal{Q}(\rho)=\tr\rho$ leads to the constraint $\tr_2\mathfrak{Q}=\openone_{\rm d}$, already given in Eq.~\eqref{eq:CPTPconstraints}.

Throughout this thesis, we will often recur to represent quantum operations via their Choi matrices. For this reason, some practical mathematical tools for dealing with these objects are most welcome. In the next section we introduce some of them while approaching a technical problem of interest.

\subsubsection{The Choi matrix of a composed CP map and the {\rm vec}/{\rm mat} operations}\label{sec:choi_composed}

In this section we solve the following problem: Given Choi matrices $\mathfrak{K}_a$ and $\mathfrak{K}_b$ of two CP maps $\mathcal{K}_a$ and $\mathcal{K}_b$, how can we express the Choi matrix of the composed map $\mathcal{K}_b\circ\mathcal{K}_a$ as a function of  $\mathfrak{K}_b$ and $\mathfrak{K}_a$?

The solution for this problem will be useful later in this thesis. We also present it here because it sets the stage for the introduction of some tools for the algebraic manipulation of Choi matrices. These will be employed, for example, in the next section, where we show that Kraus decompositions and CP maps are not isomorphically related.

We start by applying Eqs.~\eqref{eq:jam_dir} and~\eqref{eq:jam_inv} to write
\begin{multline}\label{eq:seq1}
\mathfrak{K}_{ba}= (\mathcal{I}_{\rm d}\otimes\mathcal{K}_b\circ\mathcal{K}_a)\ket{\Psi}\!\bra{\Psi}= \left(\mathcal{I}_{\rm d}\otimes\mathcal{K}_b\right)\circ\left(\mathcal{I}_{\rm d}\otimes\mathcal{K}_a\right)\ket{\Psi}\!\bra{\Psi}=\\ \left(\mathcal{I}_{\rm d}\otimes\mathcal{K}_b\right)\mathfrak{K}_a=
{\rm Tr}_1\left[\left(\mathfrak{K}_a^{\sf T}\otimes\openone_{{\rm d}^2}\right)\widetilde{\mathfrak{K}}\right]\,,
\end{multline}
where $\widetilde{\mathfrak{K}}$ denotes the ${\rm d}^4$-dimensional Choi matrix of the map $\mathcal{I}_{\rm d}\otimes\mathcal{K}_b$ and ${\rm Tr}_1$ the partial trace over the first ${\rm d}^2$-dimensional subsystem of its argument. In order to continue from here, we need to express $\widetilde{\mathfrak{K}}$ in terms of $\mathfrak{K}_b$. The following short review on matrix vectorization will provide adequate tools for the solution of this problem.

\paragraph{The {\rm vec} and {\rm mat} operations.}
The {\rm vec}-operator~\cite{91Horn,04Nielsen} is defined as the transformation of any matrix into a vector by stacking the columns of the original matrix. For example,
\begin{equation}\label{eq:vec_example}
{\rm vec}\left(\begin{array}{cc}a & c \\
b & d \end{array}\right)=\left(\begin{array}{c}a\\b\\c\\d\end{array}\right)\,.
\end{equation}
There are a number of useful properties of {\rm vec} that can be easily verified~\cite{04Nielsen}: For $A,B,C\in\mathcal{M}_{\rm d}$,
\begin{align}
{\rm vec}\,A&=\left(\openone_{\rm d}\otimes A\right)\ket{\Psi}\label{eq:vec_mes}\\
({\rm vec}\, A)^\dagger {\rm vec}\, B &= \tr\left(A^\dagger B\right)\label{eq:vec_trace}\\
{\rm vec}(ABC)&=\left(C^{\sf T}\otimes A\right){\rm vec}\,B\label{eq:roth}\\
{\rm vec}\left( A\otimes B\right)&=P_{{\rm d}^4}\left({\rm vec}\, A\otimes {\rm vec}\, B\right)\label{eq:vecpermbody}
\end{align}
where, in the first line, $A$ is defined in the same basis used to define $\ket{\Psi}$; in the last line, $P_{{\rm d}^4}$ is a permutation matrix whose explicit form is worked out in Appendix~\ref{app:vecperm}. It is also not difficult to see that for every vector $v$ of ${\rm d}^2$ entries, there exists a unique matrix $A\in\mathcal{M}_{\rm d}$ such that ${\rm vec}\,A = v$. This establishes an inverse operation for {\rm vec}, which we shall denote by ${\rm mat}$, i.e., $M={\rm mat}\, v$.

Let us now see how this machinery can be of assistance in solving the composition problem at hand. First, note the following relationship between the Choi matrix of any CP map $\mathcal{K}$ and the Kraus decomposition of Eq.~\eqref{eq:kraus_form}:
\begin{equation}\label{eq:choikrauvec}
\mathfrak{K}=(\mathcal{I}_{\rm d}\otimes\mathcal{K})\ket{\Psi}\!\bra{\Psi}=\sum_{i}\left(\openone_{\rm d}\otimes K_i\right)\ket{\Psi}\!\bra{\Psi}\left(\openone_{\rm d}\otimes K_i^\dagger\right)=\sum_i{\rm vec}\,K_i\left({\rm vec}\,K_i\right)^\dagger\,,
\end{equation}
where Eq.~\eqref{eq:vec_mes} was used to establish the last equality.

Now, for sake of generality, define the CP map $\mathcal{T}\defeq\mathcal{E}\otimes\mathcal{F}$, where $\mathcal{E}$ and $\mathcal{F}$ are CP maps with Kraus operator sets $\{E_i\}$ and $\{F_j\}$, respectively. Using Eqs.~\eqref{eq:choikrauvec},~\eqref{eq:vecpermbody} and some straightforward algebra, we find that the Choi matrix $\mathfrak{T}$ (of $\mathcal{T}$) can be written in terms of the Choi matrices $\mathfrak{E}$ and $\mathfrak{F}$ (of $\mathcal{E}$ and $\mathcal{F}$) as follows:
\begin{align}
\mathfrak{T}&=\sum_{i,j}{\rm vec}\left(E_i\otimes F_j\right)\left[{\rm vec}\left(E_i\otimes F_j\right)\right]^\dagger\\
&=\sum_{i,j}P_{{\rm d}^4}\left[{\rm vec}\,E_i\otimes{\rm vec}\,F_j\right]\left[\left({\rm vec}\,E_i\right)^\dagger\otimes\left({\rm vec}\,F_j\right)^\dagger\right]P_{{\rm d}^4}\\
&=P_{{\rm d}^4}\left[\sum_i{\rm vec}\,E_i\left({\rm vec}\,E_i\right)^\dagger\otimes\sum_j{\rm vec}\,F_j\left({\rm vec}F_j\right)^\dagger\right]P_{{\rm d}^4}\\
&=P_{{\rm d}^4}\left(\mathfrak{E}\otimes\mathfrak{F}\right)P_{{\rm d}^4}\,.
\end{align}
This result and the trivial fact that the Choi matrix of the identity map is $\ket{\Psi}\!\bra{\Psi}$ can be used to write $\widetilde{\mathfrak{K}}$ in terms of $\mathfrak{K}_b$,
\begin{equation}
\widetilde{\mathfrak{K}}=P_{{\rm d}^4}\left(\ket{\Psi}\!\bra{\Psi}\otimes\mathfrak{K}_b\right)P_{{\rm d}^4}\,,
\end{equation}
which substituted into Eq.~\eqref{eq:seq1} leads to
\begin{equation}
\mathfrak{K}_{ba}={\rm Tr}_1\left[\left(\mathfrak{K}_a^{\sf T}\otimes\openone_{{\rm d}^2}\right)P_{{\rm d}^4}\left(\ket{\Psi}\!\bra{\Psi}\otimes\mathfrak{K}_b\right)P_{{\rm d}^4}\right]\,.
\end{equation}
Since both $P_{{\rm d}^4}$ and $\ket{\Psi}\!\bra{\Psi}$ are fixed matrices for a given dimension ${\rm d}$, the equation above gives $\mathfrak{K}_{ba}$ solely in terms of $\mathfrak{K}_a$ and $\mathfrak{K}_b$, as required.

\subsubsection{Non-uniqueness and construction of Kraus decompositions}\label{sec:krausnonunique}
The goal of this section is to show that, contrary to the one-to-one relationship between CP maps and Choi matrices, the relationship between CP maps and Kraus decompositions is one-to-many. Moreover, we describe a practical method to construct any set of Kraus operators from the unique Choi matrix of a given CP map. As a byproduct, we will find that every CPTP map acting on a {\rm d}-dimensional system can be written in the Kraus form with no more than ${\rm d}^2$ Kraus operators.

Let us start by checking that a given CP map can be decomposed in the Kraus form in (infinitely) many different ways. In order to do so, we show that if $K_j$ is regarded as set of Kraus operators of a CP map $\mathcal{K}$, then so is the set $A_i$ defined by
\begin{equation}\label{eq:unitaryfreedom}
A_i\defeq\sum_j u_{i,j} K_j\,,
\end{equation}
where the complex numbers $u_{i,j}$ represent the matrix elements of an isometry $U$. This result has a long history, being firstly stated in a closely related form by Schr{\"o}dinger in 1936 \cite{36Schrodinger446}, and ever since rediscovered numerous times \cite{57Jaynes171,93Hughston14} (see Ref.~\cite{06Kirkpatrick95} for a both technical and historical review).

It suffices to show that the set $A_i$ gives rise to the same Choi matrix $\mathfrak{K}$ as the set $K_j$. Let $\mathfrak{A}$ denote the Choi matrix of the map generated by $A_i$, from Eq.~\eqref{eq:choikrauvec} we have
\begin{equation}
\mathfrak{A}=\sum_i{\rm vec}\,A_i\left({\rm vec}\,A_i\right)^\dagger=\sum_{j,k}\left(\sum_i u_{i,j} u^\ast_{i,k}\right) {\rm vec}\,K_j\left({\rm vec}\,K_k\right)^\dagger=\sum_j{\rm vec}\,K_j\left({\rm vec}\,K_j\right)^\dagger=\mathfrak{K}\,,
\end{equation}
which establishes the claimed result. Notice that in the last line we have only used the fact that $\sum_i u_{i,j} u^\ast_{i,k}=\delta_{k,j}$, which is a common property of any isometry. Since there are infinitely many different isometries to choose from, infinitely many sets of Kraus operators for the same quantum operation can be generated from Eq.\eqref{eq:unitaryfreedom}. Actually, by varying over all isometries, Eq.~\eqref{eq:unitaryfreedom} gives rise to every possible set of Kraus operators of a fixed CP map $\mathcal{K}$ (see \cite[p. 372]{00Nielsen} for a proof).

A closely related question is how a set of Kraus operators can be constructed from a given Choi matrix. In what follows, we present a canonical procedure that, on its own, reinforces the idea that there are (infinitely) many different choices of Kraus operators for a given CP map and clearly demonstrates that every CP map can be specified with a set with no more than ${\rm d}^2$ Kraus operators.

The fact that any Choi matrix $\mathfrak{K}$ is a ${\rm d}^2 \times {\rm d}^2$ PSD matrix can be expressed via the equality \cite{07Bhatia}
\begin{equation}\label{eq:squarerootdecomp}
\mathfrak{K}=S^\dagger S\,,
\end{equation}
where $S$ is a matrix of dimensions ${\rm d}^\prime\times{\rm d}^2$ for arbitrary ${\rm d}^\prime$. A set of Kraus operators for a CP map $\mathcal{K}$ can be obtained from $\mathfrak{K}$ by reshaping the columns of $S^\dagger$ into matrices, as follows:
\begin{equation}\label{eq:fromChoitoKraus}
A_i={\rm mat}\left[{\rm col}_i\left(S^\dagger\right) \right]\qquad\mbox{for}\quad i=1,\ldots,{\rm d}^\prime\,,
\end{equation}
where ${\rm col}_i$ is the operator that extracts the $i$-th column of its matrix argument. Because there are many different choices of matrices $S$ that decompose $\mathfrak{K}$ in the form of Eq.~\eqref{eq:squarerootdecomp}, many different Kraus decompositions can arise from Eq.~\eqref{eq:fromChoitoKraus} --- this is just a different way of observing the already demonstrated one-to-many correspondence between Choi matrices and Kraus operators. The Cholesky factorization of $\mathfrak{K}$~\cite{85Horn}, for example, can always be employed to yield the unique upper triangular matrix $S$ with non-negative diagonal entries that satisfy Eq.~\eqref{eq:squarerootdecomp}. In this case, we have ${\rm d}^\prime={\rm d}^2$, which guarantees that it is always possible to decompose a CP map with no more than ${\rm d}^2$ Kraus operators. In fact, the minimal number of Kraus operators with which a CP map can be decomposed is equal to the rank of the corresponding Choi matrix~\cite{02Verstraete,05Salgado55}.

It is straightforward to verify that formula~\eqref{eq:fromChoitoKraus} gives a valid set of Kraus operators for $\mathcal{K}$. Next, we do that by recovering the Choi matrix $\mathfrak{K}$ when the above defined $A_i$ are used in Eq.~\eqref{eq:choikrauvec}:
\begin{multline}
\sum_i{\rm vec}\,A_i\left({\rm vec}\,A_i\right)^\dagger=\sum_i{\rm vec}\,{\rm mat}\left[{\rm col}_i\left(S^\dagger\right) \right]\left\{{\rm vec}\,{\rm mat}\left[{\rm col}_i\left(S^\dagger\right) \right]\right\}^\dagger=\\\sum_i{\rm col}_i\left(S^\dagger\right) \left[{\rm col}_i\left(S^\dagger\right) \right]^\dagger=S^\dagger S=\mathfrak{K}
\end{multline}

%

\subsubsection{Basis for hermitian matrices}\label{sec:herm_basis}

By construction, every Choi matrix is a hermitian matrix. In many cases, it will be helpful to expand a Choi matrix in some fixed basis of hermitian matrices and characterize the associated map by the real coefficients of the expansion. In this section we set the conventions of the basis we will employ, as well as derive the expansion of an arbitrary CPTP map with respect to it.

We shall denote by $\{H_{\rm d}^\alpha\}_{\alpha=1}^{d^2}$ any set of hermitian matrices forming a basis for the space of hermitian matrices and satisfying the following properties: $H_{\rm d}^1=\openone_{\rm d}$ is the only element of the set with a non-zero trace, that is $\tr H_{\rm d}^\alpha={\rm d}\delta_{\alpha,1}$. Moreover, we would like to think of the remaining matrices as higher dimensional generalizations of the Pauli matrices for {\rm d}=2,
\begin{equation}
\sigma_1=
\left(\begin{array}{cc}
0&1\\
1&0
\end{array}\right)\,,\qquad
\sigma_2=
\left(\begin{array}{cc}
0&-i\\
i&0
\end{array}\right)\quad\mbox{and}\quad
\sigma_3=
\left(\begin{array}{cc}
1&0\\
0&-1
\end{array}\right)\,.
\end{equation}
Since the Pauli matrices satisfy $\tr\left(\sigma_\alpha\sigma_\beta\right)=2\delta_{\alpha,\beta}$, for $\alpha,\beta=1,\ldots,3$, we shall require this same orthonormalization property for $H_{\rm d}^\alpha$, i.e., $\tr\left(H_{\rm d}^\alpha H_{\rm d}^\beta\right)=2\delta_{\alpha,\beta}$ for $\alpha,\beta=2,\ldots,{\rm d}^2$; or, accounting for the convention $H_{\rm d}^1=\openone_{\rm d}$,
\begin{equation}
\tr\left(H_{\rm d}^\alpha H_{\rm d}^\beta\right)=\delta_{\alpha,\beta}\left[{\rm d}\delta_{\beta,1}+2(1-\delta_{\beta,1})\right]\,,
\end{equation}
for $\alpha,\beta=1,\ldots,{\rm d}^2$. For ${\rm d}=3$, the identity matrix and the Gell-mann matrices \cite{62Gell-Mann1067} provide a possible construction of the basis set $\{H_3^{\alpha}\}_{\alpha=1}^9$. For larger dimensions, we have used the generators of SU({\rm d}) (plus the identity matrix) as a basis.
We note that while our generalized basis is still hermitian and orthogonal like the Pauli matrices, the matrices $H_{\rm d}^\alpha$ are generally not unitary for ${\rm d}\geq 3$.

We now move to characterize the Choi matrix of a CPTP map $\mathcal{Q}$ acting on a {\rm d}-dimensional quantum system. As discussed before, $\mathfrak{Q}$ will be a matrix of dimension ${\rm d}^2$. It will be convenient to expand it in a tensor product basis of the form
\begin{equation}\label{eq:choigenbasis}
\mathfrak{Q}=\sum_{\alpha,\beta=1}^{d^2} x_{\alpha,\beta} H_{\rm d}^\alpha\otimes H_{\rm d}^\beta\,.
\end{equation}

Since this represents the Choi matrix of a trace preserving map, we must have that $\tr_2 \mathfrak{Q}=\openone_{\rm d}$. Clearly, this matrix constraint can be rewritten as several scalar constraints, as follows:
\begin{equation}
\tr\left[\left(\tr_2 \mathfrak{Q}\right)H_{\rm d}^\alpha\right]=\tr H_{\rm d}^\alpha\qquad\mbox{for }\alpha=1,\ldots, {\rm d}^2\,.
\end{equation}
From the definition of the partial trace \cite[p. 107]{00Nielsen} and the properties of $H_{\rm d}^\alpha$ established above, we can even rewrite the above as follows
\begin{equation}\label{eq:solvedlineqconst}
\tr\left[\mathfrak{Q}\left(H_{\rm d}^\alpha\otimes\openone_{\rm d}\right)\right]={\rm d}\delta_{\alpha,1}\qquad\mbox{for }\alpha=1,\ldots, {\rm d}^2\,.
\end{equation}
By substituting the expansion of Eq.~\eqref{eq:choigenbasis} in the equation above, and solving the resulting set of equations for the coefficients $x_{\alpha,\beta}$, we find that $x_{\alpha,1}=\delta_{\alpha,1}/{\rm d}$ for all $\alpha=1,\ldots,{\rm d}^2$. This condition, used in Eq.~\eqref{eq:choigenbasis}, gives rise to the following general expansion of any Choi matrix of a CPTP map:
\begin{equation}\label{eq:choiexpandTPin}
\mathfrak{Q}=\frac{\openone_{{\rm d}^2}}{\rm d}+\sum_{\substack{\alpha=1\\\beta=2}}^{{\rm d}^2} x_{\alpha,\beta} H_{\rm d}^{\alpha}\otimes H_{\rm d}^\beta\,.
\end{equation}
Of course, many other constraints apply on the coefficients $x_{\alpha,\beta}$ in order to guarantee that $\mathfrak{Q}\geq 0$. In the next section, we shall explicitly look at them in the case of ${\rm d}=2$.

\subsubsection{CPTP maps on $\mathcal{M}_2$}\label{sec:CPTPonM2}

Up to here, we have surveyed a few properties of CPTP maps that are valid independently of the finite dimension  {\rm d} of the matrices onto which they apply. In this section, we review some important results which are peculiar to the case ${\rm d}=2$. These will be important in our study of qubit state transformations in Ch.~\ref{chap:aggie},~\ref{chap:tracking} and~\ref{chap:multistep}. Most of the results discussed here were obtained by Ruskai~\etal in Refs.~\cite{01King192,02Ruskai159}.

\paragraph{Diagonalizing CPTP maps on qubits}

When we restrict the action of the CPTP maps to $2\times 2$ density matrices, there is a nice alternative representation of these maps due to King and Ruskai \cite{01King192} which, among many other things, enhances our intuition on how such maps transform the state of a single qubit. The derivation of this representation closely resembles a diagonalization procedure and is, in essence, the application of the singular value decomposition to some matrix formed from the expansion coefficients of the Choi matrix, as discussed above. In what follows, this procedure is presented in detail.

Let $\mathfrak{Q}$ be the $4 \times 4$ Choi matrix of a CPTP map $\mathcal{Q}$ on a single qubit state. Then, from Eq.\eqref{eq:choiexpandTPin} we have

\begin{equation}\label{eq:rhoCgeneral}
\mathfrak{Q}=\tfrac{1}{2}\openone_{\rm 4}+\sum_{\substack{j=0\\l=1}}^3 q_{jl} \left(\sigma_j\otimes\sigma_l\right)\,.
\end{equation}
where $q_{jl}$ are the (real) expansion coefficients and we adopted the convention $\sigma_0=\openone_2$.

We now give a general (and convenient) form for the input state of $\mathcal{Q}$. Since any density matrix is necessarily hermitian, we can write
\begin{equation}\label{eq:rhostokes}
2\rho=c\openone_{\rm 2}+\bm{R}\cdot\bm{\sigma}\,,
\end{equation}
where $c\in\mathbb{R}$ and $\bm{R}\in\mathbb{R}^3$ is the vector formed from the expansion coefficients of $\rho$ (multiplied by two) for the basis elements $\sigma_1$, $\sigma_2$ and $\sigma_3$, in such a way that Eq.~\eqref{eq:rhoCgeneral} is an ordinary expansion on the Pauli basis. Besides being written in more compact notation, Eq.~\eqref{eq:rhoCgeneral} highlights the special significance of the vector $\bm{R}$ as the \emph{Bloch vector} of the state $\rho$. Because the Pauli matrices are traceless and density matrices have unit trace, we should set $c=\tr\rho =1$. However, for later use, it will be convenient to proceed leaving $c$ unspecified.

With the aid of Eqs. (\ref{eq:jam_inv}), (\ref{eq:rhoCgeneral}) and (\ref{eq:rhostokes}), the action of the map $\mathcal{Q}$ on $\rho$ can be written in terms of the coefficients $q_{jl}$ as
\begin{equation}\label{eq:Crhofull}
2\mathcal{Q}(\rho)=c\openone_{\rm 2}+\left[c\left(\begin{array}{c}
2 q_{01}\\2 q_{02}\\2 q_{03}\end{array}\right)
+
\left(\begin{array}{ccc}
2 q_{11}& -2 q_{21}& 2 q_{31} \\
2 q_{12}& -2 q_{22}& 2 q_{32} \\
2 q_{13}& -2 q_{23}& 2 q_{33}
\end{array}\right)\left(\begin{array}{c}\bm{R}\cdot\bm{x}\\ \bm{R}\cdot\bm{y}\\ \bm{R}\cdot\bm{z}\end{array}\right)\right]\cdot\left(\begin{array}{c}\sigma_1\\ \sigma_2\\ \sigma_3\end{array}\right)\,.
\end{equation}
We now apply the singular value decomposition to the $3 \times 3$ matrix above. Because $q_{jl}\in\mathbb{R}$ for all $j$ and $l$, the unitary matrices arising from the SVD can be chosen to be real orthogonal matrices,
\begin{equation}
\left(\begin{array}{ccc}
2 q_{11}& -2 q_{21}& 2 q_{31} \\
2 q_{12}& -2 q_{22}& 2 q_{32} \\
2 q_{13}& -2 q_{23}& 2 q_{33}
\end{array}\right)=O_2\left(\begin{array}{ccc}
\widetilde{\mu}_1& 0& 0 \\
0& \widetilde{\mu}_2& 0\\
0& 0& \widetilde{\mu}_3
\end{array}\right) O_1^{\sf T}=
R_2\left(\begin{array}{ccc}
\mu_1& 0& 0 \\
0& \mu_2& 0\\
0& 0& \mu_3
\end{array}\right) R_1^{\sf T}\,,
\end{equation}
where $O_1,O_2 \in O(3,\mathbb{R})$ and $\widetilde{\mu}_{\{1,2,3\}}\geq 0$. The second equality expresses the fact that every element of $O(3,\mathbb{R})$ is either a rotation [i.e., an element of $SO(3,\mathbb{R})$] or the product of a rotation with the inversion $-\openone_3$. As such, the equation is valid for some $R_1, R_2 \in SO(3,\mathbb{R})$ if we absorb the sign of a possible inversion in the coefficients $\mu_{\{1,2,3\}}$. As a result, we have $|\mu_{\{1,2,3\}}|=\widetilde{\mu}_{\{1,2,3\}}$, but the matrix ${\rm diag}\left[\mu_1,\mu_2,\mu_3\right]$ is not necessarily PSD. Applying this decomposition to Eq. (\ref{eq:Crhofull}), we find
\begin{equation}\label{eq:Crhofim}
2\mathcal{Q}(\rho)=c\openone_{\rm 2}+R_2\left[c R_2^{\sf T}\left(\begin{array}{c} 2 q_{01}\\ 2 q_{02} \\ 2 q_{03} \end{array}\right)
+
\left(\begin{array}{ccc}
\mu_1& 0& 0 \\
0& \mu_2& 0\\
0& 0& \mu_3
\end{array}\right)R_1^{\sf T}\left(\begin{array}{c}\bm{R}\cdot\bm{x}\\ \bm{R}\cdot\bm{y}\\ \bm{R}\cdot\bm{z}\end{array}\right)\right]\cdot\left(\begin{array}{c}\sigma_1\\ \sigma_2\\ \sigma_3\end{array}\right)\,,
\end{equation}
which reveals the CPTP map $\mathcal{Q}$ as an affine transformation of the Bloch vector. This can be interpreted as follows: First, the Bloch vector is rotated by $R_1^{\sf T}$, then, the $x-$, $y-$ and $z-$ components of the rotated vector are rescaled by $\mu_1$, $\mu_2$ and $\mu_3$, respectively. Subsequently, the constant vector $c\bm{s}$ with $\bm{s}\defeq R_2^{\sf T}\left(\begin{array}{c}2 q_{01} \\ 2 q_{02} \\2 q_{03} \end{array}\right)$ is added and finally another rotation of the resulting Bloch vector by $R_2$ is performed. This sequence of transformations is illustrated in Fig.~\ref{fig:affine}.

Because every rotation of the Bloch vector corresponds to a unitary transformation of the density matrix, we can write
\begin{equation}\label{eq:ruskai_decomposition_bg}
\mathcal{Q}(\rho)=U\mathcal{D}(V\rho V^\dagger)U^\dagger
\end{equation}
where $V$ is the unitary associated to the rotation $R_1^{\sf T}$, $U$ is the unitary associated to the rotation $R_2$ and $\mathcal{D}$ is said to be a diagonal CPTP map on the Pauli basis, implementing the rescaling and the translation explained above.

\begin{figure}[h]
\centering
\includegraphics[width=14.5cm]{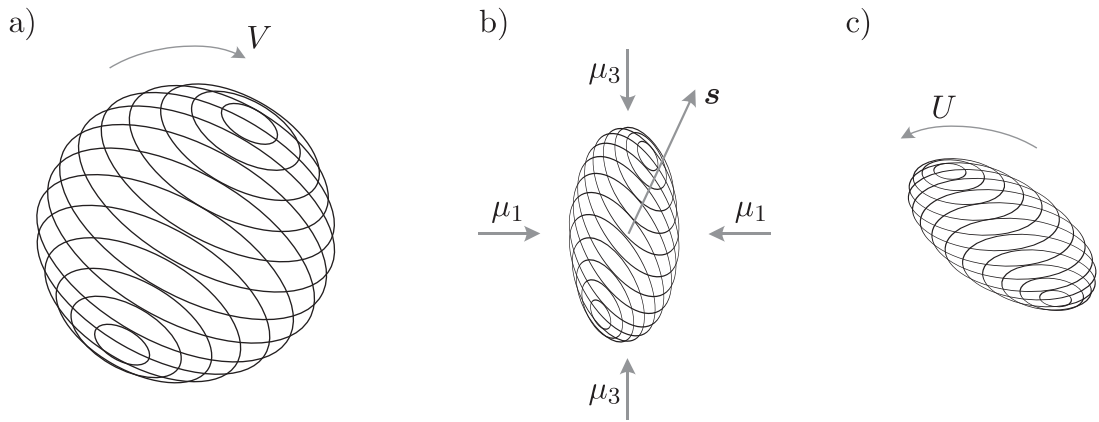}
\caption[Bloch sphere transformation due to a completely positive and trace preserving map]{Affine transformation on the Bloch sphere due to a CPTP map}\label{fig:affine}
\end{figure}

\paragraph{The allowed coefficients $\mu_{1,2,3}$ and $s_{1,2,3}$.}
From the results of Sec. \ref{sec:isomorphisms}, not every real value of coefficients $q_{jl}$ substituted in Eq.~\eqref{eq:rhoCgeneral} will lead to a valid CPTP map $\mathcal{Q}$. In order to produce a sharp characterization, it is necessary to specify every possible choice of coefficients that turn $\mathfrak{Q}$ into as a PSD matrix. This (almost heroic) task was accomplished by Ruskai, Szarek and Werner \cite{02Ruskai159}, who proved that any map $\mathcal{Q}$ in the form of Eq.~\eqref{eq:Crhofim} is CPTP if and only if the following inequalities hold
\begin{equation}
(\mu_1+\mu_2)^2\leq(1+\mu_3)^2-s_3^2-(s_1^2+s_2^2)\left(\frac{1+\mu_3\pm s_3}{1-\mu_3\pm s_3}\right)\,,
\end{equation}
\begin{equation}
(\mu_1-\mu_2)^2\leq(1-\mu_3)^2-s_3^2-(s_1^2+s_2^2)\left(\frac{1-\mu_3\pm s_3}{1+\mu_3\pm s_3}\right)\,,
\end{equation}
\begin{multline}
\left[1-\left(\mu_1^2+\mu_2^2+\mu_3^2\right)-\left(s_1^2+s_2^2+s_3^2\right)\right]^2\geq \\ 4\left[\mu_1^2\left(s_1^2+\mu_2^2\right)+\mu_2^2\left(s_2^2+\mu_3^2\right)+\mu_3^2\left(s_3^2+\mu_1^2\right)-2\mu_1\mu_2\mu_3\right]\,.
\end{multline}

In terms of the representation of Fig.~\ref{fig:affine}, these inequalities establish that not every ellipsoid internal to the Bloch sphere can be obtained from the action of a CPTP map; furthermore, they precisely characterize which ellipsoids can be obtained in such a way.

Remarkably, the inequalities above are saturated for the extreme points of the set of CPTP maps \cite{02Ruskai159}. This turns out to be equivalent to the conditions $s_1=s_2=0$ and
\begin{equation}\label{eq:ruskai_extreme_points}
\mu_3=\mu_1\mu_2\qquad\mbox{and}\qquad s_3^2=(1-\mu_1^2)(1-\mu_2^2)
\end{equation}
which leads to the useful trigonometric parametrization of the extreme points:
\begin{equation}
\mu_1=\cos{u}\,,\quad\mu_2=\cos{v}\,,\quad\mu_3=\cos{u}\cos{v}\,,\quad s_3=\sin{u}\sin{v}
\end{equation}
with $u\in[0,2\pi)$ and $v\in[0,\pi)$.
\paragraph{A representation for the Choi matrix of a general CPTP map on a single qubit}\label{sec:Choimatrep}

Throughout, it will be convenient to express the Choi matrix $\mathfrak{Q}$ of the CPTP map $\mathcal{Q}$ in terms of the unitaries $U$, $V$ and the scalars $\mu_{\{1,2,3\}}$ and $s_{\{1,2,3\}}$ introduced in the last section. Here, this representation is derived in an elementary way.\par

Our first step, is to rewrite Eq.~\eqref{eq:ruskai_decomposition_bg} in terms of the Choi matrices $\mathfrak{Q}$ and $\mathfrak{D}$ of the maps CPTP $\mathcal{Q}$ and $\mathcal{D}$, respectively. For $\{D_i\}$ a set of Kraus operators of $\mathcal{D}$, we have
\begin{align}
\mathfrak{Q}&=\left(\mathcal{I}_2\otimes\mathcal{Q}\right)\ket{\Psi}\!\bra{\Psi}\\
&=\sum_i\left(\openone_2\otimes U D_i V\right)\ket{\Psi}\!\bra{\Psi}\left(\openone_2\otimes U D_i V\right)^\dagger\\
&=\sum_i {\rm vec}\left(U D_i V\right)\left[{\rm vec}\left(U D_i V\right)\right]^\dagger\\
&=\sum_i\left(V^{\sf T}\otimes U\right){\rm vec}\,D_i\left({\rm vec}\,D_i\right)^\dagger\left(V^{\sf T}\otimes U\right)^\dagger\\
&=\left(V^{\sf T}\otimes U\right)\mathfrak{D}\left(V^{\sf T}\otimes U\right)^\dagger\,,\label{eq:KCKD}
\end{align}
where in the third, fourth and fifth equalities were obtained from the application of Eqs.~\eqref{eq:vec_mes}, \eqref{eq:roth} and \eqref{eq:choikrauvec}, respectively.

Eq.~\eqref{eq:KCKD} makes it clear that if we can express $\mathfrak{D}$ in terms of $\mu_{\{1,2,3\}}$ and $s_{\{1,2,3\}}$, then the desired formula for $\mathfrak{Q}$ can be obtained by simple unitary conjugation. Our next step is then to obtain this formula for $\mathfrak{D}$.\par

Since $\ket{\Psi}\!\bra{\Psi}$ can be written in terms of the Pauli matrices as $(\openone_{\rm 4}+\sigma_1\otimes\sigma_1-\sigma_2\otimes\sigma_2+\sigma_3\otimes\sigma_3)/2$, we have
\begin{align}
2\mathfrak{D}&=2(\mathcal{I}_{\rm 2}\otimes\mathcal{D})\ket{\Psi}\!\bra{\Psi}\nonumber\\
&=\openone_{\rm 2}\otimes\mathcal{D}(\openone_{\rm 2})+\sigma_1\otimes\mathcal{D}(\sigma_1)-\sigma_2\otimes\mathcal{D}(\sigma_2)+\sigma_3\otimes\mathcal{D}(\sigma_3)\nonumber\\
&=\openone_{\rm 4} + \openone_{\rm 2}\otimes\left(\bm{s}\cdot\bm{\sigma}\right)+\mu_1\sigma_1\otimes\sigma_1-\mu_2\sigma_2\otimes\sigma_2+\mu_3\sigma_3\otimes\sigma_3\,,\label{eq:ChoiDiag}
\end{align}
where, for the last equation, we used Eq.~\eqref{eq:Crhofim} with $R_1=R_2=\openone_3$ to find that the action of $\mathcal{D}$ on the Pauli basis is given by $\mathcal{D}(\openone_{\rm 2})=\openone_{\rm 2}+\bm{s}\cdot{\bm{\sigma}}$ and $\mathcal{D}(\sigma_k)=\mu_k\sigma_k$, for $k=1,2,3$.
Combining Eqs.~\eqref{eq:ChoiDiag} and \eqref{eq:KCKD}, a formula for $\mathfrak{Q}$ in terms of $U$, $V$ and the scalars $\mu_{\{1,2,3\}}$ and $s_{\{1,2,3\}}$ can be promptly obtained. Next, this formula is presented in terms of two orthonormal sets of real vectors, $\{\bm{v}_k\}_{k=1,2,3}$ and $\{\bm{u}_k\}_{k=1,2,3}$, that specify the unitaries according to
\begin{equation}
\bm{v}_k\cdot \bm{\sigma}\defeq V^\dagger \sigma_k V \quad \mbox{and}\quad
\bm{u}_k\cdot \bm{\sigma}\defeq U\sigma_k U^\dagger \,.\label{eq:unitaryVU}
\end{equation}
or more explicitly
\begin{align}
\bm{v}_k&=\tfrac{1}{2}\left(\tr\left[V^\dagger \sigma_k V\sigma_1\right]\,,\,\tr\left[V^\dagger \sigma_k V\sigma_2\right]\,,\,\tr\left[V^\dagger \sigma_k V\sigma_3\right]\right)\,,\label{eq:vk_intro}\\
\bm{u}_k&=\tfrac{1}{2}\left(\tr\left[U \sigma_k U^\dagger\sigma_1\right]\,,\,\tr\left[U \sigma_k U^\dagger\sigma_2\right]\,,\,\tr\left[U \sigma_k U^\dagger\sigma_3\right]\right)\,.\label{eq:uk_intro}
\end{align}
In terms of these versors, we have
\begin{equation}\label{eq:choigen}
2 \mathfrak{Q} = \openone_{\rm 4} + \sum_{k=1}^{3}{s_k \openone_{\rm 2}\otimes\left(\bm{u}_k\cdot\bm{\sigma}\right)}+\sum_{k=1}^3 \mu_k \left(\bm{v}_k\cdot\bm{\sigma}^{\sf T}\right)\otimes\left(\bm{u}_k\cdot\bm{\sigma}\right)\,.
\end{equation}
From this equation and the above discussion, it should be clear that the specification of the six versors $\bm{v}_k$ and $\bm{u}_k$ and the six scalars $\mu_k$ and $s_k$, fully specify any quantum operation on a single qubit. In fact, it suffices to specify only two versors of each set, since for an arbitrary unitary matrix $W$ and versors $\bm{w}_k$ related to $W$ as in Eq. (\ref{eq:unitaryVU}), we have $\bm{w}_k\times\bm{w}_l=\varepsilon_{klm}\bm{w}_m$, where $\varepsilon_{klm}$ is the Levi-Civita symbol, and $(k,l,m)$ is some permutation of $(1,2,3)$ \footnote{To see that, simply use these two well-known properties of Pauli matrices: (i) $\sigma_k\sigma_l=i\varepsilon_{klm}\sigma_m$ [for every $k\neq l$ and $m$ such that the sequence $(k,l,m)$ is a permutation of $(1,2,3)$] and (ii) $(\bm{a}\cdot\bm{\sigma})(\bm{b}\cdot\bm{\sigma})=(\bm{a}\cdot\bm{b})\openone_{\rm 2}+i(\bm{a}\times\bm{b})\cdot\bm{\sigma}$.}.



\section{Semidefinite Programming}\label{sec:SDP}

In this section we introduce and study a class of optimization problems denominated Semidefinite Programs. Although this is a widely developed chapter of convex optimization theory, our intention here is not to provide a comprehensive revision of the field (see \cite{04Boyd,96Vandenberghe49} for that purpose), but to concisely present the key aspects to be explored in the remainder of this thesis.

\subsection{Definition and common forms}
A Semidefinite program (SDP) is any optimization problem that can be written in the form
\begin{align}
\mbox{maximize}\quad & - \tr\left(E_0 Z\right)\nonumber\\
\mbox{subject to}\quad& Z\geq 0 \label{eq:standardform}\\
\quad&\tr\left(E_i Z\right)  = b_i\,,\quad\mbox{for}\quad i=1,\ldots,n
\end{align}
where the matrix $Z$ is the variable of the problem, $b_i$ are given real numbers and $E_0$, $E_i$ $(i=1,\ldots,n)$, are given hermitian matrices. Any SDP written in the form above is said to be in the \emph{standard form}.

The name \emph{semidefinite} program is motivated by the matrix inequality constraint $Z\geq 0$, which implies that only positive \emph{semidefinite} matrices $Z$ are acceptable as a solution of~\eqref{eq:standardform}. Apart from this restriction, the $n$ equality constraints $\tr\left(E_i Z\right)  = b_i$ completes the characterization of the \emph{feasible set} of  problem~\eqref{eq:standardform}. This is a \emph{convex set}\footnote{A convex set is defined as the set for which every convex combination of its elements is still an element of the same set.}, since it is the intersection between the cone of PSD matrices and a number of (hyper) planes determined by the equality constraints, each of them being a convex set on their own\footnote{Recall that the intersection of convex sets is another convex set.}.

The \emph{objective function} $- \tr\left(E_0 Z\right)$ is a \emph{linear} function of $Z$, and as such is simultaneously concave and convex\footnote{A convex function is any function $f$ that satisfies $f(p x_1 + (1-p) x_2)\leq p f(x_1) + (1-p) f(x_2)$ for every $x_1$ and $x_2$ in the domain of $f$ and $p\in[0,1]$. A function $f$ is concave if the reversed inequality is satisfied, or equivalently, if $-f$ is convex.}. This simple observation --- combined with the convexity of the feasible set --- enable us to recognize SDPs as a special type of convex optimization problem\footnote{A convex optimization problem consists of the minimization of a convex function over a convex set (or either the maximization of a concave function over a convex set).}. This is an important realization since in a convex optimization problem every local optimum is automatically a global optimum, therefore, numerical methods for solving SDPs do not get stuck at suboptimal solutions.

There is another common presentation of a SDP that can be derived from the standard form as follows. Without loss of generality, we can expand $Z$ on a basis of hermitian matrices (this is always possible because $Z$ is constrained to be PSD, hence it is necessarily hermitian). Explicitly, if $Z$ is a matrix of dimension ${\rm d}$, then we can write $Z=\sum_\alpha y_\alpha H_{\rm d}^{\alpha}$, where $H_{\rm d}^\alpha$ ($\alpha=1,\ldots,{\rm d}^2$) are the elements of the hermitian basis (e.g., those described in Sec.~\ref{sec:herm_basis}) and $y_\alpha$ are the ${\rm d}^2$ real coefficients of this expansion. Problem~\eqref{eq:standardform} then assumes the form
\begin{align}
\mbox{maximize}\quad & \sum_{\alpha=1}^{{\rm d}^2}y_\alpha\tr\left(-E_0 H_{\rm d}^\alpha\right)\nonumber\\
\mbox{subject to}\quad& \sum_{\alpha=1}^{{\rm d}^2} y_\alpha H_{\rm d}^{\alpha}\geq 0 \label{eq:from_std_to_ineq}\\
\quad&\sum_{\alpha=1}^{{\rm d}^2}y_\alpha\tr\left(E_i H_{\rm d}^\alpha\right)=b_i\,,\quad\mbox{for}\quad i=1,\ldots,n.\nonumber
\end{align}
The equality constraints in the last line form a linear system of $n$ equations and ${\rm d}^2$ variables, which can be solved to reduce the number of unknowns. Incorporating the solution of this linear system to the objective function and to the matrix inequality constraint, it is straightforward to see that the resulting problem is of the form
\begin{align}
\mbox{minimize}\quad & \bm{c}^{\sf T} \bm{x}\nonumber\\
\mbox{subject to}\quad& F_0+\sum_j x_j F_j \geq 0 \quad\mbox{for}\quad j=1,\ldots,m \label{eq:inequalityform}
\end{align}
which is the so-called \emph{inequality form} of a SDP. In this formulation, the variables of the problem are the entries of the vector $\bm{x}$, while the real vector $\bm{c}$ and the hermitian matrices $F_0$ and $F_j$ are given.

\subsection{Lagrange Duality}\label{sec:LagDuality}

A useful tool for the solution of a SDP is the idea of Lagrange duality, which is basically a recipe to construct another optimization problem  --- the \emph{dual problem} --- whose solution bounds the solution of the original problem --- the \emph{primal problem}. The idea of Lagrange duality is not exclusive of SDPs, but it is particularly useful in this context because the dual of a SDP is another SDP, and the bound it provides on the solution of the primal SDP is \emph{usually} tight.

\subsubsection{The dual problem}
Let us introduce the method of Lagrange multipliers to find the dual problem of problem~\eqref{eq:standardform} --- our primal problem. The key idea is to construct an upper bound to any feasible value of the primal by augmenting its objective function with a weighted sum of the constraint, as follows:
\begin{equation}\label{eq:LDF}
g(\Lambda,\nu)=\max_Z\left\{-\tr\left(E_0 Z\right)+\tr\left(\Lambda Z\right)+\sum_i\nu_i\left[\tr\left(E_i Z\right)-b_i\right]\right\}\,,
\end{equation}
where the matrix $\Lambda$ and the scalars $\nu_i$ are the weights, commonly called \emph{Lagrange multipliers} or \emph{dual variables}.

Clearly, if $Z$ is a primal feasible point, then the last sum vanishes and the term $\tr\left(\Lambda Z\right)$ is non-negative if $\Lambda \geq 0$, in which case $g(\Lambda,\nu)$ is larger than any primal feasible value. The \emph{Lagrange dual problem} (or simply the dual problem), consists of obtaining the tightest possible bound from Eq.~\eqref{eq:LDF}, that is
\begin{align}\label{eq:dualg}
\mbox{minimize}\quad & g(\Lambda,\bm{\nu})\nonumber\\
\mbox{subject to}\quad& \Lambda \geq 0
\end{align}

If we use $\sim$~to designate a feasible point and $\ast$~to designate an optimal point, then the above construction implies that
\begin{equation}\label{eq:wduality}
-\tr\left(E_0\widetilde{Z}\right)\leq -\tr\left(E_0 Z^\ast\right)\leq g(\Lambda^\ast,\nu^*)\leq g(\widetilde{\Lambda},\nu)\,,
\end{equation}
a relation know as \emph{weak duality}. We shall defer until next section a discussion regarding the usefulness of these inequalities.

We now show that the dual problem~\eqref{eq:dualg} is, in fact, a SDP. First, note that the dual objective function~\eqref{eq:LDF} can be rewritten as $g(\Lambda,\bm{\nu})=-\bm{b}^{\sf T}\bm{\nu}+\max_Z \tr\left[(\Lambda-E_0+\sum_i\nu_iE_i) Z\right]$, from which it is clear that
\begin{equation}
g(\Lambda,\nu)=\left\{\begin{array}{ccl}
-\bm{b}^{\sf T} \bm{\nu}&\mbox{if}&\Lambda=E_0-\sum_i\nu_i E_i\\
\infty&&\mbox{otherwise}.
\end{array}
\right.
\end{equation}
As a result, the dual problem~\eqref{eq:dualg} becomes
\begin{align}
\mbox{minimize}\quad & -\bm{b}^{\sf T} \bm{\nu}\nonumber\\
\mbox{subject to}\quad& E_0-\sum_i \nu_i E_i \geq 0 \quad\mbox{for}\quad i=1,\ldots,n \label{eq:dualinequalityform}
\end{align}
which is the inequality form of a SDP in the variable $\bm{\nu}$.

\subsubsection{Weak and strong duality}

We have just seen that, by construction, the dual problem imposes the ordering of Eq.~\eqref{eq:wduality} between feasible and optimal values of the primal and dual problems. In this section, we shall discuss how this observation is useful, and give conditions under which a stronger relation can be obtained.

For sake of notation, let us start restating the weak duality relation from Eq.~\eqref{eq:wduality} as follows:
\begin{equation}\label{eq:weakd}
\mathfrak{p}\leq\mathfrak{p}^\ast\leq\mathfrak{d}^\ast\leq\mathfrak{d}\,,
\end{equation}
where $\mathfrak{p}$ and $\mathfrak{d}$ represent any feasible values of the primal and dual problems, whereas $\mathfrak{p}^\ast$ and $\mathfrak{d}^\ast$ give the optimal values of each problem.

Weak duality is an invaluable tool for \emph{certifying} that ``conjectured optimal solution'' is actually optimal. To see how this works, suppose that some physical problem can be formulated as a SDP. Thanks to physical intuition and/or pattern recognition from numerical analysis, one can tailor an educated guess as to what is the general analytic solution of the problem. As long as the guess is primal feasible, we can promptly compute a value of $\mathfrak{p}$, and the problem boils down to decide whether or not $\mathfrak{p}=\mathfrak{p}^\ast$.

For that purpose, we can start by writing the dual problem. Sometimes, the dual is simpler than the primal and a formula for $\mathfrak{d}^\ast$ can be rigorously derived. In general, though, along the way of ``solving the dual'' we may need to make some hand-waving assumptions to proceed. The whole point is that if after all the assumptions we find a formula for $\mathfrak{d}$ that matches our $\mathfrak{p}$, the problem can be considered optimally solved because the only way to reconcile $\mathfrak{d}=\mathfrak{p}$ with the weak duality relation is to have $\mathfrak{p}=\mathfrak{p}^\ast$.

But, what if the solutions do not match? In this case, there are three (possibly co-existing) possibilities: 1) the guessed $\mathfrak{p}$ is not equal to $\mathfrak{p}^\ast$, 2) the assumptions made in the dual side were not good and yielded some $\mathfrak{d}\neq\mathfrak{d}^\ast$, and/or 3) the optimal duality gap $\mathfrak{d}^\ast-\mathfrak{p}^\ast$ is not zero.

For many primal problems, it is possible to eliminate the third possibility from the above list, that is, many SDPs can have the \emph{strong duality} property $\mathfrak{p}^\ast=\mathfrak{d}^\ast$ guaranteed in advance by the so-called \emph{constraint qualifications}. A very useful and simple-to-check type of constraint qualification is the \emph{Slater condition}, which states that for any convex problem we have strong duality if there is some feasible point for which the inequality constraints are satisfied with \emph{strict} inequalities. It is generally easy to find some trivial point (e.g., the identity matrix) that satisfies Slater's condition in the case of SDPs.

Strong duality is also the underlying basis of most numerical methods for the solution of SDPs \cite{04Nemirovski,96Vandenberghe49,04Boyd}. These methods iteratively generate feasible points of the primal and dual problems that go towards vanishing the \emph{duality gap} $\mathfrak{d}-\mathfrak{p}$. In practice, a small tolerance for the largest acceptable value of $\mathfrak{d}-\mathfrak{p}$ is provided, and the algorithm runs until it is achieved. When that happens, the generated feasible points of the primal and dual problems are (up to numerical precision) their optimal solutions. In practice, \emph{the interior-point algorithm} generally converges after a number of iterations between five and fifty~\cite{96Vandenberghe49}. Furthermore, a worst-case analysis reveals the theoretical complexity as a polynomial in the variables specifying the problem size \cite{04Nemirovski}. In essence, numerical solutions of SDPs can be very efficiently obtained on the basis of strong duality.

\subsubsection{Complementary Slackness}

For SDPs exhibiting the strong duality property, there is a very neat and useful relation between the optimal solution of the primal problem, $Z^\star$, and the matrix $E_0-\sum_i \nu_i^\ast E_i$ from the inequality constraint of the dual problem. This relation is called \emph{complementary slackness}, and is obtained as follows:
\begin{multline}
\mathfrak{p}^\ast=\mathfrak{d}^\ast\Rightarrow -\tr\left(E_0 Z^\ast\right)=-\sum_i b_i \nu_i^\ast\Rightarrow -\tr\left(E_0 Z^\ast\right)=-\sum_i \tr\left(E_i Z^\ast\right)\nu_i^\ast\\\Rightarrow
\tr\left[\left(E_0-\sum_i \nu_i^\ast E_i\right)Z^\ast\right]=0\Rightarrow
\left(E_0-\sum_i \nu_i^\ast E_i\right)Z^\ast=0\,,
\end{multline}
where the last implication follows from the fact that both $E_0-\sum_i \nu_i^\ast E_i$ and $Z^\ast$ are PSD matrices according to the constraints of the dual and primal problems, respectively.\par

Going on with the idea from the previous section of obtaining an optimality certificate for a candidate primal solution, the complementary slackness relation can be of assistance in the following sense: instead of attempting to solve the dual problem, we can solve the set of linear equations $(E_0-\sum_i\nu_i E_i)\widetilde{Z}=0$ for the variables $\nu_i$, with $\widetilde{Z}$ our guess of the primal problem. If the resulting values $\widetilde{\nu}_i$ give rise to a matrix $E_0-\sum_i\widetilde{\nu}_i E_i$ that is positive semidefinite, then we have guaranteed that the candidate is optimal; otherwise, strong duality guarantees that this is not so. Of course, proving positive semidefiniteness of an analytical matrix can still be a challenging task.

\section{Optimal Quantum Control from Semidefinite Programs?}\label{sec:controlSDP}

In this section we begin to formalize the quantum control task of converting a sequence of source density matrices into a sequence of target density matrices via a single quantum channel, as illustrated in Fig.~\ref{fig:singlestep_intro}.

We proceed by first motivating the convertibility problem in the context of quantum information. Then we give some first steps towards a mathematical formulation of the problem. Although a complete formulation will be only achieved in Ch.~\ref{chap:assemble}, the goal here is to introduce the basic technical ideas and to illustrate the adequacy of the SDP formalism for the optimization of quantum operations.

\subsection{Why transforming between sequences of density matrices?}

In quantum information science, the problem of transforming between sequences of density matrices is relevant because it
immediately connects to the problem of transforming a quantum system whose initial preparation is uncertain. Many ``no-go theorems'' for transformations of uncertain inputs are known, as well as quantum machines that attempt to implement these transformations ``as well as possible''.

In general, the construction of optimal quantum machines is conducted in a \emph{ad hoc} basis, envisaging a specific type of impossible transformation. Adopting this approach, optimal schemes for discriminating between non-orthogonal quantum states, cloning, purifying, error correcting, etc. have been designed. The realization that all of these problems can be phrased in terms of transformations between sequences of density matrices leads to a more general problem that encloses the fundamental barriers imposed by quantum mechanics, and whose solution provides a generalized optimal machine that can be applied in many different circumstances.

\subsection{A special case: single-state transformation}\label{sec:singlestate}
As a first exposure to the problem of converting between sequences of density matrices, consider the simplest case where both source and target sequences have a single element ($I=1$). This will lead to a trivial but instructive solution. Our aim is then to determine a quantum channel that transforms a given {\rm d}-dimensional density matrix $\rho$ into another {\rm d}-dimensional density matrix $\overline{\rho}$. In what follows, we prove (by construction) that a channel implementing this conversion always exists, regardless of the specific details of the states $\rho$ and $\overline{\rho}$.\par

Let $\{\ket{j}\}$ be any set of pure states forming a resolution of the identity, i.e., $\sum_{j} \ket{j}\!\bra{j} = \openone_{\rm d}$. For any ${\rm d}$-dimensional density matrix $\overline{\rho}$, we can write
\begin{equation}
\overline{\rho}=\sum_j a_j \ket{j}\!\bra{j}\,,
\end{equation}
 for some set of real coefficients $a_j$ such that $\sum_j a_j = 1$. Now, let the matrices $A_{j,k}=\sqrt{a_j}\ket{j}\!\bra{k}$ represent the Kraus elements of a map $\mathcal{Q}$, as follows:
\begin{equation}\label{eq:Csingleconv}
\mathcal{Q}(\odot)=\sum_{i,j} A_{j,k}\odot A_{j,k}^\dagger\,.
\end{equation}
That $\mathcal{Q}$ is a CP map is immediate from its construction via a Kraus decomposition. Furthermore, $\mathcal{Q}$ is also trace preserving: by exploiting the unit-sum property of the coefficient $a_j$ and the fact that $\{\ket{j}\!\bra{j}\}$ resolves the identity, it is easy to check that $\sum_{j,k} A_{j,k}^\dagger A_{j,k} = \openone_{\rm d}$. A straightforward calculation then shows that $\mathcal{Q}$ perfectly converts $\rho$ into $\overline{\rho}$:
\begin{equation}
\mathcal{Q}(\rho)=\sum_{j,k}\sqrt{a_j} \ket{j}\!\bra{k} \rho \sqrt{a_j} \ket{k}\!\bra{j}=\sum_{j,k}a_j \ket{j}\!\bra{j} \bra{k}\rho\ket{k}=\overline{\rho}\;\tr\left(\rho \sum_k\ket{k}\!\bra{k}\right) = \overline{\rho}\,,
\end{equation}
where, in the last equation, we have used the completeness relation for the sum and the normalization of the density matrix $\rho$.\par

Remarkably, the set of Kraus operators $A_{j,k}$ of $\mathcal{Q}$ is independent of the initial state $\rho$, but exclusively defined in terms of $\brho$. From a physical viewpoint, this means that $\mathcal{Q}$ does not literally \emph{transform} $\rho$ into $\overline{\rho}$, but instead \emph{constructs} $\overline{\rho}$ ``from the scratch'', by completely ignoring the original state $\rho$. The possibility of neglecting available information and yet implementing the conversion with arbitrarily high precision is not a feature inherited by more general convertibility problems. In fact, as shown in Appendix~\ref{app:ptc}, a theorem by Alberti and Uhlmann sets necessary conditions for the existence of a quantum channel accurately transforming between \emph{pairs} of density matrices. Moreover, even when such conditions are met, the resulting channels usually depend on the details of both source and target states. The derivation of these channels is one of the main objectives of this thesis. In the next section, we give the first steps towards their determination via the solution of certain optimization problems.

\subsection{The general case: multi-state transformation}\label{sec:begin_optimize}

In this section, we start to assemble ``mathematical devices'' that take as inputs the sequences of source and target density matrices and outputs a channel of a certain type that implements the conversion between them. In the case where the desired transformation turns out to be physically impossible, such a device is constructed to output (i) a channel that optimally approximates the unphysical transformation and (ii) a number that quantifies the quality of this approximation.

A general realization of the ``device'' we are talking about is an optimization problem of the form
\begin{equation}\label{eq:gen_problem}
\min_{\mathcal{C} \in \mathcal{C}_{\rm d}^{\rm set}}\aver{\mathscr{D}\left[\mathcal{C}(\rho_i),\brho_i\right]}\,,
\end{equation}
%
where $\mathcal{C}^{\rm set}_{\rm d}$ specifies a family of maps formed by all the admissible controllers $\mathcal{C}$, and $\aver{\mathscr{D}\left[\mathcal{C}(\rho_i),\brho_i\right]}$ is some notion of distance between the sequences $\left[\mathcal{C}(\rho_i)\right]_{i=1}^I$ and $\left[\brho_i\right]_{i=1}^I$. Well suited choices of $\aver{\mathscr{D}}$ should yield a solution $\mathcal{C}$ of problem~\eqref{eq:gen_problem} such that $\aver{\mathscr{D}}$ vanishes whenever the target sequence can be approximated with arbitrary accuracy via an element of $\mathcal{C}_{\rm d}^{\rm set}$. When that is not the case, the resulting operation $\mathcal{C}$ yields a non-zero value of $\aver{\mathscr{D}}$ which is interpreted as an estimate of the quality of the best available approximation. Possible measures $\aver{\mathscr{D}}$ will be constructed and discussed in the next chapter.

Strictly speaking, problem~\eqref{eq:gen_problem} represents a large class of optimization problems labeled by $\mathcal{C}_{\rm d}^{\rm set}$ and $\aver{\mathscr{D}}$. In the remainder of this chapter, we let $\aver{\mathscr{D}}$ be arbitrary and focus on the subclasses of problems where the feasible set $\mathcal{C}_{\rm d}^{\rm set}$ is taken to be either the set of CPTP maps, $\mathcal{Q}_{\rm d}^{\rm set}$, or the set of EBTP maps, $\mathcal{B}_{\rm d}^{\rm set}$. In particular, aiming for efficient solution of these problems, we attempt to make the constraints $\mathcal{C}\in\mathcal{Q}_{\rm d}^{\rm set}$ and $\mathcal{C}\in\mathcal{B}_{\rm d}^{\rm set}$ as similar as possible to those constraints appearing in the standard and inequality forms of semidefinite programs.

\subsubsection{The constraint $\mathcal{C}\in\mathcal{Q}_{\rm d}^{\rm set}$ (CPTP maps)}\label{sec:constraintQ}

Thanks to the isomorphism expressed by Eq.~\eqref{eq:CPTPconstraints}, the constraint of problem~\eqref{eq:gen_problem} with $\mathcal{C}_{\rm d}^{\rm set}=\mathcal{Q}_{\rm d}^{\rm set}$ can be written in terms of the Choi matrix $\mathfrak{C}$ as follows
\begin{equation}\label{eq:probSDP_first}
\begin{array}{rl}
\mbox{minimize}&\aver{\mathscr{D}\left[\mathcal{C}(\rho_i),\brho_i\right]}\\
\mbox{subject to}& \mathfrak{C}\geq 0\\
&{\rm Tr}_2\mathfrak{C}=\openone_{\rm d}\,.
\end{array}
\end{equation}
Furthermore, employing the same reasoning used to derive Eq.~\eqref{eq:solvedlineqconst}, the equality constraint $\tr_2{\mathfrak{C}}=\openone_{\rm d}$ can be rewritten to give
\begin{equation}\label{eq:probSDP_step0}
\begin{array}{rl}
\mbox{minimize}&\aver{\mathscr{D}\left[\mathcal{C}(\rho_i),\brho_i\right]}\\
\mbox{subject to}& \mathfrak{C}\geq 0\\
&\tr\left[\mathfrak{C}\left(H_{\rm d}^\alpha\otimes\openone_{\rm d}\right)\right]={\rm d}\delta_{\alpha,1}\quad\mbox{for}\quad \alpha=1,\ldots,{\rm d}^2\,,
\end{array}
\end{equation}
where $\{H_{\rm d}^\alpha\}_{\alpha=1}^{{\rm d}^2}$ is an orthonormal basis of hermitian matrices of dimension {\rm d}, as defined in Sec.~\ref{sec:herm_basis}.

Clearly, the constraints of the optimization problem above closely resemble those of a SDP in the standard form [cf. Eq.~\eqref{eq:standardform}]: while the Choi matrix $\mathfrak{C}$ plays the role of the matrix variable $Z$, the equality constraints are identical to those of Eq.~\eqref{eq:standardform} under the identifications $E_\alpha=H_{\rm d}^\alpha\otimes\openone_{\rm d}$ and $b_\alpha={\rm d}\delta_{\alpha,1}$.

 Following Eq.~\eqref{eq:choiexpandTPin}, we can also explicitly solve the equality constraints to find
\begin{equation}\label{eq:probSDP_step0ineq}
\begin{array}{rl}
\mbox{minimize}&\displaystyle\aver{\mathscr{D}\left[\mathcal{C}(\rho_i),\brho_i\right]}\\
\mbox{subject to}& \displaystyle\frac{\openone_{{\rm d}^2}}{\rm d}+\sum_{\substack{\alpha=1\\\beta=2}}^{{\rm d}^2} x_{\alpha,\beta} H_{\rm d}^{\alpha}\otimes H_{\rm d}^\beta\geq 0\,.
\end{array}
\end{equation}
Now, the optimization constraint resembles those of a SDP in the inequality form [cf. Eq.~\eqref{eq:inequalityform}]: the variables are the real coefficients $x_{\alpha,\beta}$ that expand $\mathfrak{C}$ in the product basis $\{H_{\rm d}^\alpha\otimes H_{\rm d}^\beta\}_{\alpha,\beta=1}^{\rm d}$, while the matrix inequality of Eq.~\eqref{eq:inequalityform} is recognized via the identification $F_0=\openone_{{\rm d}^2}/{\rm d}$ and $F_{\alpha,\beta}=H_{\rm d}^\alpha\otimes H_{\rm d}^\beta$.

\subsubsection{The constraint $\mathcal{C}\in\mathcal{B}_{\rm d}^{\rm set}$ (EBTP maps)}\label{sec:constraintB}

If we make $\mathcal{C}_{\rm d}^{\rm set}=\mathcal{B}_{\rm d}^{\rm set}$ in problem~\eqref{eq:gen_problem}, then the resulting optimization problem can be written as problems~\eqref{eq:probSDP_step0} or~\eqref{eq:probSDP_step0ineq} with the added constraint of  $\mathfrak{C}$ being separable across the partition ${\rm d}\otimes{\rm d}$, as explained in Sec.~\ref{sec:isomorphisms}. Given the types of constraints occurring in a SDP, we are led to ask whether the separability of a bipartite matrix can be expressed with a finite number of linear matrix inequalities. Fortunately, the search for separability criteria is a currently active research field; unfortunately, a complete answer to our question is still an open problem.

In a seminal paper by the Horodecki family \cite{96Horodecki1}, it was proved that a bipartite PSD matrix $A$ of dimension ${\rm d}_1{\rm d}_2$ is separable across the partition ${\rm d_1}\otimes{\rm d_2}$ if and only if
  \begin{equation}\label{eq:HorodeckiLMI}
    (\mathcal{I}_{{\rm d}_1}\otimes\mathcal{P})A \geq 0
  \end{equation}
for every positive map $\mathcal{P}$ acting on $\mathcal{M}_{{\rm d}_2}$. Since the condition above is trivially satisfied by every CP map, it has only to be required for positive maps $\mathcal{P}$ which are not completely positive (PnCP maps). A standard example of such a map is transposition.

In Ref.~\cite{96Peres1413}, Peres showed that for $2 \otimes 2$ systems, the condition $(\mathcal{I}_{{\rm d}_1}\otimes{\sf T})A\geq 0$ --- or in more standard notation, $A^{{\sf T}_2}\geq 0$ --- is not only necessary, but also sufficient to ensure the separability of $A$ across ${\rm d_1}\otimes{\rm d_2}$. The same conclusion was independently drawn in Ref.~\cite{96Horodecki1}, where the sufficiency clause was extended to $2\otimes 3$ and $3\otimes 2$ systems and, sadly, shown not to hold for larger dimensions.\par

For our purposes, this result provides the following equivalent expression of problem~\eqref{eq:gen_problem} when the elements of source and target sequences are qubit density matrices (${\rm d}=2$) and $\mathcal{C}_2^{\rm set}=\mathcal{B}_2^{\rm set}$:
\begin{equation}
\begin{array}{rl}
\mbox{minimize}&\aver{\mathscr{D}\left[\mathcal{C}(\rho_i),\brho_i\right]}\\
\mbox{subject to}& \mathfrak{C}\geq 0\,,\quad \mathfrak{C}^{{\sf T}_2}\geq 0\\
&{\rm Tr}_2\mathfrak{C}=\openone_{\rm 2}\,,
\end{array}
\end{equation}
which is just a restatement of the optimization problem~\eqref{eq:probSDP_first} with the addition of the positive partial transpose (PPT) condition. Once again employing the expansion of $\mathfrak{C}$ from Eq.~\eqref{eq:choigenbasis} and solving the equality constraint, the problem above becomes
\begin{equation}\label{eq:probSDP_step0ineqEBTP}
\begin{array}{rl}
\mbox{minimize}&\displaystyle\aver{\mathscr{D}\left[\mathcal{C}(\rho_i),\brho_i\right]}\\
\mbox{subject to}& \displaystyle\frac{\openone_{8}}{2}+\sum_{\substack{\alpha=1\\\beta=2}}^{4} x_{\alpha,\beta} \left[\left(H_2^{\alpha}\otimes H_2^\beta\right)\oplus\left(H_2^{\alpha}\otimes {H_2^\beta}^{\sf T}\right)\right]\geq 0\,.
\end{array}
\end{equation}
where we have made use of the fact that the direct sum of two matrices is a PSD matrix if and only if each matrix is PSD. Clearly, the resulting constraint is characteristic of a SDP in the inequality form with $F_0=\openone_8/2$ and $F_{\alpha,\beta}=\left(H_2^{\alpha}\otimes H_2^\beta\right)\oplus\left(H_2^{\alpha}\otimes {H_2^\beta}^{\sf T}\right)$.

What about larger dimensional systems? Can we still write the optimization constraint in the form of a linear matrix inequality? Formally, the Horodecki condition~\eqref{eq:HorodeckiLMI} allows the following expression for arbitrary dimension {\rm d}:
\begin{equation}\label{eq:probSDP_step1ineqEBTP}
\begin{array}{rl}
\mbox{minimize}&\displaystyle\aver{\mathscr{D}\left[\mathcal{C}(\rho_i),\brho_i\right]}\\
\mbox{subject to}& \displaystyle\frac{1}{\rm d}\left[\openone_{{\rm d}^2}\oplus\bigoplus_{\mathcal{P}\in\{PnCP\}}\openone_{\rm d}\otimes\mathcal{P}(\openone_{\rm d})\right]+\\
&\displaystyle\sum_{\substack{\alpha=1\\\beta=2}}^{{\rm d}^2} x_{\alpha,\beta} \left[\left(H_{\rm d}^{\alpha}\otimes H_{\rm d}^\beta\right)\oplus\bigoplus_{\mathcal{P}\in\{PnCP\}} H_{\rm d}^{\alpha}\otimes \mathcal{P}\left(H_{\rm d}^\beta\right)\right]\geq 0\,.
\end{array}
\end{equation}
However, unless the direct sum can be restricted to just a few instances of PnCP maps (as the transposition, in the {\rm d=2} case), the problem above is rather useless due to its infinite sized matrix constraint. Alas, the existence of a finite set (possibly dependant on {\rm d}) of PnCP maps that provides a sufficient separability constraint is still an open problem \cite[p. 29]{04Zyczkowski3}.

Nevertheless, for practical applications in larger dimensional systems, it is usually a good first step to consider problem~\eqref{eq:probSDP_step1ineqEBTP} with the direct sum restricted to the transposition map
\begin{equation}\label{eq:probSDP_step2ineqEBTP}
\begin{array}{rl}
\mbox{minimize}&\displaystyle\aver{\mathscr{D}\left[\mathcal{C}(\rho_i),\brho_i\right]}\\
\mbox{subject to}& \displaystyle\frac{\openone_{2{\rm d}^2}}{\rm d}+\sum_{\substack{\alpha=1\\\beta=2}}^{{\rm d}^2} x_{\alpha,\beta} \left[\left(H_{\rm d}^{\alpha}\otimes H_{\rm d}^\beta\right)\oplus\left(H_{\rm d}^{\alpha}\otimes {H_{\rm d}^\beta}^{\sf T}\right)\right]\geq 0\,.
\end{array}
\end{equation}
Of course, the problem above is just a \emph{relaxation} of problem~\eqref{eq:gen_problem} with $\mathcal{C}_{\rm d}^{\rm set}=\mathcal{B}_{\rm d}^{\rm set}$ and, as such, will generally yield a Choi matrix that is a (unnormalized) PPT-entangled state rather than the desired separable $\mathfrak{C}$. Throughout, we will denote by $\widetilde{\mathcal{B}}_{\rm d}^{\rm set}$ the set of CPTP maps whose Choi matrices satisfy the PPT condition, which will allow us to refer to the problem above as an instance of problem~\eqref{eq:gen_problem} with $\mathcal{C}_{\rm d}^{\rm set}=\widetilde{\mathcal{B}}_{\rm d}^{\rm set}$.

Replacing the constraint $\mathcal{C}\in\mathcal{B}_{\rm d}^{\rm set}$ with  $\mathcal{C}\in\widetilde{\mathcal{B}}_{\rm d}^{\rm set}$ is not only useful because it gives a treatable matrix inequality, but also because of the simple observation that we do not always need to tightly bound $\mathcal{B}_{\rm d}^{\rm set}$ in order to obtain a solution which is an element of it. This idea is particularly useful if, \emph{a posteriori}, we can check whether the outcome of the relaxed optimization belongs to the set of interest.

Deciding whether a given state is separable or PPT-entangled state can be remarkably easy or difficult, depending on the particular state under consideration. For example, if the outcome $\mathfrak{C}$ of the optimization~\eqref{eq:probSDP_step2ineqEBTP} turns out to be of rank $2$ or $3$, then we can be sure of its separability, since there are no PPT-entangled states with such ranks \cite{00Horodecki032310,03Horodecki589}. More generally, making this decision has been shown to be a NP-hard problem~\cite{03Gurvits10}. Yet, a number of algorithms performing efficiently in many non-trivial cases do exist \cite{04Doherty022308,04Eisert062317,07Ioannou}.

\chapter{Distance Measures}\label{chap:dist_measure}

In this chapter we present a detailed analysis of several distance measures for the space of density matrices. Apart from being a topic of independent interest, this analysis is relevant in the context of this thesis because it provides a repertoire of distance measures that --- used in the quantum control problem we started to formulate in the previous chapter --- yield optimization problems that are both physically meaningful and mathematically treatable.

We divide our analysis between closeness measures (e.g., the fidelity), and distance measures (e.g., the trace distance). For closeness measures, we focus on reviewing several useful properties of the Uhlmann-Jozsa fidelity, and propose an alternative definition of this quantity that features many attractive properties.

A similar discussion is presented for metrics on density matrices. In particular, we consider the trace distance, and the metrics induced by the Hilbert-Schmidt norm and the spectral (or operator) norm, which have their merits assessed in the same basis as the fidelity-like measures.

All of these quantities are put at work in the next chapter, where they appear to quantify the distance between sequences of density matrices. By minimizing distances or maximizing closeness, we will obtain optimal strategies for quantum control.

\section{Introduction}
From a mathematically rigorous viewpoint, a distance measure $\mathscr{D}$ on a set $\mathbb{S}$ is a function $\mathscr{D}: \mathbb{S} \times \mathbb{S} \to \mathbb{R}$ such that for every $a,b,c \in \mathbb{S}$ the following properties hold:

\begin{enumerate}\label{axioms:metric}
\item[(M1)] $\mathscr{D}(a,b)\geq 0$ (Nonnegativity)\,,
\item[(M2)] $\mathscr{D}(a,b)=0$ iff $a=b$ (Identity of Indiscernibles)\,,
\item[(M3)] $\mathscr{D}(a,b)=\mathscr{D}(b,a)$ (Symmetry)\,,
\item[(M4)] $\mathscr{D}(a,c)\leq \mathscr{D}(a,b)+\mathscr{D}(b,c)$ (Triangle Inequality)\,.
\end{enumerate}
Any such function is called a \emph{metric}

In physics, though, it is common to talk about distance measures that are not metrics. For example, the quantum relative entropy~\cite{62Umegaki59,00Schumacher0004045,02Vedral197} defined on the set $\mathbb{S}=\mathcal{S}(\textsf{H}_{\rm d})$ of {\rm d}-dimensional of density matrices as
\begin{equation}
\EuScript{S}(\rho,\sigma)\defeq\tr\left[\rho\log\rho\right]-\tr\left[\rho\log\sigma\right]\,,\label{eq:relentropy}
\end{equation}
is widely accepted as a distance measure between $\rho$ and $\sigma$, despite not satisfying (M3) nor (M4)\footnote{In fact, $\mathcal{S}(\rho,\sigma)$ is not even well defined for every element of $\mathbb{S} \times \mathbb{S}$ --- consider, for example, $\rho$ any mixed state and $\sigma$ any pure state to see the second term of Eq. (\ref{eq:relentropy}) to go to $-\infty$.}. The Uhlmann-Jozsa fidelity $\F(\rho,\sigma)$ \cite{76Uhlmann273,94Jozsa2315} --- one of the most popular notions of distance between quantum states --- is not technically a metric either: First, it behaves as an ``inverted measure of distance'' being maximal (1) when $\rho=\sigma$ and minimal (0) when $\rho$ and $\sigma$ are orthogonal, in clear disagreement with (M2). Second, even the ``inversion'' $\F^\prime\defeq 1-\F$ which complies with (M2) violates (M4), and is thus not a metric either.

For several quantum information applications, fulfillment of (M1)-(M4) is not as relevant as a compelling operational interpretation for the ``distance measure'' at hand. In this respect, the \emph{quantum hypothesis testing problem} serves as a justification for many standard notions of distance between quantum states.\par

Consider that $n$ copies of a quantum system are identically prepared in one of the states $\rho$ or $\sigma$. One is then asked to perform a measurement on the system in order to reject a certain hypothesis about its identity (the \emph{null hypothesis}) and accept another one (the \emph{alternate hypothesis})
\begin{itemize}
\item[$H_0$] (Null hypothesis): The collective state of the system is $\rho^{\otimes n}$;
\item[$H_1$] (Alternate hypothesis): The collective state of the system is $\sigma^{\otimes n}$.
\end{itemize}
If the measurement procedure suggests rejection of the null hypothesis despite it being correct, we talk about an \emph{error of the first kind}, which occurs with probability $\alpha$. On the other hand, the acceptance of the null hypothesis despite its incorrectness is called an \emph{error of the second kind}, which occurs with probability $\beta$.

Depending on the number $n$ of states available, on the adopted measurement strategy and on whether or not $\alpha$ and $\beta$ are treated symmetrically, different quantities may arise to specify the error probabilities. Intuitively, these quantities form a meaningful ``measure of closeness'' between quantum states, since the ``closer'' $\rho$ and $\sigma$ are, the more likely it is to confuse them.

The relative entropy, for example, arises from the minimal error probability $\beta$ when $\alpha$ is constrained to be smaller than a constant threshold $\epsilon$ and $n\to\infty$. In this case, the quantum analogue of Stein's lemma~\cite{91Hiai99,00Ogawa2428} establishes that
\begin{equation}
\beta\sim\exp{\left[-n \EuScript{S}(\rho,\sigma)\right]}\,,
\end{equation}
hence characterizing $\EuScript{S}$ as a measure of distance.

Likewise, if the error probabilities are treated symmetrically, i.e., $\alpha$ and $\beta$ are simultaneously minimized, then the optimal measurement strategy gives rise to the recently determined~\cite{06Nussbaum0607216,07Audenaert160501} \emph{quantum Chernoff bound} $\xi\defeq -\log \min_{0\leq s \leq 1} \tr(\rho^s\sigma^{1-s})$, via the error probability asymptotic behavior
\begin{equation}
\alpha+\beta\sim \exp{\left[- n \xi(\rho,\sigma)\right]}\,.
\end{equation}

In the case where a single copy of the states is available to generate the measurement statistics ($n=1$), the minimal error probability (symmetrically treated) was shown long ago~\cite{76Helstrom,79Holevo411} to be related to the \emph{trace distance} $\D$ between $\rho$ and $\sigma$ (defined in Sec.~\ref{sec:fidtrdistbnd}) via
\begin{equation}
\alpha+\beta=\tfrac{1}{2}-\D(\rho,\sigma)\,.
\end{equation}

The fidelity between a mixed state $\rho$ and a \emph{pure state} $\sigma=\ket{\psi}\!\bra{\psi}$, $\F(\rho,\sigma)\defeq\bra{\psi}\rho\ket{\psi}$, also arises from the quantum hypothesis testing problem with $n=1$ and measurement effects given by $M_0=\ket{\psi}\!\bra{\psi}$ and $M_1=\openone_{\rm d}-\ket{\psi}\!\bra{\psi}$. A little thought shows that if we adopt the convention that a click of $M_0$ suggests acceptance of $\rho$ and a click of $M_1$ suggests acceptance of $\sigma$, then the probability of an error of the first type vanishes, i.e., $\alpha=0$. An error of the second type can occur, though, if $M_0$ clicks when the actual state was $\sigma$. This happens with probability
\begin{equation}
\beta=\F(\ket{\psi},\rho)\,,
\end{equation}

Besides a satisfying physical interpretation, other physical and mathematical properties of distance measures are desirable for certain specific applications. In this chapter we evaluate a number of distance measures against a list of such properties, namely, nonnegativity, identity of indiscernibles, symmetry, unitary invariance, convexity/concavity properties, multiplicativity under tensor products, monotonicity under CPTP maps, relations with metrics and computability.

Before outlining the structure of this chapter, let us mention two notational points plus a warning that will be pertinent for what follows.

\begin{itemize}
\item The notation $\sqrt{A}$ denotes the unique PSD matrix such that $\sqrt{A}^\dagger\sqrt{A} = A$. Clearly, $\sqrt{A}$ can only exist if $A$ is PSD and is obtained from the following procedure: Let $U$ be the unitary matrix that diagonalizes $A$, i.e., $U^\dagger A U = D$, with $D$ a diagonal matrix of non-negative elements. Then $\sqrt{A}=U\sqrt{D}U^\dagger$ where $\sqrt{D}$ is the entry-wise square root of $D$. Note that there many matrices $B$ (other than $\sqrt{A}$) such that $B^\dagger B = A$. For example, the Cholesky decomposition guarantees that for every PSD matrix $A$, there exists an upper triangular matrix $B$ such that $B^\dagger B=A$. However, $B$ is obviously not hermitian, let alone PSD.

\item For an arbitrary matrix $A$, $|A|$ denotes the matrix $\sqrt{A^\dagger A}$. If $A$ is hermitian, then we have the identity $\sqrt{A^2}=|A|$, which mimics the standard equation $\sqrt{x^2}=|x|$ for real numbers $x$. Likewise, from the definition of the square root given above, we have $|A| \geq 0$ for every matrix $A$, just like $|x|\geq 0$ for all real numbers $x$. However, recall that the statement of a matrix followed by the symbols $\geq 0$ means that the matrix has only non-negative eigenvalues, i.e., it is positive semidefinite.

\item Be aware that the product of two hermitian matrices $A$ and $B$ is not hermitian unless $[A,B]=0$. Similarly, the product of two positive semidefinite matrices  $A$ and $B$ is not positive semidefinite unless $[A,B]=0$. This implies, for example, that the equality $\sqrt{AB}=\sqrt{A}\sqrt{B}$ only makes sense
    for commuting PSD matrices $A$ and $B$.
\end{itemize}

Section~\ref{sec:altfid} --- which is an adapted reproduction of Ref. \cite{08Mendonca1150} --- starts by describing how the Uhlmann-Jozsa fidelity behaves with respect to the several properties mentioned before, and subsequently introduces a new alternative definition of fidelity between mixed states that is thoroughly analyzed along the same lines. In section \ref{sec:metrical} we review some metrics arising from well known norms on the set of density matrices, namely the Hilbert-Schmidt norm, the spectral norm and the trace norm. Sec.~\ref{sec:distset} concludes the chapter with the introduction of two averaging schemes that generalize the notion of a distance measure between density matrices to a distance measure between \emph{sequences} of density matrices.

\section[An alternative fidelity measure for quantum states]{An alternative fidelity measure for quantum states}\label{sec:altfid}

The understanding of the set of density matrices as a
Riemannian manifold \cite{06Bengtsson} implies that a notion of
distance can be assigned to any pair of quantum states. In
quantum information science, for instance, distance measures
between quantum states have proved to be useful resources in
approaching a number of fundamental problems such as
quantifying entanglement \cite{97Vedral2275,98Vedral1619}, the
design of optimized strategies for quantum control
\cite{07Branczyk012329,08Mendonca012319} and quantum error correction
\cite{05Reimpell080501,07Fletcher012338,06Reimpell,
06Kosut,08Kosut020502,07Yamamoto012327,05Yamamoto022322}. In
addition, the concept of distinguishability between quantum
states \cite{95Fuchs} can be made mathematically rigorous and
physically insightful thanks to the close relationship between
certain metrics for the space of density matrices and the error
probability arising from various versions of the quantum
hypothesis testing problem \cite{06Hayashi}. Distance measures
are also regularly used  in the laboratory to verify the
quality of the produced quantum states.

A widely used distance measure in the current literature (or more
precisely, a ``closeness'' measure between two general density
matrices), is the so-called \emph{Uhlmann-Jozsa fidelity}, $\F$.
Historically, this measure had its origins in the 70's through a set of
works by Uhlmann and Alberti \cite{76Uhlmann273,83Alberti5,
83Alberti25,83Alberti107}, who studied the problem of generalizing the
quantum mechanical transition probability to the broader context of
$\ast$-algebras. The usage of the term \emph{fidelity} to designate
Uhlmann's transition probability formula is much more recent and
initiated in the works of Schumacher~\cite{95Schumacher2738} and
Jozsa~\cite{94Jozsa2315}. Indeed, in an attempt to quantify the
``closeness'' between a certain mixed state $\rho$ and a pure state
$\ket{\psi}$, Schumacher dubbed the transition probability
$\bra{\psi}\rho\ket{\psi}$ the fidelity between the two states. In
parallel, Jozsa recognized Uhlmann's transition probability formula as
a sensible extension of Schumacher's fidelity, where now the measure of
``closeness'' is related to a pair of mixed states $\rho$ and $\sigma$.
Ever since, Uhlmann's transition probability formula has been widely
accepted as {\em the} generalization of Schumacher's fidelity.

The prevalence of this measure as one of the most used notions of
distance in quantum information is not accidental, but largely
supported on a number of required and desired properties for the role.
For example, $\F$ satisfies all of \emph{Jozsa's axioms}, that is,
besides recovering Schumacher's fidelity in the case where one of the
states is pure, the following three additional properties also hold:
First, $\F$ equals unity if and only if it is applied to two
identical states; in other cases it lies between zero and one. Second,
it is symmetric, i.e., the fidelity between $\rho$ and $\sigma$ is the
same as that between $\sigma$ and $\rho$. Third, it is invariant under
any unitary transformation on the state space. Nevertheless, $\F$ is
\emph{not} the unique measure satisfying these properties. A prominent
alternative which also complies with Jozsa's axioms and shares many
other properties of $\F$, is given by the nonlogarithmic
variety of the quantum Chernoff bound, $Q$, recently determined in
Ref.~\cite{07Audenaert160501}.\par

Despite fulfilling the properties listed above, both $\F$ and
$Q$ are, in general, unsatisfying measures from a practical
computational viewpoint. Although $\F$ can be expressed in a closed
form in terms of $\rho$ and $\sigma$, it involves successive
computation of the square roots of Hermitian matrices, which often
compromises its usage in analytical computations and numerical
experiments, especially when the fidelity must be computed many times.
Even more serious is the case of $Q$, which to date has only been
defined variationally as the result of an optimization problem. The
question that naturally arises is whether an easy-to-compute
generalization of Schumacher's fidelity can be obtained. In the following,
we provide a positive answer to this question and a thorough analysis
of our proposed alternative fidelity, $\Fn$.

As we were finalizing Ref.~\cite{08Mendonca1150}, we became aware of a very
recent work of Miszczak~\etal \cite{08Miszczak} in which $\Fn$ was
introduced as an upper bound to the Uhlmann-Jozsa fidelity.  In many
ways our analysis of $\Fn$ is complimentary to that provided in
Ref.~\cite{08Miszczak}; results in common are noted in the
corresponding sections of this thesis.

The following sections are structured as follows. In order to provide a concrete
ground for our proposal of $\Fn$ as an alternative fidelity measure,
we firstly revisit, in Sec.~\ref{sec:fid}, a set of basic properties of
the Uhlmann-Jozsa fidelity. In Sec.~\ref{sec:new_meas} we formally
introduce $\Fn$ and analyze it in the spirit of the properties reviewed
in Sec.~\ref{sec:fid}.  The computational efficiency of $\Fn$ is
contrasted with a number of previously known distance measures in
Sec.~\ref{sec:computability}.  We summarize our main results and
discuss some possible avenues for future research in
Sec.~\ref{sec:conclusion}.

\subsection{The Uhlmann-Jozsa Fidelity}\label{sec:fid}

In this section, we will briefly survey some physically appealing
features inherent to the  Uhlmann-Jozsa fidelity $\F$. In
Sec.~\ref{sec:new_meas}, these features will be used as a reference for
characterizing the proposed new fidelity measure.

\subsubsection{Preliminaries}\label{sec:fidprelim}

The fidelity $\F$ was originally introduced as a
transition probability between two generic quantum states $\rho$ and
$\sigma$~\cite{76Uhlmann273},

\begin{equation}\label{eq:fid}
    \F(\rho,\sigma)\defeq\max_{\ket{\psi},\ket{\varphi}}{|\langle\psi |
    \varphi\rangle|^2}= \left(\tr\sqrt{\sqrt{\rho}\sigma \sqrt{\rho}}\right)^2\,.
\end{equation}
Here, $\ket{\psi}$ and $\ket{\varphi}$ are restricted to be
purifications of $\rho$ and $\sigma$, while the second equality indicates that
the maximization procedure can be explicitly evaluated.
At this stage, it is worth noting that it is not uncommon to find
$\sqrt{\F}$ being referred, instead, as the fidelity (e.g.,,
Ref.~\cite{00Nielsen}).\par

In Ref.~\cite{94Jozsa2315}, Jozsa conjectured that Eq.~\eqref{eq:fid}
was the unique expression that satisfies a number of natural properties
expected for any generalized notion of fidelity\footnote{Although, as mentioned before, this conjecture can be
    seen to be  false with the counter-example of the nonlogarithmic
    variety of the quantum  Chernoff bound $Q$, determined in
    Ref.~\cite{07Audenaert160501}.}.
Throughout, we shall refer to these as \emph{Jozsa's axioms}:

\begin{enumerate}
    \item normalization, i.e., $\F(\rho,\sigma) \in [0,1]$ with the
        upper bound attained iff $\rho=\sigma$ (the \emph{identity
        of indiscernible} property);
    \item symmetry under swapping of the two states, i.e.,
        $\F(\rho,\sigma)=\F(\sigma,\rho)$;
    \item invariance under any unitary transformation $U$ of the
        state space, i.e., $\F(U\rho U^\dagger,U\sigma
        U^\dagger)=\F(\rho,\sigma)$; and finally,
    \item consistency with Schumacher's fidelity when one of the states is
    pure, i.e.,
    \begin{equation}\label{eq:fid_schum}
        \F(\rho,\ket{\psi}\!\bra{\psi})=\bra{\psi}\rho\ket{\psi}
    \end{equation}
    for arbitrary $\rho$ and $\ket{\psi}$.
\end{enumerate}

The proof that $\F$ satisfies all of Jozsa's axioms follows
easily from the variational definition of Eq. (\ref{eq:fid}) (see,
e.g., Ref.~\cite{00Nielsen} for technical details). The remainder of
this section discusses a number of less immediate properties of $\F$.

\subsubsection{Concavity Properties}\label{sec:fidconc}

The concavity property of quantities like
entropy, mutual information and fidelity is often of theoretical
interest in the quantum information community~\cite{00Nielsen}. In this
regard, it is worth noting that a useful feature of
$\F$ is its \emph{separate concavity} in each of
its arguments, i.e., for $p_1,p_2\geq 0$, $p_1+p_2=1$ and arbitrary
density matrices $\rho_1$, $\rho_2$, $\sigma_1$ and $\sigma_{2}$, we
have
\begin{equation}
    \F\left(p_1\rho_1+p_2\rho_2,\sigma_1\right)\geq
    p_1 \F(\rho_1,\sigma_1)+p_2 \F(\rho_2,\sigma_1)\,.\label{eq:sepconcF}
\end{equation}
By symmetry, concavity in the second argument follows from Eq.
\eqref{eq:sepconcF}.  Separate concavity can be
proved~\cite{76Uhlmann273,94Jozsa2315} using the variational definition
of $\F$ from Eq. (\ref{eq:fid}).\par

While it is known that $\sqrt{\F}$ is \emph{jointly
concave}~\cite{83Alberti5,00Uhlmann407}, i.e.,
\begin{equation}
    \sqrt{\F}\left(p_1\rho_1+p_2\rho_2,p_1\sigma_1+p_2\sigma_2\right)  \geq  p_1 \sqrt{\F}(\rho_1,\sigma_1)+p_2
    \sqrt{\F}(\rho_2,\sigma_2)\,,\label{eq:jconcsqrtF}
\end{equation}
it is also known that the fidelity $\F$ does \emph{not}, in
general, share the same enhanced concavity property\footnote{Note that joint concavity implies separate
    concavity but not the  other way around. For example, the separate concavity of $\sqrt{\mathcal{F}}(\rho,\sigma)$ can be obtained from Eq.~\eqref{eq:jconcsqrtF} by setting $\sigma_1=\sigma_2$ and using the fact that $p_1+p_2=1$.}.

\subsubsection{Multiplicativity under Tensor Product}\label{sec:fidmultiplic}

Another neat mathematical property of $\F(\rho,\sigma)$ is
that it is multiplicative under tensor products: for any density matrices
$\rho_1$, $\rho_2$, $\sigma_1$ and $\sigma_2$,
\begin{equation}
    \F(\rho_1\otimes\rho_2,\sigma_1\otimes\sigma_2)
    =\F(\rho_1,\sigma_1)\F(\rho_2,\sigma_2)\,.
\end{equation}
This identity follows easily from the following facts: for any
Hermitian matrices $\mathds{A}$ and $\mathds{B}$, (i) $\tr(\mathds{A}\otimes
\mathds{B})=\tr(\mathds{A})\,\tr(\mathds{B})$ and (ii) $\sqrt{\mathds{A}\otimes
\mathds{B}}=\sqrt{\mathds{A}}\otimes\sqrt{\mathds{B}}$.\par

An immediate consequence of this result is that for two physical
systems, described by $\rho$ and $\sigma$, a measure of
their ``closeness'' given by $\F$ remains unchanged even after
appending each of them with an uncorrelated ancillary state $\tau$,
i.e., $\F(\rho\otimes\tau, \sigma\otimes\tau)=\F(\rho,\sigma)$.

\subsubsection{Monotonicity under Quantum Operations}
\label{sec:fidmonotone}

Given that $\F(\rho,\sigma)$ serves as a kind of measure for the
proximity between two quantum states $\rho$ and $\sigma$, one might
expect that any quantum operation $\mathcal{E}$ should bring $\rho$ and $\sigma$
``closer together'' according to $\F$:
\begin{equation}\label{eq:Dmonotone}
    \F(\mathcal{E}(\rho),\mathcal{E}(\sigma))
    \geq \F(\rho,\sigma).
\end{equation}
Indeed, it is now well-known that Eq.~\eqref{eq:Dmonotone} holds
true~\cite{83Alberti107} for an arbitrary quantum operation described
by a completely-positive-trace-preserving (CPTP) map
$\mathcal{E}:\rho\mapsto\mathcal{E}(\rho)$. Inequality (\ref{eq:Dmonotone}) qualifies
$\F$ as a \emph{monotonically increasing measure} under  CPTP
maps and can be considered the quantum analogue of the classical
\emph{information-processing inequality} --- which expresses that the
amount of information should not increase via any information
processing.

On a related note, it is worth noting that any measure $\EuScript{M}$
which is (i) unitarily invariant,  (ii) jointly concave (convex) and
(iii) invariant under the addition of an ancillary system, is also
monotonically increasing (decreasing) under CPTP
maps\footnote{This follows easily from the Stinespring
    representation of  a CPTP map and from the representation of the partial trace operation given in Refs.~\cite{71Uhlmann633,08Carlen107}.}. Clearly, since $\sqrt{\F}$ satisfies all
the above-mentioned conditions, Eq.~\eqref{eq:Dmonotone} also follows
by simply squaring the corresponding monotonicity inequality for
$\sqrt{\F}$.

\subsubsection{Related Metrics}\label{sec:fidmetric}

The fidelity by itself is not a metric. However, one may well expect that a
metric, which is a measure of distance, can be built up from a measure
of ``closeness'' such as $\F$. Indeed, the functionals
\begin{align}
A[\F(\rho,\sigma)]&\defeq\arccos{\sqrt{\F(\rho,\sigma)}}\,,\label{eq:AF}\\
B[\F(\rho,\sigma)]&\defeq\sqrt{2-2\sqrt{\F(\rho,\sigma)}}\,,\label{eq:BF}\\
C[\F(\rho,\sigma)]&\defeq\sqrt{1-\F(\rho,\sigma)}\,,\label{eq:CF}
\end{align}
exhibit such metric properties (see Refs.~\cite{95Uhlmann461,
00Nielsen,69Bures199,92Hubner239,05Gilchrist062310,06Rastegin} and also
Appendix~\ref{app:metric_C} for more details). In particular, these
functionals are now commonly known in the literature, respectively, as
the {\em Bures angle}~\cite{00Nielsen}, the {\em Bures
distance}~\cite{69Bures199,92Hubner239}, and the \emph{sine
distance}~\cite{06Rastegin}.

\subsubsection{Trace Distance Bounds}\label{sec:fidtrdistbnd}

An important distance measure in quantum information is the metric
induced by  the trace norm $\|\cdot\|_{\rm tr}$ (defined in Sec.~\ref{sec:ttd}), which is commonly referred to
as the trace distance~\cite{00Nielsen}:
\begin{equation}\label{eq:trdist}
    \D(\rho,\sigma)=\tfrac{1}{2}\|\rho-\sigma\|_{\rm tr}\,.
\end{equation}
The trace distance is an exceedingly successful distance measure: it is
a metric (as is any distance induced by norms), unitarily invariant
\cite{97Bhatia}, jointly convex \cite{00Nielsen}, decreases under CPTP
maps \cite{94Ruskai1147} and, in the qubit case, is proportional to the
Euclidean distance between the Bloch vectors in the Bloch ball. The
trace distance is also closely related to the minimal probability of
error on attempts to distinguish between two non-orthogonal quantum
states \cite{76Helstrom}. For all of these reasons, one is generally
interested to determine how other distance measures relate with the
trace distance.\par

The following functions of the fidelity were shown in
Ref.~\cite{99Fuchs1216} to provide tight bounds for
$\D$\footnote{Both inequalities in Eq. (\ref{eq:fuchs_ineq}) are
    saturated if $\rho=\sigma$,  and also if $\rho$ and $\sigma$ have
    orthogonal supports. A less trivial example of saturation of the
    upper bound on $\D$ is obtained when both $\rho$ and $\sigma$ are
    pure states, whereas the lower bound on $\D$ can only be
    (non-trivially) saturated in Hilbert spaces of dimension strictly
    greater than $2$ (see Ref.~\cite{01Spekkens012310} for an example
    with $d=3$). Moreover, it is not difficult to show that the
    equality $1-\mathcal{F}=\mathcal{D}$ holds true if
    $[\rho,\sigma]=0$ \emph{and} at least one of the states is pure.}:
\begin{equation}\label{eq:fuchs_ineq}
    1-\sqrt{\F(\rho,\sigma)} \leq
    \D(\rho,\sigma) \leq \sqrt{1-\F(\rho,\sigma)}\,.
\end{equation}
In fact, the stronger lower bound $1-\F \leq \D$
holds if $\rho$ and $\sigma$  have support on a common two-dimensional
Hilbert space \cite{01Spekkens012310} (e.g., any pair of qubit states),
or if at least one of the states is pure \cite{00Nielsen}.\par

From these inequalities, one can conclude a type of qualitative
equivalence between the fidelity $\F$ and the trace distance
$\D$: whenever $\F$ is small, $\D$ is large and whenever $\F$
is large, $\D$ is small.

\subsection{An alternative fidelity measure}\label{sec:new_meas}

\subsubsection{Preliminaries}

We shall now turn attention to our proposed alternative fidelity measure
between two quantum states $\rho$ and $\sigma$, namely,
\begin{equation}\label{eq:fidnew}
    \Fn(\rho,\sigma)=\tr\left[\rho\sigma\right]
    +\sqrt{1-\tr\rho^2}\sqrt{1-\tr\sigma^2}.
\end{equation}
This is simply a sum of the Hilbert-Schmidt inner product between
$\rho$ and $\sigma$ and the geometric mean between their linear
entropies.  It is worth noting that the same quantity --- by the name
\emph{super-fidelity} --- has been independently introduced in
Ref.~\cite{08Miszczak} as an upper bound for $\F$.

Remarkably, when applied to qubit states, $\Fn$ is precisely the same
as $\F$. This observation follows easily from the fact that for density
matrices of dimension $d=2$, it is valid to write
\begin{equation}
    \left.\Fn(\rho,\sigma)\right|_{d=2}=\tr\left[\rho\sigma\right]
    +2\sqrt{\det{\rho}}\sqrt{\det{\sigma}}\,,
\end{equation}
which is just an alternative expression of $\F$ for qubit
states~\cite{93Hubner226,92Hubner239}.

When $d>2$, however, $\Fn$ no longer recovers $\F$, but can be seen as
a simplified version of the fidelity measure $\F_C$ proposed by Chen and
collaborators~\cite{02Chen054304}, which reads as:
\begin{equation}
    \F_C(\rho,\sigma)=\frac{1-r}{2}+\frac{1+r}{2}\Fn(\rho,\sigma),
\end{equation}
where $r=1/(d-1)$, and $d$ is the dimension of the state space of
$\rho$ and $\sigma$. Moreover, it is straightforward to verify that
while $\Fn$ reduces to the Schumacher's fidelity [the rhs of
Eq.~\eqref{eq:fid_schum}] when one of the states is pure; the same
cannot be said for $\F_C$.

It is not difficult to see from Eq.~\eqref{eq:fidnew} that $\Fn$
satisfies Jozsa's axioms 2, 3, and 4 as enumerated in
Sec.~\ref{sec:fidprelim}.  The non-negativity of $\Fn$ required by
axiom 1 is also immediate from the definition.  As a result, $\Fn$ is
an acceptable generalization of Schumacher's fidelity according to
Jozsa's axioms if:
\begin{proposition}\label{prop:jozsa}
    $\F_N(\rho,\sigma)\leq 1$ holds for arbitrary density matrices
    $\rho$ and $\sigma$,  with saturation if and only if $\rho=\sigma$.
\end{proposition}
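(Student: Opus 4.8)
The plan is to reduce the claimed bound to two applications of the Cauchy--Schwarz inequality. Write $p\defeq\tr\rho^2$ and $q\defeq\tr\sigma^2$ for the two purities; since $\rho$ and $\sigma$ are density matrices one has $p,q\in[0,1]$, so the square roots appearing in the definition of $\Fn$ are all real and the quantity to be bounded is $\tr[\rho\sigma]+\sqrt{(1-p)(1-q)}$.

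First I would bound the Hilbert--Schmidt overlap. Regarding $\tr[\rho\sigma]$ as the Hilbert--Schmidt inner product of two Hermitian matrices [cf. Eq.~\eqref{eq:vec_trace}], the Cauchy--Schwarz inequality gives $\tr[\rho\sigma]\le\sqrt{\tr\rho^2}\sqrt{\tr\sigma^2}=\sqrt{pq}$; the left-hand side is moreover non-negative, being the trace of a product of two PSD matrices. It then remains to prove the purely scalar inequality $\sqrt{pq}+\sqrt{(1-p)(1-q)}\le1$ for $p,q\in[0,1]$. The cleanest way is to read the left-hand side as the Euclidean inner product of the two unit vectors $(\sqrt{p},\sqrt{1-p})$ and $(\sqrt{q},\sqrt{1-q})$ in $\mathbb{R}^2$; a second Cauchy--Schwarz bounds this by the product of their norms, which is $1$. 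Chaining the two estimates yields $\Fn(\rho,\sigma)\le\sqrt{pq}+\sqrt{(1-p)(1-q)}\le1$.

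For the equality clause, the easy direction is direct substitution: if $\rho=\sigma$ then $\tr[\rho\sigma]=p$ and $\sqrt{(1-p)(1-p)}=1-p$, so $\Fn=1$. For the converse, note that $\Fn(\rho,\sigma)=1$ forces both inequalities above to be tight; in particular the Hilbert--Schmidt Cauchy--Schwarz is saturated, which happens only when $\rho$ and $\sigma$ are linearly dependent, say $\rho=c\,\sigma$ with $c\ge0$. Taking the trace and using $\tr\rho=\tr\sigma=1$ pins down $c=1$, hence $\rho=\sigma$.

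I expect the main subtlety to lie entirely in the equality analysis rather than in the bound itself. One must invoke the correct equality condition for Cauchy--Schwarz on the space of Hermitian operators (proportionality of $\rho$ and $\sigma$, with the positivity of both operators ruling out a negative proportionality constant) and then use trace normalization to fix the constant. It is worth observing that this single condition already delivers $\rho=\sigma$, so the equality case of the auxiliary scalar inequality --- which would separately force $p=q$ --- is automatically consistent and need not be analyzed independently.
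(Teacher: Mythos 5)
Your proof is correct, but it takes a different route from the paper's. The paper expands both states in an orthonormal basis of Hermitian matrices, $\rho=\bm{r}\cdot\bm{\Lambda}$ and $\sigma=\bm{s}\cdot\bm{\Lambda}$, and then writes $\Fn(\rho,\sigma)=\bm{R}\cdot\bm{S}$ for the two \emph{augmented} unit vectors $\bm{R}=(\bm{r},\sqrt{1-r^2})$ and $\bm{S}=(\bm{s},\sqrt{1-s^2})$ in $\mathbb{R}^{d^2+1}$; a single Cauchy--Schwarz step on these unit vectors gives the bound, and equality reads off as $\bm{R}=\bm{S}$, hence $\rho=\sigma$. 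You instead work basis-free and factor that one-shot estimate into two chained Cauchy--Schwarz inequalities: first $\tr[\rho\sigma]\le\sqrt{\tr\rho^2}\sqrt{\tr\sigma^2}$ on the Hilbert--Schmidt space, then the scalar bound $\sqrt{pq}+\sqrt{(1-p)(1-q)}\le 1$ in $\mathbb{R}^2$. Both arguments are sound; your equality analysis is also handled correctly, including the observation that saturation of the operator-level Cauchy--Schwarz alone (proportionality plus trace normalization) suffices, so the scalar equality case need not be examined. What each approach buys: yours is more self-contained, requiring no choice of Hermitian basis and no bookkeeping of the constraints on $\bm{r}$ and $\bm{s}$; the paper's parametrization, on the other hand, is not incidental --- the same vectors $\bm{r},\bm{s}$ and the function $f_N(\bm{r},\bm{s})=\bm{r}\cdot\bm{s}+\sqrt{1-r^2}\sqrt{1-s^2}$ are reused immediately afterwards in Lemma~\ref{lemmaconc} and Proposition~\ref{prop:concFx} to establish joint concavity of $\Fn$, so the paper's proof doubles as setup for that machinery.
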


\begin{proof} To begin with, recall that any $d\times
d$ density matrix can be expanded in terms of an orthonormal basis of
Hermitian matrices $\{\lambda_k\}_{k=0}^{d^2-1}$ such that
$\tr(\lambda_i\lambda_j)=\delta_{ij}$ (see, for example,
Refs.~\cite{03Byrd062322,03Kimura339}). In particular, if we let
$\bm{\Lambda}\defeq(\lambda_0,\ldots,\lambda_{d^2-1})$, then $\rho$ and
$\sigma$ admit the following decomposition:
\begin{equation}\label{eq:param}
    \rho=\bm{r}\cdot \bm{\Lambda}\quad\mbox{and}\quad
    \sigma=\bm{s}\cdot \bm{\Lambda}\,,
\end{equation}
where $\bm{r}$ and $\bm{s}$ are real vectors with $d^2$ entries
(corresponding to the expansion coefficients which can be determined
using the orthonormality condition). Since $\rho$ and $\sigma$ are
density matrices, $\bm{r}$ and $\bm{s}$ satisfy
$0\leq\bm{r}\cdot\bm{s}\leq 1$ and $r,s\leq 1$ where $r = \|\bm{r}\|$
and $s = \|\bm{s}\|$.

Using the expansion of Eq.~\eqref{eq:param} in Eq.~\eqref{eq:fidnew},
we arrive at the following alternative expression of $\F_N$,
\begin{align}\label{eq:fN}
    f_N(\bm{r},\bm{s})&=\bm{r}\cdot\bm{s} + \sqrt{1-r^2}\sqrt{1-s^2}\\
    &= \bm{R}\cdot\bm{S}\,,
\end{align}
where, in the second line, we have defined two \emph{unit} vectors in
$\mathbb{R}^{d^2+1}$, explicitly,
\begin{equation}
    \bm{R}\defeq\left(\bm{r},\sqrt{1-r^2}\right)\quad\mbox{and}
    \quad\bm{S}\defeq\left(\bm{s},\sqrt{1-s^2}\right)\,.
\end{equation}
The normalization of $\bm{R}$ and $\bm{S}$ then implies that
$\Fn(\rho,\sigma) = \bm{R}\cdot\bm{S} \le 1$, with saturation if and only if
$\bm{R}=\bm{S}$, or equivalently $\rho = \sigma$.
\end{proof}

\subsubsection{Concavity Properties}\label{sec:fnconc}

As with $\sqrt{\F}$, the new fidelity measure $\Fn$ is jointly
concave in its two arguments, i.e., for $p_1,p_2\geq 0$, $p_1+p_2=1$
and arbitrary density matrices $\rho_1$, $\rho_2$, $\sigma_1$ and
$\sigma_{2}$, we have,
\begin{equation}
    \Fn\left(p_1\rho_1+p_2\rho_2,p_1\sigma_1+p_2\sigma_2\right)\geq
    p_1 \Fn(\rho_1,\sigma_1)+p_2 \Fn(\rho_2,\sigma_2)\,.\label{eq:fnjconc}
\end{equation}
Since $\F$ fails to be jointly concave in general, $\Fn$ has stronger
concavity property. Remarkably, given the equivalence between $\F$ and
$\Fn$ in the $d=2$ case, the result of this section implies that $\F$
is jointly concave when restricted to qubit states.

The rest of this section concerns a proof of this concavity property of
$\Fn$. We start by proving the following lemma, which provides a useful
alternative expression of inequality~\eqref{eq:fnjconc}.

\begin{lemma}\label{lemmaconc}
Define a function $F:[0,1]\to\mathbb{R}$ by
\begin{equation}\label{eq:Fx}
    F(x) \defeq (\bm{r}+x\bm{u})\cdot(\bm{s}+x\bm{v})
          + \sqrt{1-\|\bm{r}+x\bm{u}\|^2}\sqrt{1-\|\bm{s}+x\bm{v}\|^2}.
\end{equation}
Given density matrices $\rho_1$, $\rho_2$, $\sigma_1$ and $\sigma_{2}$,
there exist vectors $\bm{r}$,$\bm{s}$,$\bm{u}$,$\bm{v} \in \mathbb{R}^{d^2}$
and $x \in [0,1]$ such that the inequality
\begin{equation}
    F(x) \geq (1-x)F(0) + x F(1) \label{eq:linebelow}
\end{equation}
is equivalent to Eq.~\eqref{eq:fnjconc}.
\end{lemma}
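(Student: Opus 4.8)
The plan is to recast the joint-concavity inequality~\eqref{eq:fnjconc} as a one-dimensional statement asserting that the chord of $F$ over $[0,1]$ lies below its graph, exploiting the vector parametrization already introduced in the proof of Proposition~\ref{prop:jozsa}. First I would expand each of the four states in the orthonormal Hermitian basis $\bm{\Lambda}$, writing $\rho_i=\bm{r}_i\cdot\bm{\Lambda}$ and $\sigma_i=\bm{s}_i\cdot\bm{\Lambda}$ for $i=1,2$, so that, by linearity, $p_1\rho_1+p_2\rho_2=(p_1\bm{r}_1+p_2\bm{r}_2)\cdot\bm{\Lambda}$ and likewise for the $\sigma$'s. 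Via Eq.~\eqref{eq:fN} this turns every instance of $\Fn$ appearing in~\eqref{eq:fnjconc} into an evaluation of $f_N$ on the associated vectors.

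The key step is the identification of the free data in the lemma. I would set $\bm{r}\defeq\bm{r}_1$, $\bm{u}\defeq\bm{r}_2-\bm{r}_1$, $\bm{s}\defeq\bm{s}_1$, $\bm{v}\defeq\bm{s}_2-\bm{s}_1$, and choose $x\defeq p_2$ (so that $1-x=p_1$). With these choices $\bm{r}+x\bm{u}=p_1\bm{r}_1+p_2\bm{r}_2$ and $\bm{s}+x\bm{v}=p_1\bm{s}_1+p_2\bm{s}_2$, whence comparison of Eq.~\eqref{eq:Fx} with Eq.~\eqref{eq:fN} gives $F(x)=f_N(p_1\bm{r}_1+p_2\bm{r}_2,\,p_1\bm{s}_1+p_2\bm{s}_2)=\Fn(p_1\rho_1+p_2\rho_2,\,p_1\sigma_1+p_2\sigma_2)$. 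Evaluating at the endpoints yields $F(0)=f_N(\bm{r}_1,\bm{s}_1)=\Fn(\rho_1,\sigma_1)$ and $F(1)=f_N(\bm{r}_2,\bm{s}_2)=\Fn(\rho_2,\sigma_2)$.

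Substituting these three identities into $F(x)\geq(1-x)F(0)+xF(1)$ with $x=p_2$ reproduces~\eqref{eq:fnjconc} verbatim, establishing the claimed equivalence. The only point needing care is that $F$ is genuinely real-valued on $[0,1]$: for $x=p_2\in[0,1]$ the combination $p_1\rho_1+p_2\rho_2$ is again a density matrix, so its coefficient vector has norm at most one (the bound $r\leq 1$ recorded in the proof of Proposition~\ref{prop:jozsa}), keeping both radicands in Eq.~\eqref{eq:Fx} nonnegative. I do not expect a genuine obstacle here; the subtlety is purely bookkeeping --- getting the orientation of the parametrization right (namely $x=p_2$, not $p_1$) and confirming that the convex combination stays inside the admissible region where $f_N$ coincides with $\Fn$.
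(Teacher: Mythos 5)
Your proposal is correct and follows essentially the same route as the paper's own proof: the identical identifications $x\equiv p_2$, $\bm{r}=\bm{r}_1$, $\bm{u}=\bm{r}_2-\bm{r}_1$, $\bm{s}=\bm{s}_1$, $\bm{v}=\bm{s}_2-\bm{s}_1$, combined with the parametrization of Eq.~\eqref{eq:param} and the function $f_N$ of Eq.~\eqref{eq:fN}, so that \eqref{eq:linebelow} becomes \eqref{eq:fnjconc} verbatim. Your added remark that both radicands stay nonnegative on all of $[0,1]$ (since $(1-x)\rho_1+x\rho_2$ remains a density matrix) is a small but welcome point of rigor that the paper leaves implicit.
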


\begin{proof}
The proof is by construction. Using the parametrization of Eq.
(\ref{eq:param}) for  the density matrices in inequality
(\ref{eq:fnjconc}), we obtain the following equivalent inequality for
the vectors $\bm{r}_i$ and $\bm{s}_i$:
\begin{equation}
    f_N\left(p_1\bm{r}_1+p_2\bm{r}_2,p_1\bm{s}_1+p_2\bm{s}_2\right)\geq
    p_1 f_N(\bm{r}_1,\bm{s}_1)+p_2 f_N(\bm{r}_2,\bm{s}_2)\,,\label{eq:ineqfn}
\end{equation}
where the function $f_N$ was defined in Eq. (\ref{eq:fN}).

A straightforward computation shows that inequality
(\ref{eq:linebelow}) is identical to inequality (\ref{eq:ineqfn}) when we
identify $x\equiv p_2$, $1-x\equiv p_1$, and set
\begin{equation}\label{eq:rusv}
    \begin{array}{rclcrcl}
        \bm{r}&=&\bm{r}_1\,,&\qquad&\bm{u}&=&\bm{r}_2-\bm{r}_1\,,\\
        \bm{s}&=&\bm{s}_1\,,&\qquad&\bm{v}&=&\bm{s}_2-\bm{s}_1\,.
    \end{array}
\end{equation}
\end{proof}

If $F(x)$ has negative concavity in $x \in [0,1]$, then the inequality
(\ref{eq:linebelow}) is automatically satisfied as it establishes that
the straight line connecting the points $(0,F(0))$ and $(1,F(1))$ lies
below the curve $\{(x,F(x))|x \in [0,1]\}$. As a result, the joint
concavity of $\F_N$ is proved with the following proposition:

\begin{proposition}\label{prop:concFx}
For $x \in [0,1]$, and $\bm{r}$, $\bm{s}$, $\bm{u}$,
$\bm{v}\in\mathbb{R}^{d^2}$ specified in Eq.~(\ref{eq:rusv}), the function
$F(x)$ [cf. Eq.~(\ref{eq:Fx})] satisfies
\begin{equation}
 \frac{d^2F(x)}{dx^2}\leq 0
\end{equation}
and hence $\F_N$ is jointly concave.
 \end{proposition}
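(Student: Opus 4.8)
The plan is to compute the second derivative of $F$ directly and display it as a sum of manifestly non-positive terms. I would write $\bm{a}(x)\defeq\bm{r}+x\bm{u}$ and $\bm{b}(x)\defeq\bm{s}+x\bm{v}$, and abbreviate $g\defeq\sqrt{1-\|\bm{a}\|^2}$ and $h\defeq\sqrt{1-\|\bm{b}\|^2}$, so that $F=\bm{a}\cdot\bm{b}+gh$. The first summand is a quadratic in $x$ whose second derivative is the constant $2\,\bm{u}\cdot\bm{v}$. For the second summand I would differentiate the defining relation $g^2=1-\|\bm{a}\|^2$ twice: from $g\,g'=-\bm{a}\cdot\bm{u}$ one obtains $g'=-(\bm{a}\cdot\bm{u})/g$ and then $g''=-\|\bm{u}\|^2/g-(\bm{a}\cdot\bm{u})^2/g^3$, with the analogous formulae for $h$ in terms of $\bm{b}$ and $\bm{v}$. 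Assembling $(gh)''=g''h+2g'h'+gh''$ yields a closed expression for $F''$.

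Writing $p\defeq\bm{a}\cdot\bm{u}$ and $q\defeq\bm{b}\cdot\bm{v}$, this computation gives
\begin{equation}
F''=\left(2\,\bm{u}\cdot\bm{v}-\frac{\|\bm{u}\|^2\,h}{g}-\frac{\|\bm{v}\|^2\,g}{h}\right)-\frac{\left(p\,h^2-q\,g^2\right)^2}{g^3 h^3}\,.
\end{equation}
The crucial step is to recognise this particular regrouping, after which both pieces are non-positive for transparent reasons. The second piece is minus a square divided by the positive quantity $g^3h^3$, hence $\leq 0$. For the first piece I would invoke the arithmetic--geometric mean inequality $\frac{\|\bm{u}\|^2 h}{g}+\frac{\|\bm{v}\|^2 g}{h}\geq 2\|\bm{u}\|\,\|\bm{v}\|$ together with the Cauchy--Schwarz inequality $\bm{u}\cdot\bm{v}\leq\|\bm{u}\|\,\|\bm{v}\|$, which combine to give $2\,\bm{u}\cdot\bm{v}\leq\frac{\|\bm{u}\|^2 h}{g}+\frac{\|\bm{v}\|^2 g}{h}$. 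Thus $F''\leq 0$, and by Lemma~\ref{lemmaconc} the joint concavity~\eqref{eq:fnjconc} of $\Fn$ follows.

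The one technical point to address is that these expressions require $g,h>0$, that is, that the density matrices $\rho(x)=(1-x)\rho_1+x\rho_2$ and $\sigma(x)=(1-x)\sigma_1+x\sigma_2$ are non-pure on the open interval. This is automatic except in trivial cases: the purity $\tr[\rho(x)^2]=\|\bm{a}(x)\|^2$ is a convex quadratic in $x$ bounded above by $1$, so it can attain the value $1$ at an interior point only if it is identically $1$, which forces $\rho_1=\rho_2$ pure and renders the concavity inequality trivial. Hence $g,h>0$ on $(0,1)$, $F$ is smooth there, the computation of $F''$ is valid, and concavity on the closed interval follows by continuity of $F$. I expect the only genuine obstacle to be spotting the precise split of $F''$ into the AM--GM pairing and the perfect square; the remaining differentiation and inequalities are routine.
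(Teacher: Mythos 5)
Your proof is correct and follows essentially the same route as the paper's: the two pieces into which you split $F''$ are algebraically identical to the paper's $\mathfrak{F}_1(x)$ and $\mathfrak{F}_2(x)$, the paper merely packaging your AM--GM/Cauchy--Schwarz step for the first piece as the single negative square $-\left\|\sqrt{g/f}\,\bm{u}-\sqrt{f/g}\,\bm{v}\right\|^2$ and your squared-numerator fraction for the second piece in factored form. Your handling of the degenerate case where a square root vanishes (a pure state on the interval, forcing either triviality of the concavity inequality or strict positivity of $g,h$ on $(0,1)$ followed by continuity) is a point of rigor the paper passes over silently.
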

\noindent The proof of this Proposition is given in Appendix \ref{app:concFx}.

\subsubsection{Multiplicativity under Tensor Product} \label{sec:fnmultipl}

In contrast with $\F$, the new fidelity measure $\Fn$ is not
multiplicative under tensor products. In fact, it is generally not even
invariant under the addition of an uncorrelated ancilla prepared in the
state $\tau$. In this case, $\Fn$ between the resulting states reads
as:
\begin{equation}
    \Fn(\rho\otimes\tau,\sigma\otimes\tau)
    =\tr\left[\rho\sigma\right]
    \tr\,\tau^2+\sqrt{1-\tr\,\rho^2 \tr\,\tau^2}\sqrt{1-\tr\,\sigma^2 \tr\,\tau^2},
\end{equation}
where the lhs equals $\F_N(\rho,\sigma)$ iff $\tr\,\tau^2=1$, or in other  words, iff $\tau$ is a pure state. More
generally, it can be shown that $\Fn$ is super-multiplicative, i.e.,
\begin{equation}
    \F_N(\rho_1\otimes\rho_2,\sigma_1\otimes\sigma_2) \geq
    \F_N(\rho_1,\sigma_1)\F_N(\rho_2,\sigma_2).
\end{equation}
A proof of this property is given in Appendix~\ref{app:super-multplicativity};
a similar proof was independently obtained in Ref.~\cite{08Miszczak}.

\subsubsection{Monotonicity under Quantum Operations} \label{sec:fnmonotone}

That $\Fn$ is only super-multiplicative may be a first sign that it may
not behave monotonically under CPTP maps. In fact, as we shall see
below, Ozawa's counter-example~\cite{00Ozawa158} to the claimed
monotonicity of the Hilbert-Schmidt distance~\cite{99Witte14} can also
be used to show that $\Fn$  does not behave monotonically under CPTP
maps.\par

Let $\widetilde{\rho}$ and $\widetilde{\sigma}$ be two two-qubit
density matrices, written in the product basis as
\begin{equation}\label{eq:Ozawa}
    \widetilde{\rho}=\tfrac{1}{2}\left(
    \begin{array}{cccc}
    1 & 0 & 0 & 0\\
    0 & 1 & 0 & 0\\
    0 & 0 & 0 & 0\\
    0 & 0 & 0 & 0
    \end{array}
    \right)\quad\mbox{and}\quad
    \widetilde{\sigma}=\tfrac{1}{2}\left(
    \begin{array}{cccc}
    0 & 0 & 0 & 0\\
    0 & 0 & 0 & 0\\
    0 & 0 & 1 & 0\\
    0 & 0 & 0 & 1
    \end{array}
    \right),
\end{equation}
and consider the (trace preserving) quantum operations of tracing over
the first or the second qubit. A straightforward computation shows that
if the first qubit is traced over, then
\begin{equation}
    \Fn(\tr_1\widetilde{\rho},\tr_1\widetilde{\sigma})=1 >
    \tfrac{1}{2}=\F_N( \widetilde{\rho},\widetilde{\sigma}),\label{eq:monot1}
\end{equation}
which satisfies the desired monotonicity property. However, if instead
the second subsystem is discarded, we find
\begin{equation}
    \Fn(\tr_2\widetilde{\rho},\tr_2\widetilde{\sigma})=0 <
    \tfrac{1}{2}=\F_N( \widetilde{\rho},\widetilde{\sigma})\,.\label{eq:monot2}
\end{equation}
Together, Eqs. \eqref{eq:monot1} and \eqref{eq:monot2} show that
$\F_N$ is neither monotonically increasing nor decreasing
under general CPTP maps.

A natural question that follows is whether $\Fn$ features a weaker form
of monotonicity. For example, do arbitrary projective measurements ---
with the measurement outcomes forgotten --- give rise to higher value
of $\Fn$ for the resulting pair of states? An affirmative answer would
follow from a proof of the inequality
\begin{equation}\label{eq:pinching}
\Fn\left(\sum_i P_i \rho P_i , \sum_i P_i \sigma P_i \right) \geq \Fn(\rho,\sigma)
\end{equation}
for any complete set of orthonormal projectors $P_i$, and for arbitrary
density matrices $\rho$ and $\sigma$.

It is a simple  exercise to prove Eq.~\eqref{eq:pinching} for the
particular case where either of the commutation rules $[P_i,\rho]=0$ or
$[P_i,\sigma]=0$ is observed for all values of $i$. Whether the same
conclusion can be drawn from the more general, non-commutative cases
remains to be seen. In this regard, we note that a preliminary
numerical search favors the validity of Eq.~\eqref{eq:pinching}.

\subsubsection{Related Metrics}\label{sec:fnmetric}

In parallel to the metrics $A[\F]$, $B[\F]$ and
$C[\F]$ introduced in Sec. \ref{sec:fidmetric}, we define
\begin{align}
A[\F_N(\rho,\sigma)]&\defeq\arccos{\sqrt{\F_N(\rho,\sigma)}},\label{eq:AFn}\\
B[\F_N(\rho,\sigma)]&\defeq\sqrt{2-2\sqrt{\F_N(\rho,\sigma)}},\label{eq:BFn}\\
C[\F_N(\rho,\sigma)]&\defeq\sqrt{1-\F_N(\rho,\sigma)},\label{eq:CFn}
\end{align}
and prove that while $C[\F_N]$ preserves the metric
properties, both $A[\F_N]$ and $B[\F_N]$ \emph{do
not} always obey the triangle inequality
\begin{equation}
X[\F_N(\rho,\sigma)] \leq X[\F_N(\rho,\tau)] + X[\F_N(\tau,\sigma)]\,,\label{eq:triang}
\end{equation}
where $X$ here refers to either  $A$, $B$
or $C$. For example, consider the qutrit density
matrices, $\rho=\openone_3/3$,
\begin{equation}\label{eq:example_states}
\sigma=\left(\begin{array}{ccc}
1&0&0\\
0&0&0\\
0&0&0\end{array}\right)\mbox{ and }
\tau=\left(\begin{array}{ccc}
0.90&0.04&0.03\\
0.04&0.05&0.02\\
0.03&0.02&0.05\end{array}\right).
\end{equation}

\begin{table}[h!]
\centering \caption{A numerical test of the triangle inequality for
$A[\F_N]$, $B[\F_N]$ and $C[\F_N]$.}
\begin{tabular*}{0.8\textwidth}{@{\extracolsep{\fill}} r|cc}
\hline\hline
\hspace{0.25cm}$X$\hspace{0.25cm}&$X[\F_N(\rho,\sigma)]$&$X[\F_N(\rho,\tau)]+X[\F_N(\tau,\sigma)]$\\
\hline
$A$&$0.9553$ & $0.9241$\\
$B$&$0.9194$ & $0.9137$\\
$C$&$0.8165$ & $0.8828$\\
\hline\hline
\end{tabular*}\label{table:numtest}
\end{table}

Numerical computation of the quantities appearing in the triangle
inequality gives rise to Table \ref{table:numtest}.  Note that for
$X=A, B$, the first column dominates the second, i.e., the triangle
inequality is violated and therefore neither $A[\F_N]$ nor
$B[\F_N]$ are metrics. For $X=C$, no violation is observed for
the above density matrices. Next, we prove that this is the case for
any three density matrices $\rho$, $\sigma$ and $\tau$, thus
$C[\F_N]$ is a metric.
\begin{proposition}\label{Pro:metric:C(Fn)}
The quantity $C[\F_N(\rho,\sigma)]$ is a metric for the space of density matrices.
\end{proposition}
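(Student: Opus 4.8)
The plan is to recognize that $C[\F_N]$ is, up to a constant factor, nothing but the ordinary Euclidean distance between the unit vectors $\bm{R}$ and $\bm{S}$ associated to $\rho$ and $\sigma$ in the proof of Proposition~\ref{prop:jozsa}, and then to inherit the four metric axioms directly from those of the Euclidean norm.

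First I would recall the parametrization already established: writing $\rho=\bm{r}\cdot\bm{\Lambda}$ and $\sigma=\bm{s}\cdot\bm{\Lambda}$ and defining the unit vectors $\bm{R}\defeq(\bm{r},\sqrt{1-r^2})$ and $\bm{S}\defeq(\bm{s},\sqrt{1-s^2})$ in $\mathbb{R}^{d^2+1}$, we have $\F_N(\rho,\sigma)=\bm{R}\cdot\bm{S}$. Since $\|\bm{R}\|=\|\bm{S}\|=1$, a one-line computation gives
\[
\|\bm{R}-\bm{S}\|^2 = \|\bm{R}\|^2+\|\bm{S}\|^2-2\,\bm{R}\cdot\bm{S}=2\bigl(1-\F_N(\rho,\sigma)\bigr),
\]
so that
\[
C[\F_N(\rho,\sigma)]=\sqrt{1-\F_N(\rho,\sigma)}=\tfrac{1}{\sqrt{2}}\,\|\bm{R}-\bm{S}\|.
\]
This identity is the crux of the argument: it exhibits $C[\F_N]$ as a rescaled Euclidean distance on the image of the embedding $\rho\mapsto\bm{R}$.

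Next I would verify that $\rho\mapsto\bm{R}$ is injective on density matrices, which is what lets the axioms transfer. The map $\rho\mapsto\bm{r}$ is a linear isomorphism because the $\{\lambda_k\}$ form a basis of the Hermitian matrices, and $\bm{r}\mapsto\bm{R}$ is injective since $\bm{R}$ retains $\bm{r}$ as its first $d^2$ coordinates. With injectivity in hand, axioms (M1)--(M4) for $C[\F_N]$ follow at once from those of $\|\cdot\|$ on $\mathbb{R}^{d^2+1}$: nonnegativity (M1) and symmetry (M3) are immediate; identity of indiscernibles (M2) holds because $\|\bm{R}-\bm{S}\|=0\iff\bm{R}=\bm{S}\iff\rho=\sigma$, where the last equivalence uses injectivity; and for a third state $\tau$ with associated unit vector $\bm{T}$, the Euclidean triangle inequality $\|\bm{R}-\bm{S}\|\leq\|\bm{R}-\bm{T}\|+\|\bm{T}-\bm{S}\|$ yields, after multiplying through by $1/\sqrt{2}$, the triangle inequality (M4) for $C[\F_N]$.

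The main obstacle is really just the initial observation; once $C[\F_N]$ is written as $\tfrac{1}{\sqrt{2}}\|\bm{R}-\bm{S}\|$ with $\bm{R},\bm{S}$ unit vectors, no further analytic work is required, and the only point demanding any care is the injectivity underpinning (M2). By contrast, the failure of $A[\F_N]$ and $B[\F_N]$ to be metrics (witnessed numerically in Table~\ref{table:numtest}) shows that this clean Euclidean reduction is special to the functional $C$, so I would emphasize that it is the particular form $\sqrt{1-\F_N}$ — matching the structure $\sqrt{1-\bm{R}\cdot\bm{S}}$ — that makes the proof go through.
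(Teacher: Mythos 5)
Your proof is correct, but it takes a genuinely different route from the paper's. The paper invokes Schoenberg's theorem (Theorem~\ref{thm:schoenberg}) and verifies the negative-type condition algebraically: for zero-sum real coefficients $c_i$ it computes $\sum_{i,j}C^2[\F_N(\rho_i,\rho_j)]c_ic_j=-\tr\bigl[\bigl(\sum_i c_i\rho_i\bigr)^2\bigr]-\bigl(\sum_i c_i\sqrt{1-\tr\rho_i^2}\bigr)^2\leq 0$, and concludes via the theorem. You instead observe that $C[\F_N(\rho,\sigma)]=\tfrac{1}{\sqrt{2}}\|\bm{R}-\bm{S}\|$ is, up to scale, the chordal Euclidean distance between the unit vectors of Proposition~\ref{prop:jozsa}, so all four axioms are pulled back from the Euclidean norm through the injective map $\rho\mapsto\bm{R}$; your verification of that identity and of the injectivity is sound. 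The two arguments are dual faces of one fact: squared Euclidean distance is the canonical negative-type kernel, so your embedding immediately implies the paper's inequality, and conversely Schoenberg's theorem characterizes metrics admitting such Hilbert-space embeddings. What your route buys: it is more elementary (no Schoenberg machinery), it makes the geometry explicit, and it renders transparent the paper's remark that $\sqrt{2}\,C[\F_N]$ is the ``modified Bures distance'' of Ref.~\cite{08Miszczak}, since in your picture it is literally a chordal metric on the unit sphere in $\mathbb{R}^{d^2+1}$. What the paper's route buys: Schoenberg is a uniform tool the authors reuse where no explicit flat embedding is available --- for $B[\F]$ and $C[\F]$ in Appendix~\ref{app:metric_C} and for $\H^2$ in Appendix~\ref{app:metricHsq} --- so their pattern generalizes to those cases, whereas your argument is special to the form $\sqrt{1-\bm{R}\cdot\bm{S}}$, a specialness you correctly flag at the end.
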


To prove this proposition, we will make use of the following theorem
due to Schoenberg~\cite{38Schoenberg522} (see also \cite[Ch. 3,
Proposition 3.2]{84Berg}). We state here an abbreviated form of the
theorem sufficient for our present purposes.
\begin{theorem}[Schoenberg]\label{thm:schoenberg}
    Let $\EuScript{X}$ be a nonempty set and
    $K:\EuScript{X}\times\EuScript{X}\to\mathbb{R}$ a function such that
    $K(x,y)=K(y,x)$ and $K(x,y)\geq 0$ with saturation iff $x=y$, for all
    $x,y \in \EuScript{X}$. If the implication
    \begin{equation}
    \sum_{i=1}^{n}{c_i}=0 \Rightarrow \sum_{i,j=1}^{n}{K(x_i,x_j)c_i c_j}\leq 0\label{eq:ndk}
    \end{equation}
    holds for all $n\geq 2$, $\{x_1,\ldots,x_n\} \subseteq \EuScript{X}$ and
    $\{c_1,\ldots,c_n\} \subseteq \mathbb{R}$, then $\sqrt{K}$ is a metric.
\end{theorem}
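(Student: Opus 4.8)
The plan is to verify the four metric axioms (M1)--(M4) for $D\defeq\sqrt{K}$. Axioms (M1), (M2) and (M3) are immediate from the hypotheses: nonnegativity and the identity of indiscernibles follow because $K(x,y)\geq 0$ vanishes iff $x=y$, whence $\sqrt{K(x,y)}\geq 0$ vanishes iff $x=y$; and symmetry of $\sqrt{K}$ follows from symmetry of $K$. All of the content therefore lies in the triangle inequality (M4), and the conditional negative-definiteness hypothesis~\eqref{eq:ndk} is precisely what is designed to deliver it.

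The key idea I would use is to realize $\sqrt{K}$ as a genuine Euclidean distance: I would construct an embedding $\phi:\EuScript{X}\to H$ into a real inner-product space $H$ such that $K(x,y)=\|\phi(x)-\phi(y)\|^2$ for all $x,y$. Once this is in hand, $\sqrt{K(x,y)}=\|\phi(x)-\phi(y)\|$ inherits the triangle inequality from the Hilbert-space norm, and the proof is complete. Since (M4) only ever involves three points at a time, it suffices to build the embedding on any finite subset of $\EuScript{X}$. To this end, fix a base point $x_0\in\EuScript{X}$ (noting $K(x_0,x_0)=0$ by the identity-of-indiscernibles hypothesis) and define the centered kernel
\begin{equation}
G(x,y)\defeq\tfrac{1}{2}\left[K(x,x_0)+K(y,x_0)-K(x,y)\right].
\end{equation}

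The crux of the argument — and the step I expect to be the main obstacle — is to prove that $G$ is positive semidefinite, i.e.\ $\sum_{i,j}a_ia_jG(x_i,x_j)\geq 0$ for every finite point set and all real $a_i$. The trick is to introduce the extra weight $a_0\defeq-\sum_i a_i$ attached to $x_0$, so that the augmented weights $\{a_0,a_1,\ldots,a_n\}$ sum to zero and hypothesis~\eqref{eq:ndk} applies to the augmented point set $\{x_0,x_1,\ldots,x_n\}$. Expanding $\sum_{i,j=0}^n a_ia_jK(x_i,x_j)\leq 0$, discarding the $K(x_0,x_0)=0$ term, and substituting $a_0=-\sum_i a_i$, a short but slightly delicate bookkeeping rearrangement of the cross terms $K(x_i,x_0)$ yields exactly $\sum_{i,j}a_ia_jG(x_i,x_j)\geq 0$. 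This is the one place where the full strength of conditional negative-definiteness is consumed, and where sign errors are easy to make.

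With $G$ positive semidefinite, its Gram matrix over the finite set in question factors (via a matrix square root) as $G(x_i,x_j)=\langle\phi(x_i),\phi(x_j)\rangle$ for vectors $\phi(x_i)$ in a Euclidean space. A direct computation gives $G(x,x)=K(x,x_0)$, and hence
\begin{equation}
\|\phi(x)-\phi(y)\|^2=G(x,x)-2G(x,y)+G(y,y)=K(x,y),
\end{equation}
so that $\sqrt{K(x,y)}=\|\phi(x)-\phi(y)\|$. Applying the Euclidean triangle inequality to $\phi(x),\phi(y),\phi(z)$ then gives $\sqrt{K(x,z)}\leq\sqrt{K(x,y)}+\sqrt{K(y,z)}$, establishing (M4). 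Combined with (M1)--(M3), this proves that $\sqrt{K}$ is a metric.
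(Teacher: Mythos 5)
Your proof is correct, but there is nothing in the paper to compare it against: the thesis states Theorem~\ref{thm:schoenberg} in abbreviated form and defers the proof entirely to Schoenberg~\cite{38Schoenberg522} and Berg~\emph{et~al.}~\cite[Ch.~3, Proposition~3.2]{84Berg}. What you have reconstructed is precisely the classical embedding argument from those references, and the ``delicate bookkeeping'' step you flagged does go through: augmenting the family with $x_0$ carrying weight $a_0=-\sum_{i=1}^n a_i$, hypothesis~\eqref{eq:ndk} gives $\sum_{i,j=0}^n a_ia_jK(x_i,x_j)\leq 0$; the $(0,0)$ term vanishes since $K(x_0,x_0)=0$, and symmetrizing the cross term,
\begin{equation}
2a_0\sum_{i=1}^n a_iK(x_i,x_0)=-\sum_{i,j=1}^n a_ia_j\left[K(x_i,x_0)+K(x_j,x_0)\right],
\end{equation}
so the inequality rearranges to $\sum_{i,j=1}^n a_ia_j\left[K(x_i,x_0)+K(x_j,x_0)-K(x_i,x_j)\right]\geq 0$, i.e.\ $2\sum_{i,j}a_ia_jG(x_i,x_j)\geq 0$ with the signs exactly as you claim; the rest (Gram factorization, $G(x,x)=K(x,x_0)$, $\|\phi(x)-\phi(y)\|^2=K(x,y)$) is a routine verification. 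Two harmless remarks: since the hypothesis is phrased over \emph{sets} $\{x_1,\ldots,x_n\}$, when $x_0$ coincides with one of the $x_i$ you should merge the corresponding coefficients (or simply choose $x_0$ among the triple under consideration), and the embeddings built on different finite subsets need not be mutually consistent --- but, as you note, (M4) only requires one triple at a time, so this costs nothing. Incidentally, your argument also exposes where each hypothesis is consumed: $K(x_0,x_0)=0$ is needed to kill the $a_0^2$ term, and the strict part of the identity of indiscernibles is used only for (M2), not for the triangle inequality.
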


We make a small digression at this point to remark that, in spite of
its successful application in the grounds of classical probability
distance measures~\cite{00Topsoe1602,03Topsoe,04Fuglede}, Schoenberg's
theorem has received almost no attention by the quantum information
community. In this thesis, besides proving the metric properties of $C[\Fn]$, we will also make use Schoenberg's theorem to provide independent proofs of the metric properties of $B[\F(\rho,\sigma)]$, $C[\F(\rho,\sigma)]$ (Appendix~\ref{app:metric_C}), and of the squared Hilbert-Schmidt distance, to be introduced later (Appendix~\ref{app:metricHsq}).\\

\begin{proof}[Proof of Proposition~\ref{Pro:metric:C(Fn)}]
Clearly, from the definition of $C^2[\Fn(\rho,\sigma)]$, it is easy to
see that it inherits from $\Fn(\rho,\sigma)$ the property of being
symmetric in its two arguments, and that
$C^2[\F_N(\rho,\sigma)]\geq 0$ with saturation iff
$\rho=\sigma$. So, to apply Theorem~\ref{thm:schoenberg}, we just have
to show that for any set of density matrices $\{\rho_i\}_{i=1}^n$
($n\geq 2$) and real numbers $\{c_i\}_{i=1}^n$ such that $\sum_{i=1}^n
c_i=0$, it is true that
\begin{equation}
    \sum_{i,j=1}^n C^2[\F_N(\rho_i,\rho_j)] c_i c_j \leq 0\,.
\end{equation}
This follows straightforwardly by exploiting the zero-sum property of
the (real) coefficients $c_i$ and the linearity of the trace,
\begin{align}
    &\sum_{i,j=1}^n \left\{1-\tr\left[\rho_i\rho_j\right]
    -\sqrt{1-\tr\rho_i^2}\sqrt{1-\tr\rho_j^2}\right\} c_i c_j
    \nonumber \\
    =&-\tr\left[\bigg(\sum_{i=1}^n c_i\rho_i\bigg)^2\right]
    -\left(\sum_{i=1}^n c_i\sqrt{1-\tr\rho_i^2}\right)^2\leq 0\,,
\end{align}

which concludes the proof.
\end{proof}

We note that a proof of the metric property of
$\sqrt{2}C[\Fn(\rho,\sigma)]$ --- by the name \emph{modified Bures
distance} --- was independently provided by Ref.~\cite{08Miszczak}. The
proof provided above is significantly shorter thanks to the power of
Schoenberg's theorem.

\subsubsection{Trace Distance Bounds}\label{sec:fntrdistbnd}

In Sec.~\ref{sec:fidtrdistbnd}, we have seen that a kind of qualitative
equivalence between $\D$ and $\F$ can be established through the bounds
on $\D$ given by functions of  $\F$, c.f. Eq.~\eqref{eq:fuchs_ineq}.
Here, we will provide similar bounds on $\D$ in terms of functions of
$\Fn$.

\begin{proposition}\label{prob:lb}
For any two density matrices $\rho$ and $\sigma$ of dimension $d$, the
trace distance $\D(\rho,\sigma)$  satisfies the following
upper bound:
\begin{equation}\label{eq:upbndFn}
    \D(\rho,\sigma)\leq \sqrt{\frac{\mathfrak{r}}{2}}\sqrt{1-\Fn(\rho,\sigma)}\,,
\end{equation}
where $\mathfrak{r}\defeq{\rm rank}(\rho-\sigma)$. Moreover,
this upper bound on $\D$ can be saturated with states of the form
\begin{equation}\label{eq:satupper}
    \rho=\frac{U{\rm diag}\left[\Lambda_d\right]U^\dagger}
    {\tr\left\{{\rm diag}\left[\Lambda_d\right]\right\}}
    \quad\mbox{and}\quad\sigma
    =\frac{U{\rm diag}\left[P(\Lambda_d)\right]U^\dagger}
    {\tr\left\{{\rm diag}\left[\Lambda_d\right]\right\}}\,,
\end{equation}
where $U$ is an arbitrary unitary matrix of dimension  $d$, $\Lambda_d$
is an ordered list of $d$ elements taking values in the set
$\{\lambda_1,\lambda_2\}$ ($\lambda_1, \lambda_2 \geq 0$, but not
simultaneously zero) and $P(\Lambda_d)$ is the list formed by some
permutation of the elements in $\Lambda_d$.
\end{proposition}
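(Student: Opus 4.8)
The plan is to reduce the whole statement to a single algebraic identity relating $1-\Fn$ to the squared Hilbert--Schmidt distance, and then to pass from the Hilbert--Schmidt norm to the trace norm by Cauchy--Schwarz. First I would expand $\tr[(\rho-\sigma)^2]=\tr\rho^2-2\tr(\rho\sigma)+\tr\sigma^2$ so as to eliminate $\tr(\rho\sigma)$ from the definition~\eqref{eq:fidnew} of $\Fn$. Substituting $\tr(\rho\sigma)=\tfrac{1}{2}\left[\tr\rho^2+\tr\sigma^2-\tr((\rho-\sigma)^2)\right]$ and regrouping the purity terms, I expect to obtain the clean identity
\begin{equation}
1-\Fn(\rho,\sigma)=\tfrac{1}{2}\tr\!\left[(\rho-\sigma)^2\right]+\tfrac{1}{2}\left(\sqrt{1-\tr\rho^2}-\sqrt{1-\tr\sigma^2}\right)^2\,.
\end{equation}
Since the second term is a perfect square, and hence non-negative, this immediately yields the key inequality $\tr[(\rho-\sigma)^2]\le 2\,[1-\Fn(\rho,\sigma)]$.

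Next I would relate the trace distance to the Hilbert--Schmidt distance. Writing $M\defeq\rho-\sigma$, which is Hermitian of rank $\mathfrak{r}$, its nonzero eigenvalues $\{\mu_k\}_{k=1}^{\mathfrak{r}}$ obey, by the Cauchy--Schwarz inequality applied to the vectors $(|\mu_1|,\ldots,|\mu_{\mathfrak{r}}|)$ and $(1,\ldots,1)$,
\begin{equation}
\|M\|_{\rm tr}=\sum_{k=1}^{\mathfrak{r}}|\mu_k|\le\sqrt{\mathfrak{r}}\,\Big(\sum_{k=1}^{\mathfrak{r}}\mu_k^2\Big)^{1/2}=\sqrt{\mathfrak{r}}\,\big(\tr M^2\big)^{1/2}\,.
\end{equation}
Combining this with the bound extracted from the identity gives
\begin{equation}
\D(\rho,\sigma)=\tfrac{1}{2}\|M\|_{\rm tr}\le\tfrac{1}{2}\sqrt{\mathfrak{r}}\,\big(\tr M^2\big)^{1/2}\le\tfrac{1}{2}\sqrt{\mathfrak{r}}\,\sqrt{2\,[1-\Fn]}=\sqrt{\tfrac{\mathfrak{r}}{2}}\,\sqrt{1-\Fn(\rho,\sigma)}\,,
\end{equation}
which is exactly the claimed bound~\eqref{eq:upbndFn}.

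For the saturation claim I would simply track the two equality conditions used above. Equality in the identity-based step requires $\tr\rho^2=\tr\sigma^2$, so that the perfect-square term vanishes, while equality in Cauchy--Schwarz requires all nonzero eigenvalues of $\rho-\sigma$ to share a common magnitude. The proposed states~\eqref{eq:satupper} meet both conditions: because $\sigma$ is obtained from $\rho$ by permuting its spectrum within the common eigenbasis $U$, the two states have identical eigenvalue multisets and hence equal purities; and $\rho-\sigma$ equals $U\,{\rm diag}[\Lambda_d-P(\Lambda_d)]\,U^\dagger$ up to the common normalization $\tr\{{\rm diag}[\Lambda_d]\}$, whose diagonal entries lie in $\{0,\pm(\lambda_1-\lambda_2)\}$, so every nonzero eigenvalue has magnitude $|\lambda_1-\lambda_2|/\tr\{{\rm diag}[\Lambda_d]\}$. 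I expect the main obstacle to be spotting the perfect-square identity in the first step --- once it is in hand everything else is routine, and the saturation is a direct verification rather than a construction.
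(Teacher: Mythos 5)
Your proof is correct and follows essentially the same route as the paper: the paper's first step is the AM--GM inequality between the linear entropies, which is precisely your perfect-square identity with the non-negative remainder made explicit, and both proofs then pass from $\|\rho-\sigma\|_{\rm HS}\leq\sqrt{2[1-\Fn]}$ to the trace distance via the Cauchy--Schwarz bound $\|M\|_{\rm tr}\leq\sqrt{\mathfrak{r}}\,\|M\|_{\rm HS}$. The only difference is presentational: the paper verifies saturation for the states in Eq.~\eqref{eq:satupper} by directly computing $\|\rho-\sigma\|_{\rm tr}$ and $\|\rho-\sigma\|_{\rm HS}$, whereas you check the two equality conditions (equal purities and equal-magnitude nonzero eigenvalues), which is an equivalent and slightly cleaner bookkeeping of the same facts.
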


\begin{proof}
Note that the product of square roots  in the expression of $\Fn$,
Eq.~\eqref{eq:fidnew}, is the geometric mean between the linear
entropies of $\rho$ and $\sigma$. It then follows from the inequality
of arithmetic and geometric means that
  \begin{equation}\label{eq:AMGM}
  \frac{1-\tr\rho^2}{2}+\frac{1-\tr\sigma^2}{2}\geq \sqrt{1-\tr\rho^2}\sqrt{1-\tr\sigma^2}\,,
  \end{equation}
which can be reexpressed as the following inequality after summation of
$\tr\left[\rho\sigma\right]$ to both sides,
  \begin{equation}\label{eq:HSFn}
\|\rho-\sigma\|_{\rm HS}\leq \sqrt{2\left[1-\F_N(\rho,\sigma)\right]}\,.
  \end{equation}
Here, $\|X\|_{\rm HS}\defeq\sqrt{\tr\left[X^\dagger X\right]}$ is the
Hilbert-Schmidt norm (also known as Frobenius norm), defined for an
arbitrary matrix $X$. The Hilbert-Schmidt norm and the trace norm
$\|X\|_{\rm tr}\defeq\tr\sqrt{X^\dagger X}$ are related according
to\footnote{\label{footnote:prooftrhs}To see that, assume, for simplicity, that $X$ is a
    square matrix  of dimension $d$ and let $\bm{\lambda} \in
    \mathbb{R}^d$ be the vector with entries
    $\lambda_1\geq\lambda_2\geq\ldots\geq\lambda_d$ corresponding to
    the singular values of $X$. In addition, let $\bm{v} \in
    \mathbb{R}^d$ be the vector with the first $\mathfrak{x}={\rm
    rank}\,X$ entries equal to $1$ and the remaining $d-\mathfrak{x}$
    entries equal to $0$. Then, it follows that $\|X\|_{\rm tr}=|\bm{\lambda}\cdot\bm{v}|$, $\|X\|_{\rm HS}=\|\bm{\lambda}\|$
    and $\sqrt{\mathfrak{x}}=\|\bm{v}\|$. In this framework, inequality
    (\ref{eq:normtrhs}) is equivalent to Cauchy-Schwarz inequality
    applied to $\bm{\lambda}$ and $\bm{v}$, i.e.,
    $|\bm{\lambda}\cdot\bm{v}|\leq\|\bm{\lambda}\| \|\bm{v}\|$.}
\begin{equation}\label{eq:normtrhs}
    \|X\|_{\rm tr}\leq \sqrt{\mathfrak{x}}\|X\|_{\rm HS}\,,
\end{equation}
where $\mathfrak{x}\defeq{\rm rank}\,X$. Used in Eq. \eqref{eq:HSFn},
the above inequality leads to the desired result
\begin{equation}\label{eq:DFn_boundrank}
    \D(\rho,\sigma)=\tfrac{1}{2}\|\rho-\sigma\|_{\rm tr}
    \leq\sqrt{\frac{\mathfrak{r}}{2}}\sqrt{1-\F_N(\rho,\sigma)}\,.
\end{equation}

To prove that the states in Eq. (\ref{eq:satupper}) saturate this
bound, we first note that because those states are isospectral, their
linear entropies are identical and hence inequality (\ref{eq:AMGM}) is
saturated. To prove saturation of inequality (\ref{eq:normtrhs}),
simply use Eq. (\ref{eq:satupper}) to compute
\begin{align}
    \|\rho-\sigma\|_{\rm tr}&
    =\tr\sqrt{\left(\rho-\sigma\right)^2}
    =\frac{\mathfrak{r}|\lambda_1-\lambda_2|}
    {\tr\left\{{\rm diag}\left[\Lambda_d\right]\right\}},\\
    \|\rho-\sigma\|_{\rm HS}&
    =\sqrt{\tr\left[\left(\rho-\sigma\right)^2\right]}
    =\frac{\sqrt{\mathfrak{r}}|\lambda_1-\lambda_2|}
    {\tr\left\{{\rm diag}\left[\Lambda_d\right]\right\}},
\end{align}
from which the identity $\|\rho-\sigma\|_{\rm tr}=\sqrt{\mathfrak{r}}\|\rho-\sigma\|_{\rm HS}$ is immediate.
\end{proof}

\begin{figure*}
\centering \subfigure[\hspace{1.5mm}For $d=3$,  a gap can be clearly
noticed between the distribution of states and the \emph{absolute upper
bound}, i.e., the rhs of inequality \eqref{eq:upbndFn} with
$\mathfrak{r}=d$. Such a gap occurs whenever $d$ is odd.] {
    \label{Fig:TrD.Fn_d3:Bounds}
    \includegraphics[width=7.5cm]{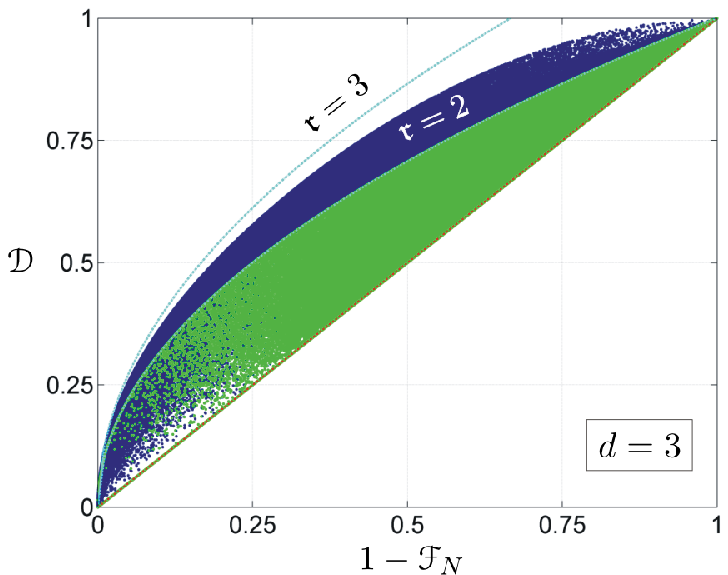}
} \hspace{0.6cm} \subfigure[\hspace{1.5mm} For $d=6$,  no gap is observed
between the bulk of randomly generated states and the \emph{absolute
upper bound}. In fact, this bound can be saturated by density matrices
of the form given by Eq.~\eqref{eq:satupper} whenever $d$ is even.] {
    \label{Fig:TrD.Fn_d6:Bounds}
    \includegraphics[width=7.5cm]{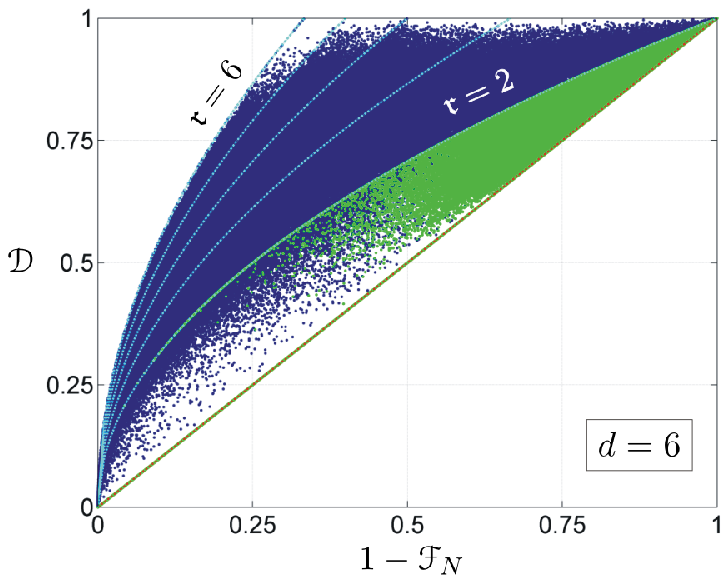}
} \caption[Numerical observation of upper and lower bound relationship between $\D$ and $\Fn$]{(Color online) Plot of the trace distance  $\D(\rho,\sigma)$
vs $1-\Fn(\rho,\sigma)$ for $4\times10^6$ pairs of randomly generated
$\rho$, $\sigma$ with $d=3$ and $d=6$. The darker (blue) points are
generated using pairs of mixed states whereas the lighter (green)
points are generated using at least one pure state. The anti-diagonal
solid line is the conjectured lower bound whereas the upper bounds
given by Eq.~\eqref{eq:upbndFn} are represented by the dashed curves
(cyan) --- one for each integer value of $\mathfrak{r} \in [2,d]$. }
\label{fig:bounds}
\end{figure*}

How good are  these upper bounds? With some thought, it is not
difficult to conclude that the states arising from
Eq.~\eqref{eq:satupper} can only have even $\mathfrak{r}$, and are thus
unable to saturate the upper bound of Eq.~\eqref{eq:upbndFn} for odd
$\mathfrak{r}$. Nonetheless, from our numerical studies, it seems like
the \emph{absolute upper bound} --- corresponding to the choice
$\mathfrak{r}=d$ in the rhs of Eq.~\eqref{eq:upbndFn} --- is actually
unachievable by {\em any} states if $d$ is odd. An illustration of this
peculiarity can be seen in Fig.~\ref{Fig:TrD.Fn_d3:Bounds}, where the
upper bound corresponding to $\mathfrak{r}=3$ is well separated from
the region attainable by physical states. In contrast, for every even
$d$, the states given by Eq.~\eqref{eq:satupper} do trace out a tight
boundary for the region attainable with physical states, as shown in
Fig.~\ref{Fig:TrD.Fn_d6:Bounds} for $d=6$.

On the other hand, it can also be seen from Fig.~\ref{fig:bounds} that
no points occur in the region where $\D\leq1-\Fn$. Indeed, intensive
numerical studies for $d=3,4,\ldots,50$ have not revealed a single
density matrix which contributed to a point in this region. This
suggests that the following lower bound on $\D$, in terms of $\Fn$, may
well be established\footnote{After the publication of Ref.~\cite{08Mendonca1150}, this conjecture was proved by Pucha{\l}a and Miszczak in Ref.~\cite{09Puchala024302}}:

\begin{conjecture}\label{conj:conjbndFnD}
    The trace distance $\D(\rho,\sigma)$ and the fidelity
    $\Fn(\rho,\sigma)$ between two quantum states $\rho$ and $\sigma$
    satisfy
    \begin{equation}\label{eq:conjbnd}
        \D(\rho,\sigma)\ge 1-\Fn(\rho,\sigma).
    \end{equation}
\end{conjecture}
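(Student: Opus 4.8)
The plan is to reduce the claimed bound to an elementary inequality about the eigenvalues of $\rho-\sigma$ together with its diagonal entries. First I would record the exact identity
\[
2\left[1-\Fn(\rho,\sigma)\right]=\|\rho-\sigma\|_{\rm HS}^2+\left(\sqrt{1-\tr\rho^2}-\sqrt{1-\tr\sigma^2}\right)^2,
\]
which follows by directly expanding both terms on the right and is the exact form of the estimate behind Eq.~\eqref{eq:HSFn}. Since $\D(\rho,\sigma)=\tfrac12\|\rho-\sigma\|_{\rm tr}$, the conjecture $\D\ge 1-\Fn$ is \emph{equivalent} to
\[
\|\rho-\sigma\|_{\rm tr}\ \ge\ \|\rho-\sigma\|_{\rm HS}^2+\left(\sqrt{1-\tr\rho^2}-\sqrt{1-\tr\sigma^2}\right)^2 ,
\]
so the whole problem becomes a comparison between the trace norm, the \emph{square} of the Hilbert--Schmidt norm, and a correction term built from the purities.

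Next I would diagonalise $\Delta\defeq\rho-\sigma$, with real eigenvalues $\delta_i$. Because $0\le\rho,\sigma\le\openone$ we have $-\openone\le\Delta\le\openone$, so $|\delta_i|\le1$ and $\delta_i^2\le|\delta_i|$. Hence $\|\rho-\sigma\|_{\rm tr}-\|\rho-\sigma\|_{\rm HS}^2=\sum_i|\delta_i|(1-|\delta_i|)\ge0$, and the target collapses to
\[
\sum_i|\delta_i|\left(1-|\delta_i|\right)\ \ge\ \left(\sqrt{1-\tr\rho^2}-\sqrt{1-\tr\sigma^2}\right)^2 .
\]
For the right-hand side I would use $(\sqrt{p}-\sqrt{q})^2\le|p-q|$ for $p,q\ge0$ (since $|\sqrt p-\sqrt q|\le\sqrt p+\sqrt q$) to bound it by $|\tr\rho^2-\tr\sigma^2|=\bigl|\tr[\Delta(\rho+\sigma)]\bigr|$. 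Writing $x_i\defeq\bra{e_i}\rho\ket{e_i}$ and $y_i\defeq\bra{e_i}\sigma\ket{e_i}$ in the eigenbasis of $\Delta$, one has $\delta_i=x_i-y_i$ and $\tr[\Delta(\rho+\sigma)]=\sum_i\delta_i(x_i+y_i)$, with $x_i,y_i\in[0,1]$ and $\sum_i x_i=\sum_i y_i=1$.

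The crux is then the purely real-variable inequality $\bigl|\sum_i\delta_i(x_i+y_i)\bigr|\le\sum_i|\delta_i|(1-|\delta_i|)$, which I expect to be the one genuinely delicate step: the naive bound $\bigl|\sum_i\delta_i(x_i+y_i)\bigr|\le\sum_i|\delta_i|(x_i+y_i)$ is too lossy (it overshoots by roughly a factor of two) and must be replaced by an argument that exploits the cancellation forced by $\sum_i\delta_i=0$. The idea is to assume (w.l.o.g., by swapping $\rho\leftrightarrow\sigma$, which leaves the right-hand side invariant) that the signed sum is non-negative, use the algebraic identity $\delta_i^2+\delta_i(x_i+y_i)=2x_i\delta_i$ to rewrite the claim as $\sum_i|\delta_i|\ge2\sum_i x_i\delta_i$, split the indices into $P=\{i:\delta_i>0\}$ and $N=\{i:\delta_i<0\}$, note $\sum_{i\in P}\delta_i=\tfrac12\sum_i|\delta_i|$, and finally bound $\sum_i x_i\delta_i=\sum_{P}x_i\delta_i-\sum_{N}x_i|\delta_i|\le\sum_{P}\delta_i$ using $0\le x_i\le1$. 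This gives $2\sum_i x_i\delta_i\le\sum_i|\delta_i|$ and closes the chain.

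Finally I would remark that equality propagates through all three estimates for pairs of commuting, isospectral states related by a permutation of eigenvalues --- precisely the family of Eq.~\eqref{eq:satupper} with two distinct values $\lambda_1,\lambda_2$ --- which both certifies tightness of the bound and accounts for the sharp boundary $\D=1-\Fn$ visible in Fig.~\ref{fig:bounds}.
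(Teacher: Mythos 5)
You should first know that the paper contains no proof of this statement: it is presented there as a conjecture supported only by numerical evidence (Fig.~\ref{fig:bounds}), the only bound actually proved being the weaker $\D\ge 1-\sqrt{\Fn}$ of Eq.~\eqref{eq:polishbnd} (obtained by combining $\F\le\Fn$ with the Fuchs--van de Graaf inequality), and a footnote records that the conjecture was only settled later by Pucha{\l}a and Miszczak. So your proposal is not retracing the paper's argument --- it is supplying a proof where the paper has none, and its core argument is correct. I checked each step: the identity $2[1-\Fn]=\|\rho-\sigma\|_{\rm HS}^2+\bigl(\sqrt{1-\tr\rho^2}-\sqrt{1-\tr\sigma^2}\bigr)^2$ holds by direct expansion (it is indeed the exact version of Eq.~\eqref{eq:HSFn}); the eigenvalues $\delta_i$ of $\Delta=\rho-\sigma$ lie in $[-1,1]$, so $\|\Delta\|_{\rm tr}-\|\Delta\|_{\rm HS}^2=\sum_i|\delta_i|(1-|\delta_i|)\ge 0$; the bound $(\sqrt p-\sqrt q)^2\le|p-q|$ and the identity $\tr\rho^2-\tr\sigma^2=\tr[\Delta(\rho+\sigma)]=\sum_i\delta_i(x_i+y_i)$ in an eigenbasis of $\Delta$ are correct; and the crux inequality closes exactly as you say, since $\delta_i=x_i-y_i$ gives $\delta_i^2+\delta_i(x_i+y_i)=2x_i\delta_i$, while $0\le x_i\le 1$, $x_i\delta_i\le 0$ on the negative indices, and $\sum_{\delta_i>0}\delta_i=\tfrac12\sum_i|\delta_i|$ (from $\tr\Delta=0$) yield $2\sum_i x_i\delta_i\le\sum_i|\delta_i|$. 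Note that this last chain holds unconditionally, so the sign reduction via $\rho\leftrightarrow\sigma$ is merely cosmetic. Your argument is also pleasingly elementary: unlike the route through Ref.~\cite{08Miszczak}, it never invokes the Uhlmann--Jozsa fidelity at all.

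Your closing remark about equality, however, is wrong and should be deleted. The states of Eq.~\eqref{eq:satupper} saturate the \emph{upper} bound of Proposition~\ref{prob:lb}, not the conjectured lower bound: such pairs are isospectral, so $1-\Fn=\tr\rho^2-\tr[\rho\sigma]=\tfrac{1}{2}\|\Delta\|_{\rm HS}^2=\tfrac{\mathfrak{r}}{2}t^2$ while $\D=\tfrac{\mathfrak{r}}{2}t$, with $t=|\lambda_1-\lambda_2|/\tr\{{\rm diag}[\Lambda_d]\}$; hence $\D=1-\Fn$ forces $t\in\{0,1\}$, i.e.\ only the degenerate cases. Concretely, equality fails in your crux step: for isospectral states the quantity $\bigl|\sum_i\delta_i(x_i+y_i)\bigr|=|\tr\rho^2-\tr\sigma^2|$ vanishes, whereas $\sum_i|\delta_i|(1-|\delta_i|)>0$ unless every $|\delta_i|\in\{0,1\}$. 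The lower boundary in Fig.~\ref{fig:bounds} is instead attained, for instance, by commuting pairs with at least one pure state: taking $\rho=\ket{e_1}\!\bra{e_1}$ and $\sigma$ diagonal in the same basis gives $\Fn=\bra{e_1}\sigma\ket{e_1}$ and $\D=1-\bra{e_1}\sigma\ket{e_1}$, consistent with the paper's observation that the points on that line come from samples involving at least one pure state.
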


In relation to this, it is also worth noting that the following
(weaker) lower bound can readily be established via a recent result
given in Ref.~\cite{08Miszczak}:
\begin{proposition}
    The trace distance $\D(\rho,\sigma)$ and the fidelity
    $\Fn(\rho,\sigma)$ between two quantum states $\rho$ and $\sigma$
    satisfy the following inequality.
    \begin{equation}\label{eq:polishbnd}
        \D(\rho,\sigma)\ge 1-\sqrt{\Fn}(\rho,\sigma).
    \end{equation}
\end{proposition}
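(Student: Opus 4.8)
The plan is to obtain this lower bound by chaining two ingredients that are already available: the super-fidelity upper bound of Ref.~\cite{08Miszczak} and the standard Fuchs--van de Graaf lower bound on the trace distance recorded in Eq.~\eqref{eq:fuchs_ineq}. The key observation is that although $\Fn$ is in general neither equal to nor a lower bound for $\F$, the recent result being invoked establishes the \emph{opposite} ordering, $\F \le \Fn$, and this is exactly the direction needed to relax the known $\F$-based bound on $\D$ into an $\Fn$-based one.

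Concretely, I would proceed in three short steps. First, invoke the result of Ref.~\cite{08Miszczak} that $\Fn$ is an upper bound for the Uhlmann--Jozsa fidelity, i.e.
\begin{equation}
\F(\rho,\sigma)\le\Fn(\rho,\sigma)\qquad\mbox{for all density matrices }\rho,\sigma\,.
\end{equation}
Since both sides are non-negative, the square root is monotone and preserves the inequality, so $\sqrt{\F(\rho,\sigma)}\le\sqrt{\Fn(\rho,\sigma)}$, and subtracting from $1$ reverses it:
\begin{equation}\label{eq:step_relax}
1-\sqrt{\Fn(\rho,\sigma)}\le 1-\sqrt{\F(\rho,\sigma)}\,.
\end{equation}
Second, recall the lower Fuchs--van de Graaf inequality already stated in Eq.~\eqref{eq:fuchs_ineq}, namely $1-\sqrt{\F(\rho,\sigma)}\le\D(\rho,\sigma)$. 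Third, concatenate these:
\begin{equation}
1-\sqrt{\Fn(\rho,\sigma)}\le 1-\sqrt{\F(\rho,\sigma)}\le\D(\rho,\sigma)\,,
\end{equation}
which is precisely the asserted bound~\eqref{eq:polishbnd}.

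There is essentially no technical obstacle in this derivation; its entire content is bundled into the cited super-fidelity inequality $\F\le\Fn$, together with the monotonicity of $\sqrt{\cdot}$. If one wished to make the argument fully self-contained rather than quoting Ref.~\cite{08Miszczak}, the only genuine work would be an independent proof of $\F\le\Fn$, which is where any real difficulty would reside; given the earlier Fuchs--van de Graaf bound and the Miszczak bound as inputs, the remaining manipulation is purely elementary. It is also worth remarking in passing why this bound is weaker than Conjecture~\ref{conj:conjbndFnD}: since $0\le\Fn\le1$ we have $\sqrt{\Fn}\ge\Fn$, hence $1-\sqrt{\Fn}\le 1-\Fn$, so the conjectured bound $\D\ge 1-\Fn$ would dominate the one proved here.
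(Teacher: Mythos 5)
Your proof is correct and follows essentially the same route as the paper's own (very terse) argument: both combine the Fuchs--van de Graaf lower bound $1-\sqrt{\F}\le\D$ from Eq.~\eqref{eq:fuchs_ineq} with the inequality $\F\le\Fn$ of Ref.~\cite{08Miszczak}, the only difference being that you spell out the intermediate monotonicity step explicitly.
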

\begin{proof}
    This lower bound on $\D$ follows immediately from the lower bound on
    $\D$ given in inequality~\eqref{eq:fuchs_ineq} and the inequality
    $\F\leq\Fn$ recently established in Ref.~\cite{08Miszczak}.
\end{proof}

As with the fidelity $\F$, we can thus infer that whenever $\Fn$ is
large enough, $\D$ is close to zero and whenever $\Fn$ is close to
zero, $\D$ is close to unity. However --- as should be clear from
Fig.~\ref{Fig:TrD.Fn_d6:Bounds} --- the converse implication is not
necessarily true.

\begin{figure}[h!]
\centering
\subfigure[] {
    \label{Fig:loglogcomplete}
    \includegraphics[height=6.5cm]{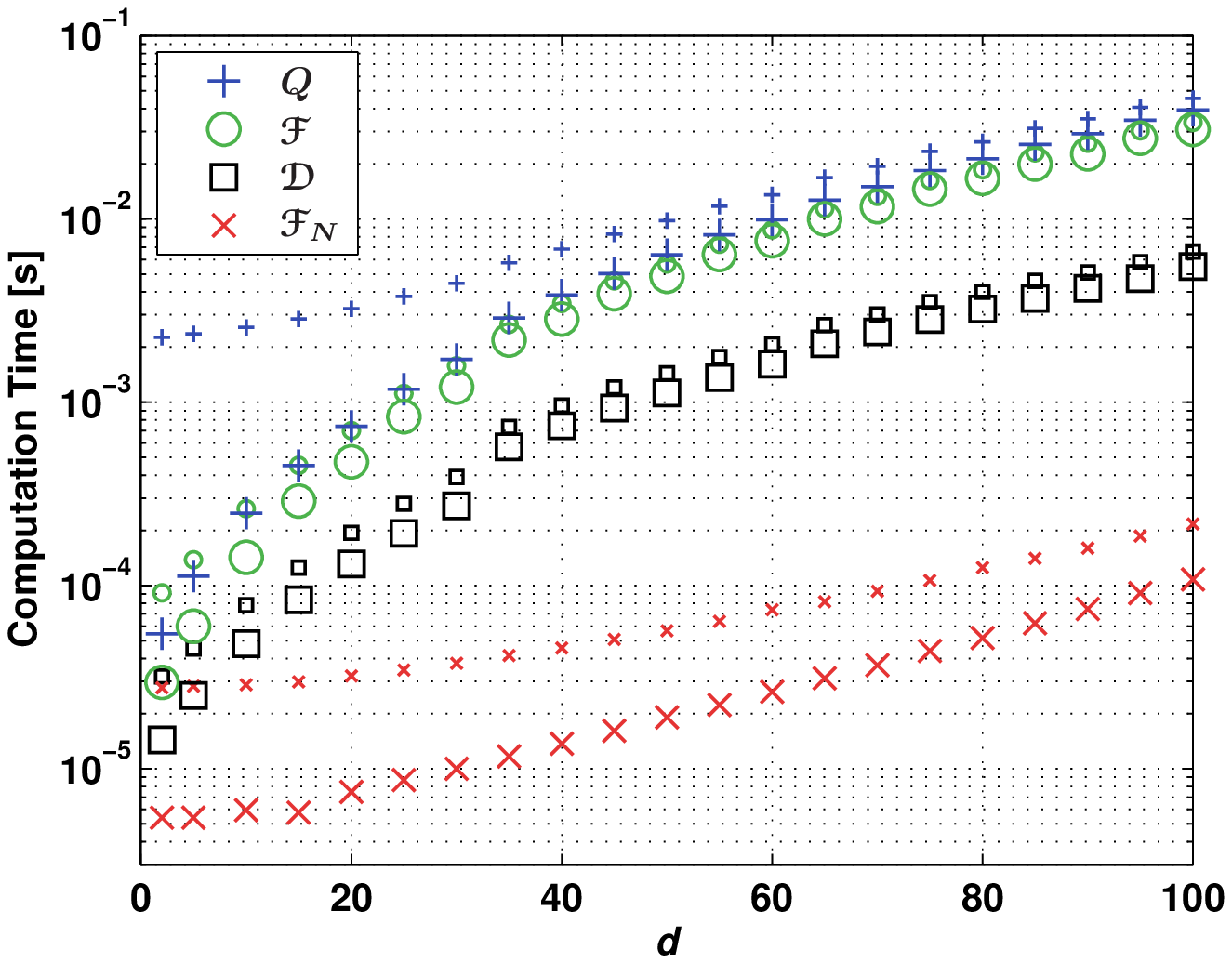}
}
\subfigure[]
{\label{Fig:loglog10pts}
    \includegraphics[height=6.5cm]{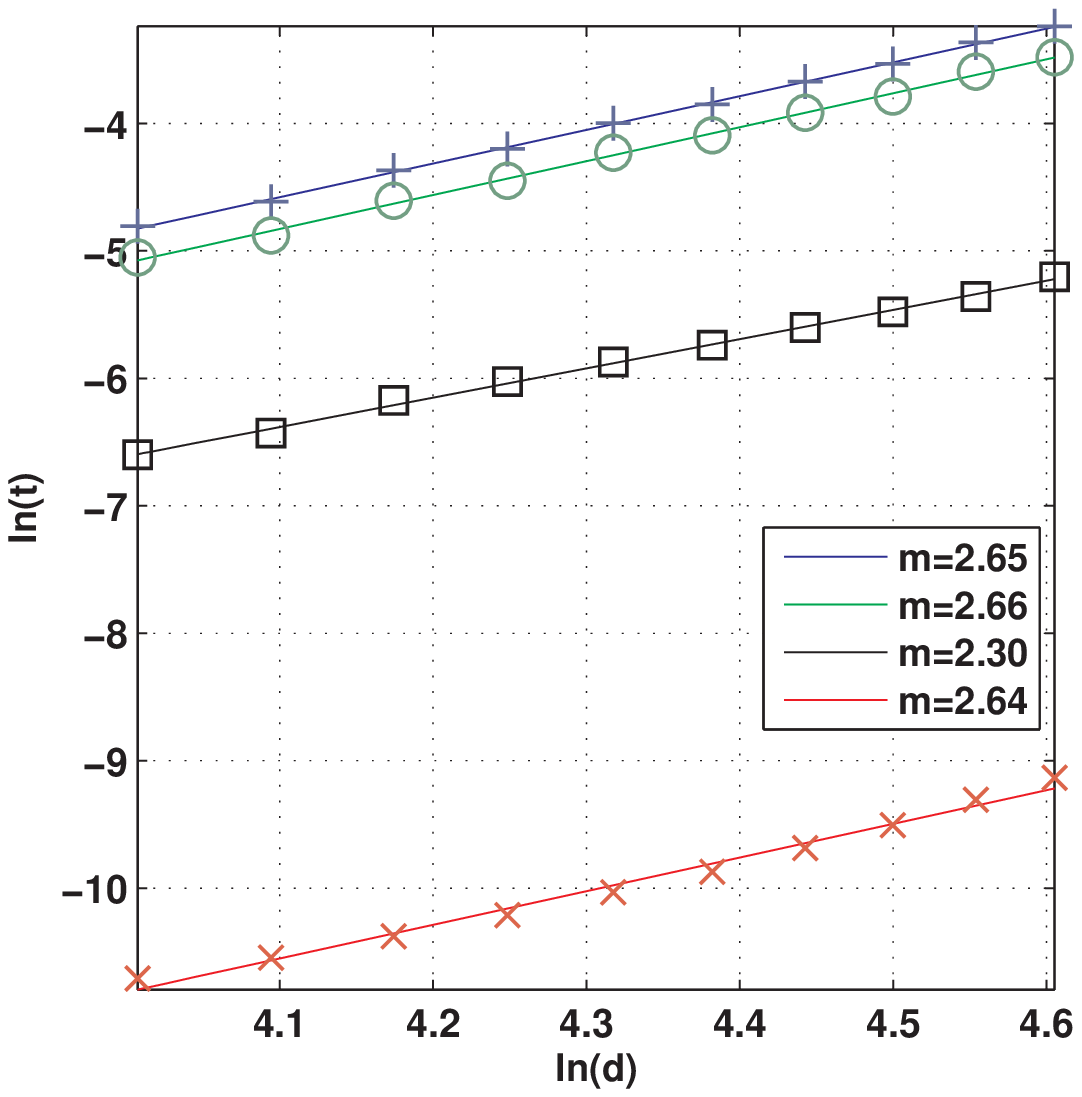}
} \caption[Average computation time of $\F$, $\Fn$, $\D$ and $Q$ estimated with Matlab and C codes.]{\label{Fig:ComputationTime}
    (Color online) Plots of the average computation time for the fidelity functions $\F$ ({$\bigcirc$}), $\Fn$ ($\times$), the nonlogarithmic variety of the quantum Chernoff bound $Q$ ($+$), and the
	trace distance $\D$ ($\square$) as a function of the dimension $d$
	of the state space. Computations were performed on a
	$2.6$~GHz Intel Pentium 4 CPU. (a) The data is presented in a semilog plot where the smaller and larger markers correspond to timings from Matlab and C respectively.  (b) By plotting $\ln(t)\times\ln(d)$ for $55\leq d\leq 100$ and timings from the C code, we obtain straight lines whose angular coefficients ($m$) quantify the ``practical complexity'' for computing each considered measure.}
\end{figure}

\subsection{Computational Efficiency}\label{sec:computability}

For two general density matrices $\rho$ and $\sigma$, analytical
evaluation of the fidelity $\F(\rho,\sigma)$ can be a formidable task.
This is in sharp contrast with $\Fn(\rho,\sigma)$ which involves only
products and traces of density matrices.  Even at the numerical level
--- due to the complication involved in evaluating the square root of a
Hermitian matrix --- the computation of $\F(\rho,\sigma)$ can be rather
resource consuming. For a quantitative understanding of the
computational efficiency, we have performed a numerical comparison of
the time required to calculate the fidelities $\F$ and $\Fn$, the trace
distance $\D$, and the
nonlogarithmic variety of the quantum Chernoff bound $Q\defeq\min_{0\leq s \leq 1} \tr(\rho^s\sigma^{1-s})$.  We have implemented the computations in both Matlab and
C; we present the Matlab codes for reasons of accessibility and
succinctness, while the C codes provide more accurate timings without the
overhead of the Matlab interpreter.

The time required to evaluate each function was estimated by averaging
the times for $100$ pairs of randomly generated $d$-dimensional density
matrices\footnote{Here, we follow the algorithm presented in
    Ref.~\cite{98Zyczkowski883} to generate $d$-dimensional quantum
    states.  In particular, the eigenvalues $\{\lambda_i\}_{i=1}^d$ of
    the quantum states were chosen from a uniform distribution on the
    $d$-simplex defined by $\sum_i \lambda_i=1$.}.  Results are shown in
Fig.~\ref{Fig:loglogcomplete} as a function of $d$.  The Matlab codes
are presented in Appendix~\ref{app:codes}; we attempted to make these
codes as efficient as possible within the constraints of the Matlab
environment.  Corresponding C codes were implemented as Matlab
MEX-files for convenience and can be found online~\cite{C:url}.  Our C
implementation directly calls the LAPACK and BLAS libraries included in
the Matlab distribution for eigenvalue decompositions and matrix
operations.  The minimization required in the computation of $Q$ was
performed using the Brent minimizer from the GNU Scientific Library
\cite{06Galassi}.

The results shown in Fig.~\ref{Fig:loglog10pts} display some consistency with
the expected algorithmic complexity. From the figure, one sees that our C codes for computing $\F$ and $Q$
require approximately $O(d^{2.7})$ operations for values of $d \in [55,100]$. This is
in good agreement with the theoretical asymptotic performance, since $\F$ and $Q$
require two Hermitian diagonalizations, taking an expected $O(d^3)$ operations each
\cite{00Parlett38}.  Computing $Q$ is slowest since it requires both
sets of eigenvectors, while $\F$ requires only eigenvalues from one of
the diagonalizations.

Next fastest is the computation of $\D$, which
requires only eigenvalues from a single diagonalization. In this case, Fig.~\ref{Fig:loglog10pts} suggests that only $O(d^{2.3})$ operations are required to compute the trace distance between density matrices of dimension $d\in [55,100]$. Note, however, that this is significantly less than the expected $O(d^3)$ operations, typical from computations involving matrix diagonalization. Such a discrepancy can be understood as follows: since $\D$ can be computed considerably faster than $Q$ or $\F$, the time taken by other spurious machine processes (not intrinsically related to the execution of our algorithm) becomes relatively important, compromising the accuracy of our timings. This is particularly significant in timing $\Fn$, our fastest-to-compute measure. In principle, one should expect an asymptotic performance $O(d^2)$, since $\Fn$ requires only three Hilbert-Schmidt inner products. However, Fig.~\ref{Fig:loglog10pts} suggests that $O(d^{2.6})$ operations are required.

Nevertheless, Fig.~\ref{Fig:ComputationTime} clearly shows that the practical
numerical evaluation of $\Fn$ is dramatically faster than the
evaluation of $\F$, $\D$ or $Q$.  This raises the prospect of
using $\Fn$ as a numerically efficient estimate of distance measures
such as $\F$~\cite{08Miszczak} and $\D$ --- particularly for small $d$
where the bounds proven in Sec.~\ref{sec:fntrdistbnd} are tighter.  As
the dimension increases, the computational advantage of using $\Fn$
becomes even greater, but the quality of the estimate drops.

\subsection{Concluding Remarks}\label{sec:conclusion}

In the previous sections,  we have proposed an alternative fidelity measure,
$\F_N$, between an arbitrary pair of mixed quantum states. This new
measure, together with the prevailing fidelity $\F$ and the
nonlogarithmic variety of the quantum Chernoff bound $Q$
\cite{07Audenaert160501} are, to the best of our knowledge, the only
known distance measures between density matrices that comply with
Jozsa's axioms~\cite{94Jozsa2315}. That is, $\F$, $Q$, and $\Fn$ are
the only known measures that generalize to pairs of mixed states the
concept of fidelity introduced by Schumacher between a pure and a mixed
state~\cite{95Schumacher2738}.

The simplicity of $\Fn$ is in sharp contrast with $\F$ and $Q$
since it involves only products of density matrices. Numerically, this
leads to significant reduction in computation time for
$\Fn(\rho,\sigma)$ over $\F(\rho,\sigma)$, especially for higher
dimensional systems.

Besides being easier to compute, $\Fn$ has also been shown to preserve
(and even enhance) a number of the useful properties of $\F$
and $Q$. For example, we have shown that $\Fn$ is a jointly concave
measure, that it can be used to place upper and lower bounds on the
value of the trace distance and that it gives rise to a new metric for
the space of density matrices. A remarkable consequence of the joint
concavity of $\Fn$ is that $\F$ is also jointly concave when restricted
to a pair of qubit states --- an interesting problem which remained
unsolved thus far~\cite{Mike:personal,Uhlmann:personal}.

The new measure, nevertheless, is not without its drawbacks. To begin
with --- $\Fn$, unlike measures such as $\F$ or $Q$ --- does not behave
monotonically under completely-positive-trace-preserving (CPTP) maps.
In addition, it does not necessarily vanish when applied to any pair of
{\em mixed} states which are otherwise recognized to be completely
different according to $\F$, $Q$ or their trace distance $\D$. In fact,
the explicit dependence on the linear entropies of $\rho$ and $\sigma$
gives rise to the following undesirable feature: the value of $\Fn$
between two completely mixed states living in disjoint subspaces can
get arbitrarily close to unity as the dimension of the state space
tends to infinity.

The undesirable features of $\Fn$ provide a clue as to when $\Fn$ may
not be the preferred measure of ``closeness'' between two quantum
states: We know that $\Fn$ does not measure the ``closeness'' between
two high-dimensional, highly mixed states (i.e., states having
non-negligible linear entropy) in the same way that measures like $\F$,
$Q$ or $\D$ would. In these cases, the interpretation of $\Fn$ as a
measure of proximity between quantum states must be carried out with
extra caution.\par

With this in mind, we nevertheless see $\Fn$ as an attractive
alternative to $\F$. Even when out of its range of applicability, it
follows from a very recent result of Miszczak {\em
et~al.}~\cite{08Miszczak} that $\Fn$ provides an upper bound on the
Uhlmann-Jozsa fidelity $\F$. Moreover, it seems promising that $\Fn$
between any two quantum states may be measured directly in the
laboratory, without resorting to any state tomography
protocol~\cite{08Miszczak}.

Let us now briefly mention some possibilities for future research that
stem from the present work. To begin with, it would be interesting to
search for a quantitative relationship between $\Fn$ and $Q$ analogous
to that between $\Fn$ and $\D$ established here, or that
between $\Fn$ and $\F$ given in Ref.~\cite{08Miszczak}. An estimate of
$Q$ based on some function of $\Fn$ would be useful given that a closed
form for $Q$ is not currently known, and that $\Fn$ can be computed
relatively easily. In addition, assuming $\Fn$ as an alternative to
$\F$, it seems reasonable to revisit some of the problems where $\F$
has proven useful, but with $\Fn$ playing its role. In particular, it
would be interesting to investigate whether the simplicity associated
with $\Fn$ will offer some advantages over $\F$.

As a first example, we recall from Ref.~\cite{97Vedral2275} that a
standard measure for the amount of entanglement of a state $\rho$ is
given by the shortest \emph{distance} from $\rho$ to the set of
separable density matrices. Given the relative simplicity of $\Fn$ with
respect to $\F$, it is not inconceivable that a distance measure based
on $\Fn$ (such as $C[\Fn]$) may lead to a more efficient determination
of this quantity if compared, for example, to $C[\F]$ or the Bures
distance~\cite{98Vedral1619}. Of course, any serious attempts in this
direction should be preceded by further investigation of the impact of
the nonmonotonicity of $\Fn$ under CPTP maps~\cite{97Vedral2275}.\par

As another example, $\Fn$ can be used as a figure of merit in designing
optimized quantum control and/or quantum error correction strategies:
One is typically interested in determining a quantum operation
$\mathcal{C}$ that minimizes the averaged {\em distance} between the
elements of a sequence of noisy quantum states $\rho_i$ and a pre-defined
sequence of target quantum states $\sigma_i$. In this context, it would be
interesting to investigate if distance measures based on $\Fn$ would
lead to any advantage in terms of computation time. Clearly, this has
potential applications to the implementation of real time quantum
technologies.

Yet another possible direction of research consists of employing $\Fn$
as a distance measure between quantum operations --- as opposed to
quantum states --- via the isomorphism between quantum states and CPTP
maps~\cite{99Horodecki1888,99Fujiwara3290}. In this regard, it is worth
investigating whether distance measures based on $\Fn$ would satisfy
the six criteria proposed in Ref.~\cite{05Gilchrist062310}. Remarkably,
from the results of the present work and Ref.~\cite{08Miszczak}, a few
strengths of $\Fn$-based measures can already be anticipated. Of
special significance are the fulfilment of the criteria ``easy to
calculate'' and ``easy to measure''. Along these lines, some
operational meaning for $\Fn$ would also be highly desirable. Although
we do not presently have a compelling physical interpretation of $\Fn$,
it is not inconceivable that one can be found in an analogous way to
$\F$~\cite{01Spekkens012310}.

\section{Metrical distance measures}\label{sec:metrical}

In this section we study three distance measures on the set of the density matrices that are truly metrics. The fulfilment of the metric axioms is safeguarded by the fact that these distances are ``induced by norms''. In order to make this notion clearer, let us start with the following definition:

\begin{definition} Given a vector space $V$ over $\mathbb{C}$, a function $\|\cdot\| : V \to \mathbb{R}$ is called a norm if and only if, for every $v, w \in V$ and $\lambda \in \mathbb{R}$
\end{definition}
\begin{itemize}
\item[(N1)] $\|v\|\geq 0$ (Nonnegativity)\,,
\item[(N2)] $\|v\| = 0$ \mbox{ iff } $v=0$\,,
\item[(N3)] $\|\lambda v\| = |\lambda|\|v\|$ (Positive Scalability)\,,
\item[(N4)] $\|v+w\| \leq \|v\| + \|w\|$ (Triangle Inequality)\,.
\end{itemize}

The quantity $\mathscr{D}(v,w)\defeq\|v-w\|$ induces a notion of distance between the elements $v$ and $w$ which is, indeed, a metric ``induced by the norm $\|\cdot\|$''. The metric axioms on page~\pageref{axioms:metric} can be easily verified:
\begin{enumerate}\label{proof:inducedmetric}
\item[(M1)] (Nonnegativity): that $\mathscr{D}(v,w)\geq 0$ follows trivially from (N1);
\item[(M2)] (Identity of Indiscernibles): that $\mathscr{D}(v,w)=0$ iff $v=w$ follows trivially that from (N2);
\item[(M3)] (Symmetry): that $\mathscr{D}(v,w)=\mathscr{D}(w,v)$, follows from $\|v-w\|=\|-(w-v)\|$ and (N3);
\item[(M4)] (Triangle Inequality): that $\mathscr{D}(v,w)\leq \mathscr{D}(v,u)+\mathscr{D}(w,u)$ follows by replacing $v \to v-u$ and $w \to u-w$ in (N4), for all $u\in V$.
\end{enumerate}

In Sec.~\ref{sec:threemetrics} we introduce three well established metrics for the space of density matrices: the trace norm, the Hilbert-Schmidt norm and the spectral norm. In Sec.~\ref{sec:evaluatemetrics} these measures are evaluated against the same criteria discussed in the previous section. Table~\ref{table:metrical}, on page~\pageref{table:metrical}, summarizes the main results of this and the previous sections.

\subsection{Three metrics for the space of density matrices}\label{sec:threemetrics}

A standard family of norms for the algebra of matrices was introduced by Schatten in Ref.~\cite{60Schatten} (see also Ref.~\cite{97Bhatia}). For any value of $p\in[1,\infty]$, the Schatten $p$-norms are defined as
\begin{equation}\label{eq:schatten_norms}
\|A\|_p=\left[\tr\left(|A|^p\right)\right]^{\tfrac{1}{p}}\quad\mbox{for}\quad p \in [1,\infty)\quad\mbox{and}\quad \|A\|_\infty=\|A\|\,,
\end{equation}
where $A \in \mathcal{M}_{\rm d_1 \times d_2}$ and $\|A\|$ is the standard operator norm of $A$ [cf. Eq.~\eqref{eq:opnorm}].\par

In this section, we shall focus on the metrics induced by the Schatten $p$-norms with $p=1,2,\infty$. In order to avoid notational confusion with the so-called Ky Fan $k$-norms\footnote{These are the sum of the $k$ largest singular values of $A$.}, we adopt the alternative nomenclature \emph{trace norm} $\|\cdot\|_{\rm tr}$, \emph{Hilbert-Schmidt norm} $\|\cdot\|_{\rm HS}$ and \emph{spectral norm} $\|\cdot\|$ for the Schatten $1$-, $2$- and $\infty$-norms, respectively.

\subsubsection{The Trace Distance}\label{sec:ttd}
From Eq.~\eqref{eq:schatten_norms}, the trace norm of $A$ is simply $\|A\|_{\rm tr}\defeq\tr\sqrt{A^\dagger A}$. Although the \emph{trace distance} --- the metric induced by the trace norm --- should then be given by $\|A-B\|_{\rm tr}$, it is a common practice (adopted here) to define it as half of this number. We have already defined the trace distance between two density matrices in Eq.~\eqref{eq:trdist}. Below, we exploit the above formula for the trace norm and the hermiticity of density matrices to write
\begin{equation}\label{eq:trdist_again}
\D(\rho,\sigma)\defeq\tfrac{1}{2}\|\rho-\sigma\|_{\rm tr}=\tfrac{1}{2}\tr\left[\sqrt{(\rho-\sigma)^2}\right]\,.
\end{equation}

A number of alternative definitions of the trace distance are also known. For example,
\begin{align}\label{eq:trdist_varU}
\D(\rho,\sigma)\defeq \max_{W\in U(\rm{d})} |\tr\left[W (\rho-\sigma)\right]|\,,
\end{align}
where $U({\rm d})$ denotes the group of unitary matrices of dimension {\rm d}. Remarkably, the maximizing $W$ satisfies $W(\rho-\sigma) = |\rho-\sigma|$ (see \cite[Lemma 6]{94Jozsa2315} or \cite[pp. 43--44]{60Schatten} for a proof), so that we recover the definition of Eq.~\eqref{eq:trdist_again}. \footnote{Note that for an invertible matrix $\rho-\sigma$, $W^\dagger$ is the unitary arising from the (unique) left polar decomposition of $\rho-\sigma$ \cite[Theorem 2.3]{00Nielsen}. For non-invertible $\rho-\sigma$, $W$ is not unique but it does exist (see \cite[p. 53]{95Fuchs} and references therein).}

While restricted to the space of density matrices, the trace distance can also be defined as\footnote{See \cite[Lemma 4]{07Rastegin9533} for a closely related definition of the trace distance for arbitrary hermitian matrices.}
\begin{equation}
\D(\rho,\sigma)\defeq \max_{0\leq P\leq \openone_d} \tr\left[P (\rho-\sigma)\right]\,,
\end{equation}
or, alternatively, the optimization can be taken over all projectors $P$~\cite[pp. 404-405]{00Nielsen}.

As a final observation, we note that the trace distance is equal to half of the sum of the singular values of $\rho-\sigma$; or what amounts to be the same in the case of hermitian matrices, half of the sum of the modulus of the eigenvalues. This follows easily from the singular value decomposition $\rho-\sigma=U\Sigma V^\dagger$, where $U$ and $V$ are unitary matrices and $\Sigma$ is a diagonal PSD matrix. Substituted into Eq.~\eqref{eq:trdist_again}, we get
\begin{equation}\label{eq:D_SVD}
\D(\rho,\sigma)\defeq\frac{1}{2}\tr\Sigma\,,
\end{equation}
establishing the claimed result. This definition motivates the Matlab code shown in the Appendix~\ref{app:codes} for the numerical computation of $\D$.

\subsubsection{The Hilbert-Schmidt Distance}\label{sec:HSD}

Whenever an inner product is defined on a set, a norm can be immediately defined for each element of the set via the square-root of the inner product of that element with itself. In this framework, the Hilbert-Schmidt inner product of a matrix $A$ with itself gives rise to the \emph{Hilbert-Schmidt norm} $\|A\|_{\rm HS}\defeq\sqrt{\tr\left[A^\dagger A\right]}$. Clearly, this is precisely the norm arising from Eq.~\eqref{eq:schatten_norms} with $p=2$.\par

The \emph{Hilbert-Schmidt distance} between two density matrices $\rho$ and $\sigma$ is defined as
\begin{equation}\label{eq:HSdef1}
\H(\rho,\sigma)\defeq\|\rho-\sigma\|_{\rm HS}=\sqrt{\tr\left[\left(\rho-\sigma\right)^2\right]}\,.
\end{equation}
From Eq.~\eqref{eq:vec_trace}, it then follows that the Hilbert-Schmidt distance is the Euclidean norm of ${\rm vec}\left(\rho-\sigma\right)$, namely
\begin{equation}\label{eq:Hvectorized}
\H(\rho,\sigma)\defeq\sqrt{\left[{\rm vec}(\rho-\sigma)\right]^\dagger {\rm vec}(\rho-\sigma)}\,.
\end{equation}
This implies, for example, that $\|\rho-\sigma\|_{\rm HS}^2$ is merely the sum of the absolute values squared of every entry of $\rho-\sigma$. Moreover, Eq.~\eqref{eq:Hvectorized} motivates the Matlab code shown in Appendix~\ref{app:codes} for the computation of the Hilbert-Schmidt distance.

Yet another useful definition arises from the singular value decomposition $\rho-\sigma=U\Sigma V^\dagger$ applied to Eq.~\eqref{eq:HSdef1}. A straightforward calculation shows that \begin{equation}\label{eq:H_SVD}
\H^2(\rho,\sigma)\defeq\tr\left(\Sigma^2\right)\,,
 \end{equation}
or in words, the squared Hilbert-Schmidt distance between $\rho$ and $\sigma$ is the sum of the squared \emph{singular values} of $\rho-\sigma$. Due to the hermiticity of density matrices, we can make a further simplification and regard $\H^2(\rho,\sigma)$ as the sum of the squared \emph{eigenvalues} of $\rho-\sigma$.

It is interesting to note that although $\H$ is the induced metric by the Hilbert-Schmidt norm, the function $\H^2$ can also be shown to be a metric. This is proved in the Appendix~\ref{app:metricHsq}.

\subsubsection{The Spectral Distance}
The dual norm \cite{07Recht} of the trace norm is the so-called \emph{spectral norm} (also known as operator norm, Schatten infinity norm, etc). It is defined for an arbitrary matrix $A$ as
\begin{equation}\label{eq:opnorm}
\|A\|\defeq\max_{\|\bm{v}\|=1}{\|A \bm{v}\|}\
\end{equation}
where the norms appearing on the right hand side refer to the Euclidean norm for vectors $\bm{v}\in \mathbb{C}^{\rm d}$.\par

From this variational definition, it is possible to show that the spectral norm of $A$ is equal to the largest singular value of $A$, i.e., for the SVD $A=U \Sigma V^\dagger$, $\|A\|$ is the largest element of $\Sigma$. We can restate this in terms of the eigenvalues of the matrix $A^\dagger A=V \Sigma^2 V^\dagger$. Clearly, each diagonal element of $\Sigma^2$ is an eigenvalue of $A^\dagger A$, so we can write
\begin{equation}\label{eq:O_SVD}
\|A\|\defeq\sqrt{\lambda_{\sf max}\left[A^\dagger A\right]}=\lambda_{\sf max}\sqrt{A^\dagger A}=\lambda_{\sf max}|A|\,,
\end{equation}
where $\lambda_{\sf max}$ is an operator that extracts the largest eigenvalue of its argument. The spectral distance between two density matrices $\rho$ and $\sigma$ is thus defined as
\begin{equation}
\O(\rho,\sigma)=\lambda_{\rm max}|\rho-\sigma|
\end{equation}
Since $\rho-\sigma$ is a normal matrix (it is actually hermitian), simple diagonalization shows that the eigenvalues of $|\rho-\sigma|$ are simply the modulus of the eigenvalues of $\rho-\sigma$. This observation leads to the Matlab code given in Appendix~\ref{app:codes} for the computation of $\O$.

\subsection{Benchmarks of metrical distances}\label{sec:evaluatemetrics}

In this section we present an analysis of the metrics introduced above that parallels the study of the properties of $\F$ and $\Fn$ presented in Sec.~\ref{sec:altfid}. The reader will note, however, one omission: we do not evaluate the metrics against the criterion ``Consistency with Schumacher's fidelity''. Of course, given that metrics are measures of distance, they should not be expected to recover a measure of closeness in some special case.

\subsubsection{Jozsa's Axioms}
\begin{enumerate}
\item \textbf{Normalization.} The metric axioms (M1) and (M2) guarantee that for any metric we have $\|\rho-\sigma\|\geq 0$ with saturation iff $\rho=\sigma$. Noticeably, this establishes a slightly different normalization axiom than that one satisfied by the fidelity-like quantities $\F$, $\Fn$ and $Q$. While these saturate their \emph{upper bounds} when the states are identical, the metrical quantities saturate their \emph{lower bounds} in this case. Of course, this poses no conceptual difficulties and is merely a manifestation of the fact that the fidelity-like functions are \emph{closeness} measures, while metrics are authentic \emph{distance} measures.

    A few comments regarding the saturation of the upper bounds are pertinent: From the inequalities~\eqref{eq:fuchs_ineq}, it is easy to prove that $\D(\rho,\sigma)=1$ if and only if $\F(\rho,\sigma)=0$, which implies that the trace distance upper bound is achieved \emph{with and only with} a pair of orthogonal states. On the other hand, orthogonality does not suffice for the saturation of the upper bounds of $\H$ and $\O$. For example, consider the orthogonal mixed states $\widetilde{\rho}$ and $\widetilde{\sigma}$ of Eq.~\eqref{eq:Ozawa}. It is easy to compute that $\H(\widetilde{\rho},\widetilde{\sigma})=1 < \sqrt{2}$ and $\O(\widetilde{\rho},\widetilde{\sigma})=1/2 < 1$, where the rhs of the inequalities indicate the actual upper bounds of $\H$ and $\O$.

    A little thought shows that the upper bound of the spectral distance is achieved if one of the states is pure and orthogonal to the other state (which is allowed to be mixed). The upper bound of $\H(\rho,\sigma)$, in turn, is saturated if and only if $\Fn(\rho,\sigma)=0$ [this follows from the third inequality in Eq.~\eqref{eq:fuchs_like_bounds}], or equivalently, if and only if $\rho$ \emph{and} $\sigma$ are \emph{pure and orthogonal}.

\item \textbf{Symmetry.} By definition, every metric is symmetric [cf. axiom (M3) on page~\pageref{axioms:metric}].

\item \textbf{Unitary Invariance.} The three metrics studied here are unitarily invariant. This follows from the fact they can be solely expressed in terms of the singular values of $\rho-\sigma$, as shown in Eqs.~\eqref{eq:D_SVD}, \eqref{eq:H_SVD} and \eqref{eq:O_SVD}. Since the singular values of a matrix are invariant under unitary transformations, so are these metrics.

\end{enumerate}

\subsubsection{Convexity Properties}

Every \emph{induced} metric is jointly concave. This is proved in the following via a straightforward application of the norm axioms triangle inequality (N4) and positive scalability (N3).\par

For any non-negative scalars $p_1$ and $p_2$ (no need to require $p_1+p_2=1$) and density matrices $\rho_1$, $\rho_2$, $\sigma_1$ and $\sigma_2$ we can write
\begin{align}
\|(p_1\rho_1+p_2\rho_2)-(p_1\sigma_1+p_2\sigma_2)\|&=\|p_1(\rho_1-\sigma_1)+p_2(\rho_2-\sigma_2)\|\\
&\leq \|p_1(\rho_1-\sigma_1)\|+\|p_2(\rho_2-\sigma_2)\|\\
& = p_1\|\rho_1-\sigma_1\|+p_2\|\rho_2-\sigma_2\|
\end{align}
which establishes the desired property. Contrasted to the effort involved in the proof of joint concavity for $\Fn$, the above proof reveals the value of the underlying structure of induced metrics.

\subsubsection{Multiplicativity under Tensor Product}
With numerical examples, it is straightforward to see that none of the three metrics is multiplicative. In fact, both the trace norm and the operator norm are known to be supermultiplicative under tensor product \cite{05Belavkin062106}, which implies the supermultiplicativity of $\D$ and $\O$. Although this also seems to be the case for $\H$, we have not been able to find or produce a proof of this fact.

\subsubsection{Monotonicity under Quantum Operations}
In Ref.~\cite{94Ruskai1147}, Ruskai proved that $\D$ monotonically decreases (contracts) under arbitrary CPTP maps.

The monotonicity of $\H$ has a somewhat longer history. In the quantum information literature, this property was recognized as desirable feature for entanglement quantification in Refs.~\cite{97Vedral2275,98Vedral1619}. In Ref.~\cite{99Witte14}, a flawed proof of the contractivity of $\H$ was given; the error was detected by Ozawa in Ref.~\cite{00Ozawa158}, who provided an example of a CPTP map and a pair of $4\times 4$ density matrices for which $\H$ was seen to increase\footnote{Recall that with this same example we have shown in Sec.~\ref{sec:fnmonotone} that $\Fn$ is not monotonically decreasing under arbitrary CPTP maps.}.

In Ref.~\cite{06Perez-Garcia083506}, Perez-Garcia \etal showed that neither $\H$ nor $\O$ are generally contractive under arbitrary CPTP maps, but both of them are if restricted to the subset of \emph{unital} CPTP maps\footnote{A unital map $\mathcal{U}$ is characterized by having the identity matrix as a fixed point, i.e., $\mathcal{U}(\openone_{\rm d})=\openone_{\rm d}$.}. Moreover, in the case of qubit states, both $\H$ and $\O$ are contractive under arbitrary CPTP maps\footnote{This fact had already been anticipated by Nielsen in Ref.~\cite{0403Nielsen} for the case of $\H$.}, but this already fails to be true for $\H$ in the case of qutrit states.

\subsubsection{Bounds}
We have already shown in Eq.~\eqref{eq:fuchs_ineq} how $\D$ and $\F$ are related, and in Eqs.~\eqref{eq:upbndFn},~\eqref{eq:conjbnd} and \eqref{eq:polishbnd} how this relation can be modified to place bounds on $\D$ via functions of $\Fn$. Here, we prove the following inequalities between $\D$, $\H$ and $\O$:
\begin{equation}\label{eq:metricbounds}
\O(\rho,\sigma) \leq \H(\rho,\sigma) \leq 2\D(\rho,\sigma) \leq \sqrt{\mathfrak{r}}\,\H(\rho,\sigma) \leq \mathfrak{r}\O(\rho,\sigma)\,,
\end{equation}
where $\mathfrak{r}\defeq {\rm rank}(\rho-\sigma)$.

Apart from mutually relating the metrical distance measures, the sequence of inequalities above can be used in connection with one of the inequalities~\eqref{eq:fuchs_ineq}, \eqref{eq:HSFn}, \eqref{eq:conjbnd} or \eqref{eq:polishbnd} to relate each of metrics $\D$, $\H$ and $\O$ with $\F$ or $\Fn$. For example, it is straightforward to show that
\begin{equation}\label{eq:fuchs_like_bounds}
\begin{array}{rcccl}
\tfrac{2}{\sqrt{\mathfrak{r}}}\left[1-\sqrt{\F(\rho,\sigma)}\right]&\leq&\H(\rho,\sigma)&\leq&2\sqrt{1-\F(\rho,\sigma)}\\
\tfrac{2}{\mathfrak{r}}\left[1-\sqrt{\F(\rho,\sigma)}\right]&\leq&\O(\rho,\sigma)&\leq&2\sqrt{1-\F(\rho,\sigma)}\\
\frac{2}{\sqrt{\mathfrak{r}}}\left[1-\Fn(\rho,\sigma)\right]&\leq&\H(\rho,\sigma)&\leq&\sqrt{2\left[1-\Fn(\rho,\sigma)\right]}\\
\frac{2}{\mathfrak{r}}\left[1-\Fn(\rho,\sigma)\right]&\leq&\O(\rho,\sigma)&\leq&\sqrt{2\left[1-\Fn(\rho,\sigma)\right]}
\end{array}
\end{equation}
where the first inequalities in the third and fourth lines are conditioned on the validity of conjecture~\ref{conj:conjbndFnD}, but guaranteed to hold if $\Fn$ is replaced with $\sqrt{\Fn}$.

We now prove the inequalities in Eq.~\eqref{eq:metricbounds} which are, in fact, a simple restatement of well-known inequalities between the Schatten $p$-norms applied to the matrix $\rho-\sigma$.
The first inequality, for example, is the particular case $A=\rho-\sigma$ of the more general inequality $\|A\|\leq\|A\|_{\rm HS}$. To see that this holds, square and express each norm in terms of the singular values of $A$, $\sigma_i(A)$, to get
\begin{equation}
\max_i{\sigma_i(A)^2}\leq\sum_i\sigma_i(A)^2\,.
\end{equation}
This is obviously true since the term in the left is only one of the many non-negative summands in the right.

Likewise, the second inequality follows from $\|A\|_{\rm HS} \leq \|A\|_{\rm tr}$. This can be proved by squaring and using the singular value expressions to obtain
\begin{equation}
\sum_i\sigma_i(A)^2\leq\left[\sum_i\sigma_i(A)\right]^2\,,
\end{equation}
which clearly holds since the right hand side is a summation of non-negative terms involving every element appearing in the sum on the left hand side.

The third inequality follows from $\|A\|_{\rm tr}\leq \sqrt{{\rm rank}(A)}\|A\|_{\rm HS}$, which has already been stated and proved in this thesis [cf. Eq.~\eqref{eq:normtrhs} and the footnote on page~\pageref{footnote:prooftrhs}].

Finally, the fourth inequality follows from $\|A\|_{\rm HS}\leq\sqrt{{\rm rank}(A)}\|A\|$. Once again, this can be proved by squaring and expressing the norms in terms of the singular values,
\begin{equation}
\sum_i\sigma_i(A)^2\leq {\rm rank}(A)\left[\max_i\sigma_i(A)^2\right]\,,
\end{equation}
which is trivially true since ${\rm rank}(A)$ equals the number of non-zero singular values of $A$.

\subsubsection{Computational Efficiency}
As discussed in Sec.~\ref{sec:computability}, the algorithmic complexity in the computation of the trace distance is $O({\rm d}^3)$ due to the need of diagonalization of a matrix. However, this is considerably more efficient than other measures such as $\F$ or $Q$ since it only requires the computation of the eigenvalues (without eigenvectors). Clearly, the same applies for $\O$.

In contrast, the definition of $\H$ from Eq.~\eqref{eq:Hvectorized} enables its computation with $O({\rm d}^2)$ operations. $\H$ is thus the most efficiently computable metric among the ones considered here.

\begin{sidewaystable}
\centering
\caption[A summary of the properties of the studied distance measures]{A summary of the properties of the studied distance measures. In the column entitled `Monotonic under maps', we present the largest considered family of maps under which each measure is monotonic. The hierarchy of considered families is: CPTP maps $>$ unital CPTP maps $>$ projective measurements. The gray color indicates a conjecture supported by numerical evidence. In the column entitled `Related metrics', the dashes indicate that the corresponding quantities are already metrics.}
\begin{tabular}{c|c|c|c|c|c|c|c|c|c|c}
\hline\hline
& \multicolumn{4}{c|}{Jozsa's Axioms} & Concavity / & Multiplicativity & Monotonic & Related & Bounds & Comput.\\
\cline{2-5}
& Norm. & Sym. & U-inv. & Schum. & Convexity & (tensor product) & under maps& metrics &  & Complexity \\
\hline
$\F$ & yes & yes & yes & yes & sep. concave & multiplicative & CPTP & $A,B,C[\F]$ & P.~\pageref{eq:fuchs_ineq} & $O({\rm d}^3)$\\
$\Fn$ & yes & yes& yes & yes & joint concave & super multipl. & \textcolor{graytable}{(Proj. Meas.)} & $C[\Fn]$ & Pp.~\pageref{eq:DFn_boundrank},\pageref{eq:polishbnd} & $O({\rm d}^2)$\\
$\D$& yes & yes & yes & N/A & joint convex & super multpl.& CPTP & --- & Pp.~\pageref{eq:metricbounds},\pageref{eq:fuchs_like_bounds} & $O({\rm d}^3)$ \\
$\H$& yes & yes & yes & N/A & joint convex & \textcolor{graytable}{(super multipl.)} & unital CPTP & --- & Pp.~\pageref{eq:metricbounds},\pageref{eq:fuchs_like_bounds} & $O({\rm d}^2)$\\
$\O$& yes & yes & yes & N/A & joint convex & super multipl. & unital CPTP & --- & Pp.~\pageref{eq:metricbounds},\pageref{eq:fuchs_like_bounds} & $O({\rm d}^3)$\\
\hline\hline
\end{tabular}\label{table:metrical}
\end{sidewaystable}

\section{Distances between sequences of density matrices}\label{sec:distset}

So far in this chapter, we have studied a number of distance (and closeness) measures between density matrices. In this section we aim to generalize to a pair of \emph{sequences} of density matrices, the notions of distance introduced before to a pair of density matrices. The motivation for this is the construction of objective functions $\aver{\mathscr{D}}$ for problem~\eqref{eq:gen_problem}, introduced in the previous chapter as a general formulation of the problem of transforming between sequences of density matrices.

For any choice of measure $\mathscr{D}\in\{\D,\H,\O\,\F,\Fn\}$, we introduce two averaging schemes $\aver{\mathscr{D}}_1$ and $\aver{\mathscr{D}}_2$, each of which providing a different quantitative estimate of the distance between two sequences of $I\geq 2$ {\rm d}-dimensional density matrices $\left[\rho_i\right]_{i=1}^I$ and $\left[\sigma_i\right]_{i=1}^I$. These are defined as follows:
\begin{align}
\aver{\mathscr{D}(\rho_i,\sigma_i)}_1&\defeq\sum_{i=1}^{I}{\pi_i \mathscr{D}\left[\rho_i,\sigma_i\right]}\,,\label{eq:average1}\\
\aver{\mathscr{D}(\rho_i,\sigma_i)}_2&\defeq\mathscr{D}\left[\bigoplus_{i=1}^I\pi_i \rho_i,\bigoplus_{i=1}^I\pi_i\sigma_i\right]\,.\label{eq:average2}
\end{align}
Here, $\pi_i$ is a chosen probability distribution over the alphabet $i=1,\ldots,I$ and satisfying $\pi_i\in(0,1)$ and $\sum_{i=1}^I\pi_i=1$. The averaging scheme $\aver{\mathscr{D}}_1$ is simply a weighted average distance between the $i$-th elements of each sequence, while $\aver{\mathscr{D}}_2$ measures $\mathscr{D}$ between a single pair of $I{\rm d}$-dimensional density matrices formed by the weighted average --- with respect to the direct sum --- over the elements of each sequence.

By suitably choosing the values of $\pi_i$ in Eqs.~\eqref{eq:average1} and~\eqref{eq:average2} and running the optimization~\eqref{eq:gen_problem}, one is actually setting a hierarchy on the desired accuracy of the implementation of each ``atomic transformation'' $\rho_i\mapsto\overline{\rho}_i$. For example, if there is no precedence of an atomic transformation over the others, then the uniform probability distribution $\pi_i=1/I$ for every $i$ should be chosen. On the other extreme, if $\pi_i$ is chosen to be $1$ for some value of $i$, then the corresponding atomic transformation will be the only one that matters; in these circumstances, the special case of single-state transformation discussed in Sec.~\ref{sec:singlestate} is recovered. Since the single-state case has already been fully solved, we assume without loss of generality that $\pi_i\neq 1$ for any $i$. We can also assume $\pi_i\neq 0$ for every $i$, which is justified as follows: if certain atomic transformations are absolutely irrevelant, then instead of assigning zero weight we can simply remove the corresponding source and target states from the sequences.  Henceforth we shall refer to the point probabilities $\pi_i$ as \emph{priorities}.

The averaging schemes introduced above are interesting because they yield distance measures $\aver{\mathscr{D}}$ for sequences of density matrices which behave much in the same way as  $\mathscr{D}$ behaves for density matrices. This is made more precise in the following:

\begin{theorem}\label{thm:metricsequence}
The functions $\aver{\mathscr{D}(\rho_i,\sigma_i)}_{1,2}$ defined in Eqs.~\eqref{eq:average1} and~\eqref{eq:average2} are metrics for the space of sequences of density matrices if $\mathscr{D}$ is a metric for the space of density matrices.
\end{theorem}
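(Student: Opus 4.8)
The plan is to verify the four metric axioms (M1)--(M4) from page~\pageref{axioms:metric} for each averaging scheme separately, leaning throughout on the hypothesis that $\mathscr{D}$ already satisfies these axioms on the space of density matrices of the relevant dimension.

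For $\aver{\mathscr{D}}_1$ I would regard $\sum_{i=1}^I \pi_i\,\mathscr{D}[\rho_i,\sigma_i]$ as a nonnegative linear combination of metrics. Nonnegativity (M1) and symmetry (M3) then follow termwise from the corresponding properties of $\mathscr{D}$, while the triangle inequality (M4) is obtained by summing $\mathscr{D}[\rho_i,\sigma_i]\leq\mathscr{D}[\rho_i,\tau_i]+\mathscr{D}[\tau_i,\sigma_i]$ against the weights $\pi_i$ for a third sequence $[\tau_i]_{i=1}^I$. The only axiom that genuinely uses the standing assumption $\pi_i\neq 0$ is the identity of indiscernibles (M2): since all weights are strictly positive and every summand is nonnegative, the vanishing of $\aver{\mathscr{D}}_1$ forces each $\mathscr{D}[\rho_i,\sigma_i]$ to vanish, whence $\rho_i=\sigma_i$ for every $i$, i.e. the two sequences coincide.

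For $\aver{\mathscr{D}}_2$ the crucial observation is that $\bigoplus_{i=1}^I \pi_i\rho_i$ is itself a legitimate $I{\rm d}$-dimensional density matrix: each block $\pi_i\rho_i$ is positive semidefinite and the total trace is $\sum_i \pi_i=1$. Therefore $\aver{\mathscr{D}}_2$ is simply $\mathscr{D}$ evaluated, in the enlarged dimension $I{\rm d}$, on a pair of bona fide density matrices, so axioms (M1), (M3) and (M4) are inherited immediately by invoking the metric properties of $\mathscr{D}$ in that dimension (for (M4) one applies its triangle inequality to the three block-diagonal matrices built from $[\rho_i]$, $[\sigma_i]$ and $[\tau_i]$). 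For (M2) I would use that two matrices that are block-diagonal with respect to the same partition are equal if and only if they agree block by block: $\bigoplus_i \pi_i\rho_i=\bigoplus_i \pi_i\sigma_i$ forces $\pi_i\rho_i=\pi_i\sigma_i$ for each $i$, and dividing by $\pi_i>0$ gives $\rho_i=\sigma_i$ throughout.

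The step requiring the most care --- and the only place where the restriction $\pi_i\in(0,1)$ is essential rather than cosmetic --- is the identity of indiscernibles in both schemes; every other axiom transfers mechanically. Since no property of $\mathscr{D}$ beyond (M1)--(M4) enters the argument, the conclusion holds uniformly for each of the metrics $\D$, $\H$ and $\O$ studied above, establishing the theorem.
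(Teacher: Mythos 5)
Your proof is correct and follows essentially the same route as the paper's: axiom-by-axiom verification, with the weighted-sum structure handling $\aver{\mathscr{D}}_1$, the block-diagonal matrices in dimension $I{\rm d}$ inheriting the axioms of $\mathscr{D}$ for $\aver{\mathscr{D}}_2$, and the strict positivity of the $\pi_i$ doing the real work only in the identity of indiscernibles. Your explicit check that $\bigoplus_i \pi_i\rho_i$ is a bona fide density matrix is a point the paper leaves implicit, but it does not change the argument.
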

\begin{proof}
The proof is a trivial verification of each one of the metric axioms stated on page~\pageref{axioms:metric}:
\begin{enumerate}
\item[(M1)] (Nonnegativity): That $\aver{\mathscr{D}(\rho_i,\sigma_i)}_1\geq 0$ for all sequences $\left[\rho_i\right]_{i=1}^I$ and $\left[\sigma_i\right]_{i=1}^I$ follows from $\mathscr{D}(\rho_i,\sigma_i)\geq 0$ for $i=1,\ldots,I$ and from the fact that a convex sum of non-negative terms is nonnegative. The nonnegativity of $\aver{\mathscr{D}(\rho_i,\sigma_i)}_2$ is an instance of the nonnegativity of $\mathscr{D}$ with block-diagonal density matrices.
\item[(M2)] (Identity of Indiscernibles): We first prove the `if direction' for both $\aver{\mathscr{D}}_1$ and $\aver{\mathscr{D}}_2$: If $\left[\rho_i\right]_{i=1}^I=\left[\sigma_i\right]_{i=1}^I$, we have $\aver{\mathscr{D}(\rho_i,\sigma_i)}_1=\aver{\mathscr{D}(\rho_i,\rho_i)}_1=\sum_i\pi_i 0 = 0$ and also $\aver{\mathscr{D}(\rho_i,\sigma_i)}_2=\aver{\mathscr{D}(\rho_i,\rho_i)}_2=\mathscr{D}\left(\bigoplus_i\pi_i\rho_i,\bigoplus_i\pi_i\rho_i\right)=0$. Remarkably, notice that the first equality in each case would not hold if we were dealing with sets (as opposed to sequences) of density matrices.

    Conversely, the condition $\aver{\mathscr{D}(\rho_i,\sigma_i)}_1=\sum_i\pi_i\mathscr{D}(\rho_i,\sigma_i)=0$ requires $\mathscr{D}(\rho_i,\sigma_i)=0$ because $\pi_i\neq 0$ for $i=1,\ldots,I$. Due to the metric property of $\mathscr{D}$, this can only happen if $\rho_i=\sigma_i$ for all $i$, or equivalently if $\left[\rho_i\right]_{i=1}^I=\left[\sigma_i\right]_{i=1}^I$.
    Likewise, the condition $\aver{\mathscr{D}(\rho_i,\sigma_i)}_2=\mathscr{D}(\bigoplus_i\pi_i\rho_i,\bigoplus_i\pi_i\sigma_i)=0$
    requires $\bigoplus_i\pi_i\rho_i=\bigoplus_i\pi_i\sigma_i$ which is equivalent to $\rho_i=\sigma_i$ since $\pi_i\neq 0$ for all $i$.
\item[(M3)] (Symmetry): Symmetry of $\aver{\mathscr{D}}_{1,2}$ follows trivially from the symmetry of $\mathscr{D}$.
\item[(M4)] (Triangle Inequality): For three sequences of density matrices $\left[\rho\right]_{i=1}^I$, $\left[\sigma\right]_{i=1}^I$ and $\left[\tau\right]_{i=1}^I$, the multiplication of each one of the valid triangle inequalities $\mathscr{D}(\rho_i,\tau_i)\leq \mathscr{D}(\rho_i,\sigma_i)+\mathscr{D}(\sigma_i,\tau_i)$
    by $\pi_i$ yields another set of $I$ valid inequalities. Summing over all of them, the triangle inequality for $\aver{\mathscr{D}}_1$ is established. The triangle inequality for $\aver{\mathscr{D}}_2$ is just an instance of the triangle inequality for $\mathscr{D}$ with block-diagonal density matrices.
\end{enumerate}
\end{proof}

Apart from the metric axioms, many other properties of $\mathscr{D}$ are inherited by $\aver{\mathscr{D}}_{1,2}$. For example, for any sequences $[\rho_i^{(1)}]_{i=1}^I$, $[\rho_i^{(2)}]_{i=1}^I$, $[\sigma_i^{(1)}]_{i=1}^I$, $[\sigma_i^{(2)}]_{i=1}^I$, and $p_1$ and $p_2$ non-negative numbers such that $p_1+p_2=1$, the inequalities
\begin{equation}
\aver{\mathscr{D}(p_1\rho_i^{(1)}+p_2\rho_i^{(2)},p_1\sigma_i^{(1)}+p_2\sigma_i^{(2)})}_{1,2}\leq p_1\aver{\mathscr{D}(\rho_i^{(1)},\sigma_i^{(1)})}_{1,2}+p_2\aver{\mathscr{D}(\rho_i^{(2)},\sigma_i^{(2)})}_{1,2}\,,
\end{equation}
(or the reversed inequality) follow straightforwardly from the joint convexity (concavity) of $\mathscr{D}$.

Likewise, if $[\varrho_i]_{i=1}^I=[\mathcal{E}(\rho_i)]_{i=1}^I$ and $[\varsigma_i]_{i=1}^I=[\mathcal{E}(\varsigma_i)]_{i=1}^I$, then
\begin{equation}\label{eq:monotsequence}
\aver{\mathscr{D}(\varrho_i,\varsigma_i)}_{1,2}\leq\aver{\mathscr{D}(\rho_i,\sigma_i)}_{1,2}\,,
\end{equation}
(or the reversed inequality) holds if $\mathscr{D}$ is monotonically decreasing (increasing) under the map $\mathcal{E}$. In particular, if $\mathcal{E}$ is a unitary map, saturation of inequality~\eqref{eq:monotsequence} can be shown to hold if $\mathscr{D}$ is unitarily invariant.

Due to the good properties of the distance measures $\D$, $\H$, $\O$, $\F$ and $\Fn$, and the parallelism between $\aver{\mathscr{D}}_{1,2}$ and $\mathscr{D}$, we consider the measures $\aver{\D}_{1,2}$, $\aver{\H}_{1,2}$, $\aver{\O}_{1,2}$, $\aver{\F}_{1,2}$ and $\aver{\Fn}_{1,2}$ sensible choices for quantifying distance between sequences of density matrices. In the next chapter, they are used as the objective function of problem~\eqref{eq:gen_problem} and some of the resulting problems are formulated as semidefinite programs.

\chapter{Assembling Semidefinite Programs for Quantum Control}\label{chap:assemble}

\section{Introduction}
At the end of Ch.~\ref{chap:background}, we proposed the general optimization
\begin{equation}\label{eq:gen_problem_ch4}
\min_{\mathcal{C} \in \mathcal{C}_{\rm d}^{\rm set}}\aver{\mathscr{D}[\mathcal{C}(\rho_i),\brho_i]}\,,
\end{equation}
as a formal expression for the problem of determining a quantum operation converting between sequences of density matrices. In the same chapter, we saw how this problem can be brought very close to the form of a SDP if $\mathcal{C}_{\rm d}^{\rm set}$ is chosen to be either the set of CPTP maps, $\mathcal{Q}_{\rm d}^{\rm set}$, or a relaxed version of the set of EBTP maps, $\widetilde{\mathcal{B}}_{\rm d}^{\rm set}$. In this chapter, the formulation of problem~\eqref{eq:gen_problem_ch4} over these sets is completed with the specification of $\aver{\mathscr{D}}$ using the distance measures introduced in Ch.~\ref{chap:dist_measure}. Our main results are the derivation of several SDPs formalizing our quantum control problem.

Once we have our SDPs assembled, we are in the position to numerically solve them and observe how different choices of distance measures influence the resulting optimal controllers. This analysis is first conducted in a qualitative basis, and restricts to some examples of qubit state transformations. Later on, a more general setting is considered, and preliminary numerical results lead to a quantitative description of the ``compatibility'' between a chosen distance measures and the remaining ones.

This chapter is divided as follows: In Sec.~\ref{sec:min_dist} we derive the SDPs to minimize metrical distances related to the trace distance, $\D$, the Hilbert-Schmidt distance $\H$ and the spectral distance $\O$. In Sec.~\ref{sec:max_close} we discuss the optimization problems arising from the closeness measures $\F$ and $\Fn$, and a SDP is derived in a particular case. Qualitative and quantitative comparisons between the numerical solutions of these problems are presented in Sec.~\ref{sec:ctrlsens}.

\section{Minimizing distances}\label{sec:min_dist}
In this section we consider the formulation of problem~\eqref{eq:gen_problem_ch4} with $\aver{\mathscr{D}}$ taken as one of the metrics\footnote{Note the omission of the metrics $\aver{\H}_1$ and $\aver{\O}_1$ from the roll of metrics considered here. Unfortunately, we have not been able to cast problem~\eqref{eq:gen_problem_ch4} as a SDP for these choices of $\aver{\mathscr{D}}$.} $\aver{\D}_ {1,2}$, $\aver{\H^2}_1$, $\aver{\H}_2$, $\aver{\O}_2$. Exploiting a number of algebraic tricks, we demonstrate how the minimization of these quantities can be cast as SDPs.

As a first trick that will be useful for all the choices listed above, we start by reexpressing problem~\eqref{eq:gen_problem_ch4} in the equivalent \emph{epigraph form} \cite{04Boyd}
\begin{align}
\mbox{minimize}\quad & t \nonumber\\
\mbox{subject to}\quad& \aver{\mathscr{D}} \leq t \label{eq:constassemble}\\
\quad&\mathcal{C} \in \mathcal{C}_{\rm d}^{\rm set}\,, \nonumber
\end{align}
Albeit a new variable $t$ is introduced here, the gain is that the objective function becomes linear in the problem variable. Since we have already seen in Sec.~\ref{sec:begin_optimize} how to handle the constraint $\mathcal{C}\in\mathcal{C}_{\rm d}^{\rm set}$ for the cases of interest [i.e., $\mathcal{C}_{\rm d}^{\rm set}=\mathcal{Q}_{\rm d}^{\rm set}$ and $\mathcal{C}_{\rm d}^{\rm set}=\widetilde{\mathcal{B}}_{\rm d}^{\rm set}$], it only remains to reexpress the new inequality constraint $\aver{\mathscr{D}}\leq t$ in the form of a linear matrix inequality. This is done next, case by case, for each choice of $\aver{\mathscr{D}}$.

\subsection{The Trace Distance}

In principle, we should start by establishing which averaging scheme, $\aver{\D}_1$ or $\aver{\D}_2$, is to be considered first. However, this is dispensable in the case of the trace distance since
\begin{equation}\label{eq:D1eqD2}
\aver{\D[\mathcal{C}(\rho_i),\brho_i]}_1=\aver{\D[\mathcal{C}(\rho_i),\brho_i]}_2\,,
\end{equation}
for all density matrices $\mathcal{C}(\rho_i)$ and $\brho_i$. This ``degeneracy'' follows from the identity $\sum_i\tr A_i = \tr\bigoplus_i A_i$, valid for every set of square matrices $\{A_i\}$. In particular, if $A_i$ is taken to be the diagonal matrix of the eigenvalues of $\pi_i[\mathcal{C}(\rho_i)-\brho_i]$, then we obtain Eq.~\eqref{eq:D1eqD2}. Because of this equivalence between $\aver{\D}_1$ and $\aver{\D}_2$, henceforth we shall adopt the simplified notation $\aver{\D}$, and write
\begin{equation}
\aver{\D[\mathcal{C}(\rho_i),\brho_i]}=\left\|\bigoplus_{i=1}^I\frac{\pi_i}{2}\left[\mathcal{C}(\rho_i)-\brho_i\right]\right\|_{\rm tr}\,.
\end{equation}

We now focus on the inequality constraint of Eq.~\eqref{eq:constassemble} which, thanks to the equation above, is of the form $\|A\|_{\rm tr}\leq t$. The following lemma is then immediately applicable:

\begin{lemma}[Fazel-Hindi-Boyd, \cite{01Fazel4734}]\label{lemma:fazel}
For any square matrix $A$ and $t \in \mathbb{R}$, $\|A\|_{\rm tr}\leq t$ if and only if there exists matrices $Y$ and $Z$ such that
\begin{equation}
\left(\begin{array}{cc}Y&A\\A^\dagger&Z\end{array}\right)\geq 0\quad\mbox{and}\quad\tr Y + \tr Z \leq 2t\,.\label{lemma:conds}
\end{equation}
\end{lemma}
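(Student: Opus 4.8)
The plan is to prove the equivalence using the singular value decomposition (SVD) $A = U\Sigma V^\dagger$, in which $U,V$ are unitary (the square-matrix hypothesis guarantees a full, genuinely unitary SVD) and $\Sigma = {\rm diag}(\sigma_1,\ldots,\sigma_{\rm d})$ lists the singular values of $A$, so that $\|A\|_{\rm tr} = \sum_i\sigma_i = \tr\Sigma$, exactly as in Eq.~\eqref{eq:D_SVD}. The whole statement then reduces to the sharper fact that $\|A\|_{\rm tr} = \tfrac12\min\{\tr Y + \tr Z\}$, the minimum being taken over the PSD block constraint, and I would establish the two bounds separately.

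For the ``only if'' direction I would argue by construction, exhibiting an explicit pair $(Y,Z)$ from the assumption $\|A\|_{\rm tr}\le t$. Setting $Y \defeq U\Sigma U^\dagger$ and $Z \defeq V\Sigma V^\dagger$ makes the trace condition immediate, since $\tr Y = \tr Z = \tr\Sigma = \|A\|_{\rm tr}\le t$ gives $\tr Y + \tr Z \le 2t$. Positivity of the block matrix is then read off from the factorization
\begin{equation}
\begin{pmatrix} Y & A \\ A^\dagger & Z\end{pmatrix} = \begin{pmatrix} U \\ V\end{pmatrix}\Sigma\begin{pmatrix} U^\dagger & V^\dagger\end{pmatrix} = CC^\dagger, \qquad C \defeq \begin{pmatrix} U \\ V\end{pmatrix}\Sigma^{1/2},
\end{equation}
where $\Sigma\ge 0$ is used to split $\Sigma = \Sigma^{1/2}\Sigma^{1/2}$; any matrix of the form $CC^\dagger$ is manifestly PSD.

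For the ``if'' direction I would take an arbitrary feasible pair $(Y,Z)$ and probe the positivity of $M\defeq\begin{pmatrix} Y & A \\ A^\dagger & Z\end{pmatrix}$ with the test vectors $(u_i,-v_i)^{\sf T}$ assembled from the $i$-th left and right singular vectors of $A$. Using $u_i^\dagger A v_i = v_i^\dagger A^\dagger u_i = \sigma_i$, the inequality $M\ge 0$ collapses to $u_i^\dagger Y u_i + v_i^\dagger Z v_i \ge 2\sigma_i$ for each $i$. Summing over $i$ and invoking the unitarity of $U$ and $V$ to recognise $\sum_i u_i^\dagger Y u_i = \tr Y$ and $\sum_i v_i^\dagger Z v_i = \tr Z$ yields $\tr Y + \tr Z \ge 2\|A\|_{\rm tr}$; combined with the hypothesis $\tr Y + \tr Z\le 2t$ this gives $\|A\|_{\rm tr}\le t$, completing the argument.

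I do not anticipate a genuine obstacle here, as the result is a standard semidefinite reformulation of the trace norm; the only point deserving care is that squareness of $A$ is what makes $U$ and $V$ honestly unitary, so that the diagonal sums in the ``if'' direction really do reassemble the full traces $\tr Y$ and $\tr Z$ rather than partial sums over a subspace.
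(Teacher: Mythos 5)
Your proof is correct and follows essentially the same route as the paper's: both directions rest on the SVD $A=U\Sigma V^\dagger$, with the identical construction $Y=U\Sigma U^\dagger$, $Z=V\Sigma V^\dagger$ for the ``only if'' part. Your ``if'' argument, probing with the vectors $(u_i,-v_i)^{\sf T}$ and summing over $i$, is just the column-by-column form of the paper's single trace inequality $\tr\left[XX^\dagger M\right]\geq 0$ with $X^\dagger=\left(\begin{array}{cc}U^\dagger & -V^\dagger\end{array}\right)$, so the two computations coincide.
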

\begin{proof}
Throughout, $U$ and $V$ are defined via the SVD of $A$, i.e., $A=U\Sigma V^\dagger$. We start by assuming that there exists matrices $Y$ and $Z$ such that conditions (\ref{lemma:conds}) hold and then we show that $\|A\|_{\rm tr}\defeq \tr\Sigma \leq t$.

Recall that the trace of the product of two PSD matrices is always non-negative, so
\begin{equation}
\tr\left[XX^\dagger \left(\begin{array}{cc}Y&A\\A^\dagger&Z\end{array}\right) \right]\geq 0\label{eq:posprodtr}
\end{equation}
for every matrix $X$. In particular, take $X^\dagger=\left(\begin{array}{cc}U^\dagger&-V^\dagger\end{array}\right)$ and expand Eq. (\ref{eq:posprodtr}). Exploiting the cyclic property of the trace and the SVD of $A$ one finds
$2 \tr\Sigma \leq \tr Y + \tr Z$, which implies $\tr\Sigma \leq t$.\par
Conversely, suppose $\tr\Sigma\leq t$ and let $Y=U\Sigma U^\dagger$ and $Z=V\Sigma V^\dagger$. Clearly, $\tr Y + \tr Z = 2 \tr \Sigma \leq 2t$. Also note that
\begin{equation}
\left(\begin{array}{cc}Y&A\\A^\dagger&Z\end{array}\right)=\left(\begin{array}{cc}U\Sigma U^\dagger&U\Sigma V^\dagger\\V\Sigma U^\dagger&V\Sigma V^\dagger\end{array}\right)=\left(\begin{array}{c}U\\V\end{array}\right)\Sigma\left(\begin{array}{cc}U^\dagger&V^\dagger\end{array}\right)\geq 0
\end{equation}
which concludes the proof.
\end{proof}
In the light of lemma \ref{lemma:fazel}, we obtain the following optimization problem for the minimization of $\aver{\mathscr{D}}$:
\begin{align}
\mbox{minimize}\quad & \tfrac{1}{2}\left(\tr Y + \tr Z\right) \nonumber\\
\mbox{subject to}\quad&  \left(\begin{array}{cc}Y&\displaystyle\bigoplus_{i=1}^I\tfrac{\pi_i}{2}\left[\mathcal{C}(\rho_i)-\brho_i\right]\\
\displaystyle\bigoplus_{i=1}^I\tfrac{\pi_i}{2}\left[\mathcal{C}(\rho_i)-\brho_i\right]&Z\end{array}\right)\label{eq:nomeeieesse} \geq 0\\
\quad&\mathcal{C} \in \mathcal{C}_{\rm d}^{\rm set} \nonumber
\end{align}
where we have redeemed from the epigraph form by eliminating the variable $t$, in such a way that the objective function becomes a linear function of the matrix variables $Y$ and $Z$.

The optimization problem above can already be recognized as a SDP. To make this more explicit (and to provide a useful form for its numerical implementation), we now reexpress problem~\eqref{eq:nomeeieesse} in the inequality form. This can be done by noting the following:
\begin{enumerate}
\item Due to the positivity requirement, $Y$ and $Z$ can be restricted to the set of hermitian matrices of dimension $I{\rm d}$. As such, they can be expanded in the bases introduced in Sec.~\ref{sec:herm_basis},
    \begin{equation}
    Y=\sum_{\alpha=1}^{(I{\rm d})^2}y_\alpha H_{I{\rm d}}^\alpha\quad\mbox{and}\quad Z=\sum_{\alpha=1}^{(I{\rm d})^2}z_\alpha H_{I{\rm d}}^\alpha\,.
    \end{equation}
    Such a choice, reduces the objective function $\tfrac{1}{2}\left(\tr Y + \tr Z\right)$ to the form $I{\rm d}(y_1+z_1)/2$.
\item The constraint $\mathcal{C}\in\mathcal{C}_{\rm d}^{\rm set}$, reformulated in terms of the expansion coefficients $x_{\mu,\nu}$ of the Choi matrix $\mathfrak{C}$ in the basis $\{H_{\rm d}^\mu\otimes H_{\rm d}^\nu\}_{\mu,\nu=1}^{\rm d^2}$, was shown in Sec.~\ref{sec:constraintQ} to be equivalent to
    \begin{equation}
\frac{\openone_{{\rm d}^2}}{\rm d}+\sum_{\substack{\mu=1\\\nu=2}}^{{\rm d}^2} x_{\mu,\nu} H_{\rm d}^{\mu}\otimes H_{\rm d}^\nu\geq 0\,,
    \end{equation}
in the case of $\mathcal{C}_{\rm d}^{\rm set}=\mathcal{Q}_{\rm d}^{\rm set}$, (i.e., when the optimization runs over the set of CPTP maps); and in Sec.~\ref{sec:constraintB}, to be equivalent to
    \begin{equation}
\frac{\openone_{2{\rm d}^2}}{\rm d}+\sum_{\substack{\mu=1\\\nu=2}}^{{\rm d}^2} x_{\mu,\nu} \left[\left(H_{\rm d}^{\mu}\otimes H_{\rm d}^\nu\right)\oplus\left(H_{\rm d}^{\mu}\otimes {H_{\rm d}^\nu}^{\sf T}\right)\right]\geq 0\,,
    \end{equation}
in the case of $\mathcal{C}_{\rm d}^{\rm set}=\widetilde{\mathcal{B}}_{\rm d}^{\rm set}$, (i.e., when the optimization runs over a certain subset of the set of CPTP maps which is, in general, a superset of the set of EBTP maps).

\item Combining Eqs.~\eqref{eq:jam_inv} and~\eqref{eq:choiexpandTPin}, the density matrix $\mathcal{C}(\rho_i)$ can also be expressed in terms of the coefficents $x_{\mu,\nu}$,
    \begin{equation}
    \mathcal{C}(\rho_i)=\frac{\openone_{\rm d}}{\rm d}+\sum_{\substack{\mu=1\\\nu=2}}^{{\rm d}^2} x_{\mu,\nu}\tr\left(\rho_i^{\sf T}H_{\rm d}^\mu\right)H_{\rm d}^\nu\,.
    \end{equation}

\end{enumerate}

Using these three facts, problem~\eqref{eq:nomeeieesse} can be rewritten as
\begin{align}
\mbox{minimize}&\quad \frac{I{\rm d}}{2}(y_1 + z_1) \quad \nonumber\\ \mbox{subject to}&\quad F_0+\sum_{\substack{\mu=1\\\nu=2}}^{{\rm d}^2} \sum_{\alpha=1}^{(I{\rm d})^{2}}\widetilde{F}(x_{\mu,\nu},y_\alpha,z_\alpha)
 \geq 0 \label{eq:SDP_ineqform_D}
\end{align}
where $\widetilde{F}(x_{\mu,\nu},y_\alpha,z_\alpha)$ is a \emph{linear function} of the variables $x_{\mu,\nu}$, $y_\alpha$, and $z_\alpha$, as explicitly shown below
\begin{multline}
\widetilde{F}(x_{\mu,\nu},y_\alpha,z_\alpha)\defeq\\
\left(
\begin{array}{c@{}c@{}c}
\begin{array}{|c:c|}\hline
y_\alpha H_{I{\rm d}}^{\alpha}  & x_{\mu,\nu}\displaystyle\bigoplus_{i=1\BB}^{I\TT}\frac{\pi_i}{2}\tr\left(\rho_i^{\sf T} H_{\rm d}^\mu\right)H_{\rm d}^\nu\\\hdashline
x_{\mu,\nu}\displaystyle\bigoplus_{i=1 \BB}^{I\TT}\frac{\pi_i}{2}\tr\left(\rho_i^{\sf T} H_{\rm d}^\mu\right)H_{\rm d}^\nu  & z_\alpha H_{I{\rm d}}^\alpha\\\hline
\end{array}&&\\
&
\begin{array}{|c|}\hline
x_{\mu,\nu} H_{\rm d}^{\mu}\otimes H_{\rm d}^\nu\TT \BB\\\hline
\end{array}&\\
&&\begin{array}{c}
\colCell{gray}{x_{\mu,\nu} H_{\rm d}^{\mu}\otimes {H_{\rm d}^\nu}^{\sf T}}\TT\BB\\
\end{array}
\end{array}
\right)\,,
\end{multline}
and $F_0$ is a constant matrix given by
\begin{equation}
F_0\defeq\left(
\begin{array}{c@{}c@{}c}
\begin{array}{|c:c|}\hline
 \TT \BB \text{\large{0}}_{I{\rm d}}&\displaystyle\bigoplus_{i=1 \BB}^{I\TT}\frac{\pi_i}{2}\left(\frac{\openone_{\rm d}}{\rm d}-\brho_i\right)\\\hdashline
\displaystyle\bigoplus_{i=1\BB}^{I\TT}\frac{\pi_i}{2}\left(\frac{\openone_{\rm d}}{\rm d}-\brho_i\right) & \text{\large{0}}_{I{\rm d}} \\\hline
\end{array}& &\\
 &
\begin{array}{|c|}\hline
\frac{\openone_{{\rm d}^2}}{\rm d}\TT \BB\\\hline
\end{array}&\\
&&\begin{array}{c}
\colCell{gray}{\frac{\openone_{{\rm d}^2}}{\rm d}}\TT \BB\\
\end{array}
\end{array}
\right)\,,
\end{equation}
where the highlighted blocks are only considered if $\mathcal{C}_{\rm d}^{\rm set}=\widetilde{\mathcal{B}}_{\rm d}^{\rm set}$.

\subsection{The Hilbert-Schmidt Distance}

As already mentioned, we have been unable to write the minimization of the metric $\aver{\H}_1$ as a SDP. Essentially, the difficulty arises in dealing with the square-root in the definition of this quantity. Indeed, the minimization of the squared version $\aver{\H^2}_1$ --- which is also a metric (cf. Theorem~\ref{thm:metricsequence} and Appendix~\ref{app:metricHsq}) --- can be cast as a SDP. This is demonstrated in Sec.~\ref{sec:minHav}.

In Sec.~\ref{sec:minHalt}, we show how the minimization of $\aver{\H^2}_2$ can also be written as a SDP. Since $\aver{\H^2}_2$ and $\aver{\H}_2$ are monotonically related via $\aver{\H}_2=\sqrt{\aver{\H^2}_2}$, the same SDP will also provide the quantum operation minimizing $\aver{\H}_2$.

Finally, in Sec.~\ref{sec:equivalenceHaltav}, we prove that if every atomic transformation is to be implemented with the same priority, i.e. $\pi_i=1/I$ for $i=1,\dots,I$, then it is irrelevant whether we minimize $\aver{\H^2}_1$ or $\aver{\H}_2$, since both problems lead to the same optimal quantum operation.

\subsubsection{Minimizing $\aver{\H^2}_1$}\label{sec:minHav}
From the definition~\eqref{eq:Hvectorized} of $\H$, we can write $\aver{\H^2}_1$ as
\begin{align}
\aver{\H^2[\mathcal{C}(\rho_i),\brho_i]}_1&=\sum_{i=1}^I \pi_i \left\{{\rm vec}\left[\mathcal{C}(\rho_i)-\brho_i\right]\right\}^\dagger{\rm vec}\left[\mathcal{C}(\rho_i)-\brho_i\right]\nonumber\\
 &= \bm{v}_\mathcal{C}^\dagger\,\Pi\; \bm{v}_\mathcal{C}\,,
\end{align}
where we have defined the diagonal matrix $\Pi\defeq\bigoplus_{i=1}^I\pi_i\openone_{{\rm d}^2}$ and the vector
\begin{equation}
\bm{v}_\mathcal{C}\defeq
\left(\begin{array}{c}
{\rm vec}\left[\mathcal{C}(\rho_1)-\brho_1\right]\\
{\rm vec}\left[\mathcal{C}(\rho_2)-\brho_2\right]\\
\vdots\\
{\rm vec}\left[\mathcal{C}(\rho_I)-\brho_I\right]
\end{array}\right)\,.
\end{equation}
With these provisions, the inequality constraint of Eq.~\eqref{eq:constassemble} assumes the form
 \begin{equation}\label{eq:prepSchurineq}
t-\bm{v}_\mathcal{C}^\dagger\;\Pi\;\bm{v}_\mathcal{C}\geq 0\,,
 \end{equation}
and can be reformulated as a linear matrix inequality with the aid of the following lemma:
\begin{lemma}[Schur complement condition for positive semidefiniteness~\cite{05Zhang,04Boyd}]\label{lemma:schur}
Let $A$ be a Hermitian matrix partitioned as
\begin{equation}
A=\left(\begin{array}{cc}A_{11}&A_{12}\\A_{12}^\dagger& A_{22}\end{array}\right)\,,
\end{equation}
in which $A_{11}$ is square and nonsingular. The \emph{Schur complement of $A$ with respect to $A_{11}$} is defined as ${\rm Sch}(A|A_{11})\defeq A_{22}-A_{12}^\dagger A_{11}^{-1} A_{12}$, and we have $A\geq 0$ if and only if $A_{11}>0$ and ${\rm Sch}(A|A_{11}) \geq 0$.
\end{lemma}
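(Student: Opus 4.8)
The plan is to prove the equivalence by exhibiting a block congruence (``block $LDL^\dagger$'') factorization of $A$, which reduces the positivity question for $A$ to the positivity of a block-diagonal matrix. First I would record that, since $A$ is Hermitian, its diagonal block $A_{11}$ is Hermitian and hence so is $A_{11}^{-1}$ (which exists by the nonsingularity hypothesis), while the off-diagonal blocks satisfy $A_{21}=A_{12}^\dagger$. Writing $S\defeq{\rm Sch}(A|A_{11})=A_{22}-A_{12}^\dagger A_{11}^{-1}A_{12}$, the key algebraic identity I would establish is
\begin{equation}
A=\begin{pmatrix}\openone&0\\A_{12}^\dagger A_{11}^{-1}&\openone\end{pmatrix}\begin{pmatrix}A_{11}&0\\0&S\end{pmatrix}\begin{pmatrix}\openone&A_{11}^{-1}A_{12}\\0&\openone\end{pmatrix}\,,
\end{equation}
which I would verify by direct block multiplication. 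The Hermiticity of $A_{11}^{-1}$ guarantees that the third factor is exactly the conjugate-transpose of the first, so the right-hand side is of the form $LDL^\dagger$ with $L$ unit lower-triangular (hence invertible, $\det L=1$) and $D=A_{11}\oplus S$ block-diagonal.

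With this factorization in hand, the argument is short. Congruence by an invertible matrix preserves positive semidefiniteness: for any invertible $L$ one has $D\geq 0\iff LDL^\dagger\geq 0$, because $x^\dagger(LDL^\dagger)x=(L^\dagger x)^\dagger D(L^\dagger x)$ and $L^\dagger$ is a bijection on the underlying space. Hence $A\geq 0$ iff $D\geq 0$. Testing $D=A_{11}\oplus S$ against vectors of the form $(x,0)^{\sf T}$ and $(0,y)^{\sf T}$ shows that $D\geq 0$ holds iff both $A_{11}\geq 0$ and $S\geq 0$. Finally, I would invoke the nonsingularity hypothesis to upgrade $A_{11}\geq 0$ to $A_{11}>0$, since a positive semidefinite matrix with no zero eigenvalue is positive definite. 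This yields exactly $A\geq 0\iff(A_{11}>0\text{ and }{\rm Sch}(A|A_{11})\geq 0)$.

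The steps here are essentially all routine linear algebra, so I do not anticipate a genuine obstacle; the only point demanding a little care is the verification of the factorization, where one must track the block products and use that $A_{11}^{-1}$ is Hermitian to identify the outer factors as mutual conjugate-transposes. A secondary point worth stating explicitly, rather than leaving implicit, is why congruence by $L$ preserves the positive semidefinite cone in \emph{both} directions --- the reverse implication relies on $L^\dagger$ being invertible, which is precisely where the unit-triangular (hence nonsingular, regardless of the blocks) structure of $L$ is used.
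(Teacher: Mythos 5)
Your proof is correct and complete: the block factorization is verified correctly, the Hermiticity of $A_{11}^{-1}$ is used exactly where needed to identify the outer factors as mutual conjugate-transposes, and you rightly flag that the reverse direction of the congruence argument needs the invertibility of $L^\dagger$. Note, however, that the paper does not prove this lemma at all --- it is stated as a known result with citations to Zhang's monograph on the Schur complement and to Boyd and Vandenberghe --- so there is no in-paper argument to compare against. Your block $LDL^\dagger$ congruence decomposition, with $D=A_{11}\oplus{\rm Sch}(A|A_{11})$, followed by testing $D$ against block vectors and upgrading $A_{11}\geq 0$ to $A_{11}>0$ via the nonsingularity hypothesis, is precisely the standard proof found in those references; you have simply supplied the argument where the paper left a citation.
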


Because the inequality~\eqref{eq:prepSchurineq} is precisely  ${\rm Sch}(M(\mathcal{C},t),\Pi^{-1})\geq 0$ for a matrix $M(\mathcal{C},t)$ defined as
\begin{equation}
M(\mathcal{C},t)\defeq\left(\begin{array}{cc}\Pi^{-1} & \bm{v}_\mathcal{C}\\ \bm{v}_\mathcal{C}^\dagger & t \end{array}\right)\,,
\end{equation}
and because $\Pi^{-1}>0$, Eq.~\eqref{eq:prepSchurineq} can be reexpressed as $M(\mathcal{C},t) \geq 0$ and we arrive at the following optimization problem:
\begin{align}
\mbox{minimize}&\quad t \nonumber\\
\mbox{subject to}&\quad \left(\begin{array}{cc}\Pi^{-1} & \bm{v}_\mathcal{C}\\ \bm{v}_\mathcal{C}^\dagger & t \end{array}\right)\geq 0 \label{eq:nomeeieesseHav}\\
&\quad \mathcal{C}\in\mathcal{C}_d^{\rm set}\,.\nonumber
\end{align}

Just as done in the last section, by adopting the Choi matrix representation of $\mathcal{C}$ and expanding it on a tensor product basis, we obtain a SDP in the inequality form:
\begin{align}
\mbox{minimize}&\quad t  \nonumber\\
\mbox{subject to}&\quad F_0+\sum_{\substack{\mu=1\\\nu=2}}^{{\rm d}^2} x_{\mu,\nu}F_{\mu,\nu} + t T \geq 0\,, \label{eq:SDP_ineqform_H}
\end{align}
where $T=(0_{I{\rm d}^2}\oplus 1\oplus 0_{{\rm d}^2})$ if the optimization is taken over the set $\mathcal{Q}_{\rm d}^{\rm set}$, or $T=(0_{I{\rm d}^2}\oplus 1\oplus 0_{2{\rm d}^2})$ if optimizing over the set  $\widetilde{\mathcal{B}}_{\rm d}^{\rm set}$. The matrices $F_{\mu,\nu}$ and $F_0$ are shown below:
\begin{equation}\label{eq:FmunuHsq1}
F_{\mu,\nu}\defeq
\left(
\begin{array}{c@{}c@{}c}
\begin{array}{|ccc:c|}\hline
&&&\tr\left(\rho_1^{\sf T}H_{\rm d}^\mu\right)\bm{u}_\nu\TT\\
&\text{\large{0}}_{I{\rm d}^2}&&\vdots\\
&&&\tr\left(\rho_I^{\sf T}H_{\rm d}^\mu\right)\bm{u}_\nu\BB\\\hdashline
\tr\left(\rho_1^{\sf T}H_{\rm d}^\mu\right)\bm{u}_\nu^\dagger&\cdots&\tr\left(\rho_I^{\sf T}H_{\rm d}^\mu\right)\bm{u}_\nu^\dagger&0\TT\BB\\\hline
\end{array}&&\\
&
\begin{array}{|c|}\hline
H_{\rm d}^\mu\otimes H_{\rm d}^{\nu}\TT \BB\\\hline
\end{array}&\\
&&\begin{array}{c}
\colCell{gray}{H_{\rm d}^\mu\otimes {H_{\rm d}^{\nu}}^{\sf T}}\TT \BB\\
\end{array}
\end{array}
\right)
\,,
\end{equation}
\begin{equation}
F_0\defeq
\left(
\begin{array}{c@{}c@{}c}
\begin{array}{|ccc:c|}\hline
&&&{\rm vec}\,\left(\frac{\openone_{\rm d}}{\rm d}-\brho_1\right)\TT\\
&\Pi^{-1}&&\vdots\\
&&&{\rm vec}\,\left(\frac{\openone_{\rm d}}{\rm d}-\brho_I\right)\BB\\\hdashline
\left[{\rm vec}\left(\frac{\openone_{\rm d}}{\rm d}-\brho_1\right)\right]^\dagger&\cdots&\left[{\rm vec}\left(\frac{\openone_{\rm d}}{\rm d}-\brho_I\right)\right]^\dagger\TT\BB&0\\\hline
\end{array}&&\\
&
\begin{array}{|c|}\hline
\frac{\openone_{{\rm d}^2}}{\rm d}\TT \BB\\\hline
\end{array}&\\
&&\begin{array}{c}
\colCell{gray}{\frac{\openone_{{\rm d}^2}}{\rm d}\TT \BB}\\
\end{array}
\end{array}
\right)
\,,
\end{equation}
where $\bm{u}_\nu$ is shorthand notation for ${\rm vec}\,H_{\rm d}^\nu$ and the highlighted blocks only occur if the optimization is taken over $\widetilde{\mathcal{B}}_{\rm d}^{\rm set}$.

\subsubsection{Minimizing $\aver{\H}_2$ or $\aver{\H^2}_2$ }\label{sec:minHalt}
As argued before, the monotonicity between $\aver{\H^2}_2$ and $\aver{\H}_2$ guarantees that the quantum operation $\mathcal{C}$ minimizing $\aver{\H^2[\mathcal{C}(\rho_i),\brho_i]}_2$ also minimizes $\aver{\H[\mathcal{C}(\rho_i),\brho_i]}_2$. In what follows, we derive a SDP for the minimization of $\aver{\H^2[\mathcal{C}(\rho_i),\brho_i]}_2$.

Once again, we employ the definition of $\H$ from Eq.~\eqref{eq:Hvectorized} to write $\aver{\H^2[\mathcal{C}(\rho_i),\brho_i]}_2=\bm{w}_\mathcal{C}^\dagger\bm{w}_\mathcal{C}$,
where $\bm{w}_\mathcal{C}$ is given by
\begin{equation}
\bm{w}_\mathcal{C}\defeq{\rm vec}\,\bigoplus_{i=1}^I\pi_i\left[\mathcal{C}(\rho_i)-\brho_i\right]\,.
\end{equation}

Using $\bm{w}_\mathcal{C}^\dagger\bm{w}_\mathcal{C}$ in the inequality constraint of problem~\eqref{eq:constassemble}, we find $t-\bm{w}_\mathcal{C}^\dagger\bm{w}_\mathcal{C}\geq 0$, which, by the application of Lemma~\ref{lemma:schur}, leads to the following equivalent optimization problem:

\begin{align}
\mbox{minimize}&\quad t \nonumber\\
\mbox{subject to}&\quad \left(\begin{array}{cc}\openone_{(I{\rm d})^2} & \bm{w}_\mathcal{C}\\ \bm{w}_\mathcal{C}^\dagger & t \end{array}\right)\geq 0 \label{eq:nomeeieesseHalt}\\
&\quad \mathcal{C}\in\mathcal{C}_d^{\rm set}\,.\nonumber
\end{align}

Reexpressing this problem in terms of the suitable expansions of the Choi matrix of $\mathcal{C}$, we arrive at a SDP of the same form given in Eq.~\eqref{eq:SDP_ineqform_H}, but with different values of $T$, $F_{\mu,\nu}$ and $F_0$. In this case, $T=(0_{(I{\rm d})^2}\oplus 1\oplus 0_{{\rm d}^2})$ if the optimization is taken over the set $\mathcal{Q}_{\rm d}^{\rm set}$, and $T=(0_{(I{\rm d})^2}\oplus 1\oplus 0_{2{\rm d}^2})$ if the feasible set is chosen to be $\widetilde{\mathcal{B}}_{\rm d}^{\rm set}$. The matrices $F_{\mu,\nu}$ and $F_0$, in turn, are defined below according to the convention that the highlighted blocks are to be considered only if the optimization is taken over $\widetilde{\mathcal{B}}_{\rm d}^{\rm set}$:

\begin{equation}\label{eq:FmunuH2}
F_{\mu,\nu}\defeq\left(
\begin{array}{c@{}c@{}c}
\begin{array}{|c:c|}\hline
\text{\large{0}}_{(I{\rm d})^2}  & \displaystyle{\rm vec}\bigoplus_{i=1\BB}^{I\TT}\pi_i\tr\left(\rho_i^{\sf T}H_{\rm d}^\mu\right)H_{\rm d}^\nu\\\hdashline
\displaystyle\left[{\rm vec}\bigoplus_{i=1}^{I}\pi_i\tr\left(\rho_i^{\sf T}H_{\rm d}^\mu\right)H_{\rm d}^\nu\right]^{\dagger \TT}_{\BB} & 0\\\hline
\end{array}&&\\
&
\begin{array}{|c|}\hline
H_{\rm d}^{\mu}\otimes H_{\rm d}^\nu\TT \BB\\\hline
\end{array}&\\
&&\begin{array}{c}
\colCell{gray}{H_{\rm d}^{\mu}\otimes {H_{\rm d}^\nu}^{\sf T}}\TT\BB\\
\end{array}
\end{array}
\right)\,,
\end{equation}

\begin{equation}
F_0\defeq\left(
\begin{array}{c@{}c@{}c}
\begin{array}{|c:c|}\hline
 \TT \BB \openone_{(I{\rm d})^2}&\displaystyle{\rm vec}\bigoplus_{i=1 \BB}^{I\TT}\pi_i\left(\frac{\openone_{\rm d}}{\rm d}-\brho_i\right)\\\hdashline
\displaystyle\left[{\rm vec}\bigoplus_{i=1}^{I}\pi_i\left(\frac{\openone_{\rm d}}{\rm d}-\brho_i\right)\right]^{\dagger \TT}_{\BB} & 0 \\\hline
\end{array}& &\\
 &
\begin{array}{|c|}\hline
\frac{\openone_{{\rm d}^2}}{\rm d}\TT \BB\\\hline
\end{array}&\\
&&\begin{array}{c}
\colCell{gray}{\frac{\openone_{{\rm d}^2}}{\rm d}}\TT \BB\\
\end{array}
\end{array}
\right)\,.
\end{equation}
While the minimal $t$ satisfying the constraint of Eq.~\eqref{eq:SDP_ineqform_H} gives the optimal value of $\aver{\H^2}_2$, its square root gives the optimal value of $\aver{\H}_2$.

\subsubsection{Monotonicity between $\aver{\H^2}_1$ and $\aver{\H}_2$}\label{sec:equivalenceHaltav}

In Sec.~\ref{sec:HSD}, we saw that the squared Hilbert-Schmidt distance between two density matrices $\rho$ and $\sigma$ can be expressed as the sum of the eigenvalues of $\rho-\sigma$ squared, that is, $\tr\left(\Lambda^2\right)$ for $\Lambda$ the diagonal matrix of elements given by the eigenvalues of $\rho-\sigma$. As a result, we can write
\begin{equation}
\aver{\H^2[\mathcal{C}(\rho_i),\brho_i]}_1=\sum_{i=1}^I\pi_i \tr\left(\Lambda_i^2\right)\,,
\end{equation}
for $\Lambda_i$ the diagonal matrix of eigenvalues of $\mathcal{C}(\rho_i)-\brho_i$. Likewise, a simple computation gives
\begin{equation}
\aver{\H[\mathcal{C}(\rho_i),\brho_i]}_2=\sqrt{\sum_{i=1}^I\pi_i^2\tr\left(\Lambda_i^2\right)}\,.
\end{equation}
Due to the different exponents of $\pi_i$ in each equation, the relationship between $\aver{\H^2}_1$ and $\aver{\H}_2$ is not, in general, monotonic. However, it is easy to see that monotonicity takes place if every $\pi_i$ is equally chosen to be $1/I$. In this case, the priorities can be factored out of the sum to give $x/I$ in the first case and $\sqrt{x}/I$ in the second, where $x\defeq\sum_{i=1}^I\tr\left(\Lambda_i^2\right)$. As a result, $\aver{\H^2[\mathcal{C}(\rho_i),\brho_i]}_1$ and $\aver{\H[\mathcal{C}(\rho_i),\brho_i]}_2$ are minimized with the same $\mathcal{C}$ in the case of uniform priorities.

Notice that this is a practically useful observation. As a quick glance at Eqs.~\eqref{eq:FmunuHsq1} and~\eqref{eq:FmunuH2} shows, the dimension of the matrix inequality constraint arising from the optimization of $\aver{\H^2}_1$ scales linearly with $I$, while in the case of $\aver{\H}_2$ the scaling is quadratic. It then follows that the minimizer of $\aver{\H}_2$ can be obtained with a quadratically smaller computational cost if $\pi_i=1/I$ for all $i$.

\subsection{The Spectral Distance}

For $\aver{\mathscr{D}}=\aver{\O}_2$, the inequality constraint of problem~\eqref{eq:constassemble} can be written in terms of the spectral norm, as follows:
\begin{equation}
\left\|\bigoplus_{i=1}^I\pi_i\left[\mathcal{C}(\rho_i)-\brho_i\right]\right\|\leq t\,.
\end{equation}
A well-known reformulation of this type of inequality is given by the following lemma:

\begin{lemma}\label{lemma:sp_dist}
For any {\rm d}-dimensional matrix $A$ and scalar $t \geq 0$, $\|A\|\leq t$ if and only if $A^\dagger A \leq t^2\openone_{\rm d}$
\end{lemma}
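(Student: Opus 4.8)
The plan is to prove the equivalence $\|A\|\leq t \iff A^\dagger A \leq t^2\openone_{\rm d}$ by relating both conditions to the singular values of $A$. The natural tool is the characterization of the spectral norm already established in Eq.~\eqref{eq:O_SVD}, namely that $\|A\|=\sqrt{\lambda_{\sf max}[A^\dagger A]}$, together with the definition of the PSD ordering from Definition~\ref{def:PSD}. The whole argument reduces to the elementary fact that $\lambda_{\sf max}[A^\dagger A]\leq t^2$ if and only if every eigenvalue of $A^\dagger A$ is at most $t^2$.

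First I would recall that $A^\dagger A$ is Hermitian and positive semidefinite, so it admits a spectral decomposition $A^\dagger A = W \Sigma^2 W^\dagger$ with $W$ unitary and $\Sigma^2$ a diagonal matrix whose entries are the (non-negative) eigenvalues $\sigma_i(A)^2$. With this in hand, the condition $A^\dagger A \leq t^2\openone_{\rm d}$ is, by Definition~\ref{def:PSD} applied to the Hermitian matrix $t^2\openone_{\rm d}-A^\dagger A$, equivalent to requiring that all eigenvalues of $t^2\openone_{\rm d}-A^\dagger A$ be non-negative. Since $\openone_{\rm d}$ commutes with everything, these eigenvalues are precisely $t^2-\sigma_i(A)^2$, so the matrix inequality holds if and only if $\sigma_i(A)^2\leq t^2$ for every $i$.

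Next I would observe that $\sigma_i(A)^2\leq t^2$ for all $i$ is equivalent to $\max_i \sigma_i(A)^2\leq t^2$, i.e.\ $\lambda_{\sf max}[A^\dagger A]\leq t^2$. By Eq.~\eqref{eq:O_SVD} this is exactly $\|A\|^2\leq t^2$. Taking the non-negative square root (legitimate because $t\geq 0$ by hypothesis and $\|A\|\geq 0$ by norm axiom (N1)) yields $\|A\|\leq t$, completing the chain of equivalences in both directions.

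There is no serious obstacle here; the result is essentially a restatement of the relationship between the operator norm and the largest eigenvalue of $A^\dagger A$ combined with the definition of the semidefinite ordering. The only point requiring a modicum of care is ensuring the square-root step is reversible, which is guaranteed by the sign hypothesis $t\geq 0$ --- without it the equivalence could fail for the trivial reason that $\|A\|\leq t$ would be impossible while $A^\dagger A\leq t^2\openone_{\rm d}$ might still constrain the spectrum. Since everything factors through the spectrum of the single Hermitian PSD matrix $A^\dagger A$, the argument is short and self-contained given Eq.~\eqref{eq:O_SVD} and Definition~\ref{def:PSD}.
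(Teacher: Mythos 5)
Your proof is correct and takes essentially the same approach as the paper's: both arguments reduce the matrix inequality $A^\dagger A \leq t^2\openone_{\rm d}$ to statements about the spectrum of $A^\dagger A$ and invoke $\|A\|=\sqrt{\lambda_{\sf max}\left(A^\dagger A\right)}$. The only cosmetic difference is that the paper treats the two implications separately (diagonalization for one direction, a quadratic-form test at the top eigenvector for the other), while you run a single chain of eigenvalue equivalences; the mathematical content is identical.
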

\begin{proof}
First assume $\|A\|\defeq\sqrt{\lambda_{\rm max}(A^\dagger A)}\leq t$, then $\lambda_{\rm max}(A^\dagger A)\leq t^2$ which implies that the diagonal form of $A^\dagger A$ satisfies $V^\dagger A^\dagger A V \leq t^2\openone_{\rm d}$. This is trivially equivalent to $A^\dagger A \leq t^2\openone_{\rm d}$. Conversely, assume $A^\dagger A - t^2\openone_{\rm d}\leq 0$ and let $\bm{v}_{\rm max}$ be a normalized eigenvector of $A^\dagger A$ such that $A^\dagger A \bm{v}_{\rm max} = \lambda_{\rm max}(A^\dagger A)\bm{v}_{\rm max}\defeqinv\|A\|^2\bm{v}_{\rm max}$. Then $\bm{v}_{\rm max}^\dagger\left(A^\dagger A-t^2\openone_{\rm d}\right)\bm{v}_{\rm max}=\|A\|^2-t^2=(\|A\|+t)(\|A\|-t)\leq 0$, which implies $\|A\|-t\leq 0$.
\end{proof}
So, the minimization~\eqref{eq:constassemble} becomes
\begin{align}
\mbox{minimize}\quad & t \nonumber\\
\mbox{subject to}\quad&  t\openone_{I{\rm d}} -\left\{\bigoplus_{i=1}^I\pi_i\left[\mathcal{C}(\rho_i)-\brho_i\right]\right\}\; t^{-1}\; \left\{\bigoplus_{i=1}^I\pi_i\left[\mathcal{C}(\rho_i)-\brho_i\right] \right\} \geq 0\\
\quad&\mathcal{C} \in \mathcal{C}_{\rm d}^{\rm set}\,, \nonumber
\end{align}
which, after recognition of the Schur complement (cf. Lemma \ref{lemma:schur}) on the lhs of the inequality constraint, is equivalent to
\begin{align}
\mbox{minimize}\quad & t \nonumber\\
\mbox{subject to}\quad&  \left(\begin{array}{cc}t\openone_{I{\rm d}}&\displaystyle\bigoplus_{i=1}^I\pi_i\left[\mathcal{C}(\rho_i)-\brho_i\right]\\
\displaystyle\bigoplus_{i=1}^I\pi_i\left[\mathcal{C}(\rho_i)-\brho_i\right]&t\openone_{I{\rm d}}\end{array}\right) \geq 0\\
\quad&\mathcal{C} \in \mathcal{C}_{\rm d}^{\rm set}\,. \nonumber
\end{align}

Following the same protocol from previous sections, namely, reexpressing the map $\mathcal{C}$ in terms of a suitable expansion of its Choi matrix, the problem is reduced to the inequality form given in Eq.~\eqref{eq:SDP_ineqform_H} with $T=\openone_{2I{\rm d}}\oplus 0_{{\rm d}^2}$ for $\mathcal{C}_{\rm d}=\mathcal{Q}_{\rm d}^{\rm set}$ and $T=\openone_{2I{\rm d}}\oplus 0_{2{\rm d}^2}$ for $\mathcal{C}_{\rm d}=\widetilde{\mathcal{B}}_{\rm d}^{\rm set}$. Furthermore, the matrices $F_{\mu,\nu}$ and $F_0$ are as follows:

\begin{equation}
F_{\mu,\nu}\defeq\left(
\begin{array}{c@{}c@{}c}
\begin{array}{|c:c|}\hline
\text{\large{0}}_{I{\rm d}}  & \displaystyle\bigoplus_{i=1\BB}^{I\TT}\pi_i\tr\left(\rho_i^{\sf T}H_{\rm d}^\mu\right)H_{\rm d}^\nu\\\hdashline
\displaystyle\bigoplus_{i=1\BB}^{I\TT}\pi_i\tr\left(\rho_i^{\sf T}H_{\rm d}^\mu\right)H_{\rm d}^\nu  & \text{\large{0}}_{I{\rm d}}\\\hline
\end{array}&&\\
&
\begin{array}{|c|}\hline
H_{\rm d}^{\mu}\otimes H_{\rm d}^\nu\TT \BB\\\hline
\end{array}&\\
&&\begin{array}{c}
\colCell{gray}{H_{\rm d}^{\mu}\otimes {H_{\rm d}^\nu}^{\sf T}}\TT\BB\\
\end{array}
\end{array}
\right)\,,
\end{equation}

\begin{equation}
F_0\defeq\left(
\begin{array}{c@{}c@{}c}
\begin{array}{|c:c|}\hline
 \TT \BB \text{\large{0}}_{I{\rm d}}&\displaystyle\bigoplus_{i=1 \BB}^{I\TT}\pi_i\left(\frac{\openone_{\rm d}}{\rm d}-\brho_i\right)\\\hdashline
\displaystyle\bigoplus_{i=1\BB}^{I\TT}\pi_i\left(\frac{\openone_{\rm d}}{\rm d}-\brho_i\right) & \text{\large{0}}_{I{\rm d}} \\\hline
\end{array}& &\\
 &
\begin{array}{|c|}\hline
\frac{\openone_{{\rm d}^2}}{\rm d}\TT \BB\\\hline
\end{array}&\\
&&\begin{array}{c}
\colCell{gray}{\frac{\openone_{{\rm d}^2}}{\rm d}}\TT \BB\\
\end{array}
\end{array}
\right)\,,
\end{equation}
where the highlighted blocks are considered only in the case $\mathcal{C}_{\rm d}^{\rm set}=\widetilde{\mathcal{B}}_{\rm d}^{\rm set}$.

\section{Maximizing Closeness}\label{sec:max_close}

In this section we continue to derive SDP expressions for the control problem of interest, but here $\aver{\mathscr{D}}$ is taken to be a measure of closeness between sequences of density matrices, as opposed to the metrics considered in the previous section. In particular, we shall discuss the maximization of the fidelity-like quantities $\aver{\F}_{1,2}$ and $\aver{\Fn}_{1,2}$, introduced in the last chapter.

To account for this ``inversion'' on how distances are measured, the general optimization problem~\eqref{eq:gen_problem_ch4} must have the minimization replaced with a maximization,
\begin{equation}\label{eq:gen_problem_closeness}
\max_{\mathcal{C} \in \mathcal{C}_{\rm d}^{\rm set}}\aver{\mathscr{D}[\mathcal{C}(\rho_i),\brho_i]}\,.
\end{equation}
The concavity properties of $\aver{\F}_{1,2}$ and $\aver{\Fn}_{1,2}$ (along with the convexity of the constraint $\mathcal{C}\in\mathcal{C}_{\rm d}^{\rm set}$), ensure that for any choice of $\aver{\mathscr{D}}\in\{\aver{\F}_{1,2}, \aver{\Fn}_{1,2}\}$ we obtain a convex optimization. However, we have been unsuccessful in providing a SDP formulation of these optimization problems in the general case of arbitrary priorities and mixed states. For this reason, in what follows we restrict to the maximization of the functions $\aver{\F_{\rm HS}}_{1,2}$, where \begin{equation}
\F_{\rm HS}[\mathcal{C}(\rho_i),\brho_i]\defeq\tr\left[\mathcal{C}(\rho_i),\brho_i\right]\,,
\end{equation}
is the Hilbert-Schmidt inner product between the density matrices $\mathcal{C}(\rho_i)$ and $\brho_i$.

The motivation for this function as a replacement for $\F$ and $\Fn$ is as follows: If the target sequence is exclusively composed by pure states, then we have already seen that both $\F$ and $\Fn$ recover the Schumacher fidelity --- which is, in fact, $\F_{\rm HS}$. This implies that $\aver{\F_{\rm HS}[\mathcal{C}(\rho_i),\ket{\overline{\psi}_i}]}_1=\aver{\F[\mathcal{C}(\rho_i),\ket{\overline{\psi}_i}]}_1=
\aver{\Fn[\mathcal{C}(\rho_i),\ket{\overline{\psi}_i}]}_1$, where $\brho_i=\ket{\overline{\psi}_i}\!\bra{\overline{\psi}_i}$ for all $i$, and thus the SDPs we shall obtain actually maximize any of the three quantities above\footnote{Note, however, that we \emph{cannot} generally write $\aver{\F_{\rm HS}[\mathcal{C}(\rho_i),\ket{\overline{\psi}_i}]}_2=\aver{\F[\mathcal{C}(\rho_i),\ket{\overline{\psi}_i}]}_2=
\aver{\Fn[\mathcal{C}(\rho_i),\ket{\overline{\psi}_i}]}_2$. The direct sum taken over the pure target states multiplied by $\pi_i$, effectively turn them into mixed states of a larger dimensional Hilbert space. In this case, the equivalence between $\F_{\rm HS}$, $\F$ and $\Fn$ is generally invalid.}.

The situation is not as well justified when source and target sequences are mixed states. This can be anticipated by an evaluation of $\F_{\rm HS}$ against the criteria of the last chapter (cf. Table~\ref{table:metrical}). In its favor, $\F_{\rm HS}$ has the properties of symmetry, unitary invariance, compliance with Schumacher's fidelity, separate linearity, multiplicativity under tensor product and computational complexity $O({\rm d}^2)$. However, it fails to be monotonic even under projective measurements and we have not been able to determine a related metric --- in particular, none of $A[\F_{\rm HS}]$, $B[\F_{\rm HS}]$ or $C[\F_{\rm HS}]$ are metrics for the space of density matrices. Even more serious is the fact that, in general, $\F_{\rm HS}(\rho,\sigma)$ does not achieve its maximal value when $\rho=\sigma$.

In spite of this, the maximization of $\aver{\F_{\rm HS}}_{1,2}$ in the case of mixed states is motivated as follows: From the inequalities $\F_{\rm HS} \leq \F \leq \Fn$ (see Refs.~\cite{94Jozsa2315} and~\cite{08Miszczak} for a proof of the first and the second inequalities, respectively), one can easily show that
\begin{equation}
\aver{\F_{\rm HS}}_{1,2} \leq \aver{\F}_{1,2} \leq \aver{\Fn}_{1,2}\,,
\end{equation}
which establishes the maximum value of $\aver{\F_{\rm HS}}_{1,2}$ as a lower bound for both $\aver{\F}_{1,2}$ and $\aver{\Fn}_{1,2}$. In practice, if one is only interested in a control action that guarantees a minimal performance (measured in terms of $\aver{\F}_{1,2}$ or $\aver{\Fn}_{1,2}$), then the maximization of $\aver{\F_{\rm HS}}_{1,2}$ will provide such an operation if its optimal value is larger than the required performance.

\subsection{The Hilbert-Schmidt inner product}\label{sec:theHSIP}
We start noting the following similarity between the expressions of $\aver{\F_{\rm HS}}_1$ and $\aver{\F_{\rm HS}}_2$:
\begin{align}
\aver{\F_{\rm HS}[\mathcal{C}(\rho_i),\brho_i]}_1&=\sum_{i=1}^I\pi_i\tr\left[\mathcal{C}(\rho_i)\brho_i\right]\,,\label{eq:fhs1}\\
\aver{\F_{\rm HS}[\mathcal{C}(\rho_i),\brho_i]}_2&=\tr\left\{\left[\bigoplus_{i=1}^I\pi_i\mathcal{C}(\rho_i)\right]\left[\bigoplus_{j=1}^I\pi_j\brho_j\right]\right\}=\sum_{i=1}^I\pi_i^2\tr\left[\mathcal{C}(\rho_i)\brho_i\right]\,.\label{eq:fhs2}
\end{align}
Since the only difference between the two expressions is the exponent of the priorities $\pi_i$, in the rest of this section we restrict to assemble a SDP for the objective function $\aver{\F_{\rm HS}}_1$. Obviously, a SDP for $\aver{\F_{\rm HS}}_2$ can be readily obtained from our SDP for $\aver{\F_{\rm HS}}_1$  by simply replacing every occurrence of $\pi_i$ with $\pi_i^2$.

In addition, we note that just as $\aver{\H^2}_1$ and $\aver{\H}_2$ are monotonic with respect to each other in the case of uniform priorities (cf. Sec.~\ref{sec:equivalenceHaltav}), we have that $\aver{\F_{\rm HS}}_1$ and $\aver{\F_{\rm HS}}_2$ are \emph{proportional} to each other in the same circumstances. In fact, it is easy to see from Eqs.~\eqref{eq:fhs1} and~\eqref{eq:fhs2} that $\aver{\F_{\rm HS}}_1=I\aver{\F_{\rm HS}}_2$ if $\pi_i=1/I$ for $i=1,\ldots,I$. As a result, the operation maximizing $\aver{\F_{\rm HS}}_1$ also maximizes $\aver{\F_{\rm HS}}_2$ in this case.

For what follows, it will be useful to have two different presentations of Eq.~\eqref{eq:fhs1}. In the first, we write the trace as
\begin{equation}\label{eq:forCPTP}
\tr\left[\mathcal{C}(\rho_i)\brho_i\right]=
\tr\left[\mathfrak{C}(\rho_i^{\sf T}\otimes\brho_i)\right]\,,
\end{equation}
where we used Eq.~\eqref{eq:jam_inv} and the definition of the partial trace. In the second, we substitute $\mathfrak{C}$ with the expansion of Eq.~\eqref{eq:choiexpandTPin} to get
\begin{equation}\label{eq:forEBTP}
\tr\left[\mathcal{C}(\rho_i)\brho_i\right]=\frac{1}{\rm d}+\sum_{\substack{\mu=1\\\nu=2}}^{{\rm d}^2} x_{\mu,\nu}\tr\left(H_{\rm d}^\mu \rho_i^{\sf T}\right)\tr\left(H_{\rm d}^\nu\brho_i\right)\,.
\end{equation}

Next, the formulation of SDPs over $\mathcal{Q}_{\rm d}^{\rm set}$ and $\widetilde{\mathcal{B}}_{\rm d}^{\rm set}$ is presented in an independent fashion.

\subsubsection{Semidefinite program for $\aver{\mathscr{D}}=\aver{\F_{\rm HS}}_1$ and $\mathcal{C}\in\mathcal{Q}_{\rm d}^{\rm set}$}

With the above provisions, expressing problem~\eqref{eq:gen_problem_closeness} as a SDP is immediate for $\aver{\mathscr{D}}=\aver{\F_{\rm HS}}_1$ and $\mathcal{C}\in\mathcal{Q}_{\rm d}^{\rm set}$: The objective function is assembled from Eqs.~\eqref{eq:fhs1} and~\eqref{eq:forCPTP}, whereas the feasible set comes from the constraints of problem~\eqref{eq:probSDP_step0} to give

\begin{align}
\mbox{maximize}\quad & \tr\left[\mathfrak{C} \left(\sum_{i=1}^I{\rho_i^{\sf T}}\otimes\pi_i\brho_i\right)\right] \nonumber\\
\mbox{subject to}\quad& \mathfrak{C}\geq 0\label{eq:opt_hsip}\\
\quad&\tr\left[\mathfrak{C}\left(H_{\rm d}^\alpha\otimes\openone_{\rm d}\right)\right]={\rm d}\delta_{\alpha,1}\quad\mbox{for}\quad \alpha=1,\ldots,{\rm d}^2\,,\nonumber
\end{align}

This is clearly a SDP in the standard form [cf. Eq.~\eqref{eq:standardform}], with
\begin{equation}
E_0=-\sum_{i=1}^I\rho_i^{\sf T}\otimes\pi_i\brho_i\,,\quad E_\alpha=H_{\rm d}^\alpha\otimes\openone_{\rm d}\quad\mbox{and}\quad b_\alpha = {\rm d}\delta_{\alpha,1}\quad\mbox{for}\quad \alpha=1,\ldots,{\rm d}^2\
\end{equation}

\subsubsection{Semidefinite program for $\aver{\mathscr{D}}=\aver{\F_{\rm HS}}_1$ and $\mathcal{C}\in\widetilde{\mathcal{B}}_{\rm d}^{\rm set}$}

In this case, the optimization problem~\eqref{eq:gen_problem_closeness} is just problem~\eqref{eq:opt_hsip} with the extra constraint $\mathfrak{C}^{{\sf T}_2}\geq 0$, as explained in Sec.~\ref{sec:constraintB}. Due to this addition, it turns out to be easier to derive a SDP in the inequality form, as follows.

From the expansion of Eq.~\eqref{eq:choiexpandTPin} for $\mathfrak{C}$ (and using the normalization of density matrices and that $H_{\rm d}^1 = \openone_{\rm d}$), we find that the objective function of problem~\eqref{eq:gen_problem_closeness} can be written as $\tfrac{1}{\rm d}\sum_{i=1}^I\pi_i+\sum_{\substack{\mu=1\\\nu=2}}^{{\rm d}^2}x_{\mu,\nu}a_{\mu,\nu}$ (we have deliberately left the first sum unevaluated in order to obtain a SDP for the maximization of $\aver{\F_{\rm HS}}_2$ via the replacement $\pi_i\rightarrow \pi_i^2$, as explained before), where we have defined
\begin{equation}
a_{\mu,\nu}\defeq\sum_{i=1}^I\pi_i\tr\left(H_{\rm d}^\mu\rho_i^{\sf T}\right)\tr\left(H_{\rm d}^\nu\brho_i\right)\,.
\end{equation}

Using the objective function above and the inequality constraint of Eq.~\eqref{eq:probSDP_step2ineqEBTP}, the optimization problem of interest reduces to the following SDP in the inequality form:
\begin{align}
\frac{1}{\rm d}\sum_{i=1}^I\pi_i-\mbox{minimize}\quad& \sum_{\substack{\mu=1\\\nu=2}}^{{\rm d}^2}x_{\mu,\nu}\left(-a_{\mu,\nu}\right)\\
\mbox{subject to}\quad &F_0 + \sum_{\substack{\mu=1\\\nu=2}}^{{\rm d}^2}x_{\mu,\nu}F_{\mu,\nu}\geq 0
\end{align}
where we have included the minus signs to reexpress the original maximization as a minimization. Furthermore, we have defined
\begin{equation}
F_0=\frac{\openone_{2{\rm d}^2}}{\rm d}\qquad\mbox{and}\qquad F_{\mu,\nu}=\left(H_{\rm d}^{\alpha}\otimes H_{\rm d}^\beta\right)\oplus\left(H_{\rm d}^{\alpha}\otimes {H_{\rm d}^\beta}^{\sf T}\right)\,.
\end{equation}
Finally, we note that by removing the second ${\rm d}^2$-dimensional block from $F_0$ and $F_{\mu,\nu}$, we obtain an inequality form for the SDP of Eq.~\eqref{eq:opt_hsip} (maximization of $\aver{\F_{\rm HS}}_1$ over the set of CPTP maps).

\section{Controller sensitivity to the choice of $\aver{\mathscr{D}}$}\label{sec:ctrlsens}

In the previous sections we derived a number of SDPs for the problem of optimally transforming between sequences of density matrices. This multiplicity of optimization problems arises from the many available choices of distance measures between sequences of density matrices. In this section, these problems are numerically solved and by comparing their solutions we attempt to provide estimates on how different are the optimal operations (controllers) resulting from each problem.

From a practical viewpoint, there is at least one good reason for the proposed analysis: as we will see next, some of the SDPs derived here are harder to solve than others. It is thus interesting to find how well the solution of an easy problem approximates the solution of a difficult one.

The celebrated efficiency in solving a SDP is a consequence of the fact that the \emph{interior-point algorithm} \cite{04Nemirovski,04Boyd} requires only a polynomial number of operations (with respect to the ``problem size'') to find an optimal solution. More specifically, if $n$ is the number of variables of the SDP in the inequality form and $m$ is the dimension of the matrix inequality constraint, then the number of necessary operations to find a solution is not larger than $O(m^2 n^2 \sqrt{n})$ \cite{96Vandenberghe49,04Nemirovski}. However, if $m$ and/or $n$ are large, this can be a formidable task.

The values of $m$ and $n$ for the SDPs derived in the preceding section are shown in Table~\ref{table:dimSDP}, and a practical estimate of the time required for their solution over the set $\mathcal{Q}_{\rm d}^{\rm set}$ is presented in Fig.~\ref{fig:timeSDPs} for a few values of $I$ and ${\rm d}$. A quick glance at Fig.~\ref{fig:timeSDPs} and/or Table~\ref{table:dimSDP} reveals a clear computational advantage of $\aver{\F_{\rm HS}}_{1,2}$ over the metrics $\aver{\D}$, $\aver{\H^2}_1$, $\aver{\H}_2$ and $\aver{\O}_2$. Both the table and the figure are consistent in that no change of the problem size occurs with a variation of $I$. Moreover, for a fixed value of ${\rm d}$, $\aver{\F_{\rm HS}}_{1,2}$ gives rise to the SDPs with the smallest values of $n$ and $m$.\footnote{It should be noted, however, that the SDP formulations of the last section are not guaranteed to be the best ones in each case --- it is possible that the optimization of the metrics can be formulated as smaller SDPs. In particular, if symmetries are introduced in the problem (e.g., by restricting to source and target states symmetrically distributed in the Hilbert space), then symmetry reduction techniques \cite{04Gatermann95} can be of assistance. These ideas will be put at work in the next chapter.}

Table~\ref{table:dimSDP} also shows that the minimization of $\aver{\H}_2$ is the only one that yields a quadratic scaling  of the dimension $m$ of the matrix constraint with $I$. This is born out in Fig.~\ref{fig:timeSDPs}, where the minimization of $\aver{\H}_2$ is seen to be dramatically slower than the optimization of the other measures. Typically, the second slower minimization is that of $\aver{\D}$. This is justified in Table~\ref{table:dimSDP}, where the number $n$ of variables  involved in the minimization of $\aver{\D}$ is seen to scale quadratically with $I$, whereas the other measures do not show any scaling of $n$ with $I$.

Finally, although the minimizations of both $\aver{\H^2}_1$ and $\aver{\O}_2$ yield SDPs with precisely the same number of variables (and matrices whose dimension scale linearly with $I$ and quadratically with ${\rm d}$), the actual dimension of the matrices is larger for $\aver{\H^2}_1$. Once again, this is substantiated in Fig.~\ref{fig:timeSDPs}, where the minimization of $\aver{\H^2}_1$ is seen to be always slower than that of $\aver{\O}_2$.

In the next sections, these measures are compared not from the view point of computational cost, but in terms of how different are the transformations produced by the optimization of each of them.

\begin{table}[h!]
\centering \caption[A comparative analysis of the ``size'' of some semidefinite programs.]{A comparative analysis of the ``size'' of each SDP formulated in Secs.~\ref{sec:min_dist} and~\ref{sec:max_close} for different objective functions $\aver{\mathscr{D}}$ and feasible sets $\mathcal{C}_{\rm d}^{\rm set}$. The scaling on the number $n$ of variables and the dimension $m$ of the matrix constraints is shown as a function of $I$ and ${\rm d}$. For ease of comparison, the values for $I={\rm d}=2$ are shown in brackets.}

\begin{tabular*}{0.96\textwidth}{c|c|c|c|c}
\hline\hline
&\multicolumn{2}{c|}{$\mathcal{Q}_{\rm d}^{\rm set}$} &\multicolumn{2}{c}{$\widetilde{\mathcal{B}}_{\rm d}^{\rm set}$}\\\hline
&$n$&$m$&$n$&$m$\\\cline{2-5}
$\aver{\D}$& $\medmath{{\rm d}^2({\rm d}^2+2 I^2-1)}$, $\scriptstyle (44)$ &$\medmath{2 I {\rm d} + {\rm d}^2}$, $\scriptstyle (12)$ & $\medmath{{\rm d}^2({\rm d}^2+2 I^2-1)}$, $\scriptstyle (44)$ & $\medmath{2 I {\rm d} + 2{\rm d}^2}$, $\scriptstyle (16)$\\
$\aver{\H^2}_1$& $\medmath{{\rm d}^2({\rm d}^2-1)+1}$, $\scriptstyle (13)$ &$\medmath{{\rm d}^2(1+I)+1}$, $\scriptstyle (13)$& $\medmath{{\rm d}^2({\rm d}^2-1)+1}$, $\scriptstyle (13)$ & $\medmath{{\rm d}^2(2+I)+1}$, $\scriptstyle (17)$\\
$\aver{\H}_2$&  $\medmath{{\rm d}^2({\rm d}^2-1)+1}$, $\scriptstyle (13)$ &$\medmath{{\rm d}^2(1+I^2)+1}$, $\scriptstyle (21)$ & $\medmath{{\rm d}^2({\rm d}^2-1)+1}$, $\scriptstyle (13)$ &$\medmath{{\rm d}^2(2+I^2)+1}$, $\scriptstyle (25)$\\
$\aver{\O}_2$& $\medmath{{\rm d}^2({\rm d}^2-1)+1}$, $\scriptstyle (13)$ &$\medmath{2 I {\rm d}+{\rm d}^2}$, $\scriptstyle (12)$ & $\medmath{{\rm d}^2({\rm d}^2-1)+1}$, $\scriptstyle (13)$ &$\medmath{2 I {\rm d}+2{\rm d}^2}$, $\scriptstyle (16)$\\
$\aver{\F_{\rm HS}}_1$& $\medmath{{\rm d}^2}$, $\scriptstyle (4)$ &${\rm d}^2$, $\scriptstyle (4)$ & $\medmath{{\rm d}^2({\rm d}^2 - 1)}$, $\scriptstyle (12)$ & $\medmath{2{\rm d}^2}$, $\scriptstyle (8)$\\
$\aver{\F_{\rm HS}}_2$& $\medmath{{\rm d}^2}$, $\scriptstyle (4)$ &$\medmath{{\rm d}^2}$, $\scriptstyle (4)$ & $\medmath{{\rm d}^2({\rm d}^2 - 1)}$, $\scriptstyle (12)$  &$\medmath{2{\rm d}^2}$, $\scriptstyle (8)$\\
\hline\hline
\end{tabular*}\label{table:dimSDP}
\end{table}

\begin{figure*}[h!]
\centering
\subfigure[\hspace{1.5mm} ${\rm d}=2$] {
    \label{fig:compdistmes_a}
    \includegraphics[width=7.5cm]{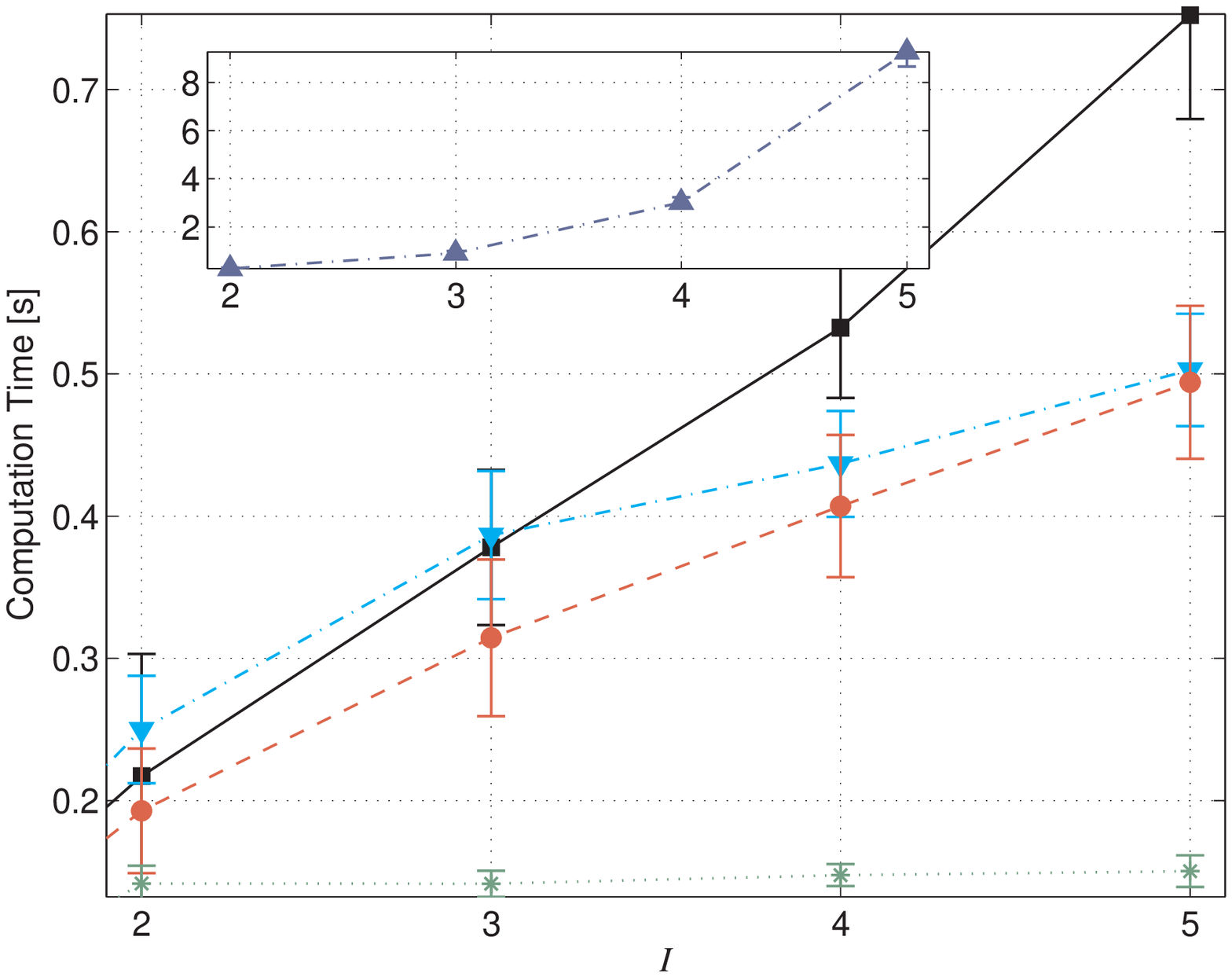}
}
\hspace{0.6cm}
\subfigure[\hspace{1.5mm} ${\rm d}=3$ ] {
    \label{fig:compdistmes_b}
    \includegraphics[width=7.5cm]{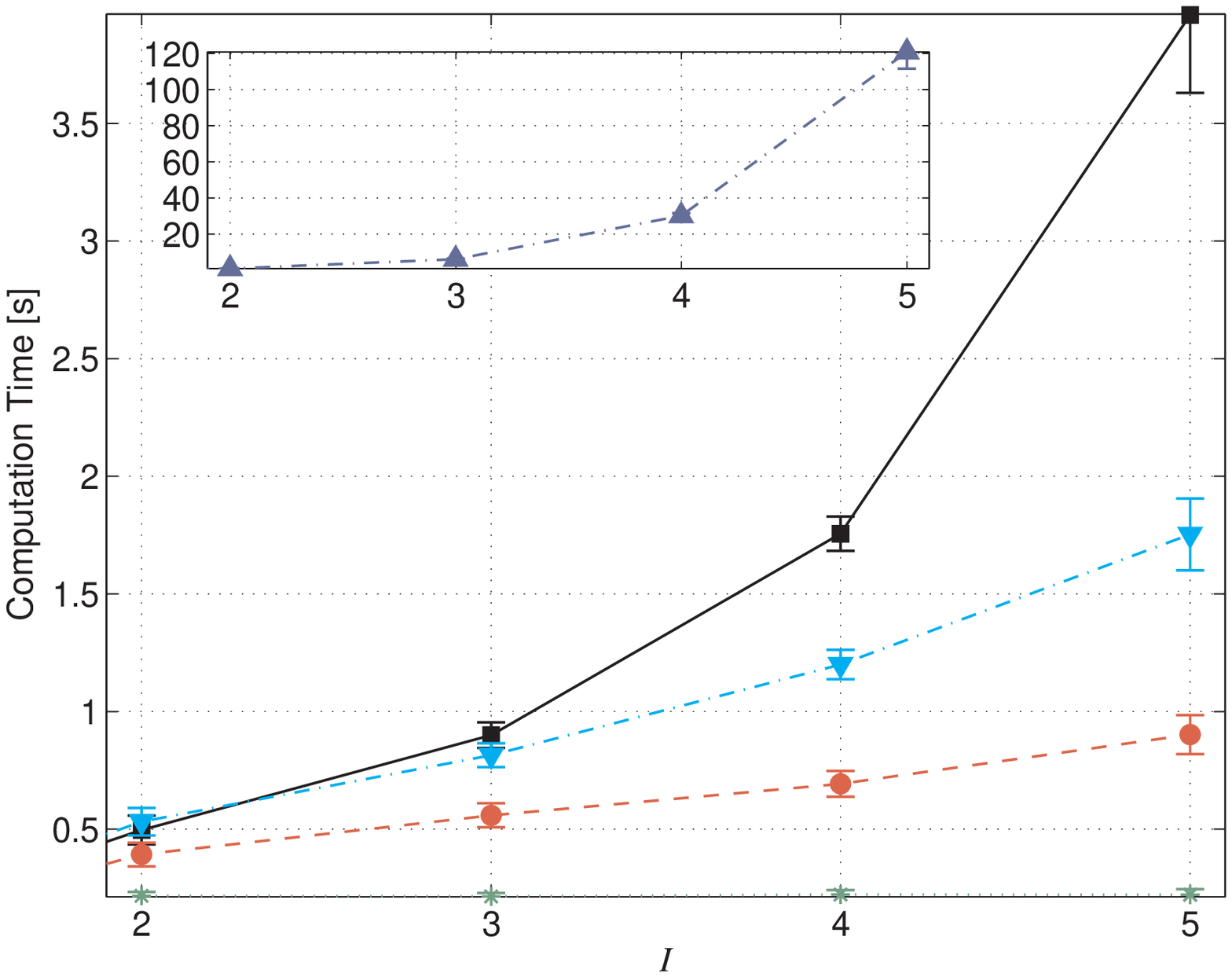}
}
\subfigure[\hspace{1.5mm} ${\rm d}=4$ ] {
    \label{fig:compdistmes_b}
    \includegraphics[width=7.5cm]{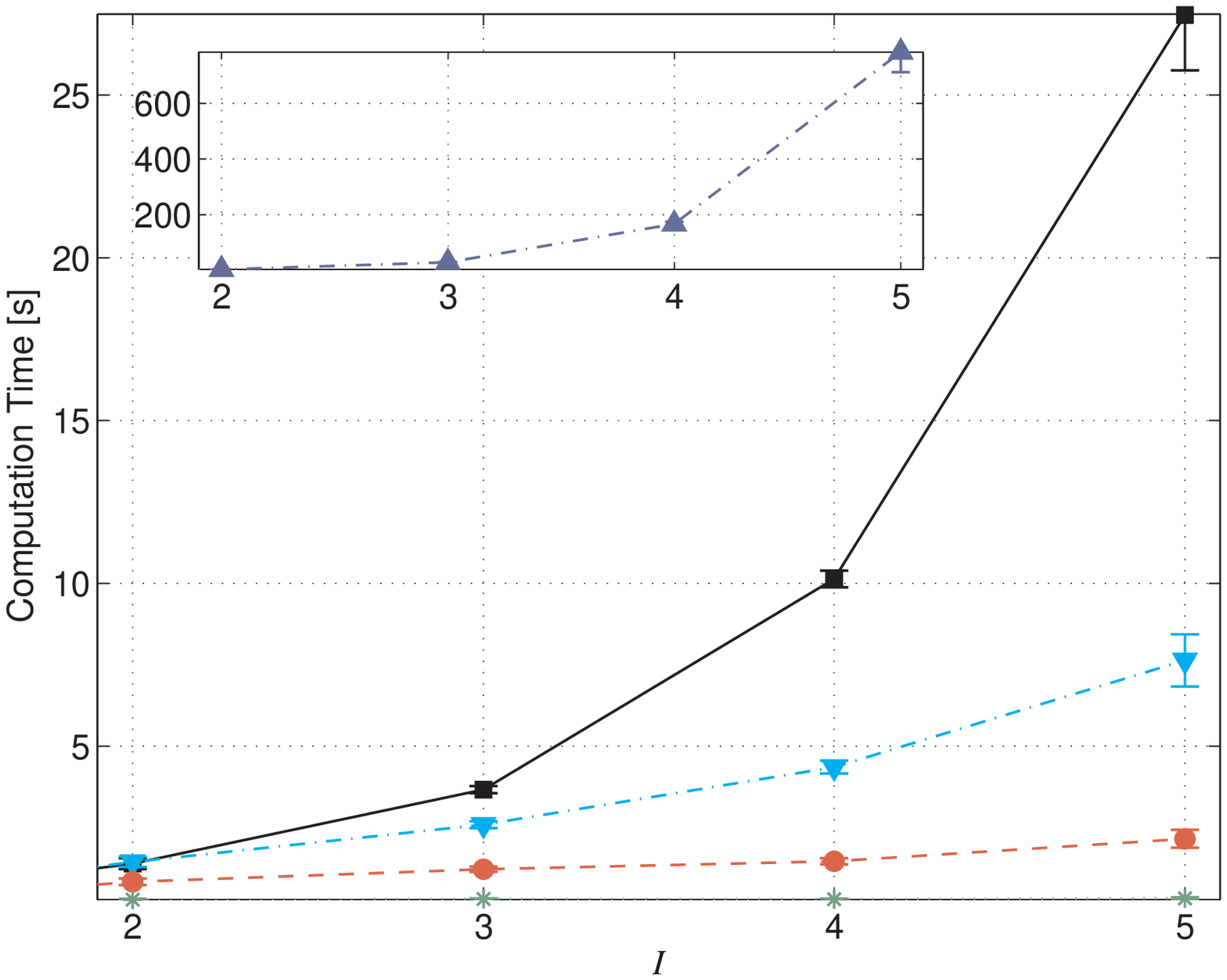}
}
\begin{picture}(0,0)
\put(-245,345){$\textstyle \aver{\H}_2$}
\put(-174,352){$\textstyle \aver{\D}$}
\put(-174,309){$\textstyle \aver{\H^2}_1$}
\put(-174,271){$\textstyle \aver{\O}_2$}
\put(-174,225){$\textstyle \aver{\F_{\rm HS}}$}
\end{picture}

\caption[Average computation time of optimal $\aver{\D}$, $\aver{\H^2}_1$, $\aver{\H}_2$, $\aver{\O}_2$ and $\aver{\F_{\rm HS}}_{1,2}$ over CPTP maps.]{(Color online) Average time required to solve the SDPs from Secs.~\ref{sec:min_dist} and~\ref{sec:max_close} over the set of CPTP maps. First, a sequence of $I$ source and target {\rm d}-dimensional density matrices is randomly generated. These sequences are then input to SeDuMi \cite{99Sturm625} and the time taken to find the optimal solution for each function $\aver{\mathscr{D}}$ is recorded. The procedure is repeated for $100$ different sequences with a fixed value of $({\rm d},I)$. Each point in the plots represents the mean time for a given pair $({\rm d}, I)$, while the error bars represent the standard deviation associated to the averaging process.}
\label{fig:timeSDPs}
\end{figure*}

\subsection{Qualitative analysis: The Bloch disk photo gallery}

In this section we present a sequence of plots representing the numerical solution of each one of the SDPs formulated in the preceding sections. Our aim is to provide a first qualitative analysis of how severely an optimal operation depends on the choice of distance/closeness measure being optimized.

In the present analysis, we restrict to the case of qubit states ${\rm d}=2$. This is done with the intent of visualizing the output states of each optimization problem as vectors on a three dimensional unit ball --- the Bloch ball. Furthermore, we consider source and target sequences of only two states each ($I=2$), in such a way that we can restrict to a plane within the Bloch ball --- the Bloch disk.

The specific type of transformation we look at is the purification of a pair of mixed qubit states. The source states are taken to be separated by a Bloch sphere angle $\Theta=90^o$ and have lengths $R_1, R_2 < 1$. The purification task consists of an attempt to increase these lengths up to unit (pure states), while preserving the angle $\Theta$ between them. We note that due to the choice of pure targets, the maximization of $\aver{\F_{\rm HS}}_{1}$ conducted here is equivalent to the maximization of $\aver{\F}_{1}$ or $\aver{\Fn}_{1}$, as discussed in Sec.~\ref{sec:max_close}.

In the following plots we show the Bloch vectors of source and target states, and also those of the states effectively optimizing each distance measure over the feasible sets $\mathcal{Q}_2^{\rm set}$ and $\mathcal{B}_2^{\rm set}$. For sake of comparison, we present separate sets of plots for situations where (i) $R_1=R_2$ and $\pi_1=\pi_2$, (ii) $R_1\neq R_2$ and $\pi_1 = \pi_2$ and (iii) $R_1 = R_2$ and $\pi_1 \neq \pi_2$.

\subsubsection{Unbiased purification of equally mixed states}\label{sec:unb_eq}
The case $\pi_1=\pi_2=0.5$ and $R_1=R_2=0.7$ is presented in Fig.~\ref{fig:compdistmes}. Fig.~\ref{fig:compdistmes_a} shows that all the metrics lead to a common operation; likewise, Fig.~\ref{fig:compdistmes_b} shows that both $\aver{\F_{\rm HS}}_1$ and $\aver{\F_{\rm HS}}_2$ are maximized with another common operation. The coincidence between the minimizers of $\aver{\H^2}_1$ and $\aver{\H}_2$, and between the maximizers of $\aver{\F_{\rm HS}}_1$ and $\aver{\F_{\rm HS}}_2$, should not come as a surprise --- in Secs.~\ref{sec:equivalenceHaltav} and~\ref{sec:theHSIP} we saw that these coincidences are inherent to any unbiased transformation.\par

All the remaining coincidences in Fig.~\ref{fig:compdistmes_a} are somewhat unexpected, and should be interpreted as a peculiarity of the particular task of this section \emph{in the qubit case}. Indeed, numerical simulations for qutrit states (under the same circumstances of mixedness of states and uniform priorities) do not show the degeneracy observed here. Next, we provide a half-technical-half-intuitive clarification of the origins of these coincidences in the qubit case.\par

For $\rho$ and $\sigma$ any qubit density matrices of Bloch vectors $\bm{r}$ and $\bm{s}$, the eigenvalues of $\rho-\sigma$ can be explicitly calculated to be $\pm 1/2|\bm{r}-\bm{s}|$, and hence
$\D(\rho,\sigma)=\H(\rho,\sigma)/\sqrt{2}=\O(\rho,\sigma)=1/2|\bm{r}-\bm{s}|$, or equivalently,
\begin{equation}\label{eq:DHO}
\aver{\D(\rho_i,\sigma_i)}=\frac{1}{\sqrt{2}}\aver{\H(\rho_i,\sigma_i)}_1=\aver{\O(\rho_i,\sigma_i)}_1= \tfrac{1}{2}\sum_{i=1}^I\pi_i|\bm{r}_i-\bm{s}_i|
\end{equation}
for any sequences $[\rho_i]_{i=1}^I$ and $[\sigma_i]_{i=1}^I$ of qubit states with Bloch vectors given by $[\bm{r}_i]_{i=1}^I$ and $[\bm{s}_i]_{i=1}^I$. Moreover, from the eigenvalue formula of $\rho_i-\sigma_i$, it is also simple to see that
\begin{align}
\aver{\H(\rho_i,\sigma_i)}_2&={\textstyle \sqrt{\frac{1}{2}\sum_{i=1}^I\pi_i^2|\bm{r}_i-\bm{s}_i|^2}}\,,\label{eq:Hrisi}\\
\aver{\O(\rho_i,\sigma_i)}_2&=\tfrac{1}{2}\max_i \pi_i|\bm{r}_i-\bm{s}_i|\,.\label{eq:Orisi}
\end{align}

Now, note that Eqs.~\eqref{eq:DHO},~\eqref{eq:Hrisi} and~\eqref{eq:Orisi} all become proportional to each other if we make $\pi_i=1/I$ and assume $|\bm{r}_i-\bm{s}_i|=k$ for some constant $k$. Of course, with these extra constraints, the optimization of any of the quantities above would lead to a common operation. That is it for the technical part.

Intuitively, the extra constraints found above can be incorporated into the problem of interest without loss of generality: First, $\pi_i=1/I$ is already there by hypothesis. Second, since all the source Bloch vectors have the same length and all the target Bloch vectors have the same length (in particular, equal to one), it would be very odd if the Bloch vectors $\bm{r}_i$ arising from the optimization of any metric
would not dispose perfectly symmetric with respect to their  target vectors. Assuming that this oddness would never occur, we can include the ``redundant constraint'' $|\bm{r}_i-\bm{s}_i|=k$ for every $i$. Hence --- under this intuitive assumption --- the control problem should really be insensitive to the choice of metric, as Fig.~\ref{fig:compdistmes_a} demonstrates it is.

A comparison of Figs.~\ref{fig:compdistmes_a} and~\ref{fig:compdistmes_b}, suggests that the maximization of fidelity-like quantities tends to provide an improved elongation of the lengths of the source vectors than the corresponding elongation arising from the minimization of the metrics. On the other hand, the minimization of the metrics give operations that better approximate the angle between the target vectors.

Finally, it is interesting to compare how the restriction to the set of EBTP maps affect each case. The minimization of the metrics over the set of EBTP maps leads to Bloch vectors that are approximately $5.3\%$ shorter and separated by an angle $9.2\%$ smaller than the lengths and angles arising from the minimization of the same metrics over the set of CPTP maps. On the other hand, the same restriction for the fidelity-like quantities leads to vectors that are actually $1\%$ \emph{longer} than the corresponding vectors from the CPTP case, however, as it should be the case, this is compensated with a substantial angle drop of $31\%$.

\begin{figure*}[h!]
\centering
\subfigure[\hspace{1.5mm} CPTP: $R_1^{\rm eff}=R_2^{\rm eff}=0.75$, $\Theta=74.5^o$ EBTP: $R_1^{\rm eff}=R_2^{\rm eff}=0.71$, $\Theta=67.65^o$] {
    \label{fig:compdistmes_a}
    \includegraphics[width=7.5cm]{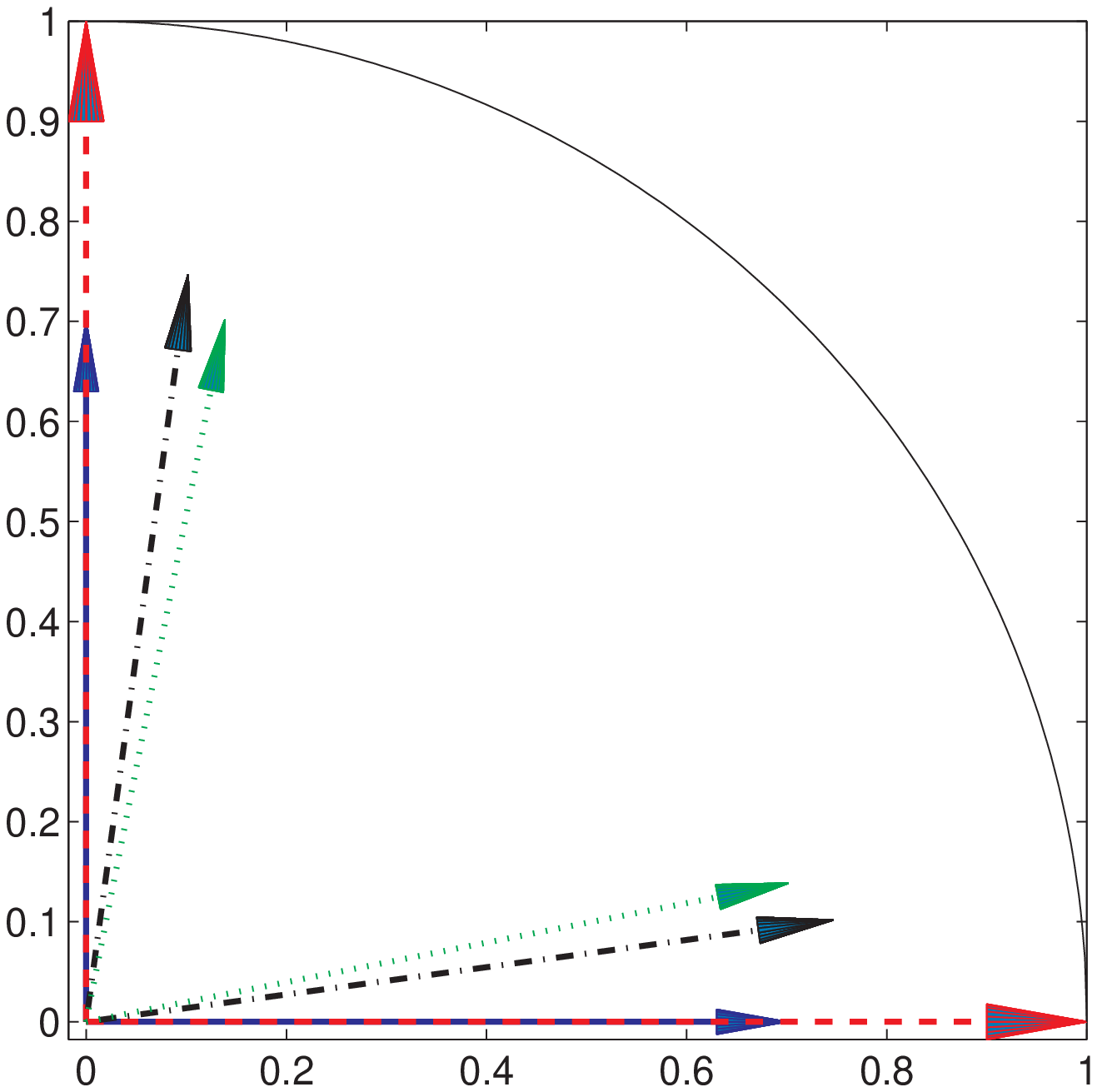}
}
\hspace{0.6cm}
\subfigure[\hspace{1.5mm} CPTP: $R_1^{\rm eff}=R_2^{\rm eff}=0.91$, $\Theta=35.96^o$ EBTP: $R_1^{\rm eff}=R_2^{\rm eff}=0.92$, $\Theta=27.53^o$ ] {
    \label{fig:compdistmes_b}
    \includegraphics[width=7.5cm]{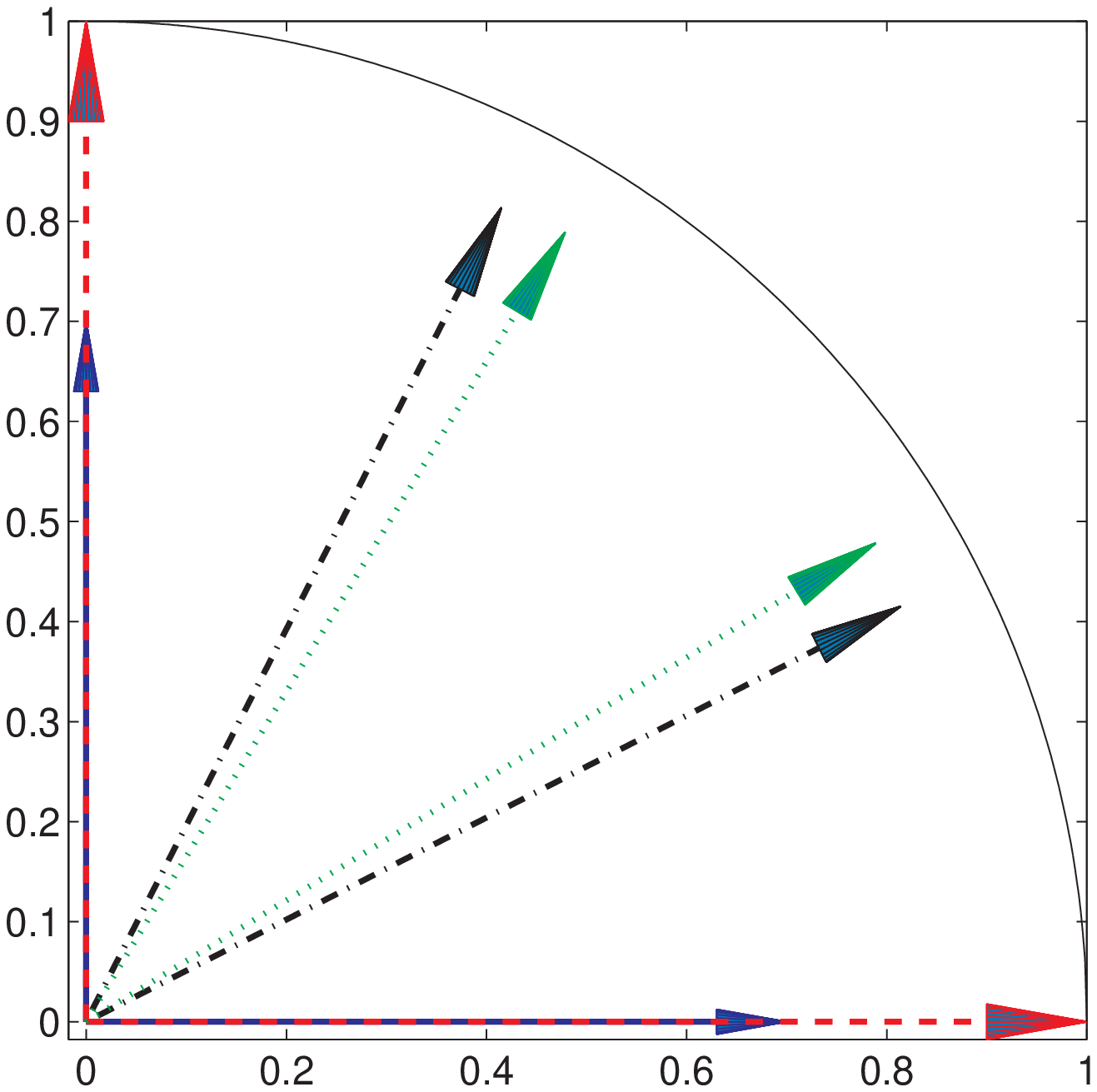}
}

\caption[Bloch vectors visualization of unbiased purification of equally mixed states.]{(Color online) Comparison between the Bloch vectors resulting from the numerical optimization of $\aver{\D}$, $\aver{\H^2}_1$, $\aver{\H}_2$, $\aver{\O}_2$ and $\aver{\F}_{1,2}$ for the unbiased transformation (i.e., $\pi_1=\pi_2=1/2$) of a pair of mixed source states (dashed, blue) with $R_1=R_2=0.7$ into a pair of pure states (dashed, red). The dash-dotted (black) arrows designate the optimal Bloch vectors arising from the optimization over the set of CPTP maps, while the dotted (green) arrows refer to the optimal transformation over the set of EBTP maps.}
\label{fig:compdistmes}
\begin{picture}(0,0)
\put(-190,320){\large$\bm{\aver{\D}}$, $\bm{\aver{\H^2}_1}$, $\bm{\aver{\H}_2}$, $\bm{\aver{\O}_2}$}
\put(160,320){\large$\bm{\aver{\F_{\rm HS}}_{1,2}}$}
\end{picture}
\end{figure*}

\subsubsection{Unbiased purification of states of different mixedness}\label{sec:unb_dif}
The case $\pi_1=\pi_2=0.5$, $R_1=0.7$ and $R_2=0.9$ is presented in Fig.~\ref{fig:compdistmes2}. Notably, by removing the symmetry of equally mixed source states, most of the degeneracies observed in the last section are removed. The only exceptions are $\aver{\H^2}_1$ and $\aver{\H}_2$ [Fig.~\ref{fig:compdistmes2_b}], and $\aver{\F_{\rm HS}}_1$ and $\aver{\F_{\rm HS}}_2$ [Fig.~\ref{fig:compdistmes2_d}], where the degeneracies survive due to the unbiased choice of priorities, as discussed before.

Once again, we find that the minimizers of the metric quantities perform better at approximating the target angle, while the maximizers of the fidelity-like quantities are better at approximating target lengths. In fact, it is now possible to see a smoother transition of this behavior while following the sequence of plots $\aver{\D}\rightarrow\aver{\H^{(2)}}_{(1),2}\rightarrow\aver{\O}_2\rightarrow\aver{\F_{\rm HS}}_{1,2}$ in Fig.~\ref{fig:compdistmes2}. Noticeably, the CPTP map minimizing $\aver{\D}$ yields the shortest vectors, but more widely separated. This is followed by the optimal CPTP for $\aver{\H^2}_1$ or $\aver{\H}_2$, which gives slightly longer vectors, but separated by a smaller angle. Following the same trend comes $\aver{\O}_2$ and finally $\aver{\F_{\rm HS}}_{1,2}$, which gives the longest vectors separated by the smallest angle.

When the restriction to EBTP maps is made, the same pattern applies for the angles, which are seen to decrease along the way. However, the length of one of the vectors breaks the pattern by decreasing while we progress along $\aver{\D}\rightarrow\aver{\H^{(2)}}_{(1),2}\rightarrow\aver{\O}_2$. Nevertheless, even in the EBTP case, the maximal length of both vectors is achieved with the maximization of $\aver{\F_{\rm HS}}_{1,2}$.

Still regarding the restriction to EBTP maps, we note that the same rule observed in the last section still applies. For all metrics, the output vectors are shorter and less separated than the corresponding vectors in the CPTP case. Only for the fidelity like measures, we have an lengthening of the vectors and a more substantial decrease of angle.

\begin{figure*}[h!]
\centering
\subfigure[\hspace{1.5mm} CPTP: $R_1^{\rm eff}=0.73$, $R_2^{\rm eff}=0.91$, $\Theta=81.78^o$ EBTP: $R_1^{\rm eff}=0.68$, $R_2^{\rm eff}=0.83$, $\Theta=73.83^o$] {
    \label{fig:compdistmes2_a}
    \includegraphics[width=7.5cm]{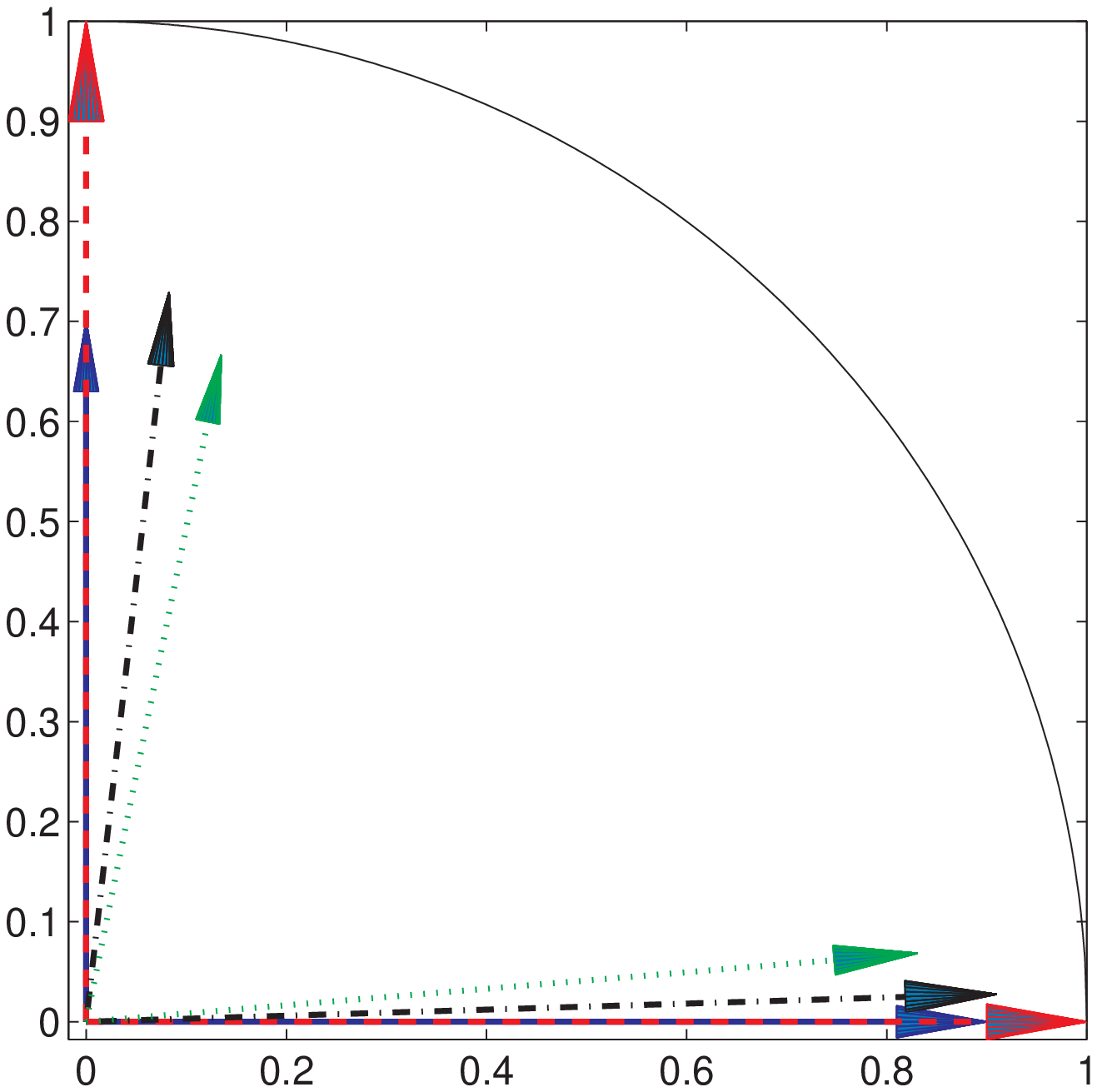}
}
\hspace{0.6cm}
\subfigure[\hspace{1.5mm} $R_1^{\rm eff}=0.74$, $R_2^{\rm eff}=0.91$, $\Theta=80.31^o$ EBTP: $R_1^{\rm eff}=0.71$, $R_2^{\rm eff}=0.79$, $\Theta=73.74^o$ ] {
    \label{fig:compdistmes2_b}
    \includegraphics[width=7.5cm]{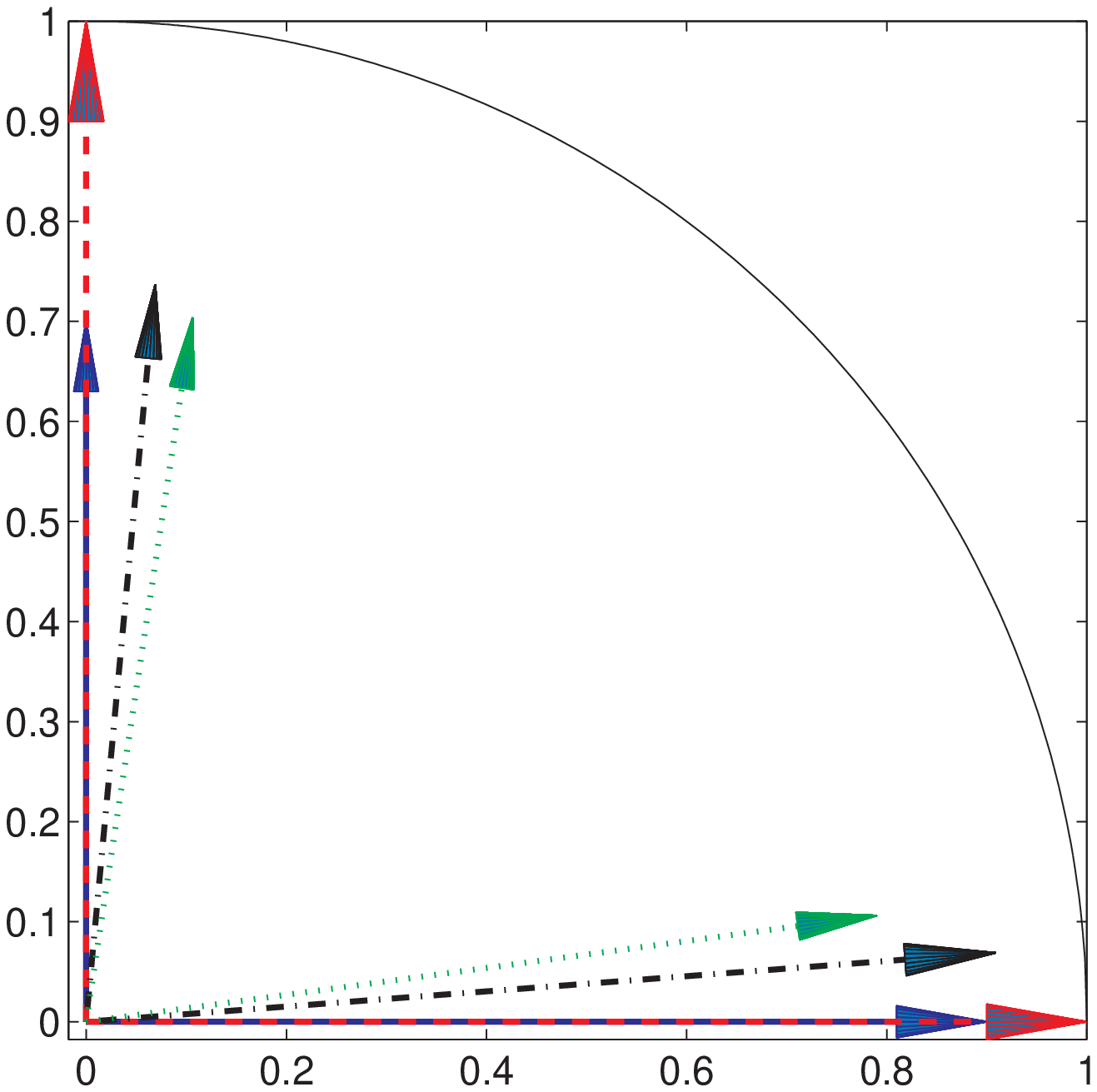}
}
\hspace{0.6cm}
\subfigure[\hspace{1.5mm} $R_1^{\rm eff}=0.77$, $R_2^{\rm eff}=0.92$, $\Theta=72.47^o$ EBTP: $R_1^{\rm eff}=0.74$, $R_2^{\rm eff}=0.77$, $\Theta=72.26^o$ ] {
    \label{fig:compdistmes2_c}
    \includegraphics[width=7.5cm]{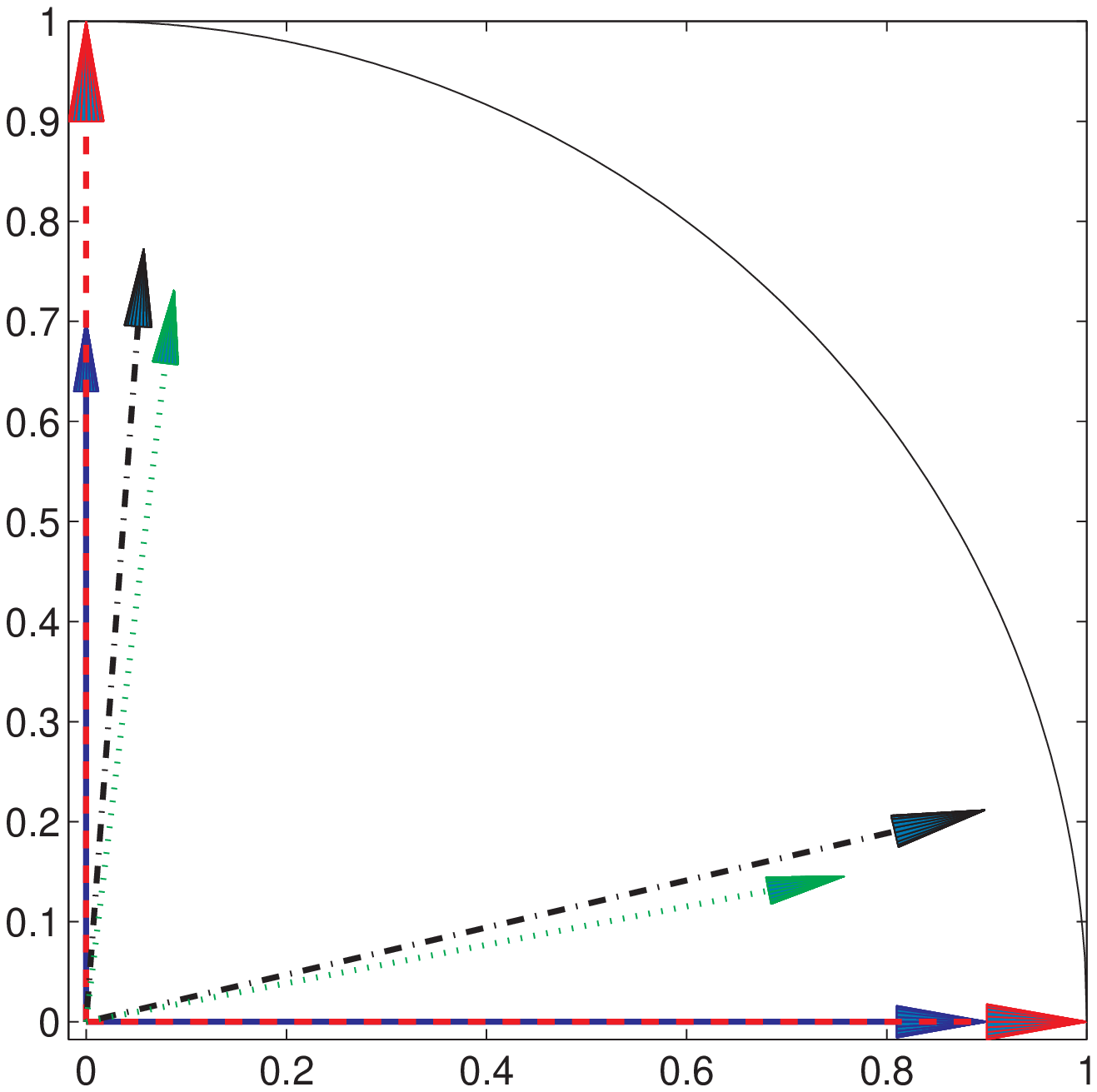}
}
\hspace{0.6cm}
\subfigure[\hspace{1.5mm} $R_1^{\rm eff}=0.87$, $R_2^{\rm eff}=0.95$, $\Theta=50.50^o$ EBTP: $R_1^{\rm eff}=0.88$, $R_2^{\rm eff}=0.94$, $\Theta=36.32^o$] {
    \label{fig:compdistmes2_d}
    \includegraphics[width=7.5cm]{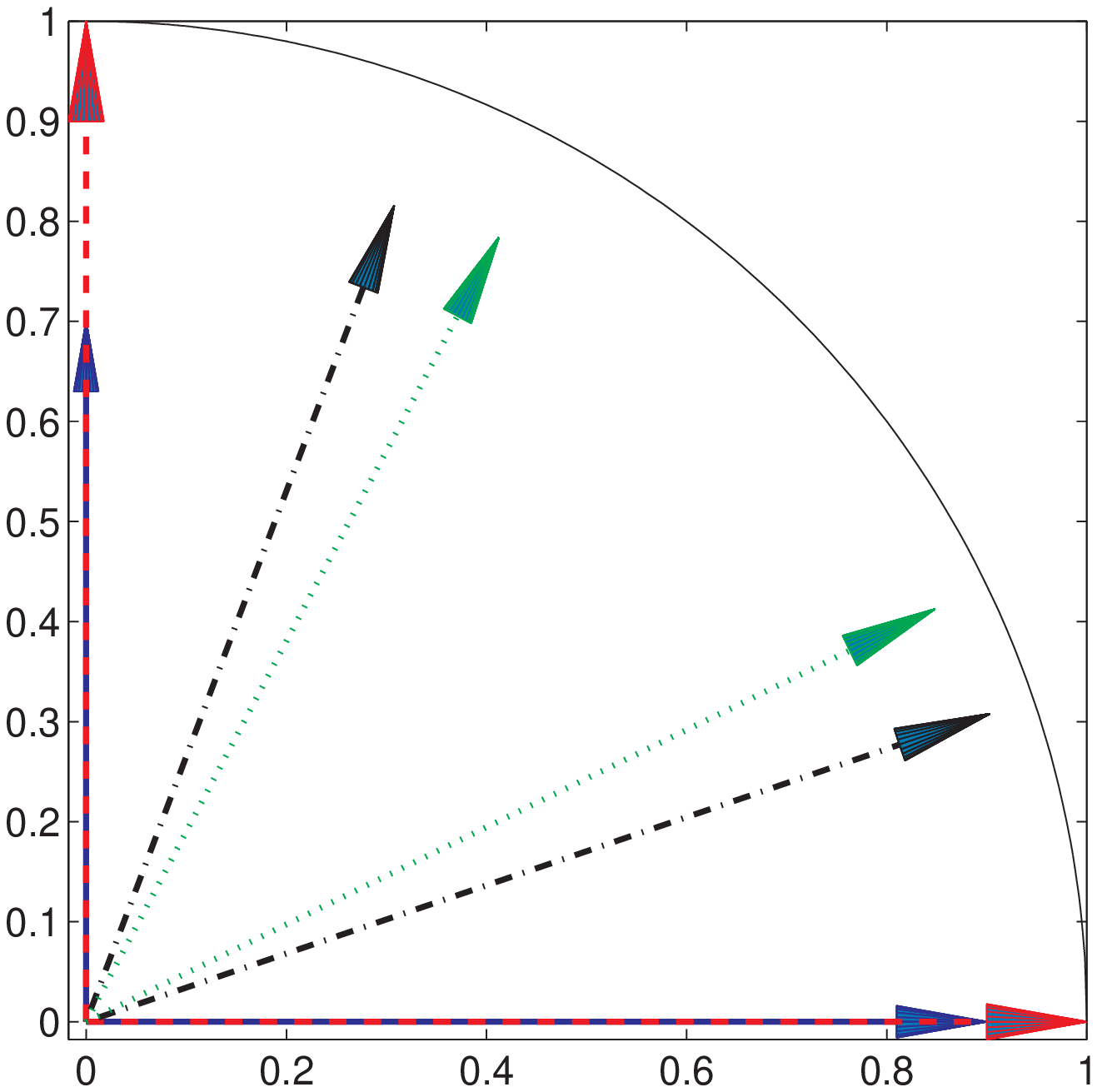}
}
\caption[Bloch vectors visualization of unbiased purification of states of different mixedness.]{(Color online) Comparison between the Bloch vectors resulting from the numerical optimization of $\aver{\D}$, $\aver{\H^2}_1$, $\aver{\H}_2$, $\aver{\O}_2$ and $\aver{\F}_{1,2}$ for the unbiased transformation (i.e., $\pi_1=\pi_2=1/2$) of a pair of mixed source states (dashed, blue) with $R_1=0.7$ and $R_2=0.9$ into a pair of pure states (dashed, red). The dash-dotted (black) arrows designate the optimal Bloch vectors arising from the optimization over the set of CPTP maps, while the dotted (green) arrows refer to the optimal transformation over the set of EBTP maps.}
\label{fig:compdistmes2}
\begin{picture}(0,0)
\put(-65,572){\large$\bm{\aver{\D}}$}
\put(135,572){\large$\bm{\aver{\H^2}_1}$, $\bm{\aver{\H}_2}$}
\put(-65,320){\large$\bm{\aver{\O}_2}$}
\put(160,320){\large$\bm{\aver{\F_{\rm HS}}_{1,2}}$}
\end{picture}
\end{figure*}

\subsubsection{Biased purification of equally mixed states}
The case $\pi_1=0.3$, $\pi_2=0.7$ and $R_1=R_2=0.7$ is presented in Fig.~\ref{fig:compdistmes3}. Here, all the degeneracies are removed: Biasing not only destroys the monotonicity between $\aver{\H^2}_1$ and $\aver{\H}_2$ and the proportionality between $\aver{\F_{\rm HS}}_1$ and $\aver{\F_{\rm HS}}_2$, but in setting a higher hierarchy to the atomic transformation of $i=2$ it also breaks the (intuitively expected) symmetry between the effective Bloch vectors and the targets noted in Fig.~\ref{fig:compdistmes}.

Nevertheless, the optimal CPTP transformations preserve some important features from the unbiased cases. From the case of equally mixed sources (Sec.~\ref{sec:unb_eq}), we note the commonality that the two states resulting from the optimization of each measure have the same length. This shows that it is only via an adjustment of the angle between the effective and target states that biasing is accounted for, as opposed to a possible enhanced lengthening of the vector of higher weight.

From the case with different degrees of mixedness (Sec.~\ref{sec:unb_dif}), we note that the sequence of measures leading to vectors of increasing length and decreasing angle is still approximately the same: $\aver{\H^2}_1\to\aver{\D}\to\aver{\H}_2\to\aver{\O}_2\to\aver{\F_{\rm HS}}_1\to\aver{\F_{\rm HS}}_2$. The difference is that now it is the minimization of $\aver{\H^2}_1$ that yields the shortest and most angularly separated Bloch vectors.

None of the above holds for the optimal EBTP transformations. In this case, biasing is accounted for not only by an adjustment of angle, but also by making longer the Bloch vector arising from the transformation of higher weight. Moreover, although $\aver{\F_{\rm HS}}_2$ and $\aver{\F_{\rm HS}}_1$ are still easily identified as the measures that give the longest vectors separated by the smallest angles, it is not so clear how the metrics should be ordered because the angles do not always decrease as the lengths increase. As a general observation, we have the metrics $\aver{\D}$, $\aver{\H}_2$ and $\aver{\O}_2$ leading to vectors which have approximately the same lengths, but the angle between them decreases in the sequence $\aver{\O}_2\to\aver{\H}_2\to\aver{\D}$, which is actually reversed with respect to the ordering found in Sec.~\ref{sec:unb_dif}.

It is also worth noting how the angle between the two vectors drop as we go from the optimal CPTP transformation to the optimal EBTP transformation with respect to a fixed measure. Consistently with the previous sections, the highest angle drops occur for the fidelity-like measures, and is of approximately $24\%$. Second in the rank is $\aver{\D}$ with a drop of $18.3\%$ ---  much higher than the observed in the unbiased cases. The measures $\aver{\H^2}_1$ and $\aver{\H}_2$ come with an approximately equal angle drop of $10\%$. Finally, we have $\aver{\O}_2$, with only $2.7\%$. As expected, the largest angle drops occur for those measures that, in the EBTP case, attempt to lengthen the Bloch vectors with respect to the  vectors obtained in the CPTP case. This property --- which was noticed just in the fidelity-like measures in the unbiased cases --- is now also detected in $\aver{\D}$ (and very slightly in $\aver{\H}_2$).

\begin{figure*}[h!]
\centering
\subfigure[\hspace{1.5mm} CPTP: $R_1^{\rm eff}=R_2^{\rm eff}=0.77$, $\Theta=69.86^o$ EBTP: $R_1^{\rm eff}=0.69$, $R_2^{\rm eff}=0.80$, $\Theta=57.10^o$] {
    \label{fig:compdistmes3_a}
    \includegraphics[width=5cm]{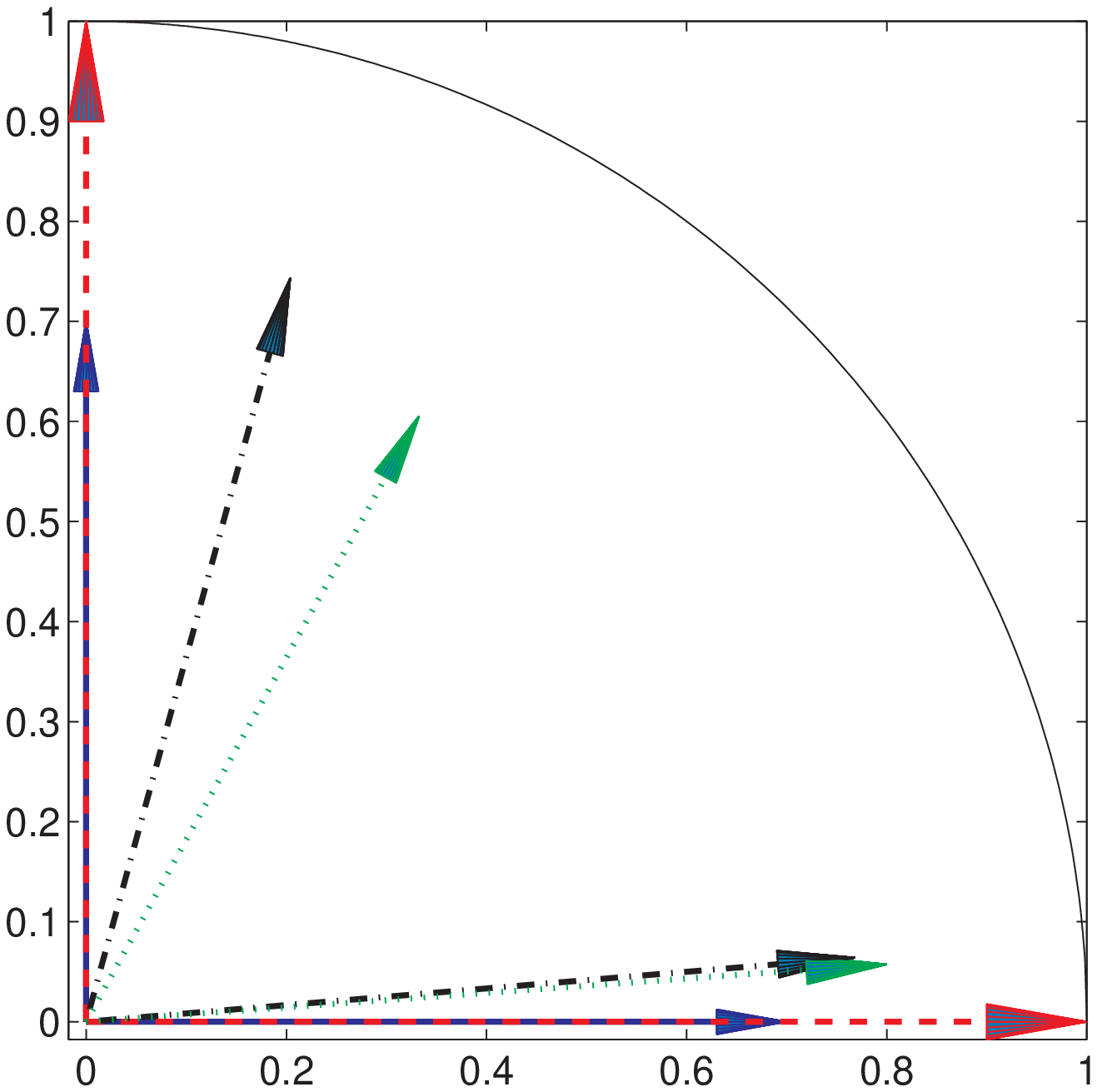}
}
\subfigure[\hspace{1.5mm} $R_1^{\rm eff}=R_2^{\rm eff}=0.76$, $\Theta=72.05^o$ EBTP: $R_1^{\rm eff}=0.68$, $R_2^{\rm eff}=0.75$, $\Theta=65.15^o$ ] {
    \label{fig:compdistmes3_b}
    \includegraphics[width=5cm]{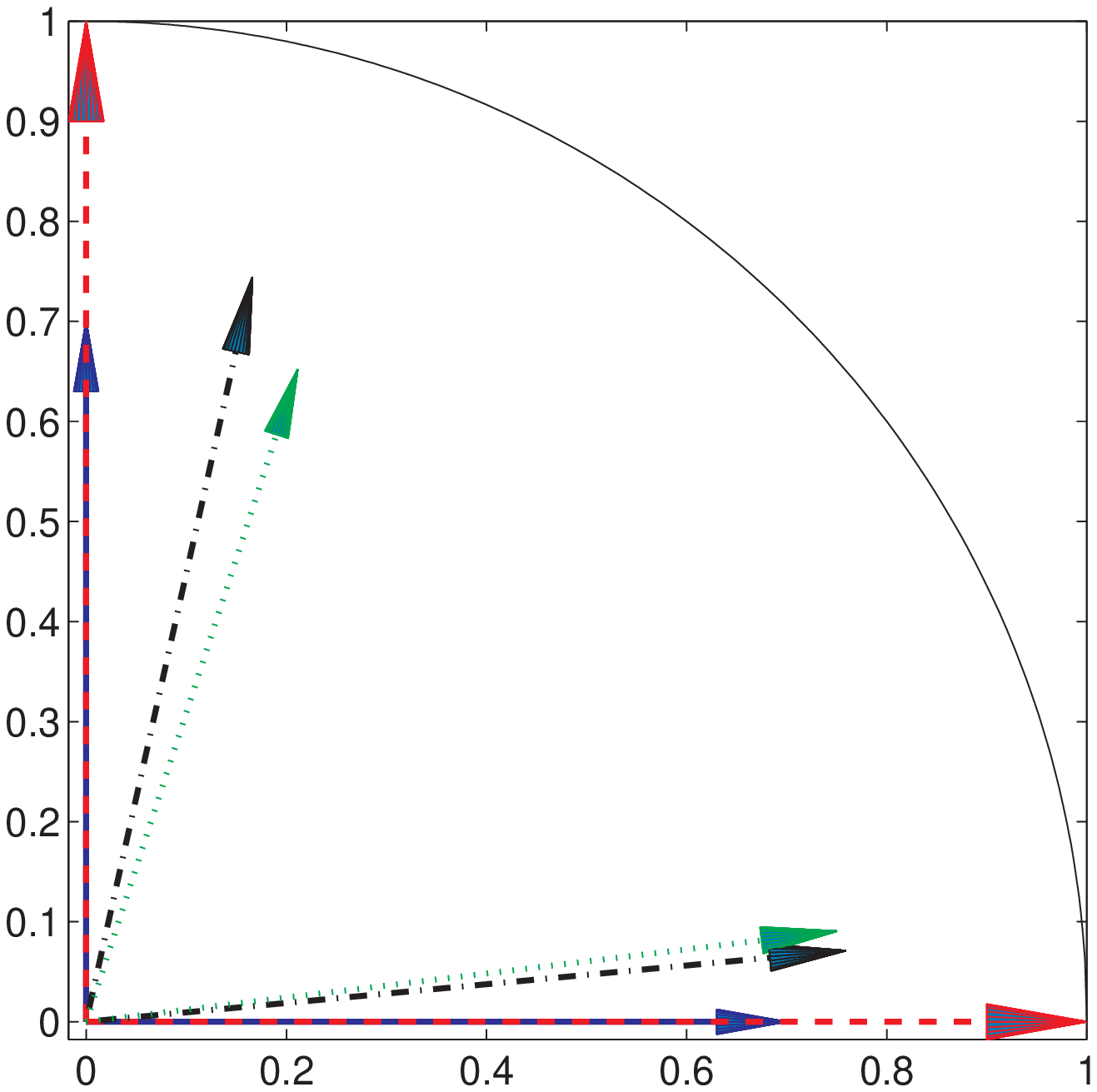}
}
\subfigure[\hspace{1.5mm} $R_1^{\rm eff}=R_2^{\rm eff}=0.79$, $\Theta=64.42^o$ EBTP: $R_1^{\rm eff}=0.69$, $R_2^{\rm eff}=0.80$, $\Theta=57.79^o$ ] {
    \label{fig:compdistmes3_c}
    \includegraphics[width=5cm]{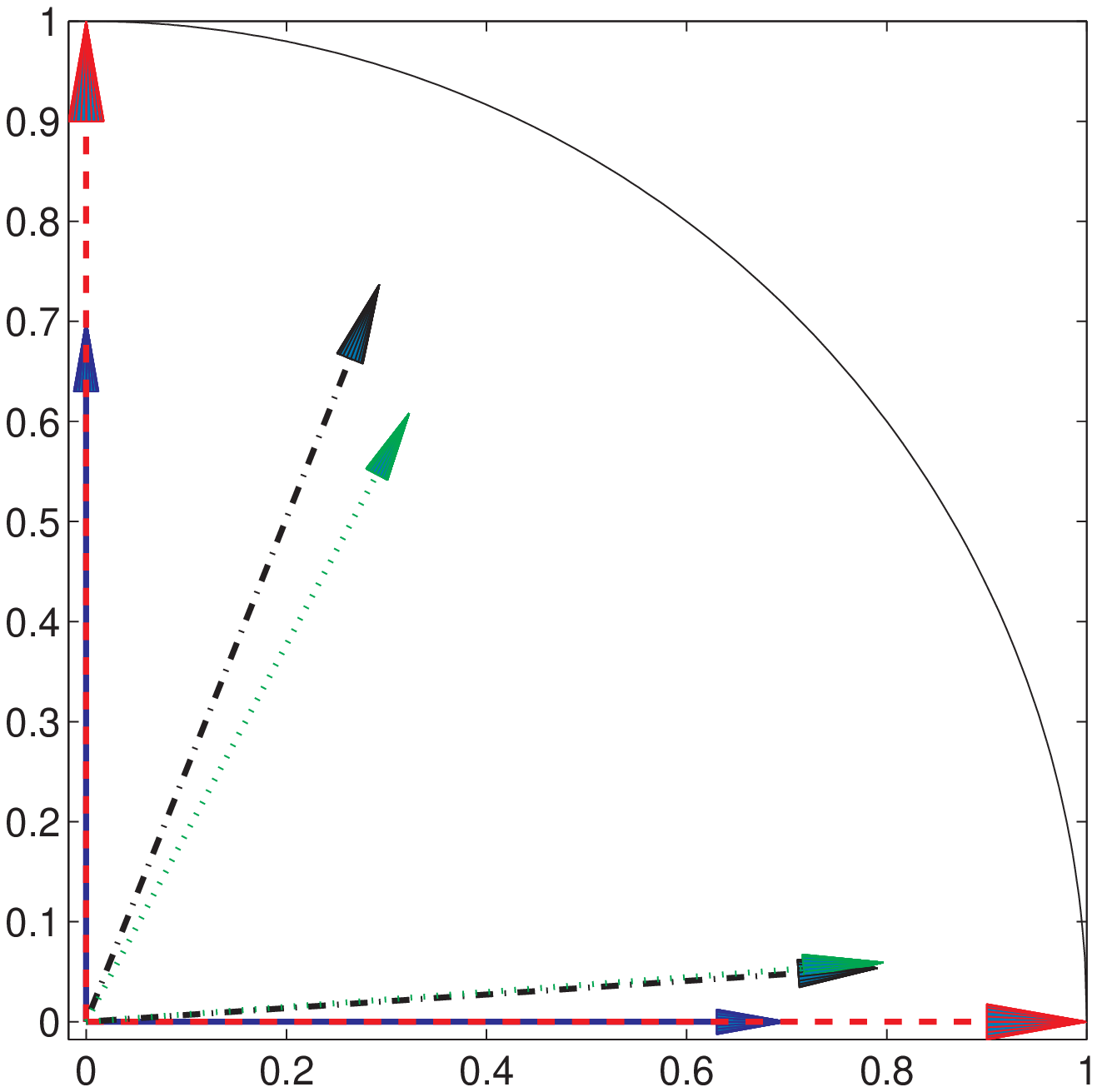}
}
\subfigure[\hspace{1.5mm} $R_1^{\rm eff}=R_2^{\rm eff}=0.81$, $\Theta=59.92^o$ EBTP: $R_1^{\rm eff}=0.69$, $R_2^{\rm eff}=0.80$, $\Theta=58.30^o$] {
    \label{fig:compdistmes3_d}
    \includegraphics[width=5cm]{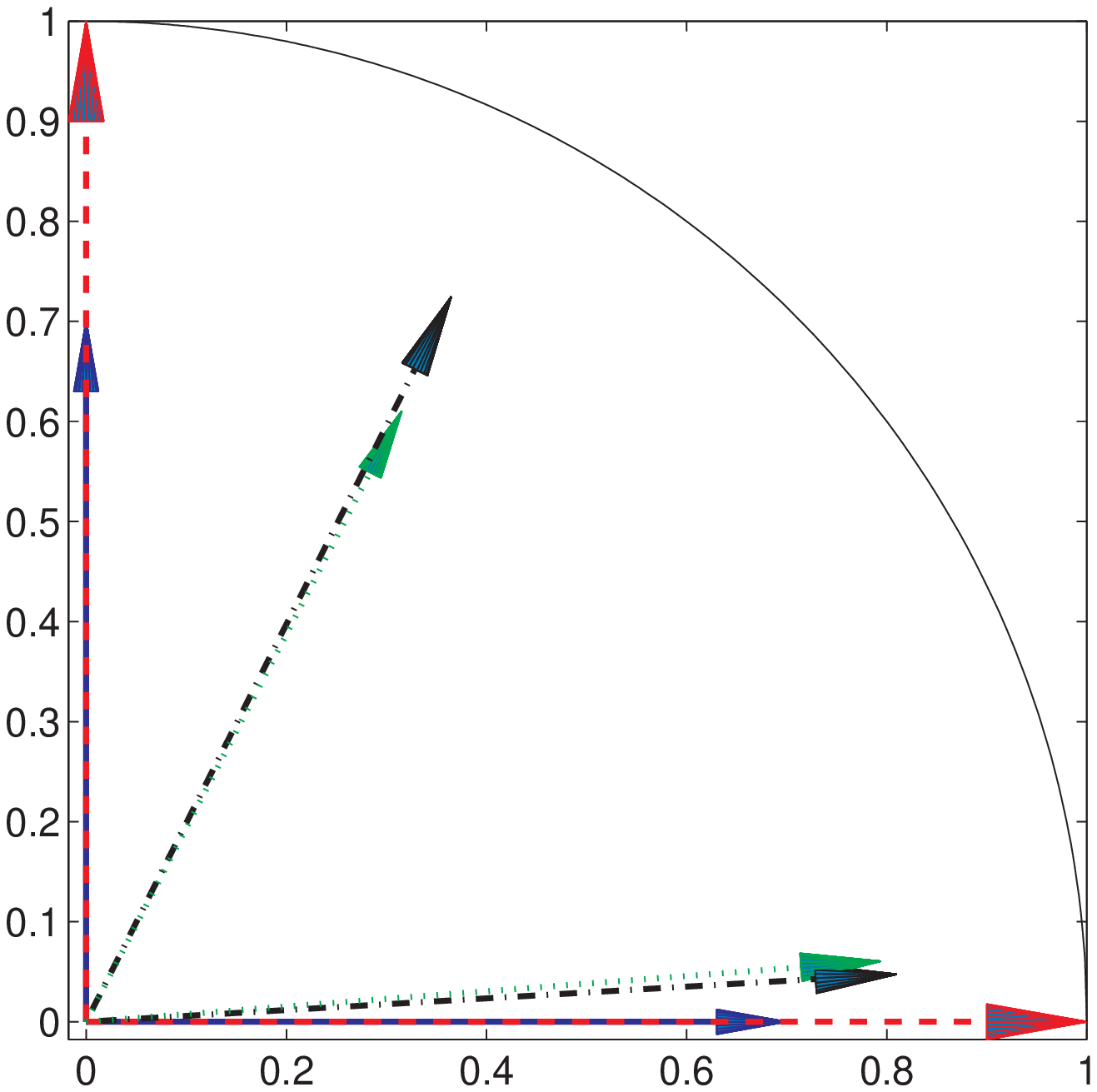}
}
\subfigure[\hspace{1.5mm} $R_1^{\rm eff}=R_2^{\rm eff}=0.94$, $\Theta=30.20^o$ EBTP: $R_1^{\rm eff}=0.93$, $R_2^{\rm eff}=0.95$, $\Theta=23.00^o$] {
    \label{fig:compdistmes3_e}
    \includegraphics[width=5cm]{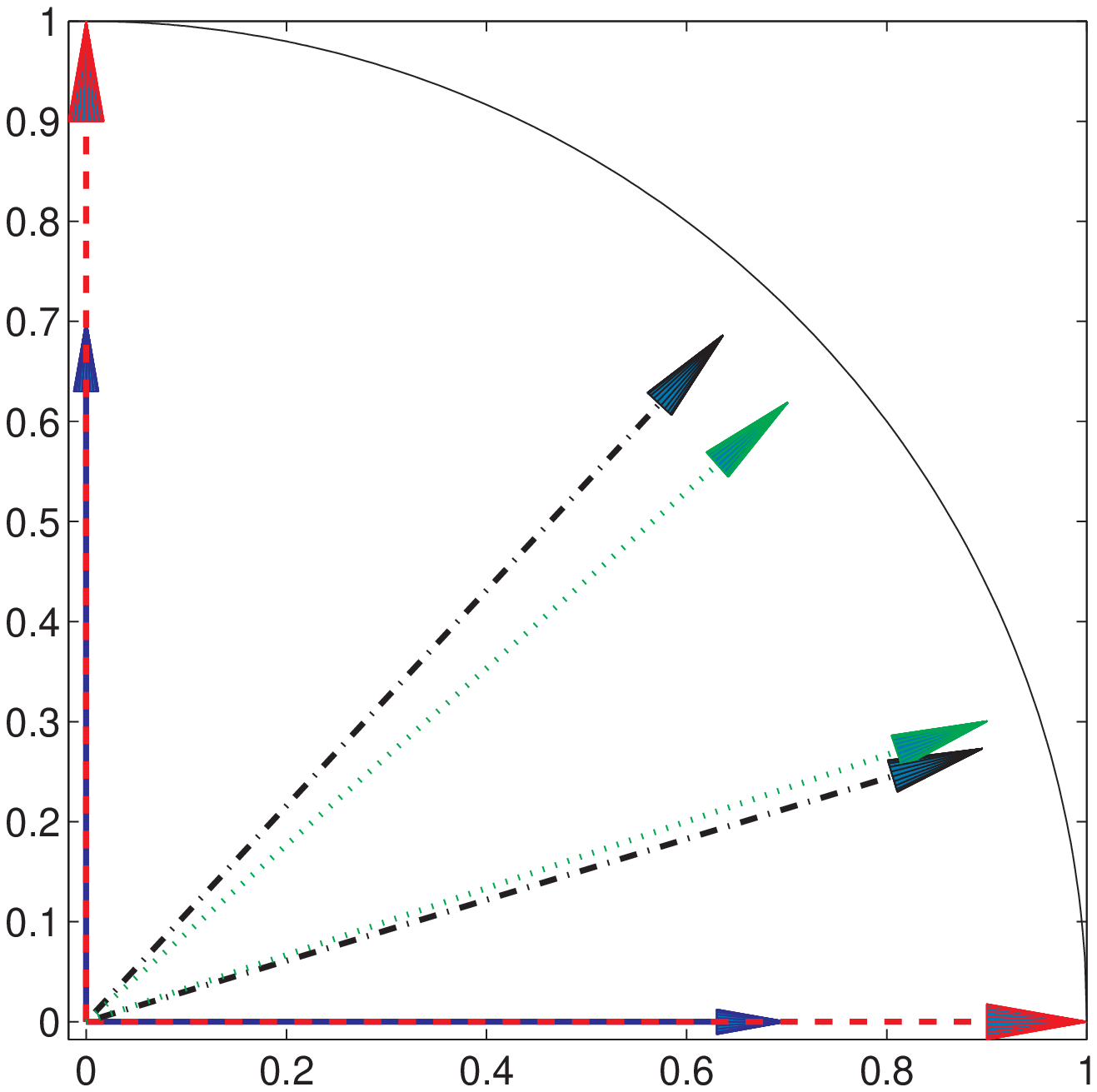}
}
\subfigure[\hspace{1.5mm} $R_1^{\rm eff}=R_2^{\rm eff}=0.97$, $\Theta=17.92^o$ EBTP: $R_1^{\rm eff}=0.97$, $R_2^{\rm eff}=0.98$, $\Theta=13.56^o$] {
    \label{fig:compdistmes3_f}
    \includegraphics[width=5cm]{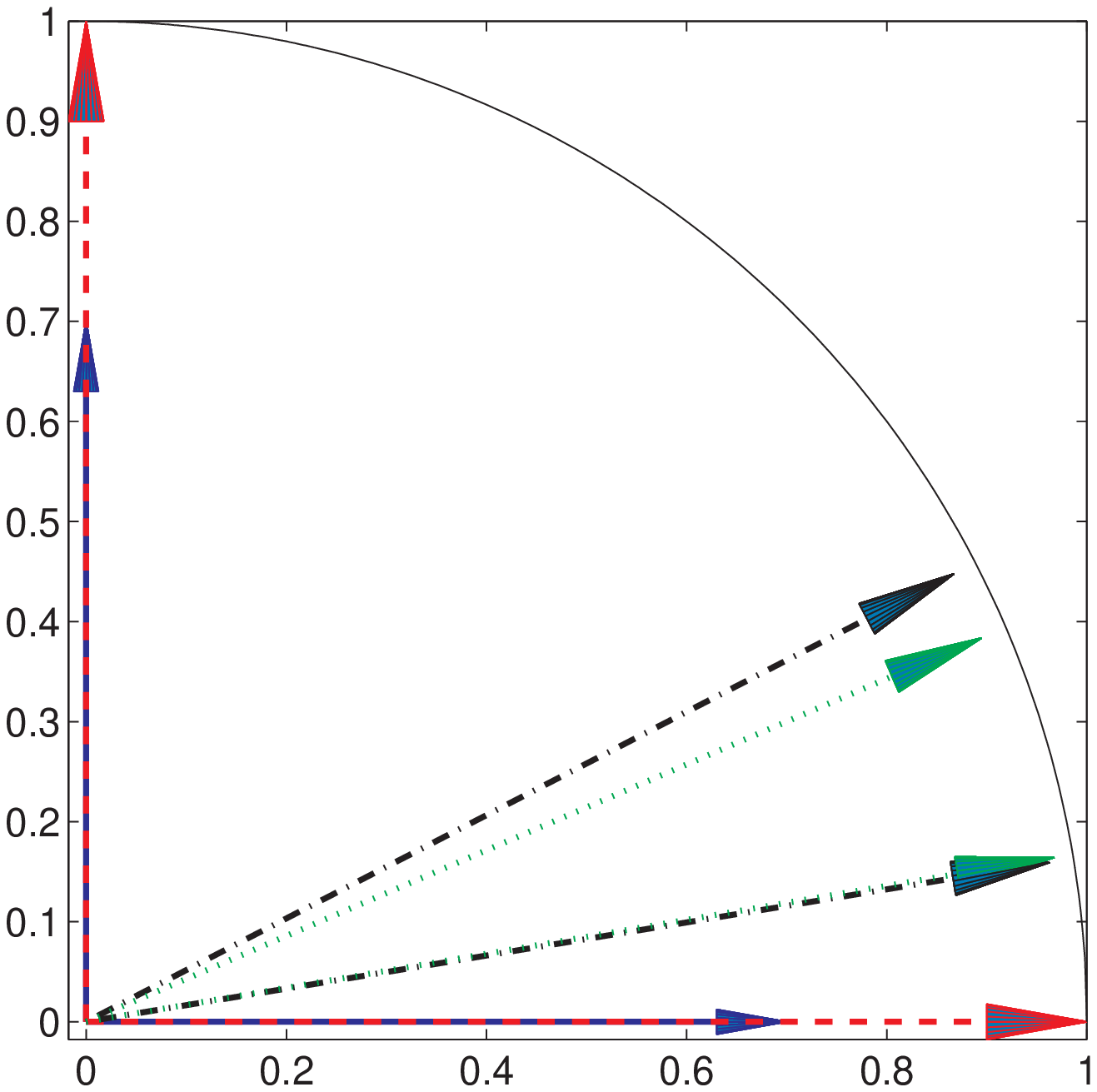}
}
\caption[Bloch vectors visualization of biased purification of equally mixed states.]{(Color online) Comparison between the Bloch vectors resulting from the numerical optimization of $\aver{\D}$, $\aver{\H^2}_1$, $\aver{\H}_2$, $\aver{\O}_2$ and $\aver{\F}_{1,2}$ for a biased transformation with $\pi_1=0.3$, $\pi_2=0.7$ of a pair of mixed source states (dashed, blue) with $R_1=R_2=0.7$ into a pair of pure states (dashed, red). The dash-dotted (black) arrows designate the optimal Bloch vectors arising from the optimization over the set of CPTP maps, while the dotted (green) arrows refer to the optimal transformation over the set of EBTP maps.}
\label{fig:compdistmes3}
\begin{picture}(0,0)
\put(-125,460){\large$\bm{\aver{\D}}$}
\put(25,460){\large$\bm{\aver{\H^2}_1}$}
\put(180,460){\large$\bm{\aver{\H}_2}$}
\put(-125,265){\large$\bm{\aver{\O}_2}$}
\put(20,265){\large$\bm{\aver{\F_{\rm HS}}_1}$}
\put(170,265){\large$\bm{\aver{\F_{\rm HS}}_2}$}
\end{picture}
\end{figure*}

\subsection{Quantitative analysis}

In the previous section we looked at some Bloch disks to visualize discrepancies between the optimization of different distance measures for a common purification problem. From the plots, we have seen that the extent to which these discrepancies occur varies according to which two measures we choose to compare. For example, a comparison between  Figs.~\ref{fig:compdistmes2_a} and~\ref{fig:compdistmes2_b} reveals a much closer resemblance than a comparison between Figs.~\ref{fig:compdistmes2_a} and \ref{fig:compdistmes2_d}. This suggests that --- \emph{for the specific control problem under consideration} --- the controller that solves the minimization of $\aver{\D}$ is not so different from the controller minimizing $\aver{\H^2}_1$, at least not as much as the controller maximizing $\aver{\F_{\rm HS}}_1$.

In this section, we attempt to make this idea of ``resemblance between optimal controllers'' a little more formal and less qualitative, in such a way to enable analogous comparisons between the role of different distance measures for problems involving higher dimensional systems and/or a larger number of atomic transformations. Our ultimate goal is to sort our distance measures in a decreasing \emph{order of compatibility} with respect to a chosen reference. For example, if $\aver{\D}$ is taken to be the reference, we would like to know how to order the remaining measures in such a way that the optimization of the first element in the list yields the controller which is, in some sense, the closest one to that produced by the minimization of $\aver{\D}$. In practice, such a list should provide the guidelines for choosing a computationally cheaper optimization problem to replace a more expensive one.

We quantify the closeness between optimal controllers according to the following construction: Start by choosing a particular convertibility problem, i.e., for some chosen values of $I$ and ${\rm d}$ select sequences of {\rm d}-dimensional density matrices $[\rho_i]_{i=1}^I$ and $[\brho_i]_{i=1}^I$. For definiteness, let $\aver{\D}$ be our reference measure, and $\aver{\D}^\ast$ its minimal value for the problem at hand. The most compatible measure $\aver{\Y}\in\{\aver{\H^2}_1,\aver{\H}_2,\aver{\O}_2,\aver{\F_{\rm HS}}_1,\aver{\F_{\rm HS}}_2\}$ \emph{with respect to the specific convertibility problem at hand}, is defined to be the one whose optimal controller yields a value of $\aver{\D}$ [denoted $\aver{\D(\Y)}$] such that $\aver{\D(\Y)}-\aver{\D}^\ast$ is the smallest over all possible choices of $\aver{\Y}$ (the second, third, etc. positions being decided in the obvious way). Throughout, we shall refer to the difference $\aver{\X(\Y)}-\aver{\X}^\ast$ as the \emph{performance drop} in units of $\aver{\X}$ due to the optimal controller for $\aver{\Y}$, or, for brevity, $\Delta(\X|\Y)$. From the above construction, it should be clear that the value of $\Delta(\X|\Y)$ is not only dependent on the choices of reference measure $\aver{\X}$ and replacement measure $\aver{\Y}$, but also on the specific choice of states involved in the transformation we want to implement.

In the following subsection, we present the details of a numerical analysis (based on unbiased transformations of random sequences  with ${\rm d}=2,3,4$ and $I=2,3,4$) that led to averaged values of $\Delta(\X|\Y)$ over many transformations. These results suggest the ``typical compatibility orderings'' proposed in Table~\ref{table:compatibility}. Because these orderings were identified from the consideration of only \emph{unbiased transformations}, we avoided redundancies and did not include $\aver{\H}_2$ and $\aver{\F_{\rm HS}}_2$ in the Table~\ref{table:compatibility} (cf. Secs.~\ref{sec:equivalenceHaltav} and~\ref{sec:theHSIP}).

Interestingly, we found that these typical orders do not seem to depend on the dimension {\rm d} of the quantum system, nor on the number $I$ of atomic transformations involved. However, our results along this direction are still preliminary and further numerical support would be required before more reliable conclusions could be drawn.

\begin{table}[h!]
\centering \caption[Conjectured compatibility orderings for unbiased state transformations.]{Conjectured compatibility orderings between distance measures based on unbiased state transformations between randomly generated sequences of ${\rm d}=2,3,4$ and $I=2,3,4$.}
\begin{tabular*}{0.4\textwidth}{@{\extracolsep{\fill}} c|c}
\hline\hline
Reference& Decreasing order\\
measure & of compatibility\\
\hline
$\aver{\D}$&$\aver{\H^2}_1$\,,\,$\aver{\O}_2$\,,\,$\aver{\F_{\rm HS}}_1$\\
$\aver{\H^2}_1$&$\aver{\O}_2$\,,\,$\aver{\D}$\,,\,$\aver{\F_{\rm HS}}_1$\\
$\aver{\O}_2$&$\aver{\H^2}_1$\,,\,$\aver{\D}$\,,\,$\aver{\F_{\rm HS}}_1$\\
$\aver{\F_{\rm HS}}_1$&$\aver{\D}$\,,\,$\aver{\H^2}_1$\,,\,$\aver{\O}_2$\\
\hline\hline
\end{tabular*}\label{table:compatibility}
\end{table}

Note that in the first line of Table~\ref{table:compatibility}, the order of compatibility with respect to $\aver{\D}$ reproduces the sequence of measures (found on Sec.~\ref{sec:unb_dif}) that yields Bloch vectors of increasing lengths and decreasing angles for the case $I={\rm d}=2$. This is a nice property, since the similarity between the Bloch vectors noted in that section could be regarded as a measure of compatibility for qubit state transformations. It is then interesting (and reassuring for the establishment of a generalized notion of compatibility between distance measures) to find that even in more general transformations, classified by a  more general compatibility measure, the same order is still observed.

From the first and third lines of Table~\ref{table:compatibility}, we note that $\aver{\H^2}_1$ is the most compatible measure to the metrics $\aver{\D}$ and $\aver{\O}_2$. The high compatibility between $\aver{\H^2}_1$  with $\aver{\D}$ is particularly significant: Because $\aver{\H^2}_1$  is quicker to compute than $\aver{\D}$ (cf. Fig.~\ref{fig:timeSDPs} and Table~\ref{table:dimSDP}),  its minimization can be regarded as an efficient estimator of a minimizer for $\aver{\D}$. Similarly, from the second line of Table~\ref{table:compatibility}, we see that $\aver{\O}_2$ is the most compatible measure to $\aver{\H^2}_1$, therefore analogous conclusions apply.

Finally (and somewhat sadly), we note that $\aver{\F_{\rm HS}}_1$ is the less compatible measure with any of the metrics. Once again, this was already seen in the less general context of the previous section. To mention some very rough estimates, we found that the performance drop $\Delta(\X|\F_{\rm HS})$ is usually of the order of $10\%$ for any choice of metric $\X$ if we consider transformations involving targets of pure states. This drop can get as high as $50\%$ in the more general case of transformations from mixed to mixed states, in which cases $\F_{\rm HS}$ is not a well motivated distance measure, as explained in Sec.~\ref{sec:max_close}.

\paragraph{Construction of Table~\ref{table:compatibility}.}

We only consider unbiased transformations of problems involving quantum systems of dimension ${\rm d}=2, 3, 4$ and a number of atomic transformations $I = 2, 3, 4$. For each one of the nine pairs $(I,d)$ that can be constructed, we computed the performance drops $\Delta(\X|\Y)$ for all possible combinations of $\X$ and $\Y$ in $300$ different transformations. From these, $100$ transformations were chosen to be between randomly generated sequences of pure states, $100$ from randomly generated sequences of pure states to randomly generated sequences of mixed states and $100$ between randomly generated sequences of mixed states.

We then computed the average value and standard deviation of $\Delta{\X|\Y}$ for each fixed choice of $\X$ and $\Y$, and for each type of transformation. Table~\ref{table:refmeasD} shows the results for the reference measure $\X=\D$ and the transformations where the source and target sequences were made out of mixed and pure density matrices, respectively. In Fig.~\ref{fig:barMP} the results of Table~\ref{table:refmeasD} are repeated and extended to account for other choices of reference measures (indicated in the horizontal axis), but still in the case of transformations from mixed to pure states. The vertical axis of each plot indicates the percental value of $\Delta(\X,\Y)$, and each bar represents a choice of $\Y$, as indicated in the code shown in the middle plot on the first line. From the observation of this plot (and others arising from other types of transformations), the compatibility orderings of Table~\eqref{table:compatibility} were constructed.

\begin{table}[h!]
\centering \caption[A quantitative analysis of the compatibility of $\aver{\D}$.]{A quantitative analysis of the compatibility of $\aver{\D}$ with respect to the other measures considered. The numbers in the table correspond to the mean values and standard deviation of $\Delta(\D|\Y)$ averaged over $100$ randomly generated unbiased transformations from mixed source sequences to pure target sequences. The choice of $\Y$ is indicated in the firs line of each column.}

\begin{tabular}{c|c|c|c|c|c|c|c}
\hline\hline
\multicolumn{2}{c|}{Reference: $\medmath{\aver{\D}}$} & $\medmath{\aver{\H^2}_1\;(\%)}$ & $\medmath{\pm \aver{\H^2}_1\;(\%)}$ & $\medmath{\aver{\O}_2\;(\%)}$ & $\medmath{\pm\aver{\O}_2\;(\%)}$ & $\medmath{\aver{\F_{\rm HS}}_1\;(\%)}$ & $\medmath{\pm\aver{\F_{\rm HS}}_1\;(\%)}$\\\hline
$I=2$&${\rm d}=2$&0.46&0.78&1.25&1.64&7.57&3.00\\\cline{2-8}
     &${\rm d}=3$&0.52&0.69&0.83&0.94&8.57&1.88\\\cline{2-8}
     &${\rm d}=4$&0.22&0.28&0.50&0.53&8.69&2\\\hline
$I=3$&${\rm d}=2$&1.25&1.50&4.21&3.25&7.82&2.70\\\cline{2-8}
     &${\rm d}=3$&0.81&0.58&1.80&1.67&6.77&2.40\\\cline{2-8}
     &${\rm d}=4$&0.77&0.57&1.36&1.29&6.81&2.24\\\hline
$I=4$&${\rm d}=2$&1.06&1.11&5.16&3.46&7.67&2.62\\\cline{2-8}
     &${\rm d}=3$&1.33&0.83&3.59&2.08&6.60&2.43\\\cline{2-8}
     &${\rm d}=4$&1.22&0.67&2.60&1.93&6.47&1.95\\\hline\hline
\end{tabular}\label{table:refmeasD}
\end{table}

\begin{figure}[h]
\centering
\includegraphics[width=14cm]{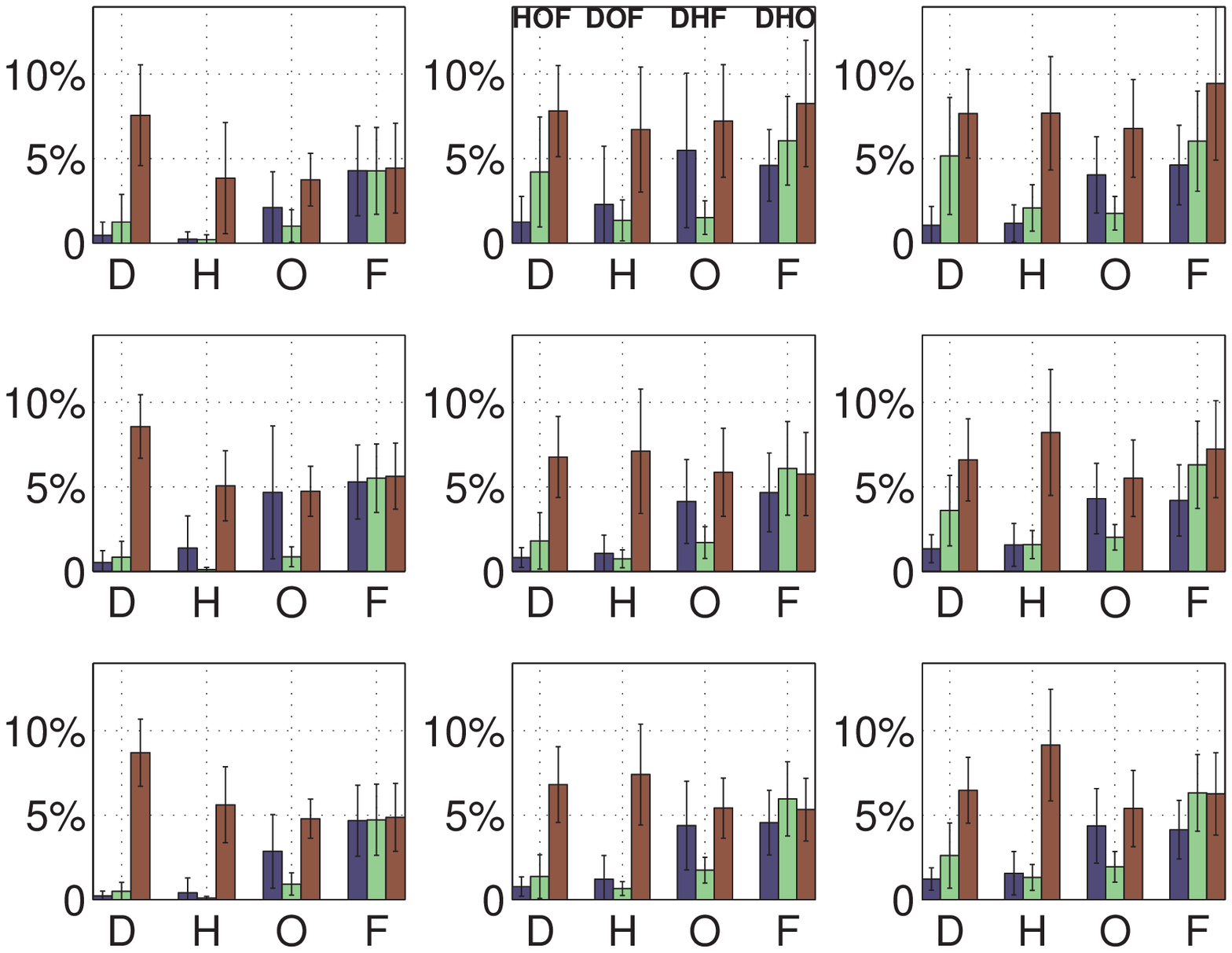}
\begin{picture}(0,0)
\put(-410,345){\vector(1,0){430}}
\put(-410,345){\vector(0,-1){350}}
\put(-350,350){$I=2$}
\put(-210,350){$I=3$}
\put(-70,350){$I=4$}
\put(-423,265){\begin{sideways}${\rm d}=2$\end{sideways}}
\put(-423,155){\begin{sideways}${\rm d}=3$\end{sideways}}
\put(-423,45){\begin{sideways}${\rm d}=4$\end{sideways}}
\end{picture}
\caption[Estimates of controllers compatibility for transformations from mixed to pure states.]{Estimates of optimal controllers compatibility for transformations between sequences of mixed and pure states. In each plot, the vertical axis gives the performance drop $\Delta(\X|\Y)$ averaged over one hundred randomly generated transformations between sequences of density matrices. $\aver{\X}$ is the reference measure specified in the horizontal axis, and $\aver{\Y}$ is one of the possible replacement measures, as specified in the middle plot on the first line. Each one of the nine plots corresponds a fixed value of $I$ and {\rm d}, indicated in the external set of axis. The error bars give the standard deviation of the averages.}\label{fig:barMP}
\end{figure} 

\chapter{Quantum control of a single qubit}\label{chap:aggie}

To a large extent, this chapter reproduces Ref.~\cite{07Branczyk012329}. Some minor notational changes were made in order to make the chapter consistent with the notation adopted in the remainder of the thesis; in addition, a few references were included and updated. More significantly, new scientific results obtained after the publication of \cite{07Branczyk012329} were included here as Sec.~\ref{sec:newsection}.
This led to a few minor additions in other parts of the text, in order to integrate the new results with the pre-existing material.\par

From a mathematical viewpoint, the main result of this chapter is an \emph{analytical solution} for the problem of optimally converting between sequences of two qubit states ($I={\rm d}=2$) in a particular setting: the target states $\brho_i$ are taken to be certain non-orthogonal pure states and the source states are taken to be $\rho_i=\mathcal{E}_p(\brho_i)$, where $\mathcal{E}_p$ is a dephasing map. The notion of optimality is captured by the maximization of $\aver{\F_{\rm HS}}_1$, which is identical to $\aver{\F}_1$ due to the purity of the targets. For ease of notation, we will denote it simply by $\aver{\F}$. The optimal solution over both feasible sets $\mathcal{Q}_2^{\rm set}$ and $\mathcal{B}_2^{\rm set}$ is obtained analytically.

Physically, the problem is phrased in terms of stabilizing an uncertain preparation of a qubit against dephasing noise. Candidate strategies are initially proposed on a physical basis, and subsequently proved to be optimal with the SDP machinery introduced in Chapter $2$. The optimal operation over $\mathcal{Q}_2^{\rm set}$ is motivated by the idea of using feedback control to optimize some quantum mechanical trade-off between information gain and disturbance. The optimal operation over $\mathcal{B}_2^{\rm set}$, in turn, arises from the classical control paradigm of maximizing the information gain.

\section[Introduction]{Introduction}\label{sec:introaggie}

Any practical quantum technology, such as quantum key distribution
or quantum computing, must function robustly in the presence of
noise.  Many modern ``classical'' technologies tolerate noise,
faulty parts, etc., by relying on \emph{feedback control} systems,
which monitor the system and use this information to control its
state.  Given the ubiquity and power of feedback control for
classical systems, it is worthwhile investigating how such control
concepts can be applied to quantum technologies as well.  However,
strategies for \emph{quantum control} must take into account
some fundamental features of quantum mechanics, namely, restrictions
on information gain, and measurement back-action.

Classically, it is possible in principle to acquire all the
information about the state of a system with certainty by using
sufficiently precise measurements.  That is, the state of a single
classical system can be precisely determined via measurement. For
quantum systems, however, this is not always possible: if the system
is prepared in one of several non-orthogonal states, no measurement
can determine which preparation occurred with certainty.

In addition, for quantum systems, monitoring comes at a price: any
measurement that acquires information about a system must
necessarily disturb it uncontrollably.  This feature is often
referred to as \emph{back-action} --- the fundamental noise induced
on a system through any measurement, which  maintains the
uncertainty relations.  This feature of quantum measurement is also
distinct from the classical situation, wherein measurements that do
not alter the state of the system can in principle be performed.

These two fundamental features of quantum systems --- that
non-orthogonal states cannot be perfectly discriminated, and that
any information gain via measurement necessarily implies disturbance
to the system --- require a reevaluation of conventional methods and
techniques from control theory when developing the theory of quantum
control.

In this chapter, we investigate the use of measurement and feedback
control of a single qubit, prepared in one of two non-orthogonal
states and subsequently subjected to noise.  Our main result is
that, in order to optimize the performance of the control scheme (as
quantified by the average fidelity of the corrected state compared
to the initial state), one must use non-projective measurements with
a strength that balances the trade-off between information gain and
disturbance.

Belavkin was the first to recognise the importance of feedback
control for quantum systems and describe a theoretical framework for
analysing both discrete and continuous time models
\cite{83Belavkin178,99Belavkin405}. Despite this early start, it is
only recently that the degree of control and isolation of quantum
systems has progressed to the point that the experimental
exploration of quantum control tasks has been possible
\cite{02Armen133602,02Smith133601,04Geremia270,04Reiner023819,04LaHaye74,06Bushev043003},
and the field is now undergoing rapid development (see for example
\cite{05job}).

The specific control problem we are interested in here is the
stabilization against noise of states of a single two level system.
Similar problems have been considered in continuous time feedback
models, e.g., the stabilization of a single state of a driven and
damped two-level atom~\cite{wang02a,02wiseman013807} and the
maintenance of the coherence of a noisy qubit using tracking
control~\cite{05lidar350}. Several recent papers have investigated
state preparation and feedback stabilization onto eigenstates of a
continuously-measured observable in higher-dimensional
systems~\cite{vanhandel05a,mirrahimi05a}.

In contrast to these prior investigations, we investigate a feedback
scheme to stabilize \emph{two} non-orthogonal states of a two-level
system. We work in a discrete-time setting, rather than
continuous-time as considered in most prior work, which considerably
simplifies the problem and most clearly illustrates the central
concepts. In significant earlier work in a discrete time setting,
Barnum and Knill proposed near-optimal strategies to correct
ensembles of orthogonal states after a general noise
process~\cite{02Barnum2097}. While Gregoratti and Werner have
investigated this kind of model of recovering the state of the
system after interaction with the
environment~\cite{03Gregoratti915,04Gregoratti2600}, their investigation
considered the case where it is possible to make measurements on the
environment. In our setting we imagine that the environment that
causes the initial decoherence is not available subsequently for the
feedback protocol.  Very recently, Ticozzi and Viola~\cite{06Ticozzi052328}
have applied both dynamical decoupling and feedback methods to
suppress unwanted dynamics of a single qubit in discrete time.

Our main interest is to investigate the effects of the kind of
trade-off between information and disturbance that is ubiquitous in
quantum information in a concrete optimal control problem. Related
information-disturbance trade-offs in quantum feedback control are
discussed in~\cite{01Doherty062306}. Finally, we note that implementing
quantum operations on a single qubit through the use of measurement
and feedback control as considered here has been investigated for
eavesdropping strategies in quantum cryptography~\cite{99Niu2764} and
for engineering general open-system dynamics~\cite{02Lloyd010101}.

Note that there is a fundamental difference between the kind of
quantum control problem we are considering here and the related task
of quantum error correction. (For an introduction to the latter,
see~\cite{00Nielsen}.)  The essence of quantum error correction is to
\emph{encode} abstract quantum information into a physical quantum
system and to choose degrees of freedom that are unaffected by the
relevant noise, or upon which errors can be deterministically
corrected. However, it can be the case that one wishes to protect
particular physical degrees of freedom of quantum systems and one is
not free to choose an arbitrary encoding.  (One such example is
\emph{reference frame distribution} via the exchange of quantum
systems~\cite{07Bartlett555}). The quantum states required for these
schemes cannot be encoded into quantum error correcting codes or
noiseless subsystems~\cite{Pre00}; protecting such systems from
noise may therefore be an application of this kind of quantum
control.

The chapter is structured as follows.  In section~\ref{sec:task}, we
define the control task in detail; in section~\ref{sec:Classical},
we present and determine the performance of control strategies based
on ``classical'' concepts. Section~\ref{sec:quantum} introduces our
quantum strategy, investigating the use of \emph{weak} quantum
measurements, and analyses its performance against the strategies of
section~\ref{sec:Classical}.  We also demonstrate that our quantum
control scheme is optimal for the task at hand.  In
section~\ref{sec:conclusions} we discuss the implications of our
result and their relevance to other problems.

\section{A Simple Control Task}
\label{sec:task}

The aim of this chapter is to explore the key issues we will confront
when applying concepts from control theory to finite-dimensional
quantum systems.  In order to facilitate the analysis and to be able
to concentrate on the key departures from classical control, we will choose a very simple quantum system and noise model.  The emphasis is
not towards a practical task, but as an illustrative example.

Consider the following operational task: a qubit prepared in one of
two non-orthogonal states $|\overline{\psi}_1\rangle$ or $|\overline{\psi}_2\rangle$ (with
overlap $\langle\overline{\psi}_1|\overline{\psi}_2\rangle = \cos\theta$ for $0\leq \theta
\leq \pi/2$) is transmitted along a noisy quantum channel. Without
knowing which state was transmitted, we will attempt to ``correct''
the system, i.e., undo the effect of the noise, through the use of a
control scheme based on measurement and feedback~\footnote{Our use
of the term ``feedback'' based on the measurement of a system refers
to a subsequent operation performed on the \emph{same} system (as
opposed to a different, identically-prepared system). This use of
the term is standard in the quantum control literature; however, the
term ``feedforward'' is occasionally given the same meaning in the
quantum computing literature (as the operation is applied forward in
the quantum circuit).  For the purpose of this thesis, we can
consider both terms as synonymous.}; see Fig.~\ref{fig:compare}.

\begin{figure}
  \begin{center}
   \includegraphics[width=0.46\textwidth]{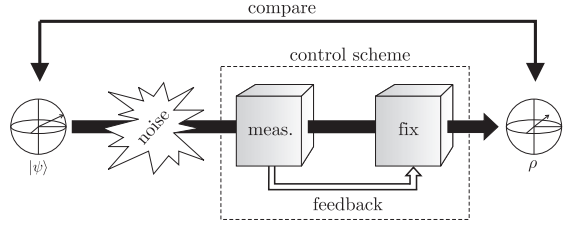}
  \end{center}
  \caption[Schematic of a quantum control procedure.]
  {Schematic of a quantum control procedure. A qubit,
  subjected to dephasing noise, is subsequently measured and corrected
  based on the results of this measurement. The output state
  $\rho^{\rm fix}$ is compared with the input state $|\overline{\psi}\rangle$ to
  characterise how well the scheme performs.}
  \label{fig:compare}
\end{figure}

The noise model that we will consider is \emph{dephasing noise}.
Let $\{|0\rangle,|1\rangle\}$ be a basis for
the qubit Hilbert space, and the Pauli operator $Z$ is the unitary
operator defined by $Z|0\rangle = |0\rangle$, $Z|1\rangle = -
|1\rangle$.  Dephasing noise is characterized as follows: with
probability $p$ a phase-flip $Z$ is applied to the system, and with
probability $1-p$ the system is unaltered.  The noise is thus
described by a quantum operation~\cite{00Nielsen}, i.e., a
completely-positive trace-preserving (CPTP) map $\mathcal{E}_p$,
that acts on a single-qubit density matrix $\rho$ as
\begin{equation}\label{eq:erho}
  \mathcal{E}_p(\rho)=p(Z\rho Z)+(1-p)\rho\,.
\end{equation}
We will consider the noisy channel to be fully characterized, meaning that $p$
is known and without loss of generality in the range $0\leq p \leq 0.5$.

We will choose the two initial states to be oriented in such a way that their
distinguishability, as measured by their trace distance, is maintained under
the action of the noise.  It is straightforward to show that this condition is
satisfied by the states
\begin{align}
  \label{eq:State1}
  |\overline{\psi}_1\rangle&=\cos\tfrac{\theta}{2}|{+}\rangle+\sin\tfrac{\theta}{2}|{-}\rangle\,,\\
  \label{eq:State2}
  |\overline{\psi}_2\rangle&=\cos\tfrac{\theta}{2}|{+}\rangle-\sin\tfrac{\theta}{2}|{-}\rangle\,,
\end{align}
where $|{\pm}\rangle=(|0\rangle\pm |1\rangle)/\sqrt{2}$.

\begin{figure}
  \begin{center}
   \includegraphics[width=0.25\textwidth]{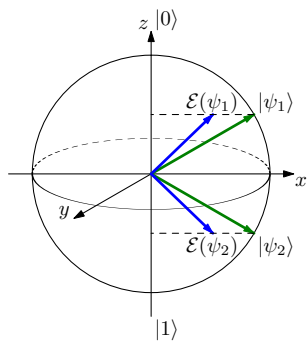}
  \end{center}
  \caption[Initial states and dephasing noise action on the Bloch sphere]
  {(Color online) Bloch sphere representation of the initial states, and the states after the noise.
  The noise shortens the Bloch vectors along the $x$-axis. We have used the notation
  $\mathcal{E}(\psi)$ as a shorthand for $\mathcal{E}(|\psi\rangle\langle\psi|)$.}
  \label{fig:bloch}
\end{figure}

Consider the Bloch sphere defined by states $|0\rangle$ and
$|1\rangle$ as the poles on the $z$-axis.   The two states
$|\overline{\psi}_1\rangle$ and $|\overline{\psi}_2\rangle$ lie in the $x{-}z$ plane and
straddle the equator of the Bloch sphere by angles $\pm\theta$; see
Fig.~\ref{fig:bloch}. On this Bloch sphere, the dephasing noise
acting on these states has the effect of decreasing the
$x$-component of their Bloch vectors. The trace distance between
these two states, given by the Euclidean distance between their
Bloch vectors, is invariant under this dephasing noise.

We now consider whether there exists a control procedure
$\mathcal{C}$ (some ``black box'') that can correct the state of
this system and counteract the noise, at least to some degree,
independent of which input state was prepared.  To quantify the
performance of any such procedure, we will use the average fidelity
to compare the noiseless input states $|\overline{\psi}_i\rangle$ with the
corrected output states $\rho^{\rm fix}_i$. Assuming an equal probability for
sending either state $|\overline{\psi}_1\rangle$ or $|\overline{\psi}_2\rangle$, the
figure of merit is
\begin{align}\label{eq:AverageFidelity}
    \aver{\F^{\mathcal{C}}} &= \tfrac{1}{2}\F(|\overline{\psi}_1\rangle,\rho^{\rm fix}_1) +
    \tfrac{1}{2}\F(|\overline{\psi}_2\rangle,\rho^{\rm fix}_2) \nonumber \\
    &= \tfrac{1}{2}\langle \overline{\psi}_1 |\rho^{\rm fix}_1 |\overline{\psi}_1\rangle +
    \tfrac{1}{2}\langle \overline{\psi}_2 |\rho^{\rm fix}_2 |\overline{\psi}_2\rangle \,,
\end{align}
where the fidelity between a pure state $|\psi\rangle$ and a mixed
state $\rho$ is defined as $\F(|\psi\rangle,\rho) \equiv
\langle\psi|\rho|\psi\rangle$.  The fidelity $\F$ ranges from $0$ to $1$
and is a measure of how much two states overlap each other (a
fidelity of 0 means the states are orthogonal, whereas a fidelity of
1 means the states are identical). It has the following simple
operational meaning when the input state is pure: the fidelity
$\F(|\psi\rangle,\rho)$ is the probability that the state
$\rho$ will yield outcome $|\psi\rangle$ from the projective
measurement
$\{|\psi\rangle\langle\psi|,|\psi^\perp\rangle\langle\psi^\perp|
\}$.

Thus, the aim is to find a control operation, described by a CPTP
map $\mathcal{C}$ independent of the choice of initial state, such
that the corrected states
\begin{equation}\label{eq:CorrectedState}
    \rho^{\rm fix}_i =
    \mathcal{C}\bigl[\mathcal{E}_p(|\overline{\psi}_i\rangle\langle\overline{\psi}_i|)\bigr] \,,
\end{equation}
for $i=1,2$ are close to the original states as quantified by the
average fidelity.  We consider control operations that consist of
two steps:  a measurement on the quantum system, followed by a
feedback operation that is conditioned on the measurement result, as
shown in Fig.~\ref{fig:compare}.

\section{Classical Control}
\label{sec:Classical}

In this section, we introduce three types of control schemes for this task, all of which are based on classical concepts, and we calculate the performance of these schemes based on the average fidelity.

\subsection{Deterministic Discriminate and Reprepare}\label{sec:DDR}

For the control of classical systems, it is always advantageous to acquire as much information about the system as possible in order to implement the best feedback scheme.  In line with this principle, a possible control strategy would be to perform a measurement on the system which attempts to discriminate between the input states, and then to reprepare the system in some state based on the measurement result. Three types of discriminate-and-reprepare schemes are investigated here and in the next section.

An important constraint imposed on the two schemes of this section is that every possible measurement outcome points to some initial preparations and is followed by the repreparation of some suitable state. Strategies of this sort are termed \emph{deterministic}. On the other hand, it is possible to design schemes where some measurement outcomes do not suggest any initial preparation; for these, only the cases where discrimination step succeeds contribute to the performance. Such schemes are termed \emph{stochastic} and will be investigated in the next section.

We first characterize all possible deterministic discriminate-and-reprepare
schemes; such schemes are associated with \emph{entanglement
breaking trace preserving} (EBTP) maps~\cite{03Horodecki629,03Ruskai643}, as
follows. Any discrimination step is described by a generalized
measurement, (or positive operator-valued measure (POVM))~\cite{00Nielsen}
yielding a classical probability distribution. The generalized
measurement is
described by the operators $\{P_a\}$ with $P_a\geq 0$ and
$\sum_a P_a = \openone_2$.
The resulting map on the quantum system is called a
\emph{quantum-classical} map $\mathcal{QC}$~\cite{98Holevo1295}, given by
\begin{equation}
  \mathcal{QC}(\rho)=\sum_a{\tr\left(\rho P_a\right)|e_a\rangle\langle e_a|}\,,
\end{equation}
where $\{ |e_a\rangle \}$ is an orthonormal basis.  The reprepare
step, in which the quantum system is re-prepared based on the
classical measurement outcome, is described by a
\emph{classical-quantum} map $\mathcal{CQ}$~\cite{98Holevo1295}, given by
\begin{equation}
  \mathcal{CQ}(\rho)=\sum_{b}{\tr\left(\rho |e_b\rangle\langle e_b|\right)Q_b}\,,
\end{equation}
where $\{Q_b\}$ are density matrices.

The concatenation $(\mathcal{CQ} \circ \mathcal{QC})(\rho)$ leads to a map of the form
\begin{equation}\label{eq:EBTP}
  \mathcal{B}(\rho) =\sum_b{\tr\left(\rho P_b\right)Q_b}\,.
\end{equation}
This map is an entanglement breaking channel. The name arises
because the output system is unentangled with any other system,
regardless of its input state.  In fact it is straightforward to see
from~\cite{03Horodecki629,03Ruskai643} that all EBTP maps can be realised by some
discriminate-and-reprepare scheme. Thus these EBTP maps formalize
our notion of deterministic discriminate-and-reprepare strategies.

The measurement for discriminating two (possibly mixed) preparations
given by Helstrom~\cite{76Helstrom} is optimal in terms of maximizing the
average probability of a success.  For our choice of states,
Helstrom's measurement is a projective measurement onto the basis
$\{ |0\rangle, |1\rangle \}$, which successfully discriminates the
states $|\overline{\psi}_1\rangle$ and $|\overline{\psi}_2\rangle$ with probability
$P_{\rm Hel} = \frac{1}{2}(1+\sin\theta)$. Note that because of
the particular choice of dephasing noise, this success probability is
independent of the noise strength $p$.

We now present and analyse two possible discriminate-and-reprepare
strategies, both of which are based on Helstrom's measurement.

\subsubsection*{Deterministic Discriminate and Reprepare Scheme 1:}\label{sec:DDR1}

With the outcome of Helstrom's measurement, one strategy is to
reprepare the qubit in either state $|\overline{\psi}_1\rangle$ or
$|\overline{\psi}_2\rangle$ based on this measurement outcome.  This scheme
yields an average fidelity
\begin{align}
  \aver{\F^{\rm DDR1}} &= P_{\rm Hel}\times 1 + (1-P_{\rm Hel})\times|\langle\overline{\psi}_1|\overline{\psi}_2\rangle|^2\nonumber\\
  &=1-\tfrac{1}{2}\left(\sin^2{\theta}-\sin^3{\theta}\right)\,.
  \label{eq:fcl}
\end{align}
Such a replacement ignores the fact
that the discrimination step can fail, with probability $1-P_{\rm  Hel}$, in which case a prepared state $|\overline{\psi}_1\rangle$ would be
reprepared as $|\overline{\psi}_2\rangle$ (or vice versa).

\subsubsection*{Deterministic Discriminate and Reprepare Scheme 2:}

We can consider other strategies that reprepare different states so
as to reduce the effect of the aforementioned error.  In particular,
we now demonstrate that the following pair of states maximizes the
average fidelity:
\begin{equation}
  |\Psi_\pm^{\rm DDR}\rangle=\sqrt{\tfrac{1}{2}\pm\tfrac{\sin^2{\theta}}{2\gamma}}
  |0\rangle+\sqrt{\tfrac{1}{2}\mp\tfrac{\sin^2{\theta}}{2\gamma}}|1\rangle\,,
\end{equation}
where $\gamma\equiv\sqrt{\sin^4{\theta}+\cos^2{\theta}}$. Note that
this replacement is also independent of $p$. Here, $|\Psi_+\rangle$
is prepared if the measurement outcome corresponds to
$|\overline{\psi}_1\rangle$, and $|\Psi_-\rangle$ is prepared otherwise. In
this strategy, the reprepared states are slightly biased towards the
alternate state to that suggested by the measurement (smaller
$\theta$) --- in a sense hedging our bet. As a proof of the
superiority of this scheme over the former, the fidelity
\begin{align}
  \aver{\F^{\rm DDR2}} & = P_{\rm Hel}\times \left(\tfrac{1}{2}|\langle\overline{\psi}_1|\Psi_+^{\rm DDR}\rangle|^2+\tfrac{1}{2}|\langle\overline{\psi}_1|\Psi_-^{\rm DDR}\rangle|^2\right) +\\
   &(1-P_{\rm Hel})\times  \left(\tfrac{1}{2}|\langle\overline{\psi}_1|\Psi_-^{\rm DDR}\rangle|^2+\tfrac{1}{2}|\langle\overline{\psi}_1|\Psi_+^{\rm DDR}\rangle|^2\right)
  \nonumber\\
  & =\frac{1}{2}+\frac{1}{2}\sqrt{\cos^2{\theta}+\sin^4{\theta}}\,,
  \label{eq:fcl2}
\end{align}
satisfies $\aver{\F^{\rm DDR2}}\geq \aver{\F^{\rm DDR1}}$ for all $\theta$. Both
$\aver{\F^{\rm DDR1}}$ and $\aver{\F^{\rm DDR2}}$ are presented in Fig.~\ref{fig:fid_all}(a).

\begin{figure}
\begin{center}
\includegraphics[width=\textwidth]{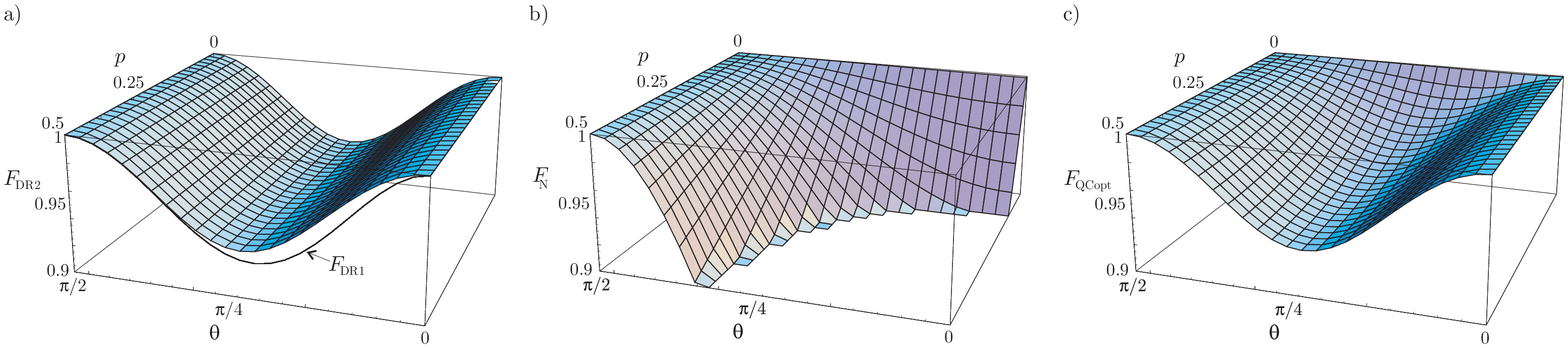}
\end{center}
\caption[Average fidelities for quantum and classical schemes.]{(Color online) The performance of the schemes, quantified by the average fidelity, as a function of the amount of noise $p$ and the angle between the input states $\theta$. a) Deterministic discriminate and reprepare scheme quantified by the average fidelity $\aver{\F^{\rm DDR2}}$ of Eq.~(\ref{eq:fcl2}). The fidelity $\aver{\F^{\rm DDR1}}$ of Eq.~(\ref{eq:fcl}) is shown as a solid line at $p=0.5$. Both average fidelities $\aver{\F^{\rm DDR2}}$ and $\aver{\F^{\rm DDR1}}$ are independent of $p$. b)``Do nothing'' scheme, quantified by the average fidelity $\aver{\F^{\rm DN}}$ of Eq.~(\ref{eq:fn}). For this scheme, the average fidelity drops to $\aver{\F^{\rm DN}} = 1/2$ for $p=0.5$ and $\theta=0$. c) Stochastic classical and deterministic quantum control schemes, quantified by the average fidelities $\aver{\F^{\rm SDR}}=\aver{\F^{\mathrm{QCopt}}}$ of Eqs.~\eqref{eq:fid_stochastic} and~\eqref{eq:fid_qc}. The range of fidelities plotted has been made identical in all the figures to aid comparison. } \label{fig:fid_all}
\end{figure}

This second discriminate-and-reprepare scheme is in fact the
\emph{optimal} deterministic discriminate and reprepare scheme, in that it
achieves the highest average fidelity
\begin{equation}
  \max_{\mathcal{B}} \aver{\F^\mathcal{B}}
  = \max_{\mathcal{B}} \tfrac{1}{2}\sum_{i=1}^{2}{\langle\overline{\psi}_i|
    \mathcal{B}\bigl[\mathcal{E}_p(|\overline{\psi}_i\rangle\langle\overline{\psi}_i|)\bigr]|\overline{\psi}_i\rangle}
    \,,
  \label{eq:EBTPoptimisation}
\end{equation}
where the maximization is over all EBTP maps $\mathcal{B}$ acting on
a single qubit.  This optimization was performed (in a different
setting) by Fuchs and Sasaki~\cite{03Fuchs377}. In the Appendix~\ref{app:optdetc}, we
provide an alternate proof of optimality using techniques from convex optimization.

\subsection{Stochastic Discriminate and Reprepare}\label{sec:newsection}

In this section we propose a particular discriminate-and-reprepare scheme in which the discrimination step can produce one of three different outcomes. In two of them, we get a suggestion of what the initial preparation was and suitably reprepare the system. In contrast, no suggestion is conveyed when the third outcome occurs, in which case we simply declare our ignorance and do not reprepare any state. The performance of the scheme is computed considering only the random occurrences of suggestive outcomes, and for this reason, the scheme is said to be \emph{stochastic} or \emph{post-selected}.

The scheme studied here can still be modeled by an entanglement breaking map, however not a trace preserving one. The trace preserving condition is relaxed to account for the fact that sometimes no output state is produced, which, on average, leads to a map that outputs ``density matrices of trace less than one''. In fact, the set of entanglement breaking \emph{trace decreasing} (EBTD) maps characterize all possible \emph{stochastic} discriminate-and-reprepare schemes. The particular scheme proposed in the following is relevant because it seems to be optimal over the set of EBTD maps. In particular, it arises from the POVM that maximizes the success rate of discrimination for a certain fixed fraction of inconclusive results~\cite{03Fiurasek012321,03Eldar042309}, followed by a replacement resulting from an optimization procedure via Langrange multipliers.

Motivated by the results of Refs.~\cite{03Fiurasek012321,03Eldar042309}, we propose the following POVM for the implementation of the discrimination step,
\begin{align}
\Pi_0&=\frac{r_x}{1+r_x}\left(\openone_2+X\right)\,,\\
\Pi_\pm&=\frac{1}{1+r_x}\left(\openone_2-r_x X \pm \sqrt{1-r_x^2} Z\right)\,,
\end{align}
where $r_x = (1-2p)\cos{\theta}$ is the $x$ component of the Bloch vector describing the system after the noise.

The outcomes `$+$' and `$-$' are interpreted to be \emph{suggestive} of the initial preparation $\ket{\overline{\psi}_1}$ and $\ket{\overline{\psi}_2}$, respectively\footnote{Recall from Ref.~\cite{04Feng012308} that in our case of two mixed states of non-orthogonal support it is impossible to design a POVM for \emph{unambiguous state discrimination}. That is why the outcomes `$\pm$' do not determine the initial preparation.}. As in the second deterministic discriminate-and-reprepare scheme of the previous section, we acknowledge the possibility of a misleading suggestion by repreparing states which are not precisely $\ket{\overline{\psi}_1}$ or $\ket{\overline{\psi}_2}$, but slightly biased to the alternate state to that suggested by the measurement. In this case, the replacements are, up to a normalization factor,
\begin{equation}
\ket{\Psi_\pm^{\rm SDR}}=\left(\frac{1}{\mp\zeta+\sqrt{1+\zeta^2}}\right)\ket{0}+\ket{1}\,,
\end{equation}
where $\zeta$ is implicitly defined as a function of $\theta$ and $p$ according to $\zeta\equiv\tan{\theta}\sin{\theta}/\sqrt{1-r_x^2}$. It follows from this that now the repreparation depends not only on the initial states, but also on the details of the dephasing map; this fact being in contrast with the replacements used in the classical deterministic schemes introduced in the previous section.

An outcome `$0$' signals an inconclusive result, in which case no repreparation step takes place. As explained before, this event is not taken into account for the characterization of the performance of the scheme.

Let us now show how the stochasticity is included in the computation of the average fidelity. We first compute the probabilities $P_{0}$, $P_{SC}$ and $P_{SI}$ of the following events: an inconclusive outcome, a suggestive outcome that correctly indicates the initial preparation and a suggestive outcome that incorrectly indicates the initial preparation,
\begin{align}
P_0&=\frac{1}{2}\tr\left[\Pi_0\mathcal{E}(\ket{\overline{\psi}_1}\!\bra{\overline{\psi}_1})\right]+ \frac{1}{2}\tr\left[\Pi_0\mathcal{E}(\ket{\overline{\psi}_2}\!\bra{\overline{\psi}_2})\right]\,,\\
P_{SC}&=\frac{1}{2}\tr\left[\Pi_+\mathcal{E}(\ket{\overline{\psi}_1}\!\bra{\overline{\psi}_1})\right]+ \frac{1}{2}\tr\left[\Pi_-\mathcal{E}(\ket{\overline{\psi}_2}\!\bra{\overline{\psi}_2})\right]\,,\\
P_{SI}&=1-P_I-P_{SR}\,.
\end{align}

Then, the average fidelity is given by
\begin{align}
\aver{\F^{\rm SDR}}&=\frac{1}{1-P_0}\left[P_{SC}\left(\frac{1}{2}|\langle\overline{\psi}_1|\Psi_+^{\rm SDR}\rangle|^2+\frac{1}{2}|\langle\overline{\psi}_2|\Psi_-^{\rm SDR}\rangle|^2\right)\right.+\nonumber\\&
P_{SI}\left.\left(\frac{1}{2}|\langle\overline{\psi}_1|\Psi_-^{\rm SDR}\rangle|^2+\frac{1}{2}|\langle\overline{\psi}_2|\Psi_+^{\rm SDR}\rangle|^2\right)\right]\,.
\end{align}
where the division by $1-P_0$ guarantees that only the suggestive outcomes are accounted. After some cumbersome manipulation, we obtain
\begin{equation}\label{eq:fid_stochastic}
\aver{\F^{\rm SDR}}=\frac{1}{2}+\frac{1}{2}\sqrt{\cos^2{\theta}+\frac{\sin^4{\theta}}{1-r_x^2}}\,.
\end{equation}

Comparing the above with the performance from the optimal deterministic discriminate and replace from Eq. \eqref{eq:fcl2}, we note that $\aver{\F^{\rm SDR}}\geq \aver{\F^{\rm DDR2}}$, since the denominator $1-r_x^2$ is obviously bounded between zero and one. Of course, it is the possibility of disregarding certain measurement outcomes that allows for this improvement.

In Sec.~\ref{sec:quantum} we will show that the fidelity~\eqref{eq:fid_stochastic} can be obtained in a deterministic framework if we switch from the classical concept of discriminate-and-reprepare to a genuinely quantum approach to control quantum systems. Quite remarkably, we will see that Eq.~\eqref{eq:fid_stochastic} gives precisely the performance of the \emph{optimal} deterministic quantum control scheme.

\subsection{Do Nothing}\label{sec:donothing}

Another control strategy would be to do nothing to correct the
states.  Although trivial, this strategy is of interest for
comparison with other schemes.  (There exist schemes that perform
worse than this strategy, because of the feature of quantum systems
that every measurement that acquires information will uncontrollably
disturb the system.)  This scheme does \emph{not} lie within the set
of discriminate-and-reprepare schemes described above (it is not
described by an entanglement breaking map) but we will nonetheless refer to it as
``classical.''

The average fidelity of this scheme is given by
\begin{equation}
   \aver{\F^{\rm DN}} = 1-p\cos^2\theta \,.
   \label{eq:fn}
\end{equation}
This performance is plotted in Fig.~\ref{fig:fid_all}(b).  Clearly,
this scheme performs best for small amounts of noise ($p\simeq 0$)
and for input states with Bloch vectors that are near the $z$-axis
(which is invariant under the dephasing noise).  In some non-trivial
regions of the $(p,\theta)$ parameter space, in particular in the
range of low noise, this ``do nothing'' scheme outperforms the
optimal deterministic discriminate-and-reprepare scheme.

\section{Deterministic Quantum Control}
\label{sec:quantum}

In the previous section, we presented control schemes based on
classical concepts.  However, using techniques that may lead to optimal control schemes for a classical system may not necessarily lead to optimal schemes for a quantum system.  As we will now demonstrate, the above deterministic classical control strategies can be outperformed by using a strategy based on quantum concepts, and the performance of the classical strategy can be obtained in a deterministic framework.

We note that the classical schemes presented in the previous section lie at the extreme ends of a spectrum:  the
``discriminate-and-reprepare'' strategy achieved maximum information
gain and induced a maximum disturbance, whereas the ``do-nothing''
strategy achieved zero disturbance but produced zero information
gain. As demonstrated by Fuchs and Peres~\cite{96Fuchs2038}, there exist
an entire range of generalized measurements that trade off
information gain and disturbance. A possible avenue for improvement
in our control schemes is to tailor the measurement in such a
way as to find a compromise, if one exists, between acquiring
information about the noise but not disturbing the system too much
as a result of the measurement.

In the following, we exploit the non-uniqueness of the Kraus decomposition of a CPTP map (cf. Sec.~\ref{sec:krausnonunique}) to re-express the noise process $\mathcal{E}_p$ in a
way that suggests a strategy for constructing such an improved
feedback protocol.

\subsection{Reexpressing the noise}

To develop an intuitive picture, we will make use of a
\emph{preferred ensemble} for the quantum operation $\mathcal{E}_p$
describing the noise.  That is, we use a decomposition of the
operation into different Kraus (error) operators than that given in
Eq.~(\ref{eq:erho}).  The resulting quantum operation
$\mathcal{E}_p$ describing the noise, however, is equivalent.

Consider the following quantum operation on a qubit, viewed on the
Bloch sphere: with probability $1/2$, the Bloch vector of the qubit
is rotated by an angle $+\alpha$ about the $z$-axis, and with
probability $1/2$ it is rotated by $-\alpha$ about the $z$-axis.
Rotations about the $z$-axis are described by the operator
\begin{equation}\label{eq:ZRotation}
    Z_{\alpha} = \mathrm{e}^{-i\alpha Z/2} = \cos(\alpha/2)\openone_2 - i
    \sin(\alpha/2)Z \,,
\end{equation}
and the quantum operation is then
\begin{align}
    \mathcal{E}_\alpha(\rho) &= \tfrac{1}{2}Z_\alpha\rho
    Z_\alpha^\dag + \tfrac{1}{2}Z_{-\alpha}\rho
    Z_{-\alpha}^\dag \nonumber \\
    &= \sin^2(\alpha/2) (Z\rho Z) + \cos^2(\alpha/2) \rho \,.
    \label{eq:PreferredEnsemble}
\end{align}
Thus, this quantum operation is equivalent to the dephasing noise
$\mathcal{E}_p$, with $p = \sin^2(\alpha/2)$.

Viewing the noise operation $\mathcal{E}_p$ with this preferred
ensemble, it is possible to describe the noise as rotating the Bloch
vector of the state by $\pm\alpha$ with equal probability.  A
possible control strategy, then, would be to attempt to acquire
information about the direction of rotation ($\pm \alpha$) via an
appropriate measurement, and then to \emph{correct} the system based
on this estimate.  Loosely, we desire a measurement that determines
whether the noise rotated the state one way ($+\alpha$) or another
($-\alpha$). Then, based on the measurement result, we apply
feedback:  a unitary operation (rotation) that takes the state of
the system back to the desired axis.

A projective measurement, wherein the state of the system collapses
to an eigenstate of the measurement, does not meet these
requirements because such a measurement destroys the
distinguishability of the two possible states.  Instead, we consider
the use of a \emph{weak} measurement, with a measurement strength
chosen to balance the competing goals of acquiring information and
leaving the system undisturbed.  We now show that such a strategy is
possible, and that there is a non-trivial optimal measurement
strength for this task.

\subsection{Weak non-destructive measurements}\label{sec:wk_meas}

For our quantum control scheme, we will make use of a
type of measurement that satisfies two
key requirements: (1)~the strength of the measurement should be
controllable, i.e., we should be able to vary the trade-off between
information gain and disturbance (back-action); and (2)~the
measurement should be non-destructive,
which leaving the measured system in an appropriate quantum state
given by the desired collapse map.
Such weak non-destructive measurements have recently been developed and
demonstrated in single-photon quantum optical
systems~\cite{pryde:190402,ralph:012113}.

Using the preferred ensemble describing the noise,
Eq.~(\ref{eq:PreferredEnsemble}), we expect intuitively that this
weak measurement should be along the $y$-axis of the Bloch sphere in
order to provide information about which direction ($\pm\alpha$) the
system was rotated, \emph{without} acquiring information about which
initial state the system was prepared in. One suitable family of
POVMs consists of two operators given by $E_m = M_m^\dag M_m$, for
$m=0,1$, where $M_m$ are the measurement operators~\cite{00Nielsen}
\begin{align}\label{eq:MeasurementOperator1}
    M_0 &= \cos(\chi/2)|{+}i\rangle\langle{+}i| +
    \sin(\chi/2)|{-}i\rangle\langle{-}i| \,, \\
    \label{eq:MeasurementOperator2}
    M_1 &= \sin(\chi/2)|{+}i\rangle\langle{+}i| +
    \cos(\chi/2)|{-}i\rangle\langle{-}i| \,.
\end{align}
The strength of the measurement depends on the choice of the parameter
$\chi$. The eigenstates of $Y$ are $|{\pm} i\rangle \equiv (|0\rangle\pm
i|1\rangle)/\sqrt{2}$.
The probabilities of obtaining the measurement results $m=0,1$ for a qubit in the state $\rho_{\rm in}$ are given by
\begin{equation}\label{eq:BornRule}
    p_m = \tr\left(E_m \rho_{\rm in}\right) \,,
\end{equation}
and the resulting state of the qubit immediately after the
measurement is
\begin{equation}\label{eq:PostMeasState}
    \rho_{\rm out}^{(m)} = \frac{M_m \rho_{\rm in}
    M_m^{\dag}}{p_m} \,.
\end{equation}

Consider the following two limits. If $\chi=\pi/2$ the two
measurement operators are the same and are proportional to the
identity. As a result the outcome probabilities are independent of
the state and the state of the signal is unaltered by the
measurement.  If $\chi=0$, a projective measurement on the signal is
induced: the signal state is projected onto the state
$|{{-}i}\rangle$ ($|{{+}i}\rangle$) when the measurement result is 0
(1). For $0 < \chi < \pi/2$, the resulting measurement on the signal
is non-projective but non-trivial.

It is illustrative to view the effect of this measurement on the
noisy input states on the Bloch sphere. In
Fig.~\ref{fig:stateprog}(a) we can see that the effect of the noise
is to shorten the length of the Bloch vector of the qubit state
(making it less pure) while \emph{increasing} the angle between the
Bloch vector and the $x$-$y$ plane from $\theta$ to $\theta'$, where
$\theta'>\theta$ . When the measurement is made, three things
happen, as can be seen in Fig.~\ref{fig:stateprog}(b): 1) the Bloch
vector is lengthened (the state becomes more pure); 2) the angle
$\theta'$ \emph{decreases} to some lesser angle $\theta''$; and 3)
the state is rotated about the $z$-axis one way or the other
depending on the result of the measurement.  The first two effects
work towards our advantage (purifying the state while decreasing
$\theta'$); the third effect we attempt to correct using
\emph{feedback}.

\begin{figure}
  \begin{center}
   \includegraphics[width=.45\textwidth]{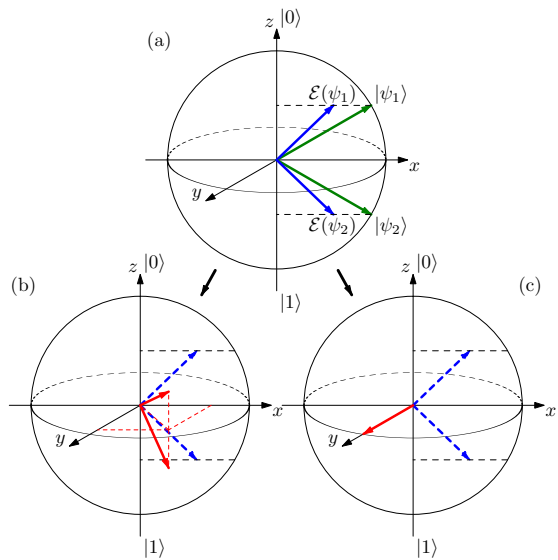}
  \end{center}
  \caption[Bloch sphere representation of the effect of a weak
  measurement.]{(Color online) Bloch sphere representation of the effect of a weak
  measurement on the system.  The transformations shown
  here correspond to having obtained the measurement result
  ``$0$''  (for the result ``$1$'',
  the behaviour would be a reflection in the $x$-$z$ plane.)
  a) The two initial states
  $|\overline{\psi}_{1,2}\rangle$ are mapped to $\rho_{1,2}$ by the noise;
  b) a weak measurement is performed
  with $0<\chi<\frac{\pi}{2}$; c) a strong projective measurement
  ($\chi=0$) is performed projecting either state into
  $|{{-}i}\rangle$. While no measurement will not yield any information
  about the system,
  a strong measurement will maximally disturb the system.
  A weak measurement will gain some information
  while also limiting the disturbance on the system.}
  \label{fig:stateprog}
\end{figure}

We will now describe how to implement this measurement using a
projective measurement on an ancillary {\it meter} qubit and an
entangling gate between the original {\it signal} qubit and the
meter. The strength of the measurement can be controlled by varying
the level of entanglement between the two qubits, which can be
implemented by initiating the meter in the state $|0\rangle$ and
subsequently applying a $Y_{\chi}$ rotation [as shown in figure
\ref{fig:schem}(a)], where
\begin{equation}
  Y_{\chi}=\mathrm{e}^{-i\chi
  Y/2}= \begin{pmatrix}\cos(\chi/2)&-\sin(\chi/2)\\
  \sin(\chi/2)&\cos(\chi/2)\end{pmatrix}\,.
\end{equation}
The parameter $\chi$ ranges from $0$ to $\pi/2$ and characterizes
the strength of the measurement, with 0 equivalent to a projective
measurement and $\pi/2$ equivalent to no measurement.

The entangling gate consists of a $X_{\frac{\pi}{2}}$ rotation on
the signal state, followed by a \textsc{cnot} gate with the signal
state as the control and the meter state as the target, followed by
a $X_{-\frac{\pi}{2}}$ on the signal state, where
\begin{equation}
  X_{\phi}=\mathrm{e}^{-i\phi
  X/2}= \begin{pmatrix} \cos(\phi/2) & -i\sin(\phi/2)\\
  -i\sin(\phi/2)&\cos(\phi/2)\end{pmatrix} \,,
\end{equation}
and where the Pauli matrix $X$ is given by $X|0\rangle = |1\rangle$
and $X|1\rangle = |0\rangle$.  The rotations $X_{\pm\frac{\pi}{2}}$
are used to ensure that the resulting weak measurement on the signal
qubit is performed in the $\{|{+}i\rangle,|{-}i\rangle\}$ basis. The
entangling gate then correlates (to a degree which depends on
$\chi$) the $\{|{+}i\rangle,|{-}i\rangle\}$ basis of the signal
qubit to the $\{|0\rangle,|1\rangle\}$ basis of the meter qubit.

\begin{figure}[h]
\centering
\subfigure[]
{
    \label{fig:sub:a}
    \includegraphics[width=0.45\textwidth]{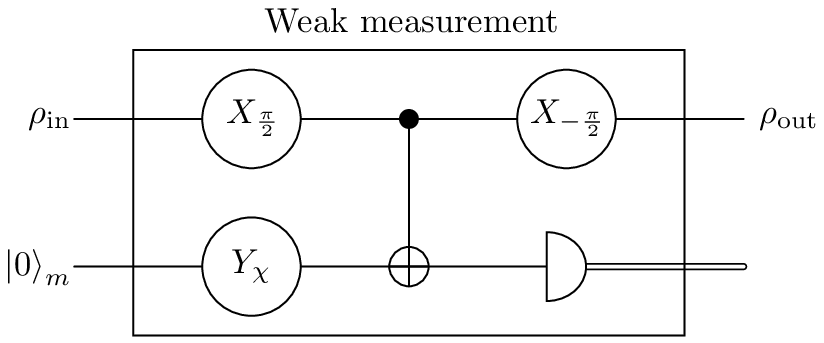}
}
\subfigure[]
{
    \label{fig:sub:b}
    \includegraphics[width=0.45\textwidth]{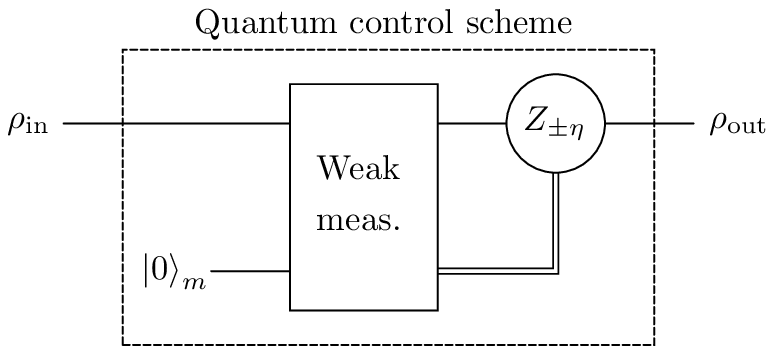}
}
\caption[Circuit diagrams of the weak measurement and quantum control scheme.]{(a) Circuit diagram of the weak measurement scheme. The input signal state $\rho_{\rm in}$ is entangled to the meter state using the \textsc{cnot} gate. The $X_{\pm\frac{\pi}{2}}$ rotations ensure that the weak measurement of the signal state is made in the desired basis $\{|{+}i\rangle,|{-}i\rangle\}$. The strength of the measurement is set using the rotation $Y_{\chi}$.  The meter state is measured in the computational basis, resulting in a classical signal (0 or 1) to be fed forward to the correction stage of the control scheme. (b) Circuit diagram of the control scheme.  A weak measurement is made on the input state and, based on the measurement results, the signal state will be rotated by $Z_{\eta}$ ($Z_{{-}\eta}$) conditional on the result of the weak measurement being 0 (1).}
\label{fig:schem}
\end{figure}

Finally the meter qubit is measured in the basis $\{|0\rangle,
|1\rangle\}$, yielding a result 0 or 1.  This measurement on the
meter induces a measurement on the signal that is precisely equal to
the generalized measurement described by the measurement operators
$M_m$ of Eq.~(\ref{eq:MeasurementOperator1}).

\subsection{Feedback control}

Once a weak measurement has been performed, a correction based on
the measurement result is performed on the quantum system: the
feedback control.  We choose the correction to be a unitary rotation
about the $z$-axis, $Z_{\pm\eta}$ where
\begin{equation}
  Z_{\eta}=\mathrm{e}^{-i\eta
  Z/2}= \begin{pmatrix} \mathrm{e}^{-i\eta/2}&0 \\
  0&\mathrm{e}^{+i\eta/2} \end{pmatrix} \,,
\end{equation}
with the aim to bring the Bloch vector of the qubit back onto the
$xz$-plane. The angle of rotation is chosen to be $\pm\eta$, depending on the
measurement result (${+}\eta$ corresponding to the measurement
result 0, and ${-}\eta$ to the measurement result 1). It is possible
to choose $\eta$ so that the system state is returned to the
$xz$-plane for all values of $p,\theta$ and $\chi$ and for both
measurement outcomes by choosing
\begin{equation}
  \label{eq:etaopt}
  \tan\eta=\frac{1}{(1-2p)\cos\theta\tan\chi}\,,
\end{equation}
with $\eta$ in the range $0 \leq \eta \leq \pi/2$.  This angle $\eta$ can be
calculated because the dephasing noise has been previously
characterised (i.e., $p$ is known).

The resulting weak measurement followed
by feedback is thus described by a quantum operation (a CPTP map)
$\mathcal{C}_{\rm QC}$ acting on a single qubit, given by
\begin{equation}\label{eq:QCasCPTP}
  \mathcal{C}_{\rm QC}(\rho) = (Z_{{+}\eta}M_0) \rho
  (Z_{{+}\eta}M_0)^\dag + (Z_{{-}\eta}M_1) \rho
  (Z_{{-}\eta}M_1)^\dag \,,
\end{equation}
where the measurement operators $M_m$ are given by
Eqs.~\eqref{eq:MeasurementOperator1} and~\eqref{eq:MeasurementOperator2}.

In summary, the quantum control scheme operates by performing a weak
measurement of the system and then correcting it based on the
results of the measurement, as in Fig.~\ref{fig:schem}b.  The weak
measurement is made by entangling an ancillary meter state with the
signal state using an entangling unitary operation, then performing
a projective measurement of the meter state.  The level of
entanglement depends on the input state of the meter, which is
controlled by a $Y_{\chi}$ rotation; this level of entanglement in
turn determines the strength of the measurement.  After measurement
of the meter, the signal state is altered due to the measurement
back-action.  To correct for this back-action, a rotation about the
$z$-axis is applied to the state, returning it back to the
$xz$-plane. To characterise how well the scheme works, we now
investigate the average fidelity.

\subsection{Performance}\label{sec:performance}

The performance of this quantum control scheme, quantified by the
average fidelity~(\ref{eq:AverageFidelity}), is
\begin{equation}\label{eq:fid}
    \aver{\F^{\rm QC}} = \tfrac{1}{2}\left[ 1 +
    \sin^2\theta \sin \chi+\cos \theta \sqrt{1-
    (1-r_x^2)\sin^2\chi}\right]\,,
\end{equation}
where $r_x = (1-2p)\cos{\theta}$.

We can see that $\aver{\F^{\rm QC}}$ is a function of the amount of noise $p$, the angle between the initial states $\theta$, and the measurement strength $\chi$.  The dependence of this fidelity on the measurement strength, for fixed $p$ and $\theta$, is illustrated in Fig.~\ref{fig:optimum_strength}.  For each value of $p$ and $\theta$, there is an \emph{optimum} measurement strength $\chi_{\mathrm{opt}}$ which maximizes the average fidelity~(\ref{eq:fid}).  This optimum measurement strength is found to be non-trivial except for the limiting cases of $p=0$ or $\theta=0,\pi/2$, and is given by
\begin{equation}\label{eq:chiopt}
   \chi_{\rm opt}(p,\theta)
   \equiv \sin^{-1}\sqrt{\frac{\sin^4\theta}{(1-r_x^2)^2\cos^2\theta +
     (1-r_x^2)\sin^4\theta}}\,,
\end{equation}
as a function of the amount of noise $p$ and the angle between the
initial states $\theta$.

\begin{figure}
  \begin{center}
   \includegraphics[width=.45\textwidth]{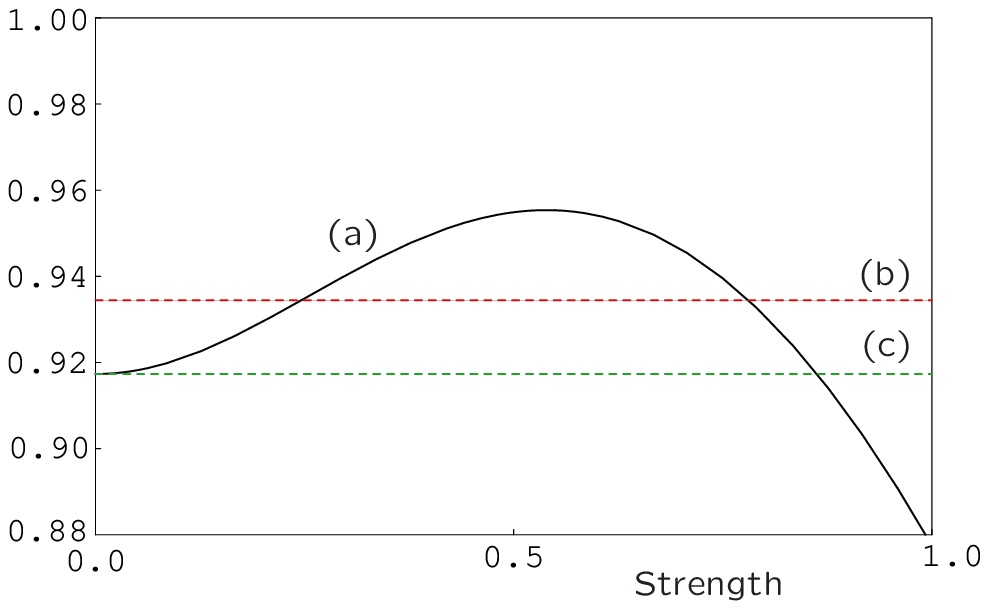}
  \end{center}
  \begin{picture}(0,0)
\put(125,90){\begin{sideways}$\aver{\F^{\rm QC}}$\end{sideways}}
\end{picture}

  \caption[Optimum measurement strength for quantum control.]{(Color online) (a) Fidelity of the quantum correction procedure with measurement
  strength ($1-2\chi/\pi$) for a representative noise value ($p=0.145$) and
  angle ($\theta=0.715$). The measurement strength ranges from a value of 0
  (corresponding to no measurement), through to a value of 1
  (corresponding to a projective measurement).
  There exists an optimum measurement strength at which we balance the amount
  of information gained with the amount of
  back-action noise introduced.  Also plotted for comparison
  are (b) the optimal ``discriminate-and-reprepare'' scheme and
  (c) the ``do nothing'' scheme for the same parameter values.}
  \label{fig:optimum_strength}
\end{figure}

Substituting $\chi_{\mathrm{opt}}$ for $\chi$ in Eq.
(\ref{eq:fid}), we get the following expression for the optimum
fidelity:
\begin{equation}\label{eq:fid_qc}
    \aver{\F^{\rm QCopt}} = \tfrac{1}{2} +\tfrac{1}{2}
    \sqrt{\cos^2\theta +
    \frac{\sin^4\theta}{1-r_x^2}}\,,
\end{equation}

Fig.~\ref{fig:fid_all}(c) plots the quantum control fidelity as a function of the input state (characterized by the angle $\theta$) and the amount of noise (characterized by $p$).

We note that $\aver{\F^{\rm QCopt}} = 1$ for three limiting cases. If $p=0$, there is no noise and so the state is not perturbed, resulting in unit fidelity for all values of $\theta$ given by simply ``doing nothing'' (zero measurement strength and no feedback). When $\theta=\pi/2$, the states are orthogonal and point along the $z$ axis.  The noise does not affect these states, again resulting in unit fidelity for all values of $p$ with a ``do nothing'' scheme. When $\theta=0$ the two states are equal and point along the $x$-axis. The control scheme reprepares this state after the noise by making a projective measurement $\chi=0$ to obtain either $|{{+}i}\rangle$ or $|{{-}i}\rangle$ and rotating back to the
$xz$-plane ($\eta=\pi/2$). This results in a fidelity of $1$ for all values of $p$.

\subsection{Comparison with Classical Schemes}

We now compare the quantum control scheme with classical schemes presented in Sec.~\ref{sec:Classical}.  Specifically, we first compare the quantum scheme with the best of the deterministic classical schemes at every point in the parameter space $(p,\theta)$, i.e., we observe the difference
in the average fidelities
\begin{equation}
  F_{\mathrm{dif}}=\aver{\F^{\rm QCopt}}
  -{\rm max}(\aver{\F^{\rm DDR2}},\aver{\F^{\rm DN}})\,,
\end{equation}
where $\aver{\F^{\rm DDR2}}$ and $\aver{\F^{\rm DN}}$ are given by Eqs.~(\ref{eq:fcl2}) and (\ref{eq:fn}), respectively.
Fig.~\ref{fig:fid_dif} reveals that $F_{\textrm{dif}}$ is always positive, and thus the quantum control scheme always outperforms the best of the classical strategies\footnote{Quite recently, a rigorous demonstration of $F_{\rm dif}\geq 0$ was given in Ref. \cite{08Xi1056}.}.

Comparing the quantum scheme with the stochastic classical scheme of Sec.~\ref{sec:newsection}, we note that they both yield precisely the same fidelity [cf. Eq.~\eqref{eq:fid_stochastic}]. In this case, the superiority of the quantum control scheme resides in producing this performance in a deterministic fashion.

\begin{figure}
\begin{center}
\includegraphics[width=0.45\textwidth]{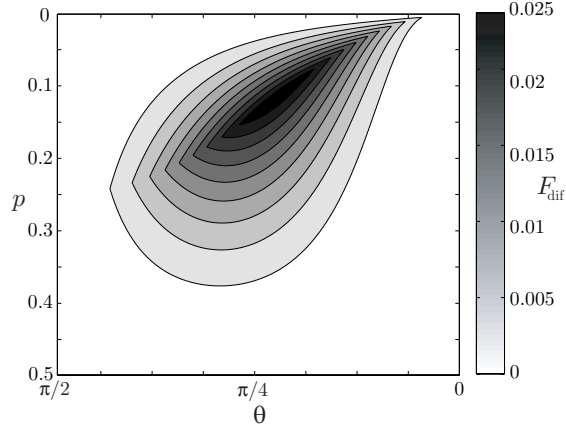}
\end{center}
\caption[Comparison between quantum and classical schemes.]{A contour plot of the difference, as a function of the
amount of noise $p$ and the angle between the initial states
$\theta$, between the average fidelities of the quantum control
scheme and the best classical scheme.  The quantum control scheme
performs significantly better for moderate values of $p$ and $\theta$
($0.05\lesssim p\lesssim 0.3$ and
$0.3\lesssim \theta \lesssim 1$).
The maximum value $ F_{\mathrm{dif}}=0.026$ occurs at $p=0.115$ and
$\theta=0.715$.}
 \label{fig:fid_dif}
\end{figure}

\subsection{Optimality}\label{sec:optimality}

We now prove that our quantum control scheme is optimal, in that it yields the maximum average fidelity of all possible quantum operations (CPTP maps).  Our proof makes use of techniques from convex optimization (specifically, those of~\cite{02Audenaert030302}) but is presented without requiring any background in this subject. In the Appendix~\ref{app:optdetq}, we provide a more detailed construction of the proof.

Consider the following optimization problem: determine the maximum average fidelity
\begin{equation}
  \aver{\F^{\rm opt}} = \max_{\mathcal{C}} \aver{\F^\mathcal{C}}
  = \max_{\mathcal{C}} \tfrac{1}{2}\sum_{i=1}^{2}{\langle\overline{\psi}_i|
  \mathcal{C}\bigl[\mathcal{E}_p(|\overline{\psi}_i\rangle\langle\overline{\psi}_i|)\bigr]
  |\overline{\psi}_i\rangle}\,,
  \label{eq:CPTPoptimisation}
\end{equation}
where the maximization is now over all CPTP maps $\mathcal{C}$
acting on a single qubit.

Recall from Sec.~\ref{sec:isomorphisms} that any CPTP map $\mathcal{C}$ acting on operators on a Hilbert space $\textsf{H}$ is in one-to-one correspondence with a (unnormalized) density operator $\mathfrak{C}$ on $\textsf{H}\otimes\textsf{H}$ via
\begin{equation}
  \mathcal{C}(\varrho)={\rm
  Tr_{1}}\left[(\varrho^{\sf T}\otimes \openone_2)\mathfrak{C}\right]\,,
\end{equation}
and is subject to the constraint $\tr_2\mathfrak{C}= \openone_2$, where the subindexes $1$ and $2$ used next to the trace operation denotes the partial trace over the first
and second subsystems, respectively~\cite{72Jamiolkowski275,01DAriano042308,00Nielsen}. With this isomorphism, the average fidelity $\aver{\F^\mathcal{C}}$ for the control scheme $\mathcal{C}$ is given by $\aver{\F^\mathcal{C}} = \tr\left(R \mathfrak{C}\right)$, where
\begin{equation}
  R\equiv\tfrac{1}{2}\sum_{i=1}^2
  {\left[\mathcal{E}_p\bigl(|\overline{\psi}_i\rangle\langle\overline{\psi}_i |\bigr)\right]^{\sf T}
  \otimes|\overline{\psi}_i\rangle\langle \overline{\psi}_i|}\,.
\end{equation}
Thus, the optimization problem (\ref{eq:CPTPoptimisation}) can be
rewritten as
\begin{equation}\label{eq:qprob_compact}
  \begin{array}{rl}
  \text{maximize}&\tr\left(R \mathfrak{C}\right)\\
  \text{subject to}&\mathfrak{C}\geq 0\\
  &\tr_2 \mathfrak{C} = \openone_2
  \,.
  \end{array}
\end{equation}
We now wish to prove that the maximum value of $\tr\left(R
\mathfrak{C}\right)$ subject to these constraints is given
by $\aver{\F^{\rm QCopt}}$ of Eq.~\eqref{eq:fid_qc}.

We note that, for any single-qubit operator $M$ satisfying $M\otimes \openone_2 - R \geq 0$, we obtain the inequality
\begin{align}
    \tr M - \tr\left(R \mathfrak{C}\right)
    &= \tr \left[(M \otimes \openone_2)\mathfrak{C}\right] - \tr\left(R \mathfrak{C}\right) \nonumber \\
    &= \tr\left[(M \otimes \openone_2 - R)\mathfrak{C}\right] \nonumber \\
    &\geq 0 \,,
\end{align}
where the first line follows from the constraint $\tr_2
\mathfrak{C} = \openone_2$, and the
inequality follows from the fact that $(M \otimes \openone_2 - R)\geq 0$ and $\mathfrak{C} \geq 0$, and thus the trace of their product is non-negative.  This inequality demonstrates that the value $\tr M$ for any matrix $M$ that satisfies the constraint $(M \otimes \openone_2 - R)\geq 0$ provides an upper bound on the solution of our optimization problem~\eqref{eq:qprob_compact}.

Consider the matrix $M = b_0 (\openone_2 + r_x X)$, where
\begin{equation}
  b_0=\frac{1}{4}
  +\frac{1}{4}\sqrt{\cos^2{\theta}+\frac{\sin^4{\theta}}{1-r_x^2}}\,,
\end{equation}
and $r_x = (1-2p)\cos\theta$ as before. It is straightforward to verify that the matrix $b_0 \openone_4 + r_x b_0 X\otimes \openone_2 - R \geq 0$, and hence the value $\tr M = 2 b_0$ provides an upper bound on the average fidelity of any control scheme. Because $2 b_0$ precisely equals the fidelity of our proposed quantum control scheme, given by Eq.~(\ref{eq:fid_qc}), this scheme necessarily gives an optimal solution to the original problem (\ref{eq:qprob_compact}). We refer the reader to the Appendix~\ref{app:optdetq} for a more constructive proof of this result.

\section{Discussion and Conclusions}
\label{sec:conclusions}

We have shown how two key characteristics of quantum physics --- that non-orthogonal states cannot be perfectly discriminated, and that any information gain via measurement necessarily implies disturbance to the system --- imply that classical strategies for control must be modified or abandoned when dealing with quantum systems.  By making use of more general measurements available in quantum mechanics, we have been able to design a deterministic quantum control strategy that outperform deterministic schemes based on classical concepts.  Quite interestingly, our quantum scheme was also shown to achieve the same performance of an (arguably) optimal stochastic classical scheme, demonstrating that for the present problem the gap between classical and quantum deterministic control is just as large as that between deterministic and stochastic control. Whether this is a general feature of more general control problems, is an interesting avenue of future research.

In constructing our quantum control scheme for the particular task
presented here, we made use of several intuitive guides.  First, we
used a preferred (and non-standard) ensemble of the dephasing noise
operator [cf. Eq.~\eqref{eq:PreferredEnsemble}], which allowed us to
view the noise as ``kicking'' the state of the qubit in one
direction or the other on the Bloch sphere.  We then made use of a
weak measurement in a basis that, loosely, attempted to acquire
information about the direction of this kick without acquiring
information about the choice of preparation of the
system.  It is remarkable (and perhaps simply lucky) that these
intuitive guides lead to a quantum control scheme that was optimal
for the task.  It is interesting to consider whether such intuition
can be applied to quantum control schemes in general, and if this
intuition can be formalized into rules for developing optimal
control schemes.

While our scheme is indeed optimal for the task presented, it is not
guaranteed to be unique; in fact, there are other decompositions of
the same CPTP map into different measurements and feedback
procedures~\cite{BlumeKohoutCombes}. In general, it is possible that
an entire class of CPTP maps may yield the optimal performance.
Also, the intuitive guides discussed above for our quantum control
scheme --- such as that the measurement essentially gains
information only about the noise and not the choice of initial state
--- may not apply to other optimal schemes.

In connection to this, we note that a similar feedback control
scheme was investigated by Niu and Griffiths~\cite{99Niu2764} for
optimal eavesdropping in a B92 quantum cryptography
protocol~\cite{92Bennett3121}, see also~\cite{96Fuchs2038}.  In their scheme, the
aim of the weak measurement was to \emph{maximize} the information
gain about which of two non-orthogonal states was transmitted for a
given amount of disturbance; in contrast, our weak measurement was
designed to acquire \emph{no} information about the choice of
non-orthogonal states. Despite these opposing aims, the obvious
similarity between these our scheme and that of Niu and Griffith
warrants further investigation, particularly since we note that
optimal feedback protocols exist based on different choices of
measurement.

It is also of interest to determine if non-trivial control schemes
exist for other types of noise processes, or if these results can be
generalized to larger numbers of initial states and to
higher-dimensional systems.

Finally, we note that the key element to our quantum control scheme
--- weak QND measurements on a qubit, and feedback onto a qubit based
on measurement results --- have both been demonstrated in recent
single-photon quantum optics experiments.  Specifically,
Pryde~\etal~\cite{pryde:190402} have demonstrated weak QND
measurements of a single photonic qubit, and have explicitly varied
the measurement strength over the full parameter range.  Also,
Pittman~\etal~\cite{pittman:052332} have demonstrated
feedback on the polarization of a single photon based on the
measurement of the polarization of another photon entangled with the
first; this feedback was used for the purposes of quantum error
correction, and is essentially identical to the feedback required
for our quantum control scheme.  Because these core essential
elements have already been demonstrated experimentally, we expect
that a demonstration of our quantum control scheme is possible in
the near future. 

\chapter{Optimal tracking for pairs of qubit states}\label{chap:tracking}

\section[Introduction]{Introduction}\label{sec:intro}

A common goal of many problems in quantum information science is the search for quantum operations that simultaneously transform a set of given input quantum states into another pre-specified set. Well known examples are tasks such as quantum cloning, state discrimination and quantum error correction.\par

In general, though, quantum mechanics forbids arbitrary quantum state dynamics. As a result, one is left with several examples of ``impossible quantum machines'' \cite{01Werner}. Not only is quantum cloning unachievable \cite{82Dieks271,82Wootters802,96Barnum2818}, but also quantum state discrimination strategies are typically subject to non-zero misidentification probabilities \cite{76Helstrom} and/or inconclusive outcomes \cite{88Dieks303,87Ivanovic257,88Peres19} and there are no quantum error correction protocols capable of fully reverting the action of an arbitrary noise model \cite{03Gregoratti915}.\par

Nevertheless, it is still possible to approximate \emph{ideal} (but unphysical) transformations with \emph{optimal} (but physical) ones. This provides quantum limits to the performance of tasks such as state discrimination, cloning and so on. In this chapter, we study the general problem of transforming the state of a single qubit into a given target state, when the system can be prepared in two different ways, and the target state depends on the choice of preparation. We call this task \emph{quantum tracking}, a term borrowed from classical control theory. Our main result is an analytical description of an optimal quantum tracking strategy.\par

More specifically, the quantum tracking problem studied here can be understood as follows. Consider that Alice prepares either a qubit state $\rho_1$ with probability $\pi_1$ or $\rho_2$ with probability $\pi_2$. Bob is allowed to interact with the system in any physically allowed way, aiming to enforce the tracking rule
\begin{equation}\label{rule}
\mbox{if Alice prepared } \rho_i\,, \mbox{ then ouput } \overline{\rho}_i
\end{equation}
for $i=1,2$ and some given qubit density matrices $\overline{\rho}_i$.\par

At his disposal, Bob has all the information about the possible preparations $\rho_i$ and their respective prior probabilities $\pi_i$, but \emph{not} the actual preparation (the value of the index $i$).

Because quantum states are generally not perfectly distinguishable, a strategy that attempts to identify Alice's preparation and then reprepare the target according to rule (\ref{rule}) is not always guaranteed to succeed. In fact,
 this limited distinguishability is an unsurpassable obstacle in the implementation of (\ref{rule}).

Throughout, an optimal solution is defined as follows.
Amongst all the physical transformations acting on the input states $\rho_i$, an optimal one is any map that outputs density matrices $\rho_i^\prime$ such that the averaged Hilbert-Schmidt inner product between $\rho_i^\prime$ and $\overline{\rho}_i$ is maximal. When $\overline{\rho}_i$ are pure states, such a figure-of-merit coincides with the averaged Uhlmann-Jozsa fidelity \cite{94Jozsa2315,76Uhlmann273}, and this notion of optimality gains an appealing operational interpretation \cite{95Schumacher2738}. Suppose that Alice [aware of her preparation and of rule (\ref{rule})], decides to check whether Bob prepared the density matrix he was supposed to, and for that purpose she performs a verification measurement on the density matrix produced by him. If Bob chooses an optimal transformation according to the above prescription, then the probability he will pass Alice's test is as large as allowed by quantum mechanics.

The tracking problem resembles the transformability problem for pairs of qubit states studied by Alberti and Uhlmann in the 80's \cite{80Alberti163} (see also Appendix \ref{app:ptc}). In \cite{80Alberti163}, a criterion based on the distinguishability between the source density matrices and the distinguishability between the target density matrices was developed in order to decide on the existence of a completely positive and trace preserving (CPTP) map simultaneously transforming each source into each target.\par

Although Alberti and Uhlmann's criterion classifies the set of states $\rho_1$, $\rho_2$, $\overline{\rho}_1$ and $\overline{\rho}_2$ for which rule (\ref{rule}) can be satisfied, it does not provide a construction of the CPTP map implementing that transformation, nor touch the problem of how to find a feasible approximation when the criterion is not satisfied. For many purposes, the requirement of perfectly converting sources into targets is unnecessarily strong, as some strictly impossible physical transformations can be very well approximated by physical ones, as illustrated in Fig \ref{fig:approx}. In fact, any experimental realization of a map is just an approximation of it.

 \begin{figure}
 \centering
\includegraphics[width=8.5cm]{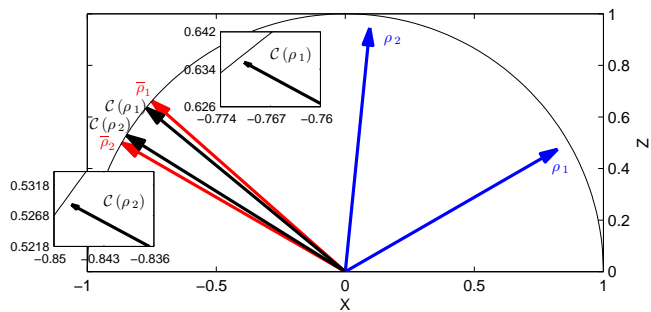}
\caption[Optimal approximation of a pair of pure qubit states]{(Color online) The transformation of the mixed states $\rho_1$ and $\rho_2$ (in blue) into the
pure states $\overline{\rho}_1$ and $\overline{\rho}_2$ (in red) is not a physical one. However, there is a physical transformation $\mathcal{C}$ capable of transforming the input states into states $\mathcal{C}(\rho_i)$ (in black) which closely approximates the targets. Note, in the detail, that $\mathcal{C}(\rho_1)$ and $\mathcal{C}(\rho_2)$ are still slightly mixed.}\label{fig:approx}
\end{figure}\par

Another problem closely related to our aims was investigated in Ref. \cite{07Branczyk012329}. Specifically, we considered the problem of determining the optimal quantum operation to stabilize the state of a single qubit, randomly prepared in one of two pure states, against the effect of dephasing noise. The results of \cite{07Branczyk012329} are here extended in several ways. The input states are allowed to be mixed and prepared with arbitrary prior probability distribution; the noise model is arbitrary and, most importantly, the stabilization task is replaced with tracking.\par

Finally, there is an intrinsic connection between the quantum tracking problem and the ``optimization approach'' \cite{05Reimpell080501,07Fletcher012338, 06Reimpell, 06Kosut,08Kosut020502,07Yamamoto012327,05Yamamoto022322} to quantum error correction \footnote{This is in contrast with the traditional approach to quantum error correction, which followed the direction of adapting classical coding techniques to the quantum domain \cite{97Knill900}.}. In these references, the encoding and recovery operations are regarded as optimization variables whose optimal values maximize a given figure-of-merit (typically a function bounded between $0$ and $1$, equal to $1$ if and only if the noise dynamics is reversible). Efficient numerical methods are then proposed to solve the optimization problem. The key differences between our work and these references is that we do not consider encoding of the initial state and focus on reverting the noise dynamics experienced by only a pair of states. By doing so, the optimization of the recovery operation can be handled analytically for a conveniently chosen figure-of-merit.\par

The chapter is structured as follows. Section \ref{sec:probandstrategy} introduces the formal statement of the problem and our working strategy, which is proved to be optimal in Section \ref{sec:optimal}. In section \ref{sec:examples} we evaluate the performance of the optimal strategy in the contexts of quantum state discrimination, quantum state stabilization in the presence noise, perfect quantum tracking, and state-dependent quantum cloning. Section \ref{sec:quantctrl} proposes a physical implementation of our strategy in terms of closed and open loop control. Section \ref{sec:discussion} discusses generalizations of the problem and concludes.

\section{Problem and Strategy}\label{sec:probandstrategy}
In this section we give a formal statement of the problem of interest and introduce our strategy.
\subsection{The Problem}\label{sec:problem}
Formally, the problem we set out to solve can be stated as follows:
\begin{problem}\label{problem}
Given qubit density matrices $\rho_1$, $\rho_2$, $\overline{\rho}_1$, $\overline{\rho}_2$ (with $\rho_1\neq\rho_2$) and probabilities $\pi_1$, $\pi_2$ with $\pi_1+\pi_2=1$, find a quantum operation $\mathcal{C}$ maximizing\footnote{As done in the previous chapter, throughout we drop the index $1$ from $\aver{\F_{\rm HS}}_1$ for ease of notation. However, we preserve the index {\rm HS} because here we do not restrict to pure target states, hence the distinction between $\F_{\rm HS}$ and $\F$ is relevant.}
\begin{equation}\label{eq:J}
\aver{\F_{\rm HS}}=\pi_1\tr\left[\mathcal{C}(\rho_1)\overline{\rho}_1\right]+\pi_2\tr\left[\mathcal{C}(\rho_2)\overline{\rho}_2\right]\,.
\end{equation}
\end{problem}
\noindent We will refer to this as ``the tracking problem''.\par

The choice of the average Hilbert-Schmidt inner product as our figure-of-merit $\aver{\F_{\rm HS}}$ is motivated by technical reasons (to be clarified later), and by the fact that for pure target states (the case of greater interest as far as applications are concerned), $\aver{\F_{\rm HS}}$ is precisely equal to the average fidelity. When the target states are mixed, $\aver{\F_{\rm HS}}$ is a lower bound to the average fidelity \cite{94Jozsa2315}. Although not as well motivated as in the case of pure target states, the determination of the quantum operation $\mathcal{C}$ maximizing $\aver{\F_{\rm HS}}$ can still be useful for mixed target states. For example, if a certain application requires tracking to be performed with average fidelity $f$ and the optimal value of our figure-of-merit is such that $\aver{\F_{\rm HS}}\geq f$, then $\mathcal{C}$ is suitable for the task.\par

As a final remark, note that we do not exclude the case $\overline{\rho}_1=\overline{\rho}_2$ from the statement of the problem. However, we will exclude the case $\overline{\rho}_1=\overline{\rho}_2 = \openone/2$ from the following analysis. Obviously, this particular transformation is always feasible and achieved with the completely depolarizing channel.\par

Next, we propose a strategy that will later be proved to be a solution of this tracking problem.

\subsection{The Strategy}\label{sec:strategy}

In this section, we provide an analytical solution of the tracking problem, i.e., we detail the structure of an optimal tracking operation $\mathcal{C}$ and derive closed forms for the associated maximal value of the figure-of-merit $\aver{\F_{\rm HS}}$. The scheme proposed here was constructed by incorporating some features observed from the numerical solution of the tracking problem in an analytical optimization procedure. In the next section, we will show that the tracking problem can be cast as a semidefinite program (SDP) \cite{96Vandenberghe49,04Boyd}, and will employ the theory for this type of optimization problem to prove that the strategy presented here actually solves the tracking problem.

A quantum operation is a description of a certain physically allowed evolution of a quantum state. For a closed quantum system (not interacting with an environment) this description is given by the familiar unitary evolution of Schr\"odinger's equation. For open quantum systems, unitary evolution alone does not account for every possible state transformation --- in this case, the set of quantum operations is identified with the more comprehensive set of completely positive and trace preserving (CPTP) maps.

Any one qubit CPTP map $\mathcal{C}$ can be decomposed as \cite{01King192,02Ruskai159}
\begin{equation}\label{eq:ruskai_decomposition} \mathcal{C}(\varsigma)=U
\mathcal{D} (V \varsigma V^\dagger) U^\dagger\,,
\end{equation} where $U$, $V$ are
unitary matrices and $\mathcal{D}$ induces an affine transformation on the
input Bloch vectors; namely, it contracts the $x$, $y$ and $z$ components via a
multiplicative factor, and subsequently adds a fixed number to them. Any
non-unitary evolution arises from a transformation of this type. In the
framework of Eq. (\ref{eq:ruskai_decomposition}), unitary dynamics is simply
obtained by making the affine map $\mathcal{D}$ redundant (e.g., multiplying by
1's and adding 0's to the $x-$, $y-$ and $z-$Bloch components).

In general, CPTP maps reduce the distinguishability of quantum states. On the
Bloch sphere this typically corresponds to a reduction of the Bloch vector
length and angles between vectors. In contrast unitary dynamics preserves
the angles between Bloch vectors and their lengths.
For the tracking problem, we can imagine that in
some cases the optimal strategy will preserve lengths and angles, i.e. it will
be some unitary correction. We will construct an ``indicator function'' which
will flag this case.

\subsubsection{Indicator function}\label{sec:indicator}
To gain some intuition, we start by constructing an indicator function for the simplest case of tracking with uniform priorities $\pi_1=\pi_2=1/2$ from pure states ($\rho_1$ and $\rho_2$) to pure states ($\overline{\rho}_1$ and $\overline{\rho}_2$). Throughout, $\Theta$ and $\overline{\Theta}$ will denote the angles between the Bloch vectors of $\rho_1$, $\rho_2$ and $\overline{\rho}_1$, $\overline{\rho}_2$, respectively. It will also be convenient to define $\theta$ and $\overline{\theta}$ to be the half-angle between the Bloch vectors, i.e., $2\theta=\Theta$ and $2\overline{\theta}=\overline{\Theta}$.\par

Given that all the states involved are pure, it is straightforward to conclude that if $\Theta=\overline{\Theta}$, then unitary dynamics is the best choice --- a suitable rotation of the Bloch vectors of the inputs can perfectly bring them to coincide with the Bloch vectors of the targets (as opposed to a non-unitary evolution that would decrease the angle, hence excluding the possibility of perfect tracking).\par

 A corollary of a theorem by Alberti and Uhlmann \cite{80Alberti163} (see appendix \ref{app:ptc}) implies that any pure state transformation such that $\Theta>\overline{\Theta}$, can be perfectly implemented. If that is the case, then this transformation must be non-unitary, since a unitary would not be able to bring the angles to perfectly match. This suggests the introduction of the function
\begin{equation}\label{eq:IF_simple}
\widetilde{\Omega}\mathrel{\mathop:}=\Theta-\overline{\Theta}\,,
\end{equation}
to indicate non-unitary dynamics whenever $\widetilde{\Omega}>0$. Next, we argue that $\widetilde{\Omega}\leq 0$ indicates unitary dynamics, thus establishing $\widetilde{\Omega}$ as an example of indicator function we were looking for.\par

 We have already seen that $\widetilde{\Omega}=0$ implies unitary dynamics. Intuitively, this conclusion can be extended to $\widetilde{\Omega}<0$ with the following reasoning. If $\Theta<\overline{\Theta}$, any further decrease of the initial angle can only further separate the resulting states from the targets. Since there is not any quantum operation capable of increasing this angle, the best policy must be to preserve it, hence a unitary.

The above discussion may suggest that the optimal indicator function is merely a comparison of the
distinguishabilities between sources and targets. If the sources are more distinguishable than the targets, then we employ a quantum measurement to decrease the distinguishability, hence approximating the targets. If the sources are no more distinguishable than the targets, then we employ a unitary operation to avoid a further decrease of the overlap between the output states and the targets.
Although this reasoning is certainly in agreement with the indicator function introduced above for the special case of
pure states, it does not extend to mixed state transformations \footnote{At least not as far as the optimization of the figure-of-merit of Eq. (\ref{eq:J}) is concerned. A possibly interesting problem would be the determination of a figure-of-merit that would preserve such behavior for general state transformations.}.

If the states are not pure and the priorities are not uniform, it is much more difficult to understand how purities, angles and priorities combine to form a meaningful decision criterion about the nature of the best dynamics. In order to introduce an indicator function for this general case (obtained from some mathematical optimization procedure, not from an heuristic argument), we first define some useful notation.\par

Let $\bm{R}_i$ be the Bloch vector of $\rho_i$ and $\overline{\bm{R}}_i$ the ``Bloch vector'' of $\pi_i\overline{\rho}_i$ [or, more precisely, the Bloch vector of the normalized density matrix  $(1 -\pi_i)\openone/2+\pi_i\brho_i$]\label{conventionRb}. Symbolically, for $\bm{\mathcal{R}}\in\{\bm{R},\overline{\bm{R}}\}$, define
\begin{subequations}
\begin{align}
\bm{\mathcal{R}}_+&\mathrel{\mathop:}=\bm{\mathcal{R}}_1+\bm{\mathcal{R}}_2\,,\\
\bm{\mathcal{R}}_-&\mathrel{\mathop:}=\bm{\mathcal{R}}_1-\bm{\mathcal{R}}_2\,,\\
\bm{\mathcal{R}}_\times&\mathrel{\mathop:}=\bm{\mathcal{R}}_1\times\bm{\mathcal{R}}_2\,,\label{eq:Rx}
\end{align}
\end{subequations}
and as usual, the corresponding unbolded type gives the Euclidean norm $\mathcal{R}_{\{+,-,\times\}}=\|\bm{\mathcal{R}}_{\{+,-,\times\}}\|$. Also, the following will be important throughout
\begin{align}
T&\mathrel{\mathop:}= \sum_{i,j=1}^{2}{(1-\bm{R}_i\cdot \bm{R}_j)(\overline{\bm{R}}_i\cdot\overline{\bm{R}}_j)}\,,\label{eq:T}\\
S&\mathrel{\mathop:}= \sqrt{T^2+4\overline{R}_\times^2(R_-^2-R_\times^2)}\,.\label{eq:S}
\end{align}

We note that $\bm{R}_1\neq\bm{R}_2$ guarantees that $R_-^2-R_\times^2 > 0$ (see Appendix \ref{app:ST}, Lemma \ref{lemma1}), hence both $S$ and $T$ are real numbers.

In terms of these quantities, we define
 \begin{equation}\label{eq:IF}
 \Omega\mathrel{\mathop:}= S+T-2\overline{R}_\times R_\times\,,
 \end{equation}
with $\Omega>0$ indicating that non-unitary dynamics are required (which will be detailed as ``procedure A'') and $\Omega\leq 0$ indicating that unitary dynamics (``procedure B'') are required. \par

 Although it would be difficult to motivate the indicator function $\Omega$ of Eq. (\ref{eq:IF}) as we did with $\widetilde{\Omega}$ in Eq. (\ref{eq:IF_simple}), it is possible to see that the former is equivalent to the latter in the case of pure qubit states. This is shown in Fig. \ref{fig:omegavsdist}, where it is also noticeable that even a simple generalization of the input states from pure to mixed states with the same level of mixedness (as measured by the norm of their Bloch vector $R$), is already sufficient to give a fairly non-trivial division line between the two types of dynamics.

\begin{figure}
\begin{center}
\includegraphics[width=8cm]{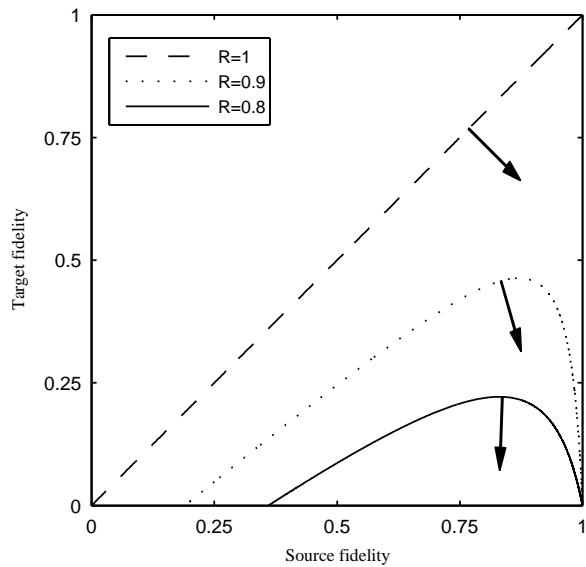}
\end{center}
\caption[Lines separating unitary and non-unitary optimal dynamics.]{Lines separating unitary and non-unitary dynamics, as prescribed by the indicator function of Eq. (\ref{eq:IF}).
For source states, we consider pairs of mixed states with Bloch vector length $R$ and separated by an angle $\Theta=2\theta$ (Bloch sphere angle).
For target states, we consider pairs of pure states separated by an angle $\overline{\Theta}=2\overline{\theta}$. The fidelity between the source states (horizontal axis) is $1-R^2\sin^2{\theta}$, and the fidelity between the target states (vertical axis) is $\cos^2{\overline{\theta}}$. The region where unitary dynamics is advisable ($\Omega\leq 0$) is indicated with an arrow. The intuitive notion that a measurement is employed when the targets are less distinguishable than the sources (and a unitary, otherwise), only holds if $R=1$ (pure sources). For $R=0.9, 0.8$ the division line moves down in such a way as to increase the portion of the parameter space where non-unitary dynamics is advisable $(\Omega > 0)$.}\label{fig:omegavsdist}
\end{figure}

\subsubsection{Procedure A}
In this section we present the details of the map $\mathcal{C}$ from Eq. (\ref{eq:ruskai_decomposition}) for $\Omega>0$ (which indicates non-unitary dynamics).

\paragraph*{Step 1.} The rotation by the unitary $V$ takes the two input Bloch vectors to vectors $\bm{R}_1^\prime$ and $\bm{R}_2^\prime$ in the $xz$-plane in such a way that they share a common positive $\bm{x}$-component and $\bm{R}_1^\prime\cdot\bm{z}>\bm{R}_2^\prime\cdot\bm{z}$, explicitly
\begin{equation}\label{eq:V}
\bm{R}_i\stackrel{V}{\longmapsto}\bm{R}_i^\prime=\frac{R_\times}{R_-}\bm{x}+\frac{\left(\bm{R}_i\cdot\bm{R}_-\right)}{R_-}\bm{z}\,.
\end{equation}

\paragraph*{Step 2.} The affine transformation $\mathcal{D}$ shortens the $\bm{x}$, $\bm{y}$ and $\bm{z}$ components of its inputs by multiplying them, respectively, by $\mu_1$, $\mu_2$, $\mu_3$ with $0\leq\mu_{\{1,2,3\}}\leq 1$ and subsequently adding $s_1$ to the $\bm{x}$ component. Applied to $\bm{R}_i^\prime$, that reads
\begin{equation}\label{eq:rill}
\bm{R}_i^\prime\stackrel{\mathcal{D}}{\longmapsto}\bm{R}_i^{\prime\prime}= \left(s_1+\mu_1\frac{R_\times}{R_-}\right)\bm{x}+\mu_3\frac{\left(\bm{R}_i\cdot\bm{R}_-\right)}{R_-}\bm{z}\,.
\end{equation}
That such a transformation can be physically implemented is not a trivial fact. Indeed, strict conditions involving the parameters $\mu_{\{1,2,3\}}$ and $s_1$ must be satisfied to guarantee the feasibility of transformation (\ref{eq:rill}) as a CPTP map \cite{02Ruskai159}. The following values can be shown to satisfy these conditions

\begin{subequations}\label{eq:nontrivial}
\begin{align}
\mu_1&=2\sqrt{\frac{2}{S(S+T)^{3}}}\overline{R}_\times^2 R_\times R_-\,,\\
\mu_2&=\left(\frac{2}{S+T}\right)\overline{R}_\times R_\times\,,\\
\mu_3&=\sqrt{\frac{2}{S(S+T)}}\overline{R}_\times R_-\,,\\
s_1&=\sqrt{\frac{1}{2S(S+T)^{3}}}\left[(S+T)^2-4\overline{R}_\times^2 R_\times^2\right]\,.
\end{align}
\end{subequations}

In Appendix \ref{app:ST} we show that the only circumstances under which the inequalities $S>0$ and $S+T>0$ are not simultaneously satisfied have $\Omega=0$. Therefore, the quantities above are real and well-defined for the present procedure ($\Omega>0$). It is not difficult to check that $\mu_1=\mu_2\mu_3$ and $s_1=\sqrt{(1-\mu_2^2)(1-\mu_3^2)}$, so the resulting map acting on density matrices is an extremal point of the convex set of CPTP maps \cite{02Ruskai159}.\par

Remarkably, if the target Bloch vectors $\overline{\bm{R}}_1$ and $\overline{\bm{R}}_2$ are parallel or anti-parallel (i.e., $\overline{R}_\times=0$), Eqs. (\ref{eq:nontrivial}) simplify to $\mu_1=\mu_2=\mu_3=0$ and $s_1=1$. This implies that $\bm{R}_i^{\prime\prime}=\bm{x}$ for $i=1,2$, or equivalently, that $\mathcal{D}$ outputs $\ket{+}=(\ket{0}+\ket{1})/\sqrt{2}$ independently of the input.

\paragraph*{Step 3.} For $\overline{R}_\times\neq 0$, the unitary $U$ rotates the input vectors $\bm{R}_i^{\prime\prime}$ to lie on the plane determined by the target vectors, and within that plane by a suitable angle. For $\overline{R}_\times = 0$, $U$ simply rotates the vector $\bm{x}$ in order to align it with $\overline{\bm{R}}_+$. In either case, $U$ can be expressed as the following map

\begin{equation}\label{eq:U}
\bm{R}_i^{\prime\prime}\stackrel{U}{\longmapsto}\bm{R}_i^{\prime\prime\prime}=k_{i1}\overline{\bm{R}}_1+k_{i2}\overline{\bm{R}}_2\,,
\end{equation}
with
\begin{align}
k_{ij}&=\tfrac{1}{\Gamma}\left[\alpha^2+\beta_i\beta_j\overline{R}_\times^2+(-1)^{i+j}\alpha\left(\beta_1-\beta_2\right)\overline{\bm{R}}_{\widetilde{i}}\cdot\overline{\bm{R}}_{\widetilde{j}}\right]\,,\label{eq:kij}\\
\Gamma&=\sqrt{\alpha^2\overline{R}_+^2+\left[\|\beta_1\overline{\bm{R}}_1+\beta_2\overline{\bm{R}}_2\|^2+2\alpha\left(\beta_1-\beta_2\right)\right]\overline{R}_\times^2}\,.\label{eq:Gamma}
\end{align}
where we have defined $\widetilde{1}=2$ and $\widetilde{2}=1$; and

\begin{align}
\alpha&=\bm{R}_i^{\prime\prime}\cdot \bm{x}=\sqrt{\frac{S+T}{2S}}\,,\label{eq:alpha_nt}\\
\beta_i&=\frac{\bm{R}_i^{\prime\prime}\cdot \bm{z}}{\overline{R}_\times}= \sqrt{\frac{2}{S(S+T)}} \bm{R}_i\cdot\bm{R}_-\,.\label{eq:beta_nt}
\end{align}

With this, the figure-of-merit of Eq. (\ref{eq:J}) can be shown to be
\begin{equation}
\aver{\F_{\rm HS}^{\rm A}}=\frac{1}{2}+\frac{\Gamma_a}{2}\,,\label{eq:fid_nt}
\end{equation}
where $\Gamma_a$ is obtained by substituting Eqs. (\ref{eq:alpha_nt}) and (\ref{eq:beta_nt}) into Eq. (\ref{eq:Gamma}) and reads
\begin{equation}\label{eq:gamma_a}
\Gamma_a=\sqrt{\overline{R}_+^2+\frac{2R_-^2\overline{R}_\times^2}{S+T}}\,.
\end{equation}

Fig. \ref{fig:solution} illustrates this sequence of transformations for the case $\overline{R}_\times\neq 0$.

\begin{figure}
\begin{center}
\includegraphics[width=\textwidth]{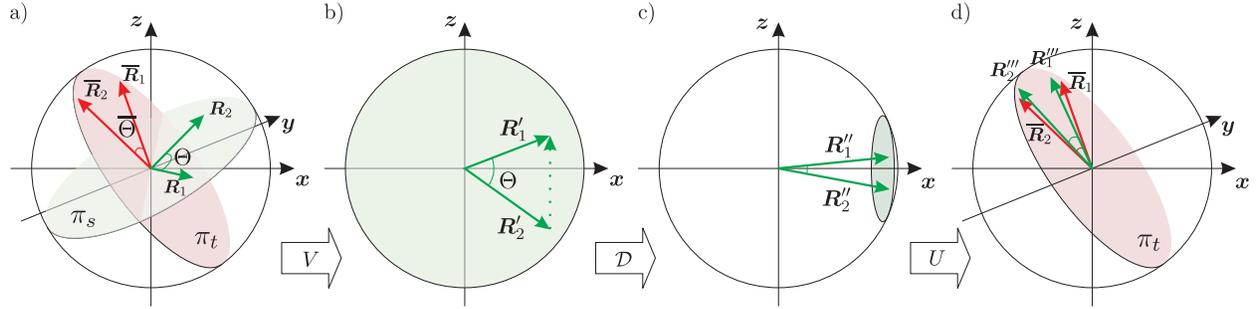}
\end{center}
\caption[Bloch sphere schematics of the non-unitary scheme.]{(Color online) Bloch sphere schematics of procedure A.  a) The source (target) Bloch vectors $\bm{R}_1$ and $\bm{R}_2$ ($\overline{\bm{R}}_1$ and $\overline{\bm{R}}_2$) determine the plane $\pi_s$ ($\pi_t$). b) $V$ implements the rotation transforming $\pi_s$ to the $xz$-plane, in such a way that the vector $\bm{R}_1^\prime-\bm{R}_2^\prime$ is \emph{parallel} to $+\bm{z}$. c) The map $\mathcal{D}$ deforms the Bloch sphere into an ellipsoid of semi-axis $\mu_1$, $\mu_2$ and $\mu_3$ and translates it by $s_1$ along the $\bm{x}$ axis. The resulting ellipsoid touches the original Bloch sphere --- a feature related to the fact that $\mathcal{D}$ is an extremal CPTP map. d) $U$ rotates the resulting states to the plane $\pi_t$, and within that plane by some angle such that the resulting states $\bm{R}_1^{\prime\prime\prime}$ and $\bm{R}_2^{\prime\prime\prime}$ approximate $\overline{\bm{R}}_1$ and $\overline{\bm{R}}_2$, respectively.}\label{fig:solution}
\end{figure}

\subsubsection{Procedure B}\label{sec:procB}

As pointed out before, if $\Omega \leq 0$ then the affine transformation $\mathcal{D}$ is not implemented and the product $VU$ gives the unitary dynamics. In this case, $V$ can be chosen precisely as in step $1$ of procedure A.\par

  For $\overline{R}_\times\neq 0$, we preserve the form of the transformation $U$ from Eq. (\ref{eq:U}),
  but the values of $\alpha$ and $\beta_i$ are given by
\begin{align}
\alpha&=\bm{R}_i^{\prime}\cdot \bm{x}=\frac{R_\times}{R_-}\,,\label{eq:alpha_t}\\
\beta_i&=\frac{\bm{R}_i^{\prime}\cdot \bm{z}}{\overline{R}_\times}=\frac{\bm{R}_i\cdot\bm{R}_-}{\overline{R}_\times R_-}\,.\label{eq:beta_t}
\end{align}

For $\overline{R}_\times=0$, assume that the Bloch vectors $\overline{\bm{R}}_1$ and $\overline{\bm{R}}_2$ are anti-parallel (this is without loss of generality, since parallel targets always exhibit $\Omega > 0$ \footnote{To see this, note that if the targets are parallel it is immediate that $\Omega=|T|+T$ and our claims holds if $T>0$. A straightforward computation shows that $T=(\overline{R}_1+\overline{R}_2)^2 - (R_1^2\overline{R}_1^2 +R_2^2\overline{R}_2^2 + 2 R_1 R_2 \overline{R}_1\overline{R}_2 \cos{\Theta})$, which clearly achieves the minimum value $0$ iff $R_1=R_2=\cos{\Theta}=1$. However, this requires the sources to be identical, which is excluded from the statement of the problem in Sec.~\ref{sec:problem}. Therefore, $T>0$.}). In particular, take $\overline{\bm{R}}_1$ parallel to $\bm{z}$ and $\overline{\bm{R}}_2$ parallel to $-\bm{z}$. The following transformation specifies $U$ in this case

\begin{equation}\label{eq:ULD}
\bm{R}_i^{\prime}\stackrel{U}{\longmapsto}\bm{R}_i^{\prime\prime}=\left[\frac{R_\times}{R_-}\cos{\vartheta}-\frac{(\bm{R}_i\cdot\bm{R}_-)}{R_-}\sin{\vartheta}\right]\bm{x}
+\left[\frac{R_\times}{R_-}\sin{\vartheta}+\frac{(\bm{R}_i\cdot\bm{R}_-)}{R_-}\cos{\vartheta}\right]\bm{z}\,,
\end{equation}
where
\begin{equation}\label{eq:sintheta}
\sin{\vartheta}=\frac{R_\times(\overline{R}_1-\overline{R}_2)}{R_-\sqrt{\overline{R}_+^2-T}}\,,
\end{equation}
and $\cos{\vartheta}=+\sqrt{1-\sin^2{\vartheta}}$. We note that $\vartheta$ is a valid angle since the rhs of Eq. (\ref{eq:sintheta}) is bounded between $-1$ and $1$ \footnote{This follows easily from the inequality $R_\times<R_-$ (Appendix \ref{app:ST}) and from $\overline{R}_+=|\overline{R}_1-\overline{R}_2|\leq \sqrt{\overline{R}_+^2-T}$ (the equality follows from the anti-parallelism of the target Bloch vectors and the inequality from the fact that $T\leq 0$ for $\overline{R}_\times = 0$ and $\Omega\leq 0$)}.

If $\overline{\bm{R}}_1$ and $\overline{\bm{R}}_2$ do not align along the $\bm{z}$ direction as specified above, we simply apply a further rotation that aligns the $\bm{z}$ axis with the direction $\overline{\bm{R}}_1/\overline{R}_1$.

 In both $\overline{R}_\times\neq 0$ and $\overline{R}_\times = 0$ cases, the average fidelity can be computed to be
\begin{equation}\label{eq:fid_t}
\aver{\F_{\rm HS}^{\rm B}}=\frac{1}{2}+\frac{\Gamma_b}{2}\,,
\end{equation}
where $\Gamma_b$ is obtained by substituting Eqs. (\ref{eq:alpha_t}) and (\ref{eq:beta_t}) into Eq. (\ref{eq:Gamma}). After some manipulation we find
\begin{equation}\label{eq:gamma_b}
\Gamma_b\mathrel{\mathop:}=\sqrt{\overline{R}_+^2-T+2R_\times\overline{R}_\times}\,.
\end{equation}

Finally, let us note that a more compact and mathematical description of procedures A and B is provided in the Appendix~\ref{app:optSSmap}.

\section{Optimality Proof}\label{sec:optimal}

In this section, we employ duality theory for SDPs to
prove the following theorem:
\begin{theorem} \label{teo:optimality}
Our tracking strategy (as described in Sec. \ref{sec:strategy}), implements optimal tracking between any pair of source and target qubit states and is, therefore, a solution of the tracking problem introduced in Sec. \ref{sec:problem}.
\end{theorem}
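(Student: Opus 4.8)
The plan is to prove optimality by casting the tracking problem as a semidefinite program and then exhibiting, for each of the two regimes ($\Omega>0$ and $\Omega\leq0$), a dual feasible point whose objective value exactly matches the figure-of-merit $\aver{\F_{\rm HS}^{\rm A}}$ or $\aver{\F_{\rm HS}^{\rm B}}$ achieved by the proposed strategy. By weak duality (cf. Eq.~\eqref{eq:weakd}), any dual feasible value upper-bounds the optimal primal value $\mathfrak{p}^\ast$; since the strategy is itself primal feasible (it is a genuine CPTP map by construction, as verified in Steps 1--3 of Procedure~A using the Ruskai-Szarek-Werner conditions), its value lower-bounds $\mathfrak{p}^\ast$. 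If the two bounds coincide, the strategy must be optimal and the proof is complete. This is precisely the ``optimality certificate via weak duality'' technique described in Sec.~\ref{sec:LagDuality}, and it mirrors the single-qubit argument of Sec.~\ref{sec:optimality}.

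First I would write the primal SDP explicitly in the Choi-matrix variable. Using the isomorphism of Eq.~\eqref{eq:CPTPconstraints} and the inversion formula Eq.~\eqref{eq:jam_inv}, the objective $\aver{\F_{\rm HS}}$ of Eq.~\eqref{eq:J} becomes $\tr(R\,\mathfrak{C})$ with
\begin{equation}
R=\sum_{i=1}^{2}\pi_i\,\rho_i^{\sf T}\otimes\overline{\rho}_i\,,
\end{equation}
and the constraints are $\mathfrak{C}\geq0$ and $\tr_2\mathfrak{C}=\openone_2$. This is exactly the standard-form SDP of problem~\eqref{eq:opt_hsip} specialized to $I={\rm d}=2$. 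The Lagrange dual, following the derivation of Eq.~\eqref{eq:dualinequalityform}, is the minimization of $\tr M$ over Hermitian single-qubit $M$ subject to the linear matrix inequality $M\otimes\openone_2-R\geq0$. The core of the proof is then to supply the right $M$ in each regime: I would posit a two-parameter (or few-parameter) ansatz for $M$, dictated by the symmetry of the states after the $V$-rotation of Step~1, impose the complementary slackness condition $(M\otimes\openone_2-R)\mathfrak{C}^\ast=0$ against the strategy's Choi matrix to pin down the free parameters, and then compute $\tr M$.

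The key steps in order are: (i) reduce to the standard frame via $V$, so that both source Bloch vectors lie in the $xz$-plane sharing an $x$-component, which makes $R$ block-structured and lets me guess $M$ of the form $a\openone_2+b\,\sigma_x+c\,\sigma_z$; (ii) derive the eigenvalue/positivity conditions for $M\otimes\openone_2-R\geq0$, reducing the matrix inequality to scalar inequalities in $a,b,c$; (iii) solve the complementary-slackness equations, which should reproduce the parameters $\mu_{\{1,2,3\}}$, $s_1$ of Eqs.~\eqref{eq:nontrivial} (procedure A) or the unitary angles of procedure B; and (iv) verify algebraically that the resulting $\tr M$ equals $\tfrac12+\tfrac12\Gamma_a$ or $\tfrac12+\tfrac12\Gamma_b$, using the definitions of $S$, $T$, $\Gamma_a$, $\Gamma_b$ in Eqs.~\eqref{eq:S},~\eqref{eq:gamma_a},~\eqref{eq:gamma_b}. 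The role of the indicator function $\Omega$ is to mark exactly the boundary where the two candidate dual points swap which one is feasible-and-tight, so I would also check continuity of the bound at $\Omega=0$.

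The hard part will be step (ii)--(iii): constructing the correct dual matrix $M$ and proving $M\otimes\openone_2-R\geq0$ as an \emph{analytic} positive-semidefiniteness statement rather than a numerical one. As the excerpt itself warns in Sec.~\ref{sec:LagDuality}, ``proving positive semidefiniteness of an analytical matrix can still be a challenging task.'' The $4\times4$ matrix $M\otimes\openone_2-R$ depends on all of $R_-$, $R_\times$, $\overline{R}_+$, $\overline{R}_\times$, $T$ and $S$ through messy radicals, so establishing non-negativity of its eigenvalues (or equivalently of its leading principal minors / its characteristic polynomial coefficients) without brute force is where the genuine ingenuity lies. I expect that the extremality property $\mu_1=\mu_2\mu_3$, $s_1=\sqrt{(1-\mu_2^2)(1-\mu_3^2)}$ noted after Eqs.~\eqref{eq:nontrivial} is what forces the certificate matrix to be rank-deficient in just the right way, making one eigenvalue vanish (complementary slackness) while the remaining three stay non-negative; verifying this cleanly, presumably by exhibiting $M\otimes\openone_2-R$ as a manifestly PSD object such as $W^\dagger W$ for some explicit $W$, is the decisive and most delicate calculation.
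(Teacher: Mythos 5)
Your proposal follows essentially the same route as the paper's own proof: formulate the problem as an SDP over Choi matrices, exploit unitary invariance to pass to a rotated frame, pin down the dual certificate $M=\mathfrak{x}_0\openone_2+\mathfrak{x}_1 X+\mathfrak{x}_3 Z$ (the $Y$-component indeed vanishes) by solving the complementary slackness equations against the strategy's Choi matrix, and conclude by (weak) duality once the slack matrix is shown positive semidefinite analytically. The paper carries out your ``decisive and most delicate calculation'' just as you anticipate: the certificate matrix is rank-deficient, its characteristic polynomial factoring as $\lambda^2 P_2(\lambda)$ for procedure A and $\lambda P_3(\lambda)$ for procedure B, with non-negativity of the remaining roots verified in Appendix~\ref{app:charpoly}.
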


In the subsequent proof of this theorem, some familiarity with SDP theory is assumed. Standard reviews on the topic are \cite{96Vandenberghe49,04Boyd}. More closely related to our purposes is \cite{02Audenaert030302}, where the connection between optimization of quantum operations and SDPs was first noted. Also relevant is Ref. \cite{07Branczyk012329}, where a similar technique was used to approach a particular case of tracking.

\subsection{The tracking problem as a SDP}\label{sec:trkasSDP}

We start by showing that the tracking problem can be formulated as a SDP. Formally, it can be written as
\begin{equation}\label{eq:opt1_app}
\max_{\mathcal{C}\in\mathcal{Q}_2^{\rm set}}{\sum_{i=1}^2{\tr\left[\mathcal{C}(\rho_i)\pi_i\overline{\rho}_i\right]}}\,.
\end{equation}
It will be convenient to rewrite $\mathcal{C}(\rho_i)$ as \cite{01DAriano042308}
\begin{equation}\label{eq:ck}
\mathcal{C}(\rho_i)=\tr_1\left[(\rho_i^{\sf T}\otimes\openone)\mathfrak{C}\right]\,,
\end{equation}
where $\mathfrak{C}$ is the (unnormalized) Choi matrix \cite{75Choi285}
\begin{equation}\label{eq:kc}
\mathfrak{C}=(\mathcal{I}\otimes\mathcal{C})(\ket{\Psi^+}\!\bra{\Psi^+})\,,
\end{equation}
and $\ket{\Psi^+}=\ket{00}+\ket{11}$. Eqs. (\ref{eq:ck}) and (\ref{eq:kc}) establish a one-to-one relation between the set of CPTP maps on qubits and the set of (unnormalized) 2-qubit density matrices satisfying ${\rm Tr_2} \mathfrak{C}=\openone_2$ \cite{99Horodecki1888,99Fujiwara3290,01DAriano042308,75Choi285}. Here, $\tr_{1(2)}$ denotes the partial trace operation over the first (second) qubit.\par

Using this isomorphism, a straightforward manipulation gives for the objective function in (\ref{eq:opt1_app}) the form $-\tr\left(F_0 \mathfrak{C}\right)$, where
\begin{equation}
F_0=-\sum_{i=1}^2{\rho_i^{\sf T}\otimes\pi_i\overline{\rho}_i}\,.
\end{equation}
whereas the constraint $\mathcal{C} \in \mathcal{Q}_2^{\rm set}$ becomes $\mathfrak{C}\geq 0$ and $\tr_2\mathfrak{C}=\openone_2$. In conclusion, the tracking problem assumes the standard form of a SDP
\begin{equation}\label{eq:cvxopt}
  \begin{array}{rl}
  \text{maximize}&-\tr\left(F_0 \mathfrak{C}\right)\\
  \text{subject to}&\mathfrak{C}\geq 0\\
  &\tr\left[(\sigma_k\otimes\openone)\mathfrak{C}\right] = 2\delta_{k 0}\,,\quad k=0,1,2,3.\\
  \end{array}
\end{equation}\par
 A special feature of (\ref{eq:cvxopt}) which will be explored next is that the replacement $\rho_i\mapsto A\rho_i A^\dagger$ and $\overline{\rho}_i \mapsto B^\dagger \overline{\rho}_i B$, with $A,B \in U(2)$, yields another SDP (with $\widetilde{F}_0$ replacing $F_0$) that achieves exactly the same optimal value. This can be easily seen by noting that: (i) $\tr\left(F_0 \mathfrak{C}\right)=\tr\left(\widetilde{F}_0 \widetilde{\mathfrak{C}}\right)$, for $\widetilde{\mathfrak{C}}=(A^{\sf T}\otimes B)^\dagger \mathfrak{C} (A^{\sf T}\otimes B)$ and (ii) $\widetilde{\mathfrak{C}}$ satisfies the constraints in (\ref{eq:cvxopt}) if and only if $\mathfrak{C}$ does.\par

\subsection{The duality trick}\label{sec:dualitytrick}
As shown in Eqs.~\eqref{eq:KCKD} and~\eqref{eq:ChoiDiag}, the strategy described in Sec.~\ref{sec:strategy} can be written in terms of the Choi matrix as $\mathfrak{C}=(V^{\sf T}\otimes U) \mathfrak{D} (V^{\sf T}\otimes U)^\dagger$, with
\begin{equation}
\mathfrak{D}=\frac{1}{2}\openone\otimes\openone+\frac{s_1}{2}\openone\otimes X+\frac{\mu_1}{2}X\otimes X - \frac{\mu_2}{2}Y\otimes Y + \frac{\mu_3}{2}Z\otimes Z\,,
\end{equation}
where $X$, $Y$ and $Z$ denote the Pauli matrices. Given the reasoning of the previous section, our strategy constitutes an optimal solution to the tracking problem if and only if the following SDP is solved with $\mathfrak{K}=\mathfrak{D}$,
\begin{equation}\label{eq:cvxopt_tilde}
  \text{maximize}-\tr\left(\widetilde{F}_0 \mathfrak{K}\right)\quad
  \text{subject to}\quad\mathfrak{K}\geq 0\quad\mbox{and}\quad\tr\left[(\sigma_k\otimes\openone)\mathfrak{K}\right] = 2\delta_{k 0}\,,\quad k=0,1,2,3\,,
\end{equation}
with
\begin{equation}
\widetilde{F}_0=-\sum_{i=1}^2{(V \rho_i V^\dagger)^{\sf T}\otimes U^\dagger\pi_i\overline{\rho}_i U}\,.
\end{equation}
The above SDP has the strong duality property, i.e., its optimal value is guaranteed to be identical to the optimal value of its dual problem \cite{04Boyd}. This fact follows, for example, from the ``strict feasibility'' of the point $\widetilde{\mathfrak{K}}=\openone\otimes\openone/2$, which satisfies the constraints of (\ref{eq:cvxopt_tilde}) with the strict inequality $\widetilde{\mathfrak{K}}>0$.\par

From duality theory for SDPs (cf. Sec.~\ref{sec:LagDuality}), the problem above is solved with $\mathfrak{K}=\mathfrak{D}$ if and only if (i) $\mathfrak{D}$ satisfies the constraints of  (\ref{eq:cvxopt_tilde}) and (ii) the linear matrix inequality
\begin{equation}\label{eq:lmi}
F=\widetilde{F}_0 + \mathfrak{x}_0 \openone\otimes \openone + \mathfrak{x}_1 X\otimes\openone + \mathfrak{x}_2 Y\otimes\openone + \mathfrak{x}_3 Z\otimes\openone \geq 0
\end{equation}
is satisfied by some quadruple $(\mathfrak{x}_0,\mathfrak{x}_1,\mathfrak{x}_2,\mathfrak{x}_3)$ such that
\begin{equation}\label{eq:weak_dual}
2 \mathfrak{x}_0=-\tr\left(\widetilde{F}_0 \mathfrak{D}\right)\,.
\end{equation}
 If that is the case, then the so-called ``complementary slackness'' condition \cite{02Audenaert030302}, $\mathfrak{D}F=0$, holds for the appropriate values of coefficients $\mathfrak{x}_0$, $\mathfrak{x}_1$, $\mathfrak{x}_2$ and $\mathfrak{x}_3$.\par

To see that (i) is verified, recall that the values of $\mu_{1,2,3}$ and $s_1$ were chosen to make of $\mathcal{D}$ an (extreme) CPTP map. As mentioned before, the Choi matrix of any such map (on qubits) is characterized by the constraints of problem (\ref{eq:cvxopt_tilde}).\par

For (ii), first note that
$-\tr\left(\widetilde{F}_0 \mathfrak{D}\right)$ is merely $\aver{\F_{\rm HS}^{\rm A}}$ or $\aver{\F_{\rm HS}^{\rm B}}$ given in Eqs. (\ref{eq:fid_nt}) and (\ref{eq:fid_t}), depending on whether $\Omega>0$ or $\Omega \leq 0$. However, for later use in Ch.~\ref{chap:multistep}, we will consider the more general case where $\brho_i$ is not necessarily normalized (our intention is to show that our tracking strategy is still optimal in this case). Accounting for this generalization, we have
\begin{equation}
-{\rm Tr}\left(\widetilde{F}_0 \mathfrak{D}\right)=\left\{
\begin{array}{rcl}
\frac{1}{2}\left(c+\Gamma_a\right)&\mbox{if}& \Omega > 0\\
\frac{1}{2}\left(c+\Gamma_b\right)&\mbox{if}& \Omega \leq 0
\end{array}
\right.
\end{equation}
where $c\mathrel{\mathop:}= c_1+c_2$ and $c_i\mathrel{\mathop:}=\pi_i{\rm Tr}\overline{\rho}_i$ (for normalized $\brho_i$, we have $c_i=\pi_i$ and $c=1$).

In our particular problem, the complementary slackness condition results in sufficient independent linear equations that $\mathfrak{x}_0$, $\mathfrak{x}_1$, $\mathfrak{x}_2$ and $\mathfrak{x}_3$ are defined precisely. We find $\mathfrak{x}_2=0$ and

\begin{itemize}
\item If $\Omega>0$,
\begin{subequations}\label{eq:coefomegap}
\begin{align}
\mathfrak{x}_0&=\frac{1}{4}\left(c+\Gamma_a\right)\,,\\
\mathfrak{x}_1&=\frac{R_\times}{4R_-}\left(c+\Gamma_a\right)\,,\\
\mathfrak{x}_3&=\frac{1}{4R_-}\left[\left(c_1\bm{R}_1+c_2\bm{R}_2\right)\cdot\bm{R}_-+\frac{\Xi}{\Gamma_a}\right]\,. \label{eq:xi3p}
\end{align}
\end{subequations}
\item If $\Omega\leq 0$,
\begin{subequations}\label{eq:coefomegam}
\begin{align}
\mathfrak{x}_0&=\frac{1}{4}\left(c+\Gamma_b\right)\,,\\
\mathfrak{x}_1&=\frac{1}{4R_-}\left( cR_\times+\frac{\xi}{\Gamma_b}\right)\,,\label{eq:xi1m}\\
\mathfrak{x}_3&=\frac{1}{4R_-}\left[\left(c_1\bm{R}_1+c_2\bm{R}_2\right)\cdot\bm{R}_-+\frac{\Xi}{\Gamma_b}\right]\,.\label{eq:xi3m}
\end{align}
\end{subequations}
\end{itemize}
where, for brevity, we have defined
\begin{align}
\Xi&\mathrel{\mathop:}= \sum_{i=1}^2{\left(\bm{R}_i\cdot\bm{R}_-\right)\left(\overline{\bm{R}}_i\cdot\overline{\bm{R}}_+\right)}\,,\\
\xi&\mathrel{\mathop:}= R_\times\overline{R}_+^2+\overline{R}_\times R_-^2\label{eq:xi}\,,
\end{align}
which can be shown to satisfy the relation
\begin{equation}\label{eq:xixi}
\Xi^2+\xi^2=R_-^2\overline{R}_+^2\Gamma_b^2\,.
\end{equation}
We prove in Appendix \ref{app:welldef} that, although $\Gamma_{a}$ and $\Gamma_b$ appear in the denominator of some of the coefficients in Eqs. (\ref{eq:coefomegap}) and (\ref{eq:coefomegam}), no singularities occur if the indicated range of $\Omega$ is observed.

With the set of coefficients (\ref{eq:coefomegap}) and (\ref{eq:coefomegam}), Eq. (\ref{eq:weak_dual}) is clearly satisfied. As a result, the optimality of our tracking strategy is solely dependent on proving the linear matrix inequality $F\geq 0$ for the above set of coefficients. In Appendix \ref{app:charpoly} we study the characteristic polynomial of $F$ and conclude that all of its roots are non-negative, thus proving Theorem \ref{teo:optimality}.

\section{Examples}\label{sec:examples}
In this section we evaluate our tracking strategy at work in some physically relevant problems such as quantum state discrimination, quantum state purification, stabilization of quantum states in the presence of noise and state-dependent quantum cloning. Moreover, we also discuss the application of our strategy in circumstances where tracking is known to be perfectly achievable. The analysis presented in this section is meant to give an explicit account on the wide range of physical applications of the tracking problem and its optimal solution.

 \subsection{Quantum State Discrimination}\label{sec:ex1}
A standard result in quantum state discrimination is the Helstrom measurement \cite{76Helstrom}, which consists of a projective quantum measurement that maximizes the probability ($P_{\rm Helst}$) of correctly identifying the state of a quantum system that could have been prepared in two different states. Describing the possible preparations by $\rho_1$ with probability $p_1$ and $\rho_2$ with probability $p_2$, the Helstrom measurement gives
\begin{equation}\label{eq:helstrom_prob}
P_{\rm Helst}=\tfrac{1}{2}+\tfrac{1}{2}\|p_1\rho_1-p_2\rho_2\|_{\rm tr}\,,
\end{equation}
where $\|\cdot\|_{\rm tr}$ denotes the trace norm.\par
In this section, we propose a quantum state discrimination protocol for a pair of qubit states based on the tracking strategy introduced in Sec.~\ref{sec:strategy}. We will show that it is equivalent to Helstrom's strategy, as it will give the same correct identification probability of Eq. (\ref{eq:helstrom_prob}).

Our quantum state discrimination protocol consists of two simple steps: First we apply an optimal tracking operation $\mathcal{C}$ to approximate the states to be discriminated to some pair of orthogonal states. Without loss of generality, we take $\overline{\rho}_1=\ket{0}\!\bra{0}$ and $\overline{\rho}_2=\ket{1}\!\bra{1}$. The priority of each transformation is taken to be identical to the prior probabilities with which $\rho_1$ and $\rho_2$ are prepared, i.e., $\pi_i=p_i$. As the second and final step, we perform the quantum measurement $\{\ket{0}\!\bra{0},\ket{1}\!\bra{1}\}$, under the understanding that an outcome `0' suggests the preparation to be $\rho_1$ and an outcome `1' suggests $\rho_2$.\par
The probability of a correct identification under this tracking scheme is given by Born's rule, averaged with the prior probabilities,
\begin{align}
P_{\rm track}&=p_1\tr\left[\mathcal{C}(\rho_1)\ket{0}\!\bra{0}\right]+p_2\tr\left[\mathcal{C}(\rho_2)\ket{1}\!\bra{1}\right]\nonumber\\
&=\pi_1\tr\left[\mathcal{C}(\rho_1)\overline{\rho}_1\right]+\pi_2\tr\left[\mathcal{C}(\rho_2)\overline{\rho}_2\right]\,.\label{eq:acc_prob}
\end{align}
By comparing Eqs. (\ref{eq:acc_prob}) and (\ref{eq:J}), one promptly recognizes that $P_{\rm track}$ is precisely the performance of the operation $\mathcal{C}$ for tracking from $\rho_i$ to $\overline{\rho}_i$ with priority $\pi_i$, as measured by $\aver{\F_{\rm HS}}$. Hence, in the case of $\overline{\rho}_1=\ket{0}\!\bra{0}$, $\overline{\rho}_2=\ket{1}\!\bra{1}$ and $\pi_i=p_i$, Eqs. (\ref{eq:fid_nt}) and (\ref{eq:fid_t}) give the probability of success of our discrimination scheme for $\Omega>0$ and $\Omega \leq 0$, respectively. Next, we make these formulas more explicit.\par

Using the condition $\overline{R}_\times=0$ in Eqs. (\ref{eq:S}) and (\ref{eq:IF}), we obtain $\Omega=|T|+T$. Essentially, this means that $T$
assumes the role of the indicator function: if $T>0$, then $\Omega>0$ and we employ procedure A; if $T\leq 0$, then $\Omega=0$ and we employ procedure B.
Substituting $\overline{\bm{R}}_i=-(-1)^i p_i\bm{z}$ into Eq. (\ref{eq:T}), some simple algebra gives
\begin{equation}
T=(p_1-p_2)^2-\|p_1\bm{R}_1-p_2\bm{R}_2\|^2\,,
\end{equation}
and we can write
\begin{equation}
P_{\rm track}=\left\{\begin{array}{rcl}
\tfrac{1}{2}+\tfrac{1}{2}|p_1-p_2|&\mbox{if}&T>0\\
\tfrac{1}{2}+\tfrac{1}{2}\|p_1\bm{R}_1-p_2\bm{R}_2\|&\mbox{if}&T\leq 0
\end{array}
\right.\,,
\end{equation}
where the first line follows from Eq. (\ref{eq:fid_nt}) and the second from Eq. (\ref{eq:fid_t}).\par

 It is a tedious exercise (essentially the computation of the eigenvalues of $p_1\rho_1-p_2\rho_2$)
 to re-express Eq. (\ref{eq:helstrom_prob}) in terms of the Bloch vectors $\bm{R}_i$. The result is exactly
\begin{equation}
P_{\rm Helst}=P_{\rm track}\,,
\end{equation}
hence establishing the claimed equivalence between our strategy and Helstrom's.

Note that if $T>0$, $P_{\rm track}$ is independent of the states we are trying to distinguish, but merely dependent on the probabilities with which they occur. This can be understood by looking at the details of the affine operation taking place in procedure A. As noted before, for $\overline{R}_\times=0$ (as is the case for orthogonal targets), the affine map is such that $\mu_1=\mu_2=\mu_3=0$ and $s_1=1$; that is, the source states are completely depolarized and a new state $\ket{+}$ is prepared instead. Next, this state is rotated by the unitary $U$ and the measurement is finally performed.

It is easy to see that for $p_1=p_2=1/2$ the condition $T > 0$ (procedure A) never holds. However, as we deviate from the uniform distribution, the volume of the parameter space where procedure A is recommended grows to fully cover the space when $p_1=0$ or $p_1=1$. This is shown in Fig. \ref{fig:beat_Helst}.

\begin{figure}[h]
\centering
\subfigure[\hspace{1.5mm}$p_1<1/2$] {
    \label{fig:beat_Helst_a}
    \includegraphics[width=7.5cm]{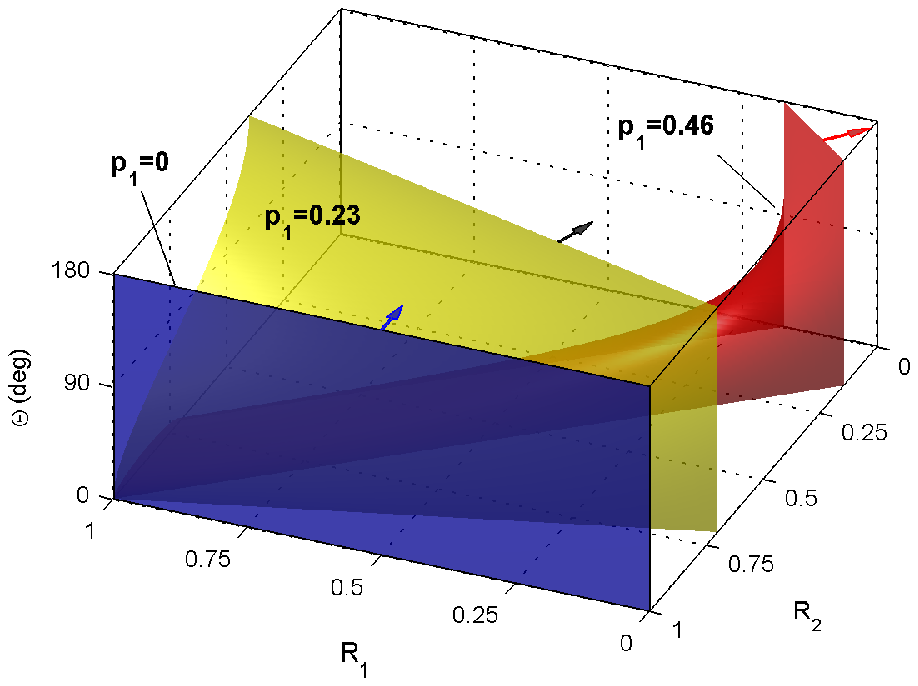}
}
\hspace{0.5cm}
\subfigure[\hspace{1.5mm} $p_1 > 1/2$] {
    \label{fig:beat_Helst_b}
    \includegraphics[width=7.5cm]{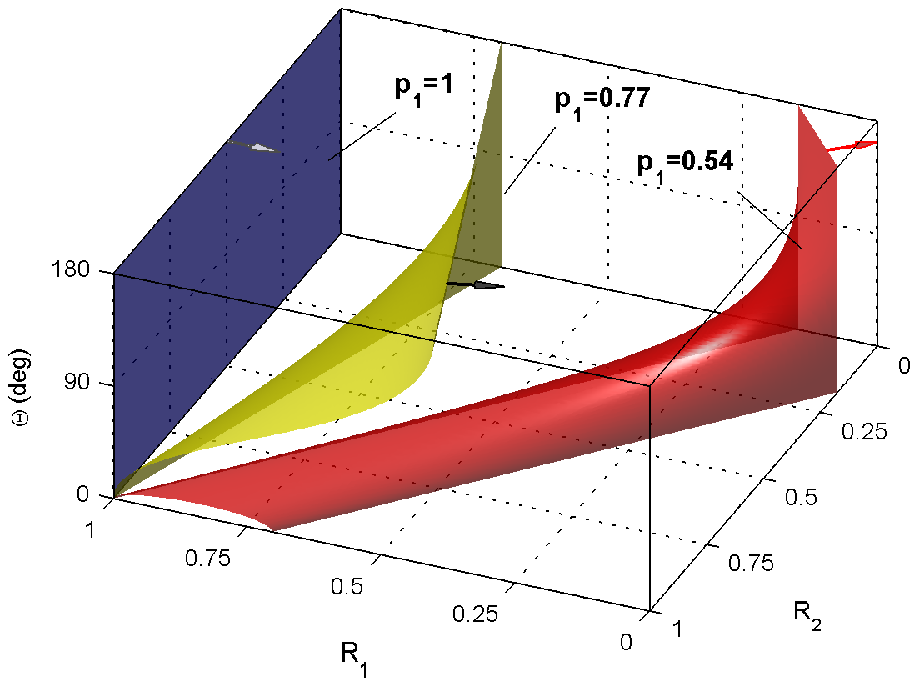}
}
\caption[Optimal regions for unitary and non-unitary schemes.]{(Color online) Each color of sheet represents a fixed deviation from the uniform probability distribution. The sheets divide the parameter space in two regions. The arrows designate the regions where procedure A (non-unitary) is recommended. On the other side, procedure B (unitary) is recommended. The larger the deviation from $p_1=1/2$, the larger the region where a non-unitary preparation for the measurement $\{\ket{0}\!\bra{0},\ket{1}\!\bra{1}\}$ is advisable.}
\label{fig:beat_Helst}
\end{figure}

\subsection{Quantum state Purification}\label{sec:ex1p5}

In this section, we consider a kind of state purification task where we aim to transform a pair of mixed source states $\rho_1$ and $\rho_2$ with the same degree of mixedness ($R_1=R_2=R<1$) that are separated in the Bloch sphere by an angle $2\theta \in (0,\pi]$ into a pair of pure target states $\overline{\rho}_1$ and $\overline{\rho}_2$ separated by the same  Bloch sphere angle $2\theta$. In other words, our purification task consists of elongating the Bloch vectors while preserving the angle between them.

For later use, it will be convenient to derive formulas for the indicator function and figure-of-merit of a slightly more general problem, where the angle between the target Bloch vectors is $2\overline{\theta} \in [0,\pi]$. The purification task can be recovered by restricting $\theta=\overline{\theta}$. In addition, we will allow the priorities of the transformations $\rho_i\to\overline{\rho}_i$ to be arbitrary positive scalars $\pi_i$ such that $\pi_1+\pi_2=1$. Later, we make $\pi_1=\pi_2$, in order to simplify the formulas.\par

In this generalized purification framework ($\theta\neq\overline{\theta}$), the indicator function is obtained from Eq. (\ref{eq:IF}), by incorporating the conditions $\overline{R}_i=\pi_i$ (purity of the targets) and  $R_i=R$ (common mixedness of the sources) in the expressions for $R_\times\overline{R}_\times$, $T$ and $S$, from Eqs. (\ref{eq:Rx}), (\ref{eq:T}) and (\ref{eq:S}), respectively. These have a particularly appealing form:
\begin{subequations}\label{eq:RSTpurif}
\begin{align}
R_\times\overline{R}_\times&=\sqrt{\mathcal{R}_s\mathcal{R}_c\left[\Pi_+\Pi_--\delta\right]}\,,\\
T&=(1-\mathcal{R}_c)\Pi_+-\mathcal{R}_s\Pi_-\,,\\
S&=\sqrt{\left[(1-\mathcal{R}_c)\Pi_++\mathcal{R}_s\Pi_-\right]^2-4\delta\mathcal{R}_s(1-\mathcal{R}_c)}\,,
\end{align}
\end{subequations}
where we have defined $\mathcal{R}_c \mathrel{\mathop:}= R^2\cos^2\theta$, $\mathcal{R}_s\mathrel{\mathop:}=R^2\sin^2\theta$, $\delta=(\pi_1-\pi_2)^2$ and
\begin{equation}
\Pi_\pm\mathrel{\mathop:}=\pi_1^2+\pi_2^2\pm2\pi_1\pi_2\cos{2\overline{\theta}}\,.
\end{equation}
From the above equations, the indicator function and the figure-of-merit can be immediately obtained. At this point, though, we specialize to the case $\delta=0$ (i.e., $\pi_1=\pi_2=1/2)$ and give explicit formulas in this particular case. From Eq. (\ref{eq:IF}),
\begin{align}
\Omega&=2\left[(1-\mathcal{R}_c)\Pi_+-\sqrt{\mathcal{R}_s\mathcal{R}_c\Pi_+\Pi_-}\right]\\
&=2\left[\cos^2\overline{\theta}-R^2\cos{\theta}\cos{\overline{\theta}\cos{\left(\theta-\overline{\theta}\right)}}\right]\,,\label{eq:Omegapurif}
\end{align}
where, in the second line, we used that $\Pi_+=\cos^2\overline{\theta}$ and $\Pi_-=\sin^2\overline{\theta}$ when $\pi_1=\pi_2=1/2$. From Eqs. (\ref{eq:fid_nt}) and (\ref{eq:fid_t}),
\begin{equation}\label{eq:fidpurif}
\aver{\F_{\rm HS}}=\left\{
\begin{array}{rcl}
\frac{1}{2}+\frac{1}{2}\sqrt{\cos^2\overline{\theta}+\frac{R^2\sin^2\theta\sin^2\overline{\theta}}{1-R^2\cos^2\theta}}&\mbox{if}&\Omega >0\\
\frac{1}{2}+\frac{1}{2}R\cos{\left(\theta-\overline{\theta}\right)}&\mbox{if}&\Omega \leq0
\end{array}\right.\,.
\end{equation}

It is now straightforward to see that, if $\theta=\overline{\theta}$, then $\Omega \geq 0$ with saturation if and only if $\theta = \pi/2$. That is, unless we are trying to purify from antipodal mixed states to orthogonal states, the best strategy is always a non-unitary transformation (procedure A). The optimal average fidelity of the purification scheme can be obtained by using $\theta=\overline{\theta}$ in Eq. (\ref{eq:fidpurif}). The resulting optimal purification performance is shown in Fig. \ref{fig:purif} and corresponds to the best achievable average fidelity allowed by quantum mechanics to the purification problem at hand.\par

\begin{figure}
\centering
\includegraphics[width=8.5cm]{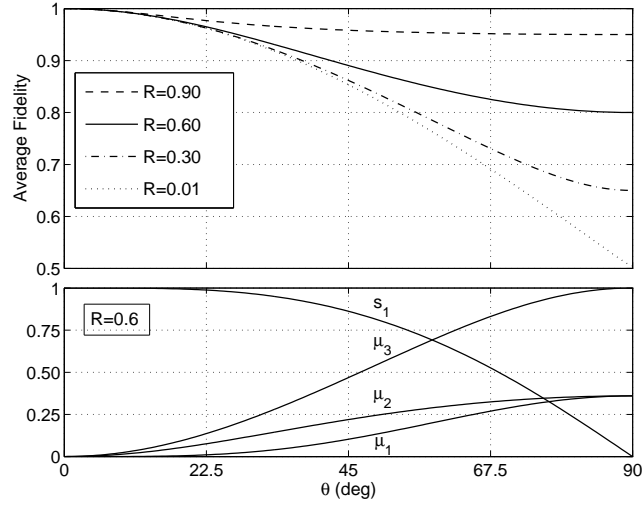}
\caption[Optimal average fidelity and controllers for purification of a pair of qubit states.]{Purifying a pair of mixed states with Bloch vectors of length $R<1$, separated by an angle $2\theta\in(0,\pi]$ with priorities $\pi_1=\pi_2=1/2$. The plot shows the optimal average fidelity for different values of $R$ and the control parameters $s_1$ and $\mu_{\{1,2,3\}}$ for $R=0.6$. The purification procedure attempts to increase, as much as possible, the length $R$ while preserving the angle $\theta$.}\label{fig:purif}
\end{figure}

From Fig. \ref{fig:purif}, we see that for small $\theta$, $\aver{\F_{\rm HS}}$ is typically high, regardless of the length $R$. This can be understood in analogy to the fact that collapsing a set of mixed states into a single pure state is always perfectly achievable. In fact, such a collapse is nearly what is needed in this domain, since a pair of pure target states separated by a small angle can be well approximated by a single pure state. Fig. \ref{fig:purif} confirms this reasoning by showing that, in the small $\theta$ domain, the source Bloch vectors are strongly compressed due to the small values of $\mu_{\{1,2,3\}}$ and then strongly elongated due to the large value of $s_1$.\par

For increasing values of $\theta$ the fidelity decreases. Such a decay is accentuated if the degree of mixedness of the source states is high (small values of $R$), reflecting the intuitive idea that it is harder to purify very mixed states. In these intermediate regions, a non-trivial combination of compressions $\mu_{\{1,2,3\}}$ and translation $s_1$ of the Bloch vectors forms the optimal purifying scheme. Noticeably, the optimal procedure has less effect on the qubit (decreasing $s_1$ and increasing $\mu_{\{1,2,3\}}$) as $\theta$ increases.\par

At $\theta=\pi/2$, we have $\Omega=0$ and an optimal unitary transformation is actually to do nothing (the unitary transformation $V$ is undone by another unitary $V^\dagger$, see Sec. \ref{sec:procB}). Note that although the plot of $\mu_1$ and $\mu_2$ in Fig. \ref{fig:purif} approaches a constant value in between $0$ and $1$, the vanishing indicator function introduces a discontinuity in the purifying operation, since now we should use procedure B, hence $\mu_1=\mu_2=1$ at $\theta=\pi/2$. Nevertheless, the values of $\mu_1$ and $\mu_2$ are utterly irrelevant in this case. At this stage both Bloch vectors are aligned with the $\bm{z}$ direction, thus any compression along $\bm{x}$ and $\bm{y}$ cannot affect the states of interest.

\subsection{Stabilizing pure states}\label{sec:ex2}

A possible use for tracking is to try to cope with the presence of noise in quantum computation and communication involving qubits. In general, noise processes (we restrict ourselves to CP processes) cannot be inverted by another CP map, not even when the noise is perfectly known \footnote{This follows from the semi-group structure of CP maps. An obvious exception arises by restricting to the group of unitary noises. In fact, a theorem by Wigner states that this is the only exception (see \cite{05Buscemi082109} for a proof, see also \cite{07Nayak103})}. However, instead of stabilizing the full Bloch sphere against noise, one may be interested at stabilizing only a limited number of states. Although not perfect, it is not uncommon that good stabilization can be achieved within this framework.\par

In this section, we consider a quantum error correction task of this type, which was studied in detail in Ref. \cite{07Branczyk012329}. We will show that the optimal correction scheme is merely a particular case of the quantum state purification procedure (with $\theta\neq\overline{\theta}$) introduced in the previous section.\par

Assume that Alice prepares (with equal probabilities) a qubit in one of the non-orthogonal pure states
\begin{subequations}\label{eq:ag_states}
\begin{align}
\ket{\psi_1}&=\cos{\frac{\overline{\theta}}{2}}\ket{+}+\sin{\frac{\overline{\theta}}{2}}\ket{-}\,,\\
\ket{\psi_2}&=\cos{\frac{\overline{\theta}}{2}}\ket{+}-\sin{\frac{\overline{\theta}}{2}}\ket{-}\,,
\end{align}
\end{subequations}
where $\ket{\pm}=(\ket{0}\pm\ket{1})/\sqrt{2}$ and $\overline{\theta}$ is the half-angle between $\ket{\psi_1}$ and $\ket{\psi_2}$ in the Bloch sphere representation, hence $\overline{\theta}\in(0,\pi/2)$. She then sends her qubit to Bob through a dephasing channel
\begin{equation}\label{eq:dephasing}
\mathcal{E}(\varsigma)=p Z \varsigma Z+(1-p)\varsigma\,,
\end{equation}
where $p$ is a constant in the range $(0,1/2]$ that has been previously determined. Bob, who does not know which of the two states was prepared, has to apply a quantum operation so as to ensure that, when Alice performs a check-measurement $\{\ket{\psi_{k}}\!\bra{\psi_{k}},\openone-\ket{\psi_{k}}\!\bra{\psi_{k}}\}$ (with $k$ labeling the identity of her actual preparation) on Bob's output, the probability of detecting her original preparation is as high as possible. This probability equals the average fidelity between the possible inputs and the outputs of Bob's operation.\par

Our tracking strategy can be of assistance to Bob if he regards the two possible noisy states as the source states $\rho_i=\mathcal{E}(\ket{\psi_i}\!\bra{\psi_i})$ and tracks (with equal priorities $\pi_i=1/2$) to the target states $\overline{\rho}_i=\ket{\psi_i}\!\bra{\psi_i}$. In this case, the target states are pure and the source states have the same degree of mixedness [this follows easily from the application of the dephasing map to the states of Eq. (\ref{eq:ag_states})], which is precisely the scenario we considered in the last section for quantum state purification.

The indicator function $\Omega$ can then be obtained from Eq. (\ref{eq:Omegapurif}) by using the following identities for the angle $\theta$ (recall that $\theta$ is the half-angle, in the Bloch sphere, between the states output by the dephasing noise),
\begin{equation}\label{eq:dephident}
\sin{\theta}=\frac{\sin{\overline{\theta}}}{R}\quad\mbox{and}\quad\cos\theta=\frac{(1-2p)\cos{\overline{\theta}}}{R}\,,
\end{equation}
where $R$ is the length of the noisy Bloch vectors. Explicitly,
\begin{equation}
\Omega=2\cos^2\overline{\theta}\left[1-R^2+2p\sin^2\overline{\theta}\right]\,.
\end{equation}
It is easy to see that, given the ranges $\overline{\theta} \in (0,\pi/2)$ and $p\in(0,1/2]$, we have $\Omega > 0$, which implies that Bob should always apply the non-unitary procedure A. The optimal performance is then obtained by substituting the identities (\ref{eq:dephident}) in the first line of Eq. (\ref{eq:fidpurif}), which gives

\begin{equation}\label{eq:fid_aggie}
\aver{\F_{\rm HS}}=\frac{1}{2}+\frac{1}{2}\sqrt{\cos^2{\theta}+\frac{\sin^4{\theta}}{1-\left(1-2p\right)^2\cos^2{\theta}}}\,.
\end{equation}
As expected, this is precisely the optimal fidelity found for this problem  in \cite{07Branczyk012329}.\par

It should be clear that our tracking strategy can be similarly applied to the stabilization of quantum states different from those of Eq. (\ref{eq:ag_states}), prepared with non-uniform prior probabilities and undergoing noise dynamics different from dephasing, in any case still providing optimal stabilization. It thus represents a significant extension of the results in \cite{07Branczyk012329}.

\subsection{Perfectly tracking quantum states}\label{sec:ex3}
In this section we evaluate the performance of our strategy in circumstances where tracking is known to be perfectly achievable. It will be convenient to split our analysis in two, namely, the case of two pure target states and the remaining cases (in which at least one of the target states is mixed).

\subsubsection{Pure target states}\label{sec:ex31}
In appendix \ref{app:ptc} we prove a corollary of Alberti and Uhlmann's theorem stating that a CPTP  map $\mathcal{A}$ perfectly transforming a pair of quantum states $\rho_i$ ($i=1,2$) into a pair of pure states $\overline{\rho}_i$ exists if and only if $\rho_i$ are also pure and $\theta\geq\overline{\theta}$. Since our tracking strategy is  optimal (cf. Theorem \ref{teo:optimality}), we can infer from Alberti and Uhlmann's theorem that it implements tracking with unit fidelity whenever $R=1$ and $\theta\geq\overline{\theta}$. This is explicitly verified in the sequence, where the indicator function and the figure-of-merit for pure state transformations are computed.

We start using Eq. (\ref{eq:RSTpurif}) with $R=1$ (pure source condition) to construct the indicator function $\Omega$ from Eq. (\ref{eq:IF}). After some straightforward manipulation, we obtain
\begin{equation}
\Omega=8\pi_1\pi_2\sin{\theta}\cos{\overline{\theta}}\sin{\left(\theta-\overline{\theta}\right)}\,.
\end{equation}
For our purposes, the only meaningful feature of $\Omega$ is whether it is strictly positive or not, in which case the above expression is equivalent to
\begin{equation}\label{eq:omegaequiv}
\widetilde{\Omega}=2\left(\theta-\overline{\theta}\right)\,,
\end{equation}
since $\theta,\overline{\theta}\in(0,\pi/2]$ and $\pi_1,\pi_2 \in (0,1)$. Recall that $\widetilde{\Omega}$ is the indicator function obtained in Sec. \ref{sec:indicator}, Eq. (\ref{eq:IF_simple}), via an heuristic argument.\par
The figure-of-merit, in turn, can be obtained from Eqs. (\ref{eq:fid_nt}) and (\ref{eq:fid_t}) to be $\aver{\F_{\rm HS}^{\rm A}} = 1$ (if $\widetilde{\Omega}>0$) and
\begin{equation}\label{eq:fidppunit}
\aver{\F_{\rm HS}^{\rm B}} =\frac{1}{2}+\frac{1}{2}\sqrt{\pi_1^2+\pi_2^2+2\pi_1\pi_2\cos{\left(2\theta-2\overline{\theta}\right)}}
\end{equation}
(if $\widetilde{\Omega} \leq 0$). Note, however, that $\aver{\F_{\rm HS}^{\rm B}}=1$ if $\theta=\overline{\theta}$ (i.e., $\widetilde{\Omega}=0$), in such a way that we can write

\begin{equation}\label{eq:fid_B}
\aver{\F_{\rm HS}}=\left\{
\begin{array}{ccl}
1&\mbox{if}&\theta\geq\overline{\theta}\\
\aver{\F_{\rm HS}^{\rm B}}&\mbox{if}&\theta<\overline{\theta}
\end{array}
\right.\,.
\end{equation}

The first line of Eq. (\ref{eq:fid_B}) is exactly the content of Alberti and Uhlmann's theorem applied to pure state transformations, whereas the second line establishes the optimal achievable average fidelity when perfect pure state transformation is impossible.

In conclusion, besides representing a construction of Alberti and Uhlmann's map $\mathcal{A}$ for perfect pure state transformations, our tracking strategy also gives the unitary map (procedure B) that optimally approximates impossible pure state transformations.

\subsubsection{Mixed target states}
The requirement of perfect tracking does not restrict the target states to be pure. In fact, the more general form of Alberti and Uhlmann's theorem states that for any given target states $\overline{\rho}_i$, there exists a CPTP map $\mathcal{A}$ that implements perfect tracking from all source states $\rho_i$ satisfying
\begin{equation}\label{eq:AU}
\|\overline{\rho}_1-t\overline{\rho}_2\|_{\rm tr}\leq\|\rho_1-t\rho_2\|_{\rm tr}\quad\forall t\in\mathbb{R}^+\,,
\end{equation}
In contrast to the previous section though, our tracking strategy is generally not a construction of the map $\mathcal{A}$ in this case. As mentioned before, this is a consequence of the fact that our figure-of-merit is not as well motivated in the case of mixed target states. For example,
in situations where perfect tracking is possible, the resulting average Hilbert-Schmidt inner product does not achieve its maximal value. This is further explored next.\par

Any CPTP map $\mathcal{C}$ implementing perfect tracking must satisfy
\begin{equation}\label{eq:perfcond}
\pi_1\tr\left[\mathcal{C}(\rho_1)\overline{\rho}_1\right]+\pi_2\tr\left[\mathcal{C}(\rho_2)\overline{\rho}_2\right]=\pi_1\tr\overline{\rho}_1^2+\pi_2\tr\overline{\rho}_2^2\,.
\end{equation}
Our strategy, though, does not arise from an attempt to enforce Eq. (\ref{eq:perfcond}), but instead to maximize its lhs (cf. Sec.~\ref{sec:problem}). Although these actions are equivalent in the case of pure target states [the rhs of Eq. (\ref{eq:perfcond}) equals $1$, which is precisely the maximum value of its lhs for states satisfying the criterion of Eq. (\ref{eq:AU})], for mixed target states this equivalence is lost. In this case, the lhs can typically be made greater than the rhs by employing an operation $\mathcal{C}$ that elongates the source Bloch vectors to nearly pure states, as illustrated in Fig. \ref{fig:mixtrack}. As a consequence, the maximization of our figure-of-merit leads to a departure from the perfect tracking operation.

\begin{figure}
\centering
\includegraphics[width=8.5cm]{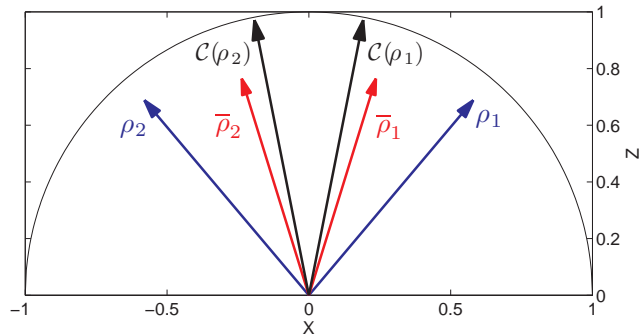}
\caption[Departure from optimality due to the lack of normalization.]{(Color online) Although perfect tracking $\rho_i\to\overline{\rho}_i$ is physically allowed, the resulting average Hilbert-Schmidt inner product [the rhs of Eq. (\ref{eq:perfcond})] is only $0.82$ for this transformation. Our tracking strategy $\mathcal{C}$ attempts to maximize this number, finding a different transformation which gives an average Hilbert-Schmidt inner product [the lhs of Eq. (\ref{eq:perfcond})] equal to approximately $0.89$. As a result, $\mathcal{C}$ does not implement perfect tracking.}\label{fig:mixtrack}
\end{figure}

Yet, recall that the average Hilbert-Schmidt inner product lower bounds the average fidelity and as such, its maximization has some beneficial impact in implementing tracking, in the sense that it ensures that the resulting average fidelity is no less than the maximal average Hilbert-Schmidt inner product.

\subsection{State-dependent Cloner}

One of the most celebrated results in quantum information science is the ``no-cloning theorem'' \cite{82Dieks271,82Wootters802}, which establishes the impossibility of copying an unknown pure quantum state. Since its inception in the literature, a lot of work has been done on the topic, both extending its range of applicability as well as attempting to weaken its impact in practical applications (see \cite{05Scarani1225} for a review). Remarkable results in each of these directions are the ``no-broadcasting theorem'' for noncommuting mixed quantum states \cite{96Barnum2818} and the Bu\v{z}ek-Hillery optimal quantum cloning machine \cite{96Buzek1844}. \par

In this section we consider a state-dependent cloning task introduced in Ref. \cite{98Bruss2368}. We will show that our tracking strategy provides a straightforward derivation of the optimal cloning fidelity obtained in that paper. Following \cite{98Bruss2368}, let
\begin{subequations}
 \begin{align}
\ket{a}&=\cos{\phi}\ket{0}+\sin{\phi}\ket{1}\,,\\
\ket{b}&=\sin{\phi}\ket{0}+\cos{\phi}\ket{1}\,,
 \end{align}
 \end{subequations}
 for $\phi\in [0,\pi/4)$, be the only two possible preparations of a single-qubit, each of which occurring with probability $1/2$. The cloning task is to output the two-qubit state $\ket{aa}\equiv\ket{a}\otimes\ket{a}$ if the initial preparation is $\ket{a}$ or $\ket{bb}\equiv\ket{b}\otimes\ket{b}$ if the initial preparation is $\ket{b}$. In \cite{98Bruss2368}, a unitary transformation $U$ was obtained such that the figure-of-merit (the so-called ``global fidelity'')
\begin{equation}\label{eq:Fg}
F_g=\frac{1}{2}\left(|\langle a a |U| a 0\rangle|^2+|\langle bb |U| b 0\rangle|^2\right)
\end{equation}
is maximal.\par

The key point that allows the application of our tracking strategy here is that, although the unitary evolution $U$ acts on the Hilbert space of a two-qubit system, it was shown in \cite[Appendix B]{98Bruss2368} that the maximizing $U$ is such that $U\ket{a0}$ and $U\ket{b0}$ lie in the two-dimensional subspace spanned by $\{\ket{aa},\ket{bb}\}$. Therefore, we can regard this cloning as a transformation from the two-dimensional subspace spanned by $\{\ket{a0},\ket{b0}\}$ to the two-dimensional subspace spanned by $\{\ket{aa},\ket{bb}\}$. By this same argument, we could have even relaxed the condition that the system to be cloned is a qubit.

Let $\ket{s_1}$ and $\ket{s_2}$ ($\ket{t_1}$ and $\ket{t_2}$) be the fictitious qubit source (target) states, and let $2\theta$ ($2\overline{\theta}$) be the Bloch sphere angle between them. Then, we must have
\begin{subequations}
\begin{align}
\langle s_1 | s_2 \rangle &= \langle a0 | b0 \rangle = \sin{(2\phi)} = \cos\theta\,,\\
\langle t_1 | t_2 \rangle &= \langle aa | bb \rangle = \sin^2{(2\phi)} = \cos\overline{\theta}\,.
\end{align}
\end{subequations}

From the above equations, the angles $\theta$ and $\overline{\theta}$ can be computed in terms of $\phi$, and the optimal value of $F_g$ is given by the optimal fidelity for tracking between pure qubit states, as described in Sec \ref{sec:ex31}. In particular, note that for the present problem, a valid indicator function is the one proposed in Eq. (\ref{eq:omegaequiv}),
\begin{equation}
\widetilde{\Omega}=2\arccos\left[\sin\left(2\phi\right)\right]-2\arccos\left[\sin^2\left(2\phi\right)\right]\leq 0\,,
\end{equation}
where the inequality holds for the specified range of $\phi$,
implying that the optimal fidelity is given by Eq. (\ref{eq:fidppunit}) with the proper values of $\theta$ and $\overline{\theta}$, explicitly
\begin{equation}
\aver{\F_{\rm HS}}=\frac{1}{2}+\frac{1}{2}\sqrt{\pi_1^2+\pi_2^2+2\pi_1\pi_2\cos{\widetilde{\Omega}}}\,.
\end{equation}

For $\pi_1=\pi_2=1/2$, the above formula can be shown to be precisely the same as Eq. (38) of \cite{98Bruss2368}, which gives the optimal global fidelity of the cloner. Thus we have not only reproduced that previous result, but also determined how it is optimally modified to incorporate an unequal probability of preparation of $\ket{a}$ and $\ket{b}$.

Finally, let us just mention that the resulting optimal tracking unitary operation (call it $W$) is not quite the optimal cloning unitary operation $U$ appearing in Eq. (\ref{eq:Fg}) and detailed in \cite{98Bruss2368} ($U$ and $W$ do not even act in Hilbert spaces of equal dimensions).
Instead, $W$ constrains how $U$ acts on the states of the form $\ket{\psi 0}$, but to fully specify $U$ we would need to choose $U\ket{11}$ and $U\ket{01}$ such that $U$ is a unitary matrix. Since this choice is not unique and does not affect the fidelity, we can say that $W$ contains all the essential information associated with the optimal cloning map.

\section{Tracking with a Control Loop}\label{sec:quantctrl}

Although the strategy introduced in Section \ref{sec:strategy} has been tailored to correspond to a CPTP map, so far no insight on how such a map can be physically implemented has been given. In this section we provide a realization in terms of a quantum control scheme. Namely, procedures A and B are shown to have the structure of closed and open loop control, respectively.\par

We start by giving a possible Kraus decompositions for the CPTP maps representing our strategy. This is relevant here because the Kraus form of a CPTP map enables us to interpret that map as some generalized quantum measurement (with no record of the outcomes) \cite{00Nielsen}. For $\Omega > 0$, the transformation $\mathcal{C}(\rho_i)=U \mathcal{D}(V \rho_i V^\dagger)U^\dagger$ from procedure A can be written as
\begin{equation}
\mathcal{C}(\rho_i)=(U M_1 V)\rho_i(U M_1 V)^\dagger+(U Y M_2 V)\rho_i(U Y M_2 V)^\dagger\,,
\end{equation}
with
\begin{subequations}\label{eq:meas_op}
\begin{align}
M_1&=\cos{\left(\frac{\chi-\eta}{2}\right)}\ket{+}\!\bra{+}+\sin{\left(\frac{\chi+\eta}{2}\right)}\ket{-}\!\bra{-}\,,\\
M_2&=\sin{\left(\frac{\chi-\eta}{2}\right)}\ket{+}\!\bra{+}-\cos{\left(\frac{\chi+\eta}{2}\right)}\ket{-}\!\bra{-}\,,
\end{align}
\end{subequations}
where  $\chi$ and $\eta$ are defined such that $\sin{\chi}=\mu_3$, $\cos{\chi}=\sqrt{1-\mu_3^2}$, $\sin{\eta}=\mu_2$ and $\cos{\eta}=\sqrt{1-\mu_2^2}$.

 For $\Omega \leq 0$, the transformation $\mathcal{C}(\rho_i)=U V \rho_i V^\dagger U^\dagger$ from procedure B is automatically in Kraus form, with a single Kraus operator $U V$.\par
We interpret these results as follows. First for $\Omega > 0$, the unitary $V$ is applied to the system and then a generalized quantum measurement with operators $M_1$ and $M_2$ is performed. Conditioned on observing the outcome `2', a Pauli $Y$ is applied to the system, followed by the unitary $U$. If the outcome is `1', the unitary $U$ is applied straight away. Due to this measurement-dependent dynamics (feedback), procedure A can be regarded as a closed loop control scheme.\par
Note that the measurement operators $M_1$ and $M_2$ are not projections, so the implementation of such a measurement requires the enlargement of the Hilbert space (by interaction with an ancilla), with subsequent (projective) measurement of the ancilla. Fig. \ref{fig:circuit} shows a possible circuit model for procedure A.

\begin{figure}[h]
\centering
\includegraphics[width=8cm]{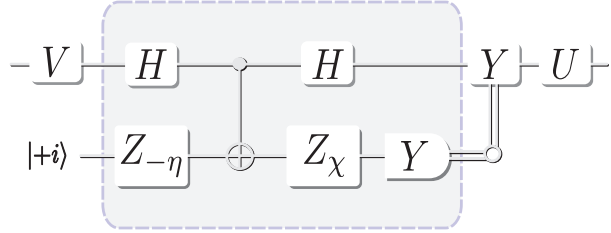}
\caption[A circuit model for the implementation of the non-unitary scheme as a feedback loop.]{A circuit model illustrating the feedback structure of procedure A. In the figure, $\ket{\pm i}=(\ket{0}\pm i\ket{1})/\sqrt{2}$ are the eigenvectors of the Pauli matrix $Y$, $H$ is the Hadamard gate and $Z_\theta=\exp{(-i\theta Z/2)}$. The highlighted circuit entangles the main system with the ancilla and projectively measures the ancilla in the basis $\{\ket{+i},\ket{-i}\}$. This induces a non-projective dynamics of the main system, and for this reason this block is referred to as a ``weak measurement''. If the measurement outcome is `$+i$', then the unitary transformations $Y$ and $U$ are applied to the main system; otherwise, only $U$ is applied.}\label{fig:circuit}
\end{figure}

For $\Omega \leq 0$, there is clearly no measurement involved, hence the control strategy is implemented independent
of acquiring extra information from the system. For this reason, procedure B can be regarded as an open loop control scheme.

\section{Discussion and Conclusions}\label{sec:discussion}

In this chapter we have introduced a simple quantum version of
a common classical control problem named tracking. Our quantum tracking problem consists of determining how to optimally enforce a certain dynamics to a qubit system, when the initial preparation of the qubit is uncertain (as modeled by a pair of states occurring with given prior probabilities) and the desired dynamics depends on the actual preparation. We presented an optimal quantum tracking strategy.\par

The tracking problem studied here is sufficiently general to provide an unifying approach to many problems in quantum information science as special cases. For example, some cases of quantum state discrimination, quantum state purification, stabilization of qubits against noise and state-dependent quantum cloning were explicitly shown to be instances of quantum tracking. As such, previously known quantum limits in the realization of these tasks were recovered via the application of our tracking strategy. Likewise, our tracking strategy can be used to obtain new and improved limits in the realization of other impossible quantum machines.

The derivation of our strategy was largely dependent on the fact that our figure-of-merit (the averaged Hilbert-Schmidt inner product) is linear in $\mathcal{C}$, which, in turn, is constrained to be an element of the \emph{convex} set of CPTP maps acting on qubits. This implies that the optimal map $\mathcal{C}$ belongs to the subset of extreme points, which has been fully characterized in \cite{02Ruskai159}. Thanks to a parametrization of these extreme points, the resulting optimization problem could be handled analytically when a few mild assumptions (supported by numerical observation) were made about the form of the optimal solution. The optimality was safeguarded \emph{a posteriori} via an argument based on the SDP structure of the tracking problem.\par

Analytical solutions for generalizations of the tracking problem studied here (e.g., other figures-of-merit and/or larger dimensional quantum systems) seem to require a modified approach from the one adopted here. For example, had we chosen to proceed with a better motivated figure-of-merit for mixed targets, such as the average fidelity, we would still have the guarantee that the optimal $\mathcal{C}$ is an extreme point, however optimality results about a possible guess would be harder to derive, since it is not known if/how the resulting optimization problem can be cast as a SDP when source and target states are mixed. Alternatively, we could have chosen, for example, to minimize the average trace distance, which can be cast as a SDP \cite{01Fazel4734,07Recht}. However, the trace distance is not concave in $\mathcal{C}$, in which case its minimum need not be an extreme point. Finally, had we kept our linear figure-of-merit but generalized from qubits to qudits for $d>2$ (or to multiple qubits), we would face the problem that the extreme points of the set of CPTP maps on higher dimensional matrix algebras are not well characterized.

A possibly simpler generalization is to preserve low dimensionality of the quantum system and linearity in the figure-of-merit, but allow for a larger number of possible sources and targets. In principle, this problem can be approached following exactly the same lines as adopted here. In fact, it is not difficult to see that a particular case of this more general problem can already be considered solved given the results of this chapter. Consider we are given two sets $\mathcal{S}_1$ and $\mathcal{S}_2$, respectively with $n_1$ and $n_2$ elements (let $N = n_1 + n_2$), of qubit density matrices $\tau_j$ ($j=1,\ldots,N$), and want to send every element of $\mathcal{S}_i$ to $\overline{\rho}_i$ for $i=1,2$. In analogy with Eq. (\ref{eq:J}), define the figure-of-merit
\begin{equation}\label{eq:figmeritger}
\aver{\F_{\rm HS}}=\sum_{j=1}^{n_1}{q_j\tr\left[\mathcal{C}(\tau_j),\overline{\rho}_1\right]}+\sum_{j=n_1+1}^N{q_j\tr\left[\mathcal{C}(\tau_j),\overline{\rho}_2\right]}\,,
\end{equation}
where the positive numbers $q_j$ set the priorities of each transformation, and $\sum_{j=1}^N q_j=1$. Due to the linearity of the trace and of quantum operations, Eq. (\ref{eq:figmeritger}) can be rewritten exactly as Eq. (\ref{eq:J}) with $\pi_1 = \sum_{j=1}^{n_1}{q_j}$, $\pi_2 = \sum_{j=n_1+1}^{N}{q_j}$,
\begin{align}
\rho_1=\frac{1}{\pi_1}\sum_{j=1}^{n_1}{q_j\tau_j} &\quad \mbox{and} \quad \rho_2=\frac{1}{\pi_2}\sum_{j=n_1+1}^{N}{q_j\tau_j}\,.
\end{align}
Note that $\pi_1,\pi_2\geq 0 $, $\pi_1+\pi_2=1$ and $\rho_1$, $\rho_2$ are valid density matrices. So, for $i=1,2$ and $j=1,\ldots,N$, the problem of optimally approximating the $N$-state transformation $\mathcal{S}_i \to \overline{\rho}_i$ with priority $q_j$ is equivalent to optimally approximating the $2$-state transformation $\rho_i\to\overline{\rho}_i$ with priority $\pi_i$.\par

\chapter{Multi-Step Tracking}\label{chap:multistep}
\section{Introduction}

So far in this thesis, we have looked at the problem of transforming quantum states with a single controlled intervention. In the last two chapters, we have seen how this can be useful to stabilize the unknown state of a quantum system undergoing some pre-characterized noisy dynamics. Our strategy consisted of waiting for the system to experience all the noise, and only after that to apply a quantum operation that optimally transformed the noisy states into the original states (or, in the case of tracking, to other desired states). In this chapter we ask whether we can do any better if instead of waiting for the full noisy evolution to take place, we actively interact with the system multiple times while the noise is still in action. This is illustrated in Fig.~\ref{fig:multistep_intro2}.
\begin{figure}[h]
\centering
\includegraphics[width=15cm]{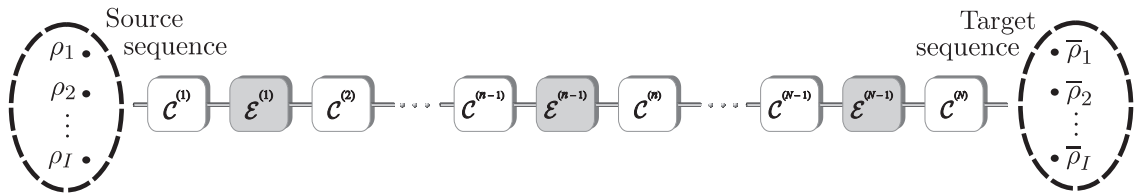}
\caption[Schematic of the multi-step tracking problem.]{Schematic of the multi-step tracking problem. Each block $\mathcal{C}^{(n)}$ represents a controlled intervention that attempts to optimally track between sequences of density matrices as the system travels through the noisy channel
$\mathcal{E}^{(N-1)}\circ\ldots\circ\mathcal{E}^{(2)}\circ\mathcal{E}^{(1)}$}\label{fig:multistep_intro2}
\end{figure}

In classical systems, a multi-step setting as the one of Fig.~\ref{fig:multistep_intro2} is the basic setup for discrete time feedback control, and is known to be a very effective scheme. Each intermediate step estimates the current state of the system, which is then suitably modified on the basis of this information. Moreover, in situations where the forthcoming noise is known, we can use the information from the state estimation step and the information about the future dynamics in order to make the system more resilient against the noise to come.

The same ideas apply for quantum systems, in this case, however, one has to consider that quantum measurements typically introduce noise on the state being measured. Because of this, it is generally not clear whether the application of multiple measurements is detrimental or beneficial for the control of quantum systems.

In this chapter, we attempt to approach this question by looking for a sequence of controllers $\mathcal{C}^{(1)}$, $\ldots$, $\mathcal{C}^{(N)}$ that provides an improved performance than that obtained in the case of an optimal single controller at the end. Clearly, the latter is recovered if the $N-1$ first controllers are equal to the identity map and the last controller is optimally chosen. Our problem is to decide whether a variation of this sequence exists (and how it can be constructed) such that the tracking task can be accomplished with higher average fidelity.

The results of this chapter are preliminary and are presented in Sec.~\ref{sec:multistep}, which is divided as follows: We start introducing the basic principle of dynamic programming and outlining how it can be used (along with an heuristic argument) to construct a multi-step tracking scheme assuming that an analytical optimal solution for the single-step tracking problem is known. In subsections~\ref{sec:backwards} and~\ref{sec:forwards} these ideas are put in practice to approach the general case of tracking between arbitrary sequences of density matrices in $N$ steps. In Sec.~\ref{sec:multi2qubits}, we restrict to multi-step tracking for \emph{pairs of qubit states} in an arbitrary number of steps and provide some numerical results for the case $N=2$.

\section{Multi-step Tracking via Optimal Single-step Tracking}\label{sec:multistep}

In this section we describe how to generate a ``good'' sequence of operations for the multi-step problem. Due to some simplifying assumptions to be made along the way, these sequences cannot be guaranteed to be optimal. Nevertheless, they are usually ``good'' in the sense that, in many cases, they produce higher fidelities than those obtained with a single optimal controller at the end.

We start with a broad description of the method, which essentially consists of recursive applications of the following fundamental idea from dynamic programming \cite{57Bellman,74jacobs}:\par
\begin{principle}
\noindent{\bf Principle of Optimality.} In an optimal sequence of controllers, whatever the initial state and the optimal first control may be, the remaining controls constitute an optimal control sequence with regard to the state resulting from the first control.
\end{principle}

We shall divide our analysis in two parts. First we look at the sequence of controllers from the end to the beginning. The application of the principle of optimality in this backwards direction --- surmounted with an heuristic argument to be explained in the next section --- will reveal each controller of the multi-step sequence as a single-step operation from certain source states $\rho_i^{(n)}$ to certain target states $\overline{\varrho}_i^{(n)}$, as illustrated in Fig.~\ref{fig:multistepdiag}.
\begin{figure}[h]
\centering
\includegraphics[width=14.5cm]{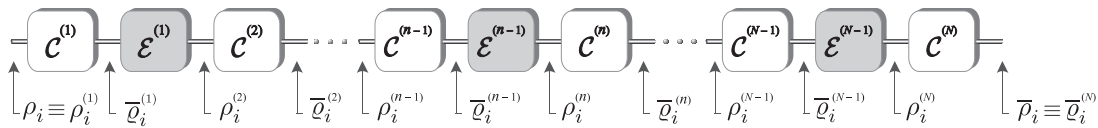}
\caption[Multi-step tracking as a sequence of single-step tracking operations.]{Multi-step tracking as a sequence of single-step tracking operations. Each controller $\mathcal{C}^{(n)}$ is seen as an optimal tracker from states $\rho_i^{(n)}$ to $\overline{\varrho}_i^{(n)}$.}\label{fig:multistepdiag}
\end{figure}
Furthermore, assuming that the analytical solution to this problem is known, we can obtain the explicit form of the function $\mathpzc{b}$, relating the target states of the $n$-th correction with the source and target states of the next correction, i.e.,
  \begin{align}\label{eq:littleb}
\overline{\varrho}_i^{(n)}=\mathpzc{b}\left(\rho_i^{(n+1)},\overline{\varrho}_i^{(n+1)}\right)\,.
  \end{align}
$N-n$ iterations of the above relation implicitly define a function $\mathpzc{B}$ such that
 \begin{align}
\overline{\varrho}_i^{(n)}&=\mathpzc{b}\left(\rho_i^{(n+1)},\mathpzc{b}\left(\rho_i^{(n+2)},\overline{\varrho}_i^{(n+2)}\right)\right)\nonumber\\
&\vdots\nonumber\\
&=\mathpzc{B}\left(\rho_i^{(n+1)},\rho_i^{(n+2)},\ldots,\rho_i^{(N)},\brho_i\right)\,,\label{eq:backward}
 \end{align}
which shows that the target states $\overline{\varrho}_i^{(n)}$ are expressed as a function of the (so far undetermined) source states at each step.

The states $\rho_i^{(n)}$ are determined in the second part of our approach, in which the control sequence is studied in the standard forward direction. Here, the analytical solution of the single-step tracking problem will yield a function $\mathpzc{f}$ such that
 \begin{equation}\label{eq:littlef}
\rho_i^{(n)}=\mathpzc{f}\left(\rho_i^{(n-1)},\overline{\varrho}_i^{(n-1)}\right)
 \end{equation}
Just as before, $n-1$ iterations of this relation induce the  function $\mathpzc{F}$, relating $\rho_i^{(n)}$ with the given source density matrices $\rho_i^{(1)}\equiv \rho_i$ and the targets $\overline{\varrho}_i^{(1,\ldots,n-1)}$
\begin{equation}
\rho_i^{(n)}=\mathpzc{F}\left(\rho_i,\overline{\varrho}_i^{(1)},\ldots,\overline{\varrho}_i^{(n-1)}\right)\label{eq:forward}
\end{equation}

Together, Eqs.~\eqref{eq:backward} and~\eqref{eq:forward} specify every $\rho_i^{(n)}$ and $\overline{\varrho}_i^{(n)}$, and hence a sequence of $N$ single step controllers. In the following, the procedure described above is explicitly applied and the heuristic argument giving rise to Eq.~\eqref{eq:littleb} is explained in detail.

\subsection{Backward direction}\label{sec:backwards}
Consider the multi-step sequence below
\[
\Qcircuit @C=0.7em @R=1.0em
{
& \gate{\scriptstyle \mathcal{C}^{(1)}}  & \gate{\scriptstyle \mathcal{E}^{(1)}} &     \qw & \push{\scriptstyle \!\!\!\cdots\rule{.2em}{0em}}&\gate{\scriptstyle \mathcal{E}^{(N-2)}}   &\gate{\scriptstyle \mathcal{C}^{(N-1)}}&\gate{\scriptstyle \mathcal{E}^{(N-1)}}&\gate{\scriptstyle \mathcal{C}^{(N)}}&\qw\gategroup{1}{2}{1}{8}{.7em}{--}
}
\]
Whatever the sequence of noises $\mathcal{E}^{(1,\ldots,N-1)}$ and optimal controllers $\mathcal{C}^{(1,\ldots,N-1)}$ are, the initial density matrices $\rho_i$ are obviously transformed into other density matrices after experiencing the action of the operations in the dashed box. Following the convention set up in Fig.~\ref{fig:multistepdiag}, we shall denote by $\rho_i^{(N)}$ the output of this sequence of operations. It then follows from the principle of optimality that $\mathcal{C}^{(N)}$ is the optimal single-step operation for the transformation $\rho_i^{(N)}\mapsto\brho_i$.\par

To see that a similar conclusion can be drawn for $\mathcal{C}^{(N-1)}$, consider the following diagram:
\[
\Qcircuit @C=0.7em @R=1.0em
{
& \gate{\scriptstyle \mathcal{C}^{(1)}}  & \gate{\scriptstyle \mathcal{E}^{(1)}} &     \qw & \push{\scriptstyle \!\!\!\cdots\rule{.2em}{0em}}&\gate{\scriptstyle \mathcal{E}^{(N-2)}} &\ustick{\scriptstyle \qquad\rho_i^{(N-1)}}\qw  &\gate{\scriptstyle \mathcal{C}^{(N-1)}}&\gate{\scriptstyle \mathcal{E}^{(N-1)}}& \ustick{\scriptstyle\quad \rho_i^{(N)}}\qw&\gate{\scriptstyle \mathcal{C}^{(N)}}&\qw\gategroup{1}{2}{1}{6}{.7em}{--}
}
\]
As before, let $\rho_i^{(N-1)}$ represent the output of the unknown sequence of operations within the dashed box. Here, the principle of optimality establishes that the combined operation $\mathcal{C}^{(N)}\circ\mathcal{E}^{(N-1)}\circ\mathcal{C}^{(N-1)}$
has to be optimal for the transformation $\rho_i^{(N-1)}\mapsto\brho_i$. Since $\mathcal{C}^{(N)}$ has already been determined [in terms of $\rho_i^{(N)}$], and $\mathcal{E}^{(N-1)}$ is not a controllable operation, we are only left with the task of determining the optimal $\mathcal{C}^{(N-1)}$. This is done by solving the optimization problem
\begin{equation}
\max_{\mathcal{C}^{(N-1)}\in\mathcal{Q}_{\rm d}^{\rm set}}\sum_{i=1}^I\pi_i{\rm Tr}\left[\left(\mathcal{C}^{(N)}\circ\mathcal{E}^{(N-1)}\circ\mathcal{C}^{(N-1)}\right)\left(\rho_i^{(N-1)}\right)\brho_i\right]\,,
\end{equation}
which, with some simple algebra can be re-expressed in terms of the Choi matrices of $\mathcal{C}^{(n)}$ and $\mathcal{E}^{(n)}$ as follows:
\begin{equation}\label{eq:probl_track_virt}
\max_{\mathfrak{C}^{(N-1)}}{\rm Tr}\left[\mathfrak{C}^{(N-1)}\sum_{i=1}^I{\rho_i^{(N-1)}}^{\sf T}\otimes \pi_i\overline{\varrho}_i^{(N-1)}\right]\quad\mbox{subject to}\quad \mathfrak{C}^{(N-1)}\geq 0 \quad\mbox{and}\quad \tr_2\mathfrak{C}^{(N-1)}=\openone_{\rm d}\,,
\end{equation}
where
\begin{equation}\label{eq:partra23bar}
\overline{\varrho}_i^{(N-1)}\defeq{\rm Tr}_{2,3}\left[\left(\openone_{\rm d}\otimes \mathfrak{C}^{(N)}\right)\left({\mathfrak{E}^{(N-1)}}^{\sf T}\otimes \brho_i\right)\right]\,,
\end{equation}
and $\tr_{2,3}$ denotes the partial trace operation over the second and third subsystems of dimension {\rm d}.\par

Up to here we have been closely following the dynamic programming recipe to optimally solve the multi-step tracking problem. To continue along these lines, though, we would now have to face the difficulty involved in solving the optimization problem~\eqref{eq:probl_track_virt}. Although this problem resembles the SDP maximizing the average Hilbert-Schmidt inner product between sequences $[\rho_i^{(N-1)}]_{i=1}^I$ and $[\overline{\varrho}_i^{(N-1)}]_{i=1}^I$ [compare with Eq.~\eqref{eq:opt_hsip}], this is just a superficial similarity. In fact, the matrices $\overline{\varrho}_i^{(N-1)}$ also depend on $\mathfrak{C}^{(N-1)}$ because they explicitly depend on $\mathfrak{C}^{(N)}$, which, in turn, depends on $\rho_i^{(N)}=(\mathcal{E}^{(N-1)}\circ\mathcal{C}^{(N-1)})(\rho_i^{(N-1)})$.\par

Our work around is to give up global optimality by relying on a simplifying assumption. Assuming that we know how to optimally solve the single-step tracking problem (which is a SDP), our goal is to exploit its solution to build a ``locally optimal'' multi-step scheme. For that purpose, we simply ignore the dependence of $\overline{\varrho}_i^{(N-1)}$ on $\mathfrak{C}^{(N-1)}$, regarding it as a fixed (but arbitrary) state. As a result, we choose $\mathfrak{C}^{(N-1)}$ to be the optimal single-step tracking operation\footnote{There is, however, a subtlety: Although the tensor product structure inside the partial trace of Eq.~\eqref{eq:partra23bar} guarantees that $\overline{\varrho}_i^{(N-1)}\geq 0$, in general $\overline{\varrho}_i^{(N-1)}$ is not normalized. As such, one should make sure that the optimal analytic solution to be used for $\mathfrak{C}^{(N-1)}$ is still optimal if the target density matrices are unnormalized. Recall that this was seen to be the case for the optimal single-step tracker constructed in Ch.~\ref{chap:tracking}} from $[\rho_i^{(N-1)}]_{i=1}^I$ to $[\overline{\varrho}_i^{(N-1)}]_{i=1}^I$.

It might be helpful to consider what has been learnt hitherto. From the first step, we have seen how $\mathcal{C}^{(N)}$ can be constructed as a function of $\rho_i^{(N)}$. From the second step, we have seen how $\mathcal{C}^{(N-1)}$ can be constructed as a function of $\rho_i^{(N-1)}$ and $\rho_i^{(N)}$ [via $\mathfrak{C}^{(N)}$]. We can now follow with this backwards approach, at each step characterizing each $\mathcal{C}^{(n)}$ as an optimal single-step tracking operation from arbitrary density matrices $\rho_i^{(n)}$ to targets $\overline{\varrho}_i^{(n)}$ satisfying
\begin{equation}\label{eq:fake_dm}
\overline{\varrho}_i^{(n)}={\rm Tr}_{2,3}\left[\left(\openone\otimes \mathfrak{C}^{(n+1)}\right)\left({\mathfrak{E}^{(n)}}^{\sf T}\otimes \overline{\varrho}_i^{(n+1)}\right)\right]\,,
\end{equation}
for $n=1,\ldots,N-1$ and $\overline{\varrho}_i^{(N)}\equiv\overline{\rho}_i$. Given that $\mathfrak{C}^{(n+1)}$ is a function of $\rho_i^{(n+1)}$ and $\overline{\varrho}_i^{(n+1)}$, the above gives a more explicit form of relation~\eqref{eq:littleb}.\par

As noted in Eq.~\eqref{eq:backward}, this recursion relation provides a way to obtain the targets of each controller as a function of the arbitrary sources introduced to the right of the controller at hand. Since each controller is fully specified by the knowledge of its sources and targets, we just need to self-consistently determine the (so far) arbitrary sources $\rho_i^{(n)}$ in order to completely characterize a sequence of controllers. This is the content of the following section.

\subsection{Forward direction}\label{sec:forwards}
It is much easier to construct relation~\eqref{eq:littlef}. In fact, this is simply the forward evolution with CPTP maps $\mathcal{C}^{(n)}$ implementing optimal tracking $\rho_i^{(n)}\mapsto\overline{\varrho}_i^{(n)}$ at each step,
\begin{equation}
\rho_i^{(n+1)}=\left(\mathcal{E}^{(n)}\circ\mathcal{C}^{(n)}\right)\left(\rho_i^{(n)}\right)\,.
\end{equation}
Re-expressed in terms of the Choi matrices of each map, the above reads
\begin{equation}\label{eq:sysforward}
\rho_i^{(n+1)}={\rm Tr}_{1,2}\left[\left({\mathfrak{C}^{(n)}}^{\sf T}\otimes\openone\right)\left(\rho_i^{(n)}\otimes \mathfrak{E}^{(n)}\right)\right]\,,
\end{equation}
which is precisely of the form of Eq.~\eqref{eq:littlef}, since $\mathfrak{C}^{(n)}$ is a function of $\rho_i^{(n)}$ and $\overline{\varrho}_i^{(n)}$.

The set of Eqs.~\eqref{eq:sysforward} can now be solved simultaneously with the set of Eqs.~\eqref{eq:fake_dm} to give each matrix $\rho_i^{(n)}$ and $\overline{\varrho}_i^{(n)}$. From the solution of this system, a sequence of controllers $\mathcal{C}^{(n)}$ can then be obtained.

\subsection{Multi-step tracking for a pair of qubit states}\label{sec:multi2qubits}

Let us now look at the more concrete example of tracking for a pair of qubits in $N$ steps, for which we shall employ the analytical solution for the single-step tracking problem obtained in  Ch.~\ref{chap:tracking}. In the following subsections we explicitly write the set of equations~\eqref{eq:fake_dm} and~\eqref{eq:sysforward} in terms of Bloch vectors and discuss some numerical results for $N=2$.

\paragraph{Equations for backwards direction.}
Eq.~\eqref{eq:fake_dm} is evaluated with the use of the following parameterizations:
\begin{align}
2\pi_i\overline{\varrho}_i^{(n+1)}&=c_i^{(n+1)}\openone_2+\overline{\bm{R}}_i^{(n+1)}\cdot\bm{\sigma}\label{eq:stokeswithpi}\,,\\
2 \mathfrak{C}^{(n+1)}&= \openone_4 + s_1^{(n+1)} \openone_2\otimes\left(\bm{u}_1^{(n+1)}\cdot\bm{\sigma}\right)+\sum_{j=1}^3 \mu_j^{(n+1)} \left(\bm{v}_j^{(n+1)}\cdot\bm{\sigma}^{\sf T}\right)\otimes\left(\bm{u}_j^{(n+1)}\cdot\bm{\sigma}\right)\label{eq:optSSS}\,,\\
2 \mathfrak{E}^{(n)}&= \openone_4 + \sum_{j=1}^{3}{t_j^{(n)} \openone_2\otimes\left(\bm{g}_j^{(n)}\cdot\bm{\sigma}\right)}+\sum_{j=1}^3 \lambda_j^{(n)} \left(\bm{h}_j^{(n)}\cdot\bm{\sigma}^{\sf T}\right)\otimes\left(\bm{g}_j^{(n)}\cdot\bm{\sigma}\right)\label{eq:choinoiseblah}\,,
\end{align}
where, in Eq.~\eqref{eq:stokeswithpi}, we kept with the convention from Ch.~\ref{chap:tracking} (cf. page~\pageref{conventionRb}) of writing $\overline{\bm{R}}_i^{(n+1)}$ to the ``Bloch vector'' of $\pi_i\overline{\varrho}_i^{(n+1)}$. Furthermore, we introduced the constant $c_i^{(n+1)}$ to account for the fact that $\overline{\varrho}_i^{(n+1)}$ is not normalized, as explained in Sec.~\ref{sec:dualitytrick} [however, we have $c_i^{(N)}=1$ to comply with the fact that $\overline{\varrho}_i^{(N)}\equiv\brho_i$]. Eq.~\eqref{eq:optSSS} gives the general form of the optimal single-step tracking solution, and was derived in Appendix~\ref{app:optSSmap}. The versors $\bm{v}_j^{(n+1)}$, $\bm{u}_j^{(n+1)}$ and the scalars $\mu_j^{(n+1)}$ and $s_1^{(n+1)}$ are functions of the Bloch vectors $\bm{R}_i^{(n+1)}$ and $\overline{\bm{R}}_i^{(n+1)}$  as described in Appendix~\ref{app:optSSmap}. Finally, Eq.~\eqref{eq:choinoiseblah} gives the general Choi matrix of a CPTP map (cf. Sec.~\ref{sec:CPTPonM2}, page~\pageref{sec:Choimatrep}). Since the noise is assumed to be known, the parameters $\bm{h}_j^{(n)}$, $\bm{g}_j^{(n)}$, $\lambda_j^{(n)}$ and $t_j^{(n)}$ are considered given.

Substituting the above formulas in Eq. (\ref{eq:fake_dm}), after some algebra we find that
\begin{equation}\label{eq:recursionRb}
\overline{\bm{R}}_i^{(n)}=\sum_{k=1}^3\left(\bm{Q}_i^{(n+1)} \cdot\bm{g}_k^{(n)}\right)\lambda_k^{(n)}\bm{h}_k^{(n)}\,,
\end{equation}
where $\bm{Q}_i^{(n+1)}$ is a real vector carrying all the undetermined parameters:
\begin{equation}\label{eq:Q}
\bm{Q}_i^{(n+1)}\defeq\sum_{j=1}^3\mu_j^{(n+1)}\left(\bm{u}_j^{(n+1)}\cdot\overline{\bm{R}}_i^{(n+1)}\right)\bm{v}_j^{(n+1)}\,.
\end{equation}
A more explicit form can be obtained by evaluating the dot product with the aid of Eq.~\eqref{eq:us}:
\begin{multline}
\bm{Q}_i=\frac{s_1}{\Gamma R_-}\left[\frac{\mu_1}{R_\times}\left(\overline{\bm{R}}_+\cdot\overline{\bm{R}}_i\right) \left(\bm{R}_\times\times\bm{R}_-\right)-(-1)^i\mu_3\overline{R}_\times\bm{R}_-\right] +\\
\frac{\mu_1}{\Gamma R_-^2}
\left[\mu_1\left(\overline{\bm{R}}_+\cdot\overline{\bm{R}}_i\right)+
      (-1)^i\mu_3\frac{\overline{R}_\times }{R_\times} \left(\bm{R}_-\cdot\bm{R}_{\widetilde{i}}\right)\right] \left(\bm{R}_\times\times\bm{R}_-\right)+\\
\frac{\mu_3}{\Gamma R_-^2}
\left[-(-1)^i\mu_1 R_\times\overline{R}_\times +\mu_3\left(\left(\bm{R}_1\cdot\bm{R}_-\right)\overline{\bm{R}}_1+ \left(\bm{R}_2\cdot\bm{R}_-\right)\overline{\bm{R}}_2\right)\cdot\overline{\bm{R}}_i\right]\bm{R}_-
\end{multline}
where, for brevity, we have omitted the index $(n+1)$ from every symbol.

\paragraph{Equations for forward direction.} Evaluation of Eq.~\eqref{eq:sysforward} is accomplished with Eqs.~\eqref{eq:choinoiseblah} and~\eqref{eq:optSSS} for $\mathfrak{E}^{(n)}$ and $\mathfrak{C}^{(n)}$, respectively, and
\begin{equation}
2\rho_i^{(n)}=\openone_{\rm d}+\bm{R}_i^{(n)}\cdot\bm{\sigma}\,,
\end{equation}
for the source density matrices, where $\bm{R}_i^{(n)}$ gives the Bloch vector of the density matrices $\rho_i^{(n)}$. After some manipulation, we find
\begin{equation}\label{eq:recursionR}
\bm{R}_i^{(n+1)}=\sum_{k=1}^3 t_k^{(n)}\bm{g}_k^{(n)}+\left(\bm{P}_i^{(n)}\cdot\bm{h}_k^{(n)}\right)\lambda_k^{(n)}\bm{g}_k^{(n)}\,,
\end{equation}
where now, $\bm{P}_i^{(n)}$ is the real vector carrying in the undetermined parameters
\begin{equation}\label{eq:P}
\bm{P}_i^{(n)}\defeq s_1^{(n)}\bm{u}_1^{(n)}+\sum_{j=1}^3 \mu_j^{(n)}\left(\bm{v}_j^{(n)}\cdot\bm{R}_i^{(n)}\right)\bm{u}_j^{(n)}\,.
\end{equation}
Computing the dot products with the aid of Eqs. (\ref{eq:vs}), gives
\begin{multline}
\bm{P}_i=\frac{s_1}{\Gamma}\left[\left(s_1+2\mu_1\frac{R_\times}{R_-}\right)\overline{\bm{R}}_+ -(-1)^i\mu_3\frac{R_-}{\overline{R}_\times}\left(\overline{\bm{R}}_{\widetilde{i}}\times\overline{\bm{R}}_\times\right)\right]- \frac{(-1)^i}{\Gamma}\mu_1\mu_3\frac{R_\times}{\overline{R}_\times}\left(\overline{\bm{R}}_{\widetilde{i}}\times\overline{\bm{R}}_\times\right)+\\
\frac{1}{\Gamma R_-^2} \left[\mu_1^2R_\times^2\overline{\bm{R}}_+ +\mu_3^2\left(\bm{R}_i\cdot\bm{R}_-\right) \left(\left(\bm{R}_1\cdot\bm{R}_-\right)\overline{\bm{R}}_1 +\left(\bm{R}_2\cdot\bm{R}_-\right)\overline{\bm{R}}_2\right)\right]
\end{multline}
where, once again, we omitted the indices $(n)$ for brevity.

\paragraph{System of Equations.} In summary, combining Eqs.~\eqref{eq:recursionRb} and~\eqref{eq:recursionR} we obtain the following non-linear system of $2I(N-1)$ vector equations with $2I(N-1)$ vector variables
\begin{equation}\label{eq:systemmultistep}
\begin{array}{ll}
\begin{array}{c}\mbox{From Eq.~\eqref{eq:recursionRb}}\\(i=1,\ldots,I)
\end{array}&\left\{\begin{array}{ll}
\overline{\bm{R}}_i^{(1)}&=\sum_{k=1}^3\left[\bm{Q}_i^{(2)}(\bm{R}_i^{(2)},\overline{\bm{R}}_i^{(2)}) \cdot\bm{g}_k^{(1)}\right]\lambda_k^{(1)}\bm{h}_k^{(1)}\\
\overline{\bm{R}}_i^{(2)}&=\sum_{k=1}^3\left[\bm{Q}_i^{(3)}(\bm{R}_i^{(3)},\overline{\bm{R}}_i^{(3)}) \cdot\bm{g}_k^{(2)}\right]\lambda_k^{(2)}\bm{h}_k^{(2)}\\
&\vdots\\
\overline{\bm{R}}_i^{(N-1)}&=\sum_{k=1}^3\left[\bm{Q}_i^{(N)}(\bm{R}_i^{(N)},\overline{\bm{R}}_i^{(N)}) \cdot\bm{g}_k^{(N-1)}\right]\lambda_k^{(N-1)}\bm{h}_k^{(N-1)}
\end{array}\right.\\
\begin{array}{c}\mbox{From Eq.~\eqref{eq:recursionR}}\\(i=1,\ldots,I) \end{array}&\left\{\begin{array}{ll}
\bm{R}_i^{(2)}&=\sum_{k=1}^3 t_k^{(1)}\bm{g}_k^{(1)}+\left[\bm{P}_i^{(1)}(\bm{R}_i^{(1)},\overline{\bm{R}}_i^{(1)})\cdot\bm{h}_k^{(1)}\right]\lambda_k^{(1)}\bm{g}_k^{(1)}\\
\bm{R}_i^{(3)}&=\sum_{k=1}^3 t_k^{(2)}\bm{g}_k^{(2)}+\left[\bm{P}_i^{(2)}(\bm{R}_i^{(2)},\overline{\bm{R}}_i^{(2)})\cdot\bm{h}_k^{(2)}\right]\lambda_k^{(2)}\bm{g}_k^{(2)}\\
&\vdots\\
\bm{R}_i^{(N)}&=\sum_{k=1}^3 t_k^{(N-1)}\bm{g}_k^{(N-1)}+\\&\left[\bm{P}_i^{(N-1)}(\bm{R}_i^{(N-1)},\overline{\bm{R}}_i^{(N-1)})\cdot\bm{h}_k^{(N-1)}\right]\lambda_k^{(N-1)}\bm{g}_k^{(N-1)}
\end{array}\right.
\end{array}
\end{equation}
where $\bm{P}_i^{(n)}$ and $\bm{Q}_i^{(n+1)}$ are respectively defined in Eqs.~\eqref{eq:P} and~\eqref{eq:Q} for $n=1,\ldots N-1$.

Due to the non-linearity posed by the complicated dependence of the vectors $\bm{P}_i$ and $\bm{Q}_i$ on the Bloch vectors, there exist many different solutions for the system~\eqref{eq:systemmultistep}. This is clearly noticed when we use Matlab (function \lstinline|fsolve|) to search for a solution in particular cases. In order to run the numerical solver, it is necessary to provide a initial guess of what the solution is, and we found that by varying the choice of this starting point the algorithm converges to different solutions (or does not converge at all). In general, different solutions for the same problem lead to different values of fidelity, and sometimes we end up converging to solutions that give fidelities smaller than the optimal single-step tracking fidelity! Nevertheless, by varying the initial conditions, we have always been able to find solutions which are at least no worse than optimally correcting only at the end.

\subsubsection{Numerical solution for some two-step cases}

In this section we look at the numerical solution of the system~\eqref{eq:systemmultistep} for $N=2$ and a control task of stabilizing (with uniform priorities $\pi_1=\pi_2=0.5$) a pair of pure qubit states lying on the XZ plane of the Bloch sphere and straddling its equator by an angle $\pm \pi/4$ [cf. Eqs.~\eqref{eq:State1} and~\eqref{eq:State2}]. The noise in between the two corrections is taken to be a diagonal non-unital extreme point of the set of CPTP maps. More specifically, it compresses the Bloch sphere by $\lambda_1$, $\lambda_2$ and $\lambda_3$ along the $x$-, $y$- and $z$- directions, respectively  ($1$ representing no compression, and $0$ representing full compression), and translates it by $t_3$ along the $z$-axis. The extremal character is imposed by choosing $\lambda_3=\lambda_1\lambda_2$ and $t_3=\sqrt{(1-\lambda_1^2)(1-\lambda_2^2)}$ [cf. Eqs.~\eqref{eq:ruskai_extreme_points}]. With this choice, the noise can be characterized with only two parameters. Each one of the three plots in Fig.~\ref{fig:systemsolution} presents the solution of the system with respect to the choices of parameters: $\lambda_1\times t_3$, $\lambda_2\times t_3$ and $\lambda_3\times t_3$. For the presented results, we provided an initial condition corresponding to a sequence of ``do-nothings operations''.

\begin{figure*}[h!]
\centering
\subfigure[\hspace{1.5mm} ] {
    \includegraphics[width=7.5cm]{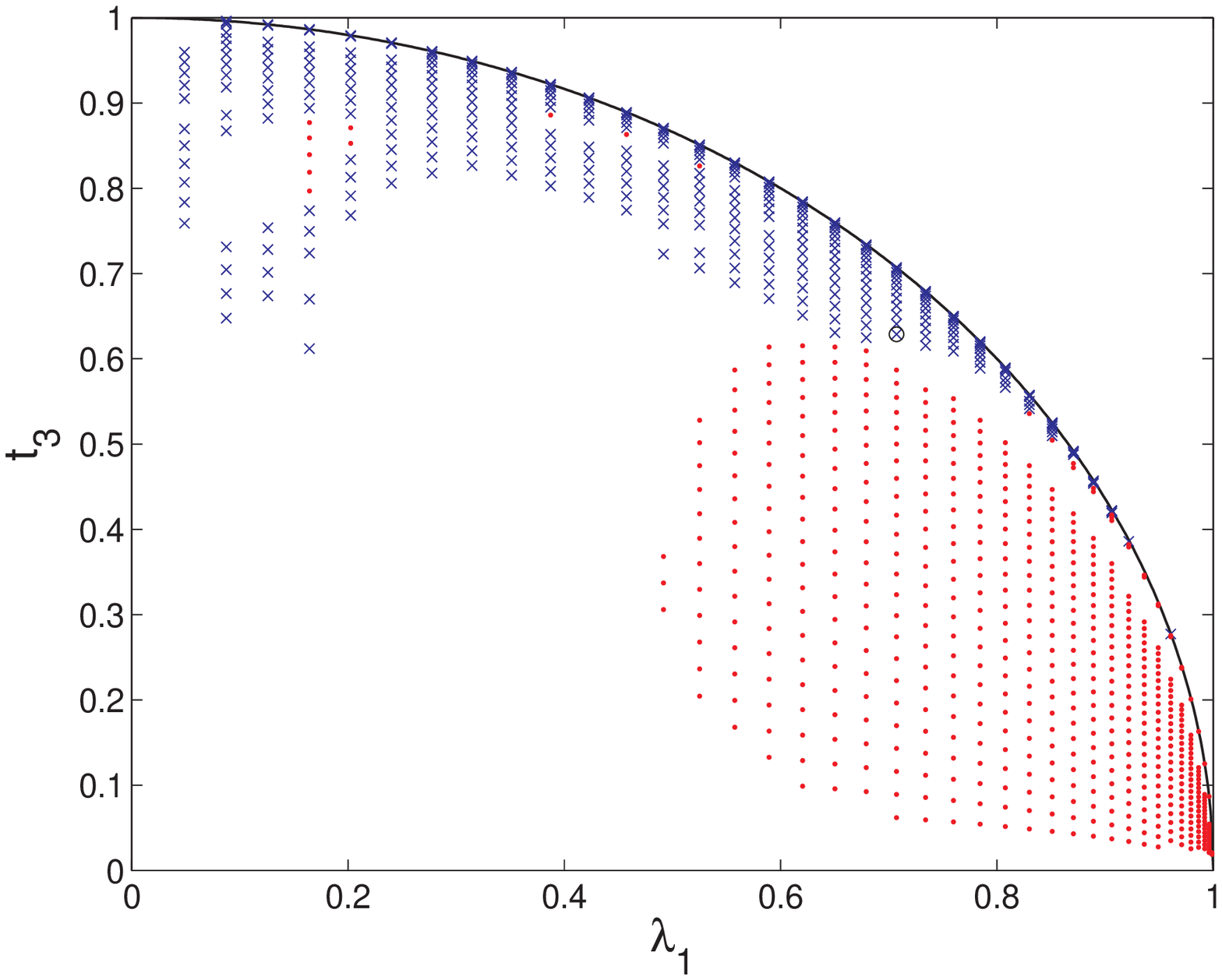}
}
\subfigure[\hspace{1.5mm}  ] {
    \includegraphics[width=7.5cm]{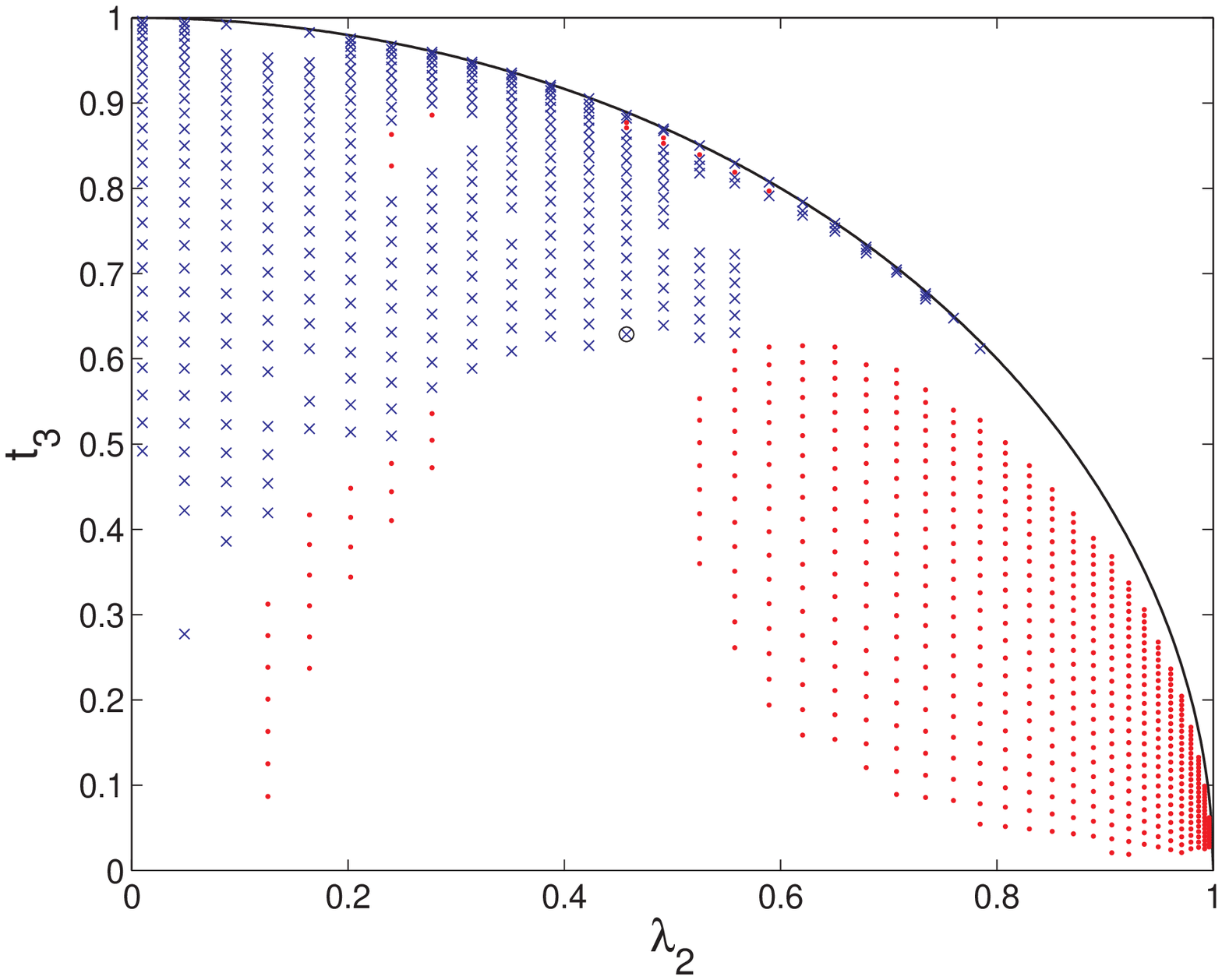}
}
\subfigure[\hspace{1.5mm}  ] {
    \includegraphics[width=7.5cm]{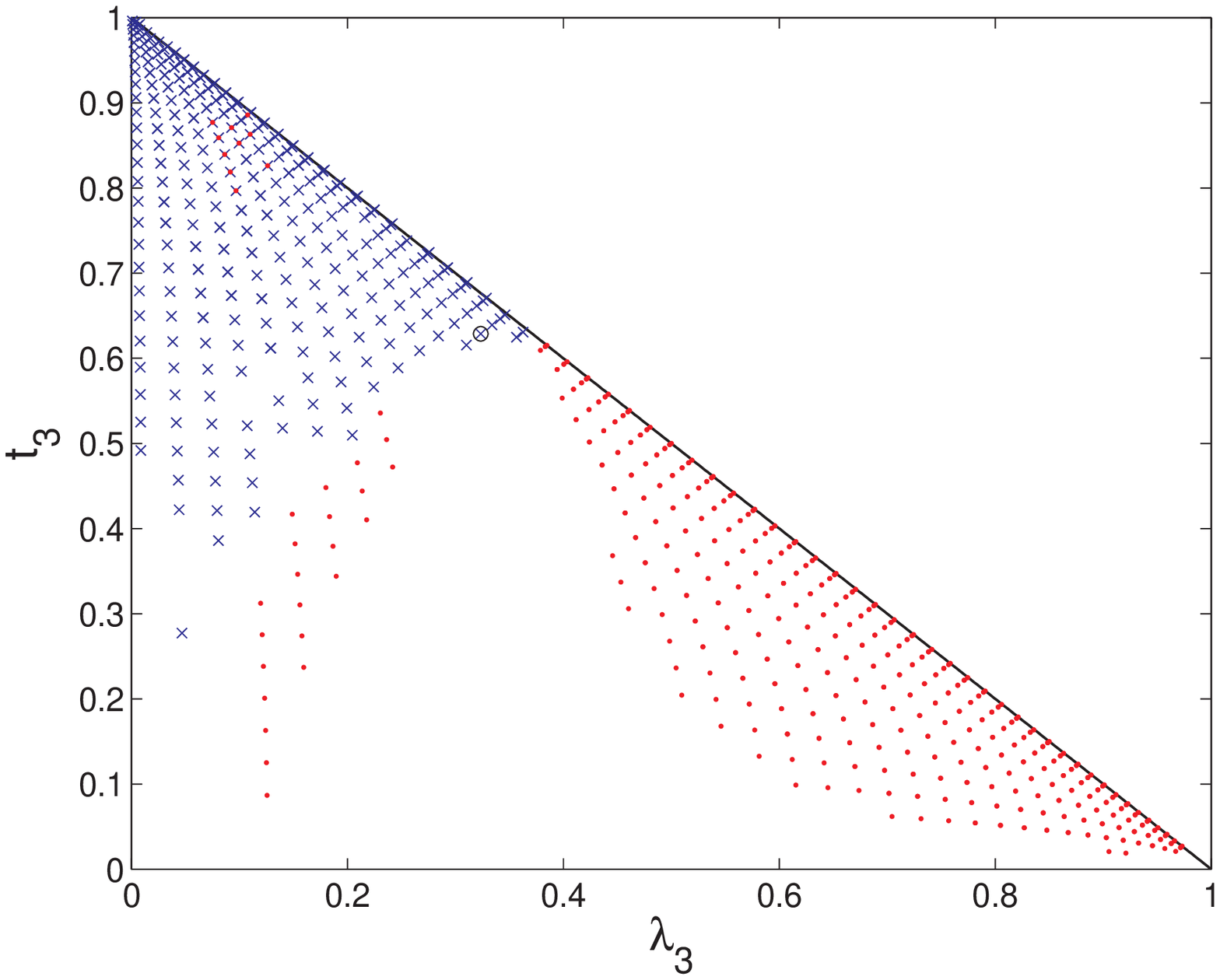}
}
\begin{picture}(0,0)
\put(-185,337){\footnotesize{No physical}}
\put(-155,327){\footnotesize{noise}}
\put(33,337){\footnotesize{No physical}}
\put(63,327){\footnotesize{noise}}
\put(-90,107){\footnotesize{No physical}}
\put(-75,87){\footnotesize{noise}}
\end{picture}

\caption[Two-steps stabilization for a pair of qubits states under extremal non-unital noise.]{(Color online) Two-steps stabilization for a pair of qubits states under extremal noise. The crosses (in blue) indicate the values of the noise parameters for which the solution of system~\eqref{eq:systemmultistep} converged to give two quantum operations producing a higher fidelity than in the case of a single optimal correction at the end. The dots (in red) illustrate a weakness of the method: for the corresponding noises, the resulting corrections turned out to give a smaller fidelity than that of a single optimal correction at the end. For the remaining empty regions (within the domain of physical noises), the solution of the system converged to give the same fidelity as that obtained by correcting only at the end. The circled crosses in each plot indicate the values of the noise parameter for which the highest advantage of approximately $10\%$ was obtained with respect to the optimal performance of a single correction at the end.}
\label{fig:systemsolution}
\end{figure*}

Each point in the plots represents a numerical solution of system~\eqref{eq:systemmultistep} for the values of noise parameters indicated in the axis. The crosses (in blue) indicate those cases for which the possibility of using a correction before the action of the noise was found to be advantageous. Although not visible from the plots, the advantage of the $2$-step schemes over the optimal $1$-step schemes was found to vary up to a maximum of approximately 10\%, which occurred at the point marked in each plot with a circle ($\lambda_{1}=0.70$, $\lambda_2=0.46$, $\lambda_3=0.32$ and $t_3=0.63$). It is a general observation (for which we do not have a satisfactory explanation) that we only produce 2-step strategies that over-perform the optimal 1-step strategy for intermediate to large values of the translation $t_3$.

The dots (in red) represent those noises for which the numerical algorithm converged to operations giving fidelities smaller than those given by the optimal single-step scheme. Of course, this simply reflects the mentioned weakness of the method of converging to suboptimal strategies. In this case, this could be remedied by repeatedly running the numerics with different initial conditions until a fidelity higher than (or equal to) that of the optimal single-step scheme was achieved. Finally, the empty regions correspond to the noises where the convergence occurred to operations that recover the optimal single-step case, i.e., to do nothing at the first step and implement the optimal single-step tracking strategy in the second step.

By looking at the details of the operations found for each noise, we found that we generally obtain correction schemes composed by two unitaries or a unitary in the first step and a non-unitary in the second step\footnote{As a matter of fact, sometimes we obtain correction schemes formed by measurements in the two steps, but by varying the initial condition we have always been able to find an alternative strategy with a unitary in the first step that over-performed the strategy with two non-unitaries. Once again, this illustrates that our method can converge to suboptimal solutions.}. Although we cannot guarantee that the \emph{optimal} multi-step scheme would not be composed of two non-unitary corrections, the fact that this is never obtained from our approach suggests that the addition of any extra-noise before the last step is generally prejudicial to stabilize the states of a qubit.

Similar conclusions were drawn by Gregoratti and Werner~\cite{04Gregoratti2600,03Gregoratti915}, who considered error correction strategies that aimed at stabilizing the \emph{entire} Bloch sphere and, different from our case, allowed the possibility of making measurements on the environment that induced the noise. In the case of qubits, they \emph{proved} that the best strategy was to correct just at the end.

In our case, this is not quite true. Since we only require  stabilization of a pair of states, we find that sometimes it is advantageous to rotate these states along the Bloch sphere and thus make them less susceptible to the noise to come. This is precisely what happens in the case of the blue crosses in Fig.~\ref{fig:systemsolution}. However, as in Ref.~\cite{04Gregoratti2600}, it seems to be better to leave any measurements to the last step. This hypothesis was also tested and verified for many other noises and pairs of qubit states, but further investigation of this matter is necessary before a general claim can be made.

\chapter{Conclusion}\label{chap:conclusion}

Motivated by a long list of successes in the classical framework, the use of feedback control in quantum systems is a promising direction for the development of new quantum technologies. However, distinguishing quantum and classical feedback is the fundamental fact that quantum measurements intrinsically disturb the system being measured. As a consequence, a naive (or classically inspired) use of feedback in quantum systems may contradict one's intentions of gaining control, and instead lead to the addition of copious amounts of noise. In this thesis, we investigated optimal ways of measuring finite dimension quantum systems, in such a way that the balance between information gain and back-action noise can be made favorable for the application of feedback control in the quantum domain. In what follows, we summarize our main results and outline some directions for future work.

The specific control problem we focused on was that of inter-converting between sequences of density matrices. This problem subsumes many situations of practical interest where one attempts to drive the dynamics of an initially unknown state: while the input (source) sequence models the initial uncertainty, the output (target) sequence models the states one would like to obtain conditioned on the identity of the initial state. Tasks such as optimal quantum state-discrimination, state-dependent quantum cloning and quantum error correction can all be formulated in terms of this problem.

In order to guide the design of optimal feedback schemes and quantify their merits, we started in Ch.~\ref{chap:dist_measure} studying distance measures for the space of density matrices. Our contribution to this topic was the proposal of an alternative definition of fidelity between mixed states. One of the most appealing properties of our ``new fidelity'' is that it is significantly easier to compute than the Uhlmann-Jozsa fidelity. In fact, it only requires the computation of traces of some products between density matrices, whereas the traditional fidelity generally involves a more expensive matrix diagonalization procedure. In addition, our new fidelity satisfies all of Jozsa's axioms, gives rise to a metric and is jointly concave. An important byproduct of the joint concavity of our fidelity was the establishment of the joint concavity property of Uhlmann-Jozsa fidelity \emph{in the case of qubit states}, settling an open problem in the field. In this chapter, we have also reviewed some known metrics on the space of density matrices, and showed how metrics for the space of \emph{sequences} of density matrices can be built from those.

Equipped with the above provisions, in Ch.~\ref{chap:assemble} we set out assembling our control problem as a particular type of convex optimization called semidefinite programs. This is a well-studied class of optimization problems for which efficient numerical methods exist to determine the optimal solutions. After showing some algebraic tricks that allowed the minimization (maximization) of several distance (closeness) measures to be written as SDPs, we exploited these methods to obtain optimal controllers. Based on these results, we investigated the sensitivity of the controllers with the choice of distance measure. On top of identifying some cases where the same controller optimizes several measures, we developed and tested a method to estimate the ``compatibility'' between optimal controllers optimizing different distance measures. In this regard, there are opportunities for refinement and future work; for example, we should enlarge the numerical samples used to compute the compatibilities and check whether the same conclusions still apply.

Our analytical investigations started in Ch.~\ref{chap:aggie}, where we considered the problem of stabilizing the state of a single qubit prepared in one of two non-orthogonal states undergoing dephasing noise. We proposed two different types of feedback strategies to approach this problem: The first was based on the classical concept of exploiting the measurement to discriminate between the two possible initial preparations, and then follow with a suitable repreparation of the system. The second was based on the idea that non-orthogonal states are fundamentally indistinguishable. As opposed to attempting to discriminate the two states, we used a quantum measurement to learn about how the noise affected the system, and fedback to counter-act the noise. We proved that our classical and quantum strategies were optimal, in the sense that no other entanglement-breaking-trace-preserving or completely-positive-trace-preserving maps, respectively, could produce a higher fidelity for the stabilization task. We have also proposed an (arguably optimal) stochastic discriminate-and-reprepare scheme, which admitted the possibility of occasional inconclusive results in the discrimination step. Quite surprisingly, this scheme produced the same optimal performance as the deterministic quantum scheme. This observation motivates some future research to find whether this is just a peculiarity of the problem considered here (two qubit states, dephasing noise), or else if in more general circumstances we can still achieve the optimal performance of quantum schemes by using stochastic discriminate-and-reprepare strategies.

In Ch.~\ref{chap:tracking} the analytical results from the previous chapter were significantly extended. We introduced an optimal strategy for optimally transforming the state of a single qubit into a given target state, when the system can be prepared in two different ways, and the target state depends on the choice of preparation. This generalizes the results of Ch.~\ref{chap:aggie} in two points: Because we now allow the source qubit states to be arbitrarily chosen, they can be regarded as the outputs of an arbitrary noise channel. Moreover, due to the arbitrariness of the target states, we were able not only to optimally \emph{stabilize} the states, but also to optimally \emph{track} them while competing against the noise dynamics. For this quantum tracking task, we found that feedback control is not always useful, and sometimes it is actually better not to measure the system at all. In these cases --- which were flagged by an analytic indicator function  --- the application of a unitary map after the noise was found to be the best strategy. Otherwise, a closed loop scheme giving optimal measurement strengths and feedback was analytically constructed. Several applications for our quantum tracking strategy were discussed in the framework of quantum information. Attesting the quality of the scheme, some optimal strategies for quantum state discrimination, purification, error correction and state-dependent cloning were recovered and extended.

Chapter~\ref{chap:multistep} concluded our scientific results with the proposal (and some preliminary numerical solutions) of a variation of the tracking problem studied in Ch.~\ref{chap:tracking}. Here, we allowed controlled interventions not only \emph{after} the system was exposed to the noise, but also \emph{before} and \emph{during} the action of the noise. In general, it is not clear whether the use of feedback at earlier stages of the dynamics is beneficial for the tracking goal, since the information gain could not compensate the back-action noise imparted on the system by multiple uses of quantum measurements. To address this problem, we developed a method (based on dynamic programming and on a heuristic argument) to derive suboptimal multi-step schemes. By applying our method to some examples involving $2$-step stabilization of a pair of qubit states, we noticed that we can usually over-perform the optimal single-step scheme. In addition, we found that our $2$-step schemes were always composed by some unitary map in the first step and (sometimes) a weak measurement in the second step, suggesting that the observation of quantum systems along the way may not be advisable. The confirmation of this hypothesis, though, is still an open problem for future research.

Throughout this thesis, we studied \emph{optimal} strategies for controlling quantum system. Optimality, however, is only possible when the details of the control task are known within a certain (high) level of accuracy, which for practical applications may not be realistic. For example, the noise model affecting the system has always been assumed to be perfectly known, giving rise to a sequence of source states onto which we relied for the construction of our control strategies. In practice, quantum process tomography is subjected to errors, and a quantum channel very accurately characterized at a certain point in time may change its properties when the system is actually running. Therefore, an important extension of this work would come from the introduction of uncertainties for the source sequence, as a result of uncertainties in the noise model that gives rise to it. The determination of control strategies in the presence of uncertainties is part of what is called \emph{robust control}.

In the context of our quantum tracking scenario, we envisage the following problem: Suppose that, as a result of a precarious characterization of the noise model, our knowledge of the source sequence is imperfect, which is modeled by a \emph{nominal} sequence and many equally possible source sequences, as illustrated in Fig.~\ref{fig:robust_intro}.
\begin{figure}[h]
\centering
\includegraphics[width=10cm]{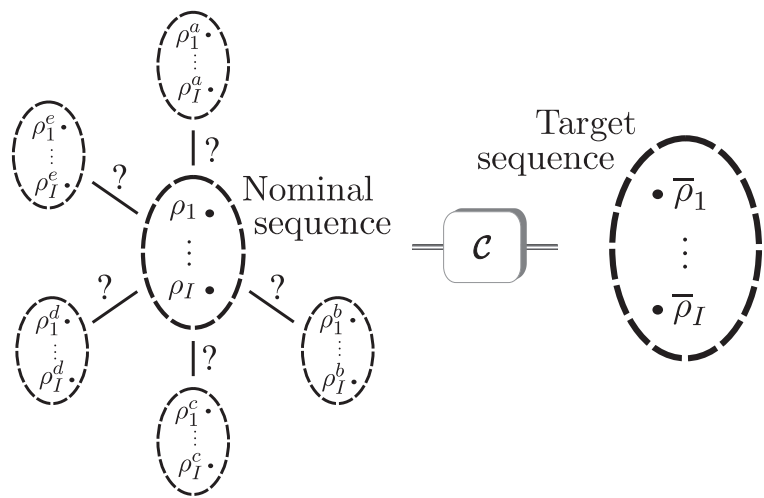}
\caption{Schematic of the robust tracking problem.}\label{fig:robust_intro}
\end{figure}
We would like to determine a quantum controller that guarantees that the minimal performance in approximating a given target sequence is above some pre-established threshold. In particular, we would like to know how high this threshold can be made while still physically achievable. The formulation of some instances of this problem as SDPs has already been accomplished, and it is our intention to further explore this topic both in the numerical and analytical frameworks.

\appendix

\chapter{Appendices to Chapter 2}\label{app:ch2}

\section{The Permutation matrix $P_{{\rm d}^4}$}\label{app:vecperm}
In this appendix we give an explicit construct of the permutation matrix $P_{{\rm d}^4}$ that establishes the equality
\begin{equation}\label{eq:def_Pd}
{\rm vec}\left( A\otimes B\right)=P_{{\rm d}^4}\left({\rm vec} A\otimes {\rm vec} B\right)\,,
\end{equation}
for $A$ and $B$ arbitrary $d\times d$ matrices. In this thesis, the need for this matrix arises in Sec.~\ref{sec:choi_composed}, where we construct the Choi matrix of a CP map formed from the composition of two CP maps.

In order to determine $P_{{\rm d}^4}$, we first numerically solved Eq.~\eqref{eq:def_Pd} with fixed matrices $A$ and $B$ of dimension ${\rm d}=2$, $3$ and $4$ (notice that $P_{{\rm d}^4}$ should not depend on the particular choice of $A$ and $B$). The resulting permutation matrices are shown in Fig.~\ref{fig:permutation}.
\begin{figure}[h]
\centering
\subfigure[${\rm d}=2$]
{
    \label{fig:sub:a}
    \includegraphics[width=4cm]{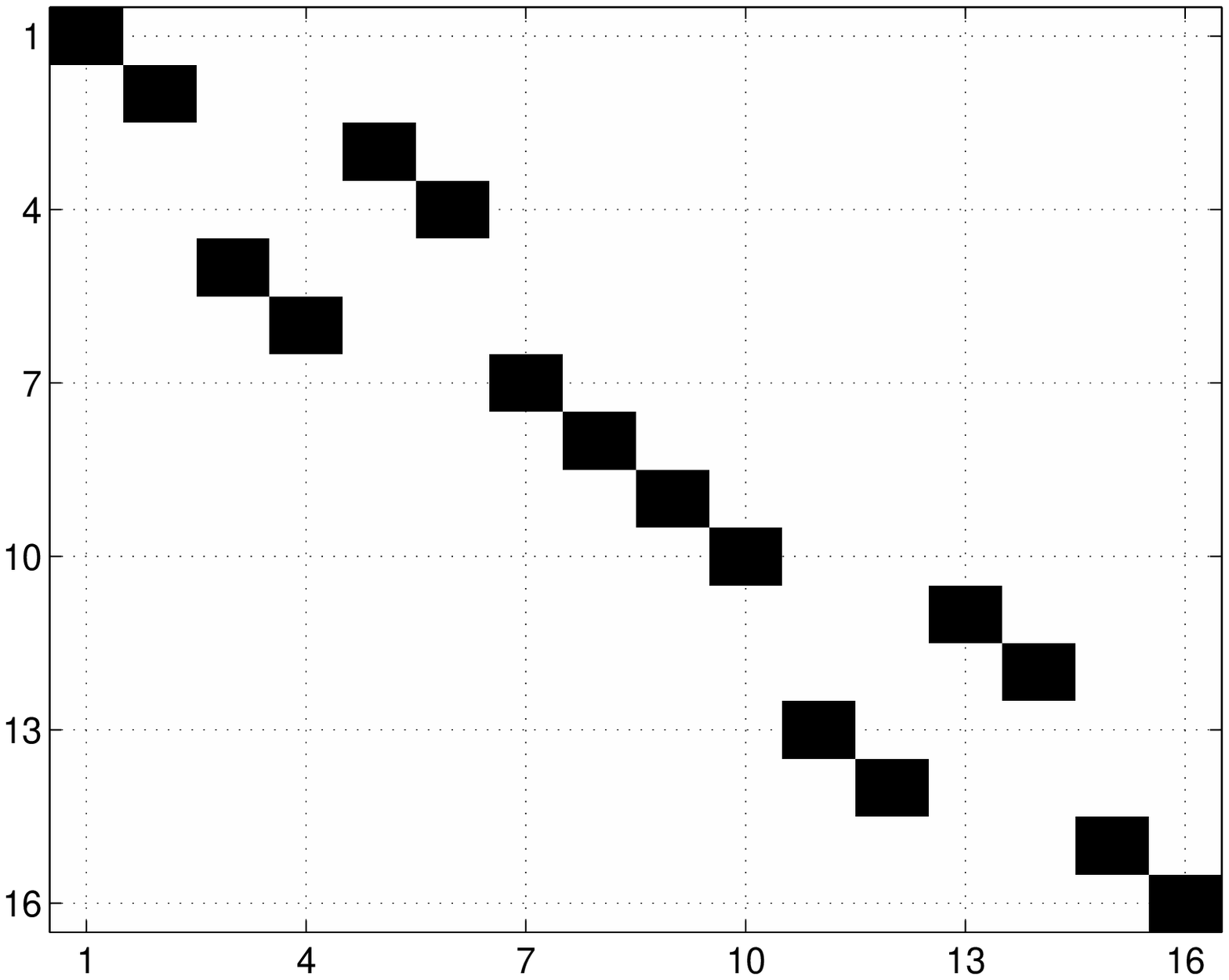}
}
\hspace{1cm}
\subfigure[${\rm d}=3$]
{
    \label{fig:sub:b}
    \includegraphics[width=4cm]{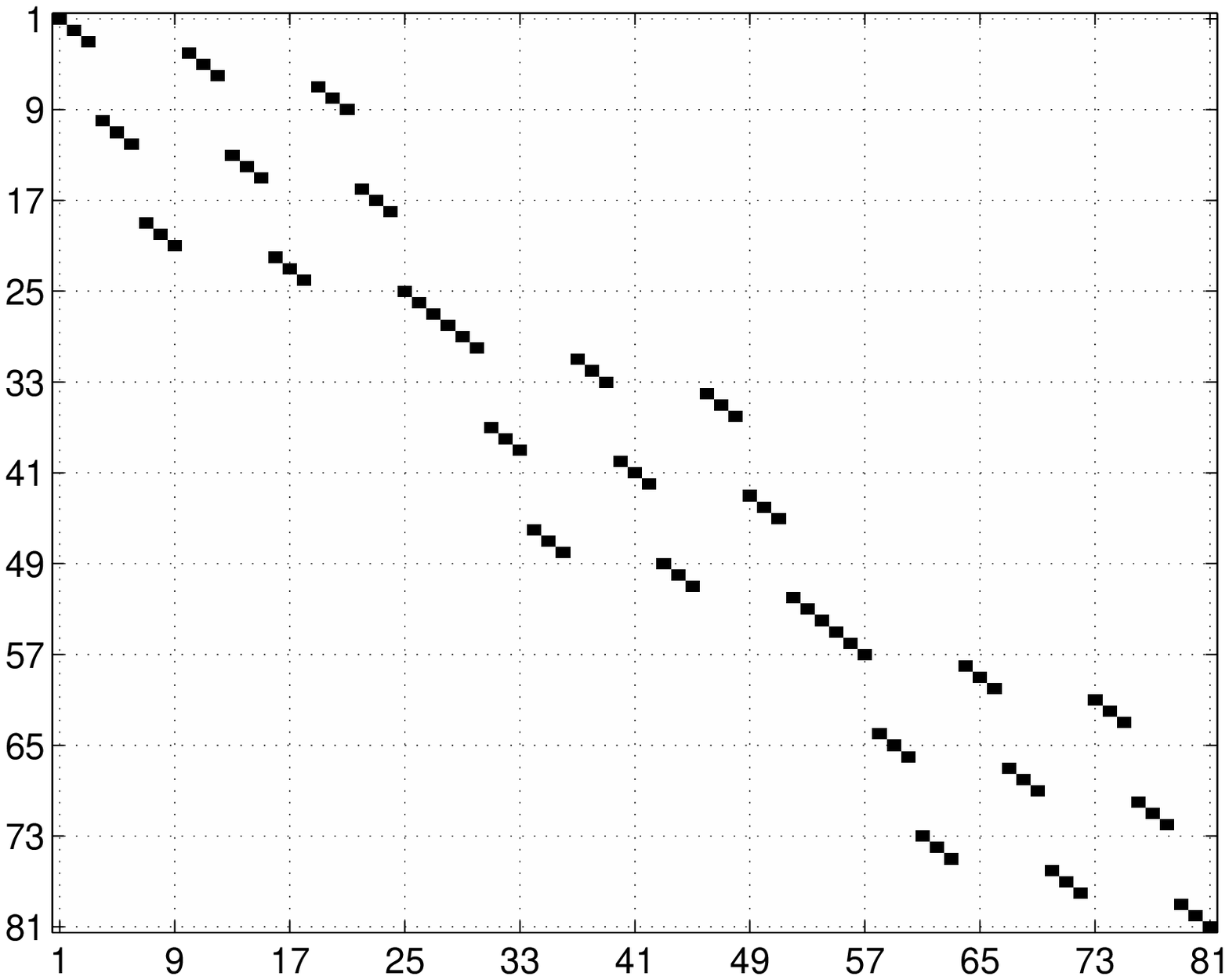}
}
\hspace{1cm}
\subfigure[${\rm d}=4$]
{
    \label{fig:sub:c}
    \includegraphics[width=4cm]{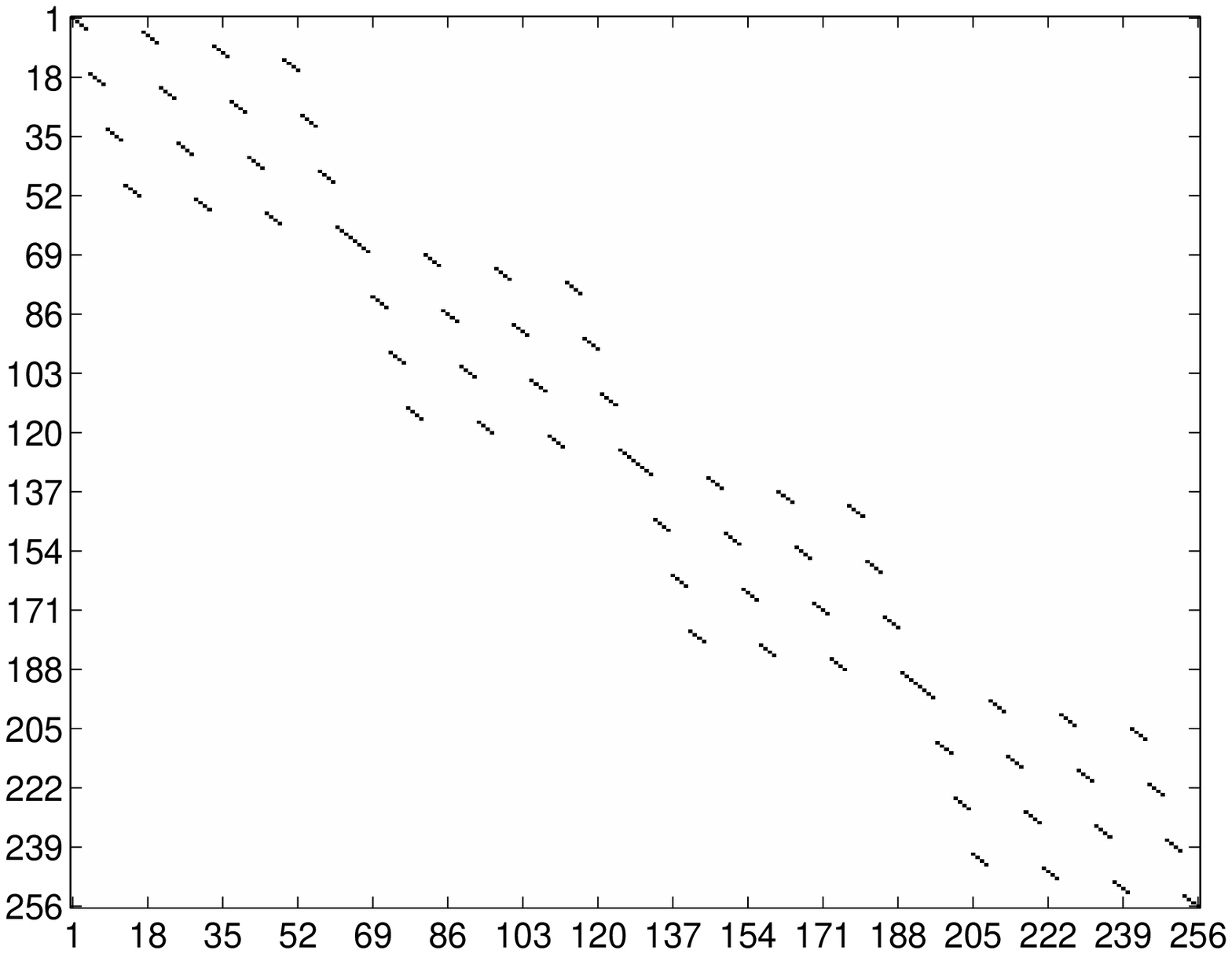}
}
\caption[The permutation matrix $P_{{\rm d}^4}$.]{The permutation matrix $P_{{\rm d}^4}$ of Eq.~\eqref{eq:def_Pd} for ${\rm d}= 2$, $3$ and $4$. The black marks correspond to $1$'s, the remaining spaces are filled with $0$'s.}
\label{fig:permutation}
\end{figure}

The form of $P_{{\rm d}^4}$ for arbitrary values of {\rm d} can be inferred from the instances shown in Fig.~\ref{fig:permutation}. For example, $P_{{\rm d}^4}$ is easily recognized as a block diagonal matrix with {\rm d} identical blocks of dimension ${\rm d}^3$. In addition, after a careful analysis of the generated pattern, Eq.~\eqref{eq:pattern1} was identified as the correct formula specifying the position $(i,j)$ of the unit elements in the first block of $P_{\rm d}$, \begin{equation}\label{eq:pattern1}
j=i+\left\lfloor\frac{i-1}{\rm d}\right\rfloor {\rm d}({\rm d}-1)-\left\lfloor\frac{i-1}{{\rm d}^2}\right\rfloor {\rm d}({\rm d}^2-1)\,,
\end{equation}
where $i=1,\ldots,{\rm d}^3$ and $\left\lfloor\cdot\right\rfloor$ denotes the floor function.

Finally, let us mention that $P_{{\rm d}^4}$ is not the same as the so-called ``vec-permutation matrix'', which is defined as the (unique) ${\rm d}^2\times {\rm d}^2$ permutation matrix $P$ such that ${\rm vec}A^{\sf T}=P{\rm vec}A$.

\chapter{Appendices to Chapter 3}\label{app:ch3}

Except for Appendix~\ref{app:metricHsq}, all the appendices presented here appear in Ref.~\cite{08Mendonca1150}. Appendix~\ref{app:codes} has been modified to include Matlab codes for the computation of the metrics $\H$ and $\O$, which were not considered in Ref.~\cite{08Mendonca1150}.

\section{Proof of Proposition \ref{prop:concFx}}\label{app:concFx}
In this appendix the joint concavity of $\F_N$ is established
via the proof of Proposition \ref{prop:concFx}.

\begin{proof}
Differentiating Eq.~\eqref{eq:Fx} twice with respect to $x$, we obtain
\begin{equation}\label{eq:Fder2}
\frac{d^{2}F\left(x\right)}{dx^{2}}=2\bm{u}\cdot\bm{v}+\frac{d^{2}f\left(x\right)}
{dx^{2}}g\left(x\right)+f\left(x\right)\frac{d^{2}g\left(x\right)}{dx^{2}} +2\frac{df\left(x\right)}{dx}\frac{dg\left(x\right)}{dx}
\end{equation}
where, for convenience, we define the functions
$f(x) \defeq \sqrt{1 - \|\bm{r} + x\bm{u}\|^2}$ and
$g(x) \defeq \sqrt{1 - \|\bm{s} + x\bm{v}\|^2}$.

After some computation we find that
\begin{equation}
    \frac{d^{2}F\left(x\right)}{dx^{2}}  = \mathfrak{F}_1(x) +\mathfrak{F}_2(x),
\end{equation}
where
\begin{align}
    \mathfrak{F}_1(x)&\defeq 2\bm{u}\cdot\bm{v}-\frac{g\left(x\right)u^{2}}{f\left(x\right)}
    -\frac{f\left(x\right)v^{2}}{g\left(x\right)}\,,\\
    \mathfrak{F}_2(x)&\defeq 2\frac{\bm{u}\cdot\left(\bm{r}+x\bm{u}\right)\bm{v}
    \cdot\left(\bm{s}+x\bm{v}\right)}{f\left(x\right)g\left(x\right)}
    -\frac{g\left(x\right)\left[\bm{u}\cdot\left(\bm{r}+x\bm{u}\right)\right]^{2}}
    {\left[f\left(x\right)\right]^{3}}
    -\frac{f\left(x\right)\left[\bm{v}\cdot\left(\bm{s}+x\bm{v}\right)\right]^{2}}
    {\left[g\left(x\right)\right]^{3}}\,.
\end{align}
The negative semidefiniteness of $d^2 F(x) /d x^2$ in the range $x \in
[0,1]$ can be observed if $\mathfrak{F}_1(x)$ and $\mathfrak{F}_2(x)
$ are written in the following alternative form:
\begin{align}
    \mathfrak{F}_1(x)&=-\left\|\sqrt{\frac{g\left(x\right)}{f\left(x\right)}}\bm{u}
    -\sqrt{\frac{f\left(x\right)}{g\left(x\right)}}\bm{v}\right\|^{2}\,,\\
    \mathfrak{F}_2(x)&=-\,\frac{1}{f\left(x\right)g\left(x\right)} \left[\frac{g\left(x\right)}{f\left(x\right)}\bm{u}\boldsymbol{\cdot}\left(\bm{r}+x\bm{u}\right)
    -\frac{f\left(x\right)}{g\left(x\right)}\bm{v}\boldsymbol{\cdot}\left(\bm{s}+x\bm{v}\right)\right]^{2}\,.
\end{align}
\end{proof}

\section{Proof of Super-multiplicativity of $\Fn$}
\label{app:super-multplicativity}

To prove that $\Fn$ is super-multiplicative, we first define $r_i\defeq
\tr\rho_i^2$ and $s_i
\defeq \tr\sigma_i^2$, such that $0 < r_i, s_i \leq 1$ (note that
here we use $r_i$ instead of $r_i^2$ as the norm square of $\bm{r}_i$,
likewise for $s_i$). Straightforward algebra gives
\begin{gather*}
\F_N(\rho_1\otimes\rho_2,\sigma_1\otimes\sigma_2)
- \F_N(\rho_1,\sigma_1)\F_N(\rho_2,\sigma_2)=\\
\sqrt{(1-r_1 r_2)(1-s_1 s_2)}
-\sqrt{(1-r_1)(1-s_1)(1-r_2)(1-s_2)}\\
-\tr\left[\rho_1\sigma_1\right]\sqrt{(1-r_2)(1-s_2)}
-\tr\left[\rho_2\sigma_2\right]\sqrt{(1-r_1)(1-s_1)}\,.
\end{gather*}
A direct application of Cauchy-Schwarz's inequality $\tr\left[\rho_i\sigma_i\right]\leq \sqrt{r_i s_i}$ gives
\begin{gather*}
    \F_N(\rho_1\otimes\rho_2,\sigma_1\otimes\sigma_2)- \F_N(\rho_1,\sigma_1)\F_N(\rho_2,\sigma_2)\geq\\
    \sqrt{(1-r_1 r_2)(1-s_1 s_2)}
    -\sqrt{(1-r_1)(1-s_1)(1-r_2)(1-s_2)}\\
    -\sqrt{r_1 s_1(1-r_2)(1-s_2)}-\sqrt{r_2 s_2(1-r_1)(1-s_1)}\,.
\end{gather*}

The super-multiplicative property is obtained by showing the positive
semi-definiteness of the rhs of the above expression. This is the content of the following proposition:
\begin{proposition} For $0\leq a,b,c,d \leq 1$, we have
\begin{equation}\label{eq:supermult}
\sqrt{(1-a b)(1-c d)}\geq\sqrt{(1-a)(1-b)(1-c)(1-d)}+\sqrt{a c
(1-b)(1-d)}+\sqrt{b d (1-a)(1-c)}\,.
\end{equation}
\end{proposition}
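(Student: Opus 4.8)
The plan is to prove inequality~\eqref{eq:supermult} by recognising its right-hand side as a Euclidean inner product of two vectors in $\mathbb{R}^3$ and then invoking the Cauchy-Schwarz inequality. The only genuinely non-routine part is the \emph{factorisation}: the three summands appear naturally grouped so that $a,c$ travel together (in $\sqrt{ac(1-b)(1-d)}$) and $b,d$ travel together (in $\sqrt{bd(1-a)(1-c)}$), whereas the left-hand side groups $a,b$ and $c,d$. The trick will be to regroup \emph{against} this grain, splitting each factor so that the pair $(a,b)$ separates cleanly from the pair $(c,d)$.

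First I would rewrite the three terms as
\begin{align*}
\sqrt{(1-a)(1-b)(1-c)(1-d)}&=\sqrt{(1-a)(1-b)}\cdot\sqrt{(1-c)(1-d)}\,,\\
\sqrt{ac(1-b)(1-d)}&=\sqrt{a(1-b)}\cdot\sqrt{c(1-d)}\,,\\
\sqrt{bd(1-a)(1-c)}&=\sqrt{(1-a)b}\cdot\sqrt{(1-c)d}\,,
\end{align*}
and then introduce the two vectors
\[
\bm{\alpha}\defeq\left(\sqrt{(1-a)(1-b)}\,,\,\sqrt{a(1-b)}\,,\,\sqrt{(1-a)b}\right)\,,\quad
\bm{\gamma}\defeq\left(\sqrt{(1-c)(1-d)}\,,\,\sqrt{c(1-d)}\,,\,\sqrt{(1-c)d}\right)\,,
\]
whose entries are all real and non-negative because $0\leq a,b,c,d\leq 1$. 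By construction the right-hand side of~\eqref{eq:supermult} is then exactly the inner product $\bm{\alpha}\cdot\bm{\gamma}$.

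The decisive computation is the evaluation of the two norms. Using the elementary identity $(1-a)(1-b)+a(1-b)+(1-a)b=(1-b)+(1-a)b=1-ab$, I obtain $\|\bm{\alpha}\|^2=1-ab$, and by the identical manipulation $\|\bm{\gamma}\|^2=1-cd$. Cauchy-Schwarz, in the form $\bm{\alpha}\cdot\bm{\gamma}\leq\|\bm{\alpha}\|\,\|\bm{\gamma}\|$, therefore reads precisely as $\bm{\alpha}\cdot\bm{\gamma}\leq\sqrt{(1-ab)(1-cd)}$, which is~\eqref{eq:supermult}. I expect the main obstacle to be one of \emph{discovery} rather than execution: once the ``transposed'' regrouping above is spotted, the norms collapse onto the left-hand factors and the estimate is immediate; checking the arithmetic of the three products and the two norms is routine. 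Finally, tracing this back through Appendix~\ref{app:super-multplicativity}, I would note that it completes the proof of the super-multiplicativity of $\Fn$ asserted in Sec.~\ref{sec:fnmultipl}.
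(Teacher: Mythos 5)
Your proof is correct, and it takes a genuinely different route from the paper's. The paper first disposes of the boundary cases (any variable equal to $1$), then substitutes $a'=1-a$, $b'=1-b$, $c'=1-c$, $d'=1-d$, divides through by $\sqrt{a'b'c'd'}$ (which is why the boundary must be excluded first), squares, and recognises the result as a sum of three AM--GM inequalities. You instead regroup the right-hand side so that the pair $(a,b)$ separates from $(c,d)$, exhibit it as the inner product $\bm{\alpha}\cdot\bm{\gamma}$ of two non-negative vectors in $\mathbb{R}^3$, and verify via the identity $(1-a)(1-b)+a(1-b)+(1-a)b=1-ab$ that $\|\bm{\alpha}\|^2=1-ab$ and $\|\bm{\gamma}\|^2=1-cd$, so a single application of Cauchy--Schwarz finishes the argument. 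What your approach buys is uniformity and brevity: it covers all of $[0,1]^4$ in one step with no case analysis, no substitution, and no squaring, and it makes the equality condition transparent (proportionality of $\bm{\alpha}$ and $\bm{\gamma}$). It is also stylistically consonant with the paper itself: the proof of Proposition~\ref{prop:jozsa} establishes $\Fn\leq 1$ by embedding the states as unit vectors $\bm{R}=(\bm{r},\sqrt{1-r^2})$, $\bm{S}=(\bm{s},\sqrt{1-s^2})$ and bounding their inner product, which is precisely the device you use here. What the paper's route buys, by contrast, is that the reduction to AM--GM keeps every step at the level of scalar inequalities, at the cost of the preliminary case check and a less illuminating final form.
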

\begin{proof}
First note that if any of the variables equals $1$, then the validity
of the inequality is immediate. For example, let $d=1$ so that
(\ref{eq:supermult}) reduces to
\begin{equation}
\sqrt{(1-a b)(1-c)}\geq\sqrt{b (1-a)(1-c)}\,.
\end{equation}
This is trivially satisfied for all $0\leq a,b,c \leq 1$. In what
follows, we restrict to $0\leq a,b,c,d < 1$ and show that  inequality
(\ref{eq:supermult}) is equivalent to the standard inequality of
arithmetic and geometric means (hereafter referred as the AM-GM
inequality). This inequality is just an expression of the fact that the
geometric mean of a list of non-negative real numbers is never larger
than the corresponding arithmetic mean.\par Apply the substitution
$a^\prime=1-a$ (similarly for $b^\prime$, $c^\prime$ and $d^\prime$;
note that $0< a^\prime,b^\prime,c^\prime,d^\prime \leq 1$) to the
inequality (\ref{eq:supermult}) and divide the result by
$\sqrt{a^\prime b^\prime c^\prime d^\prime}$ to get the equivalent
inequality
\begin{equation}\label{eq:ineqsupermult}
\sqrt{(1+A+B)(1+C+D)}\geq 1+\sqrt{AC}+\sqrt{BD}\,,
\end{equation}
where we have defined $A=\frac{1}{a^\prime}-1$ (similarly for $B$, $C$
and $D$; note that $0\leq A,B,C,D < \infty$).  Squaring the inequality
above we find
\begin{equation}
    \frac{A\!+\!C}{2}+\frac{B\!+\!D}{2}+\frac{AD\!+\!BC}{2} \geq \sqrt{A C}\!+\! \sqrt{B D}\!+\! \sqrt{A B C D}
\end{equation}
which is clearly a sum of three AM-GM inequalities.
\end{proof}

\section{Proof of the Metric Property of $B[\F]$ and
$C[\F]$}\label{app:metric_C}

In the following, we give a new demonstration of the metric properties
of  $B[\F]$ and $C[\F]$ (see
Refs.~\cite{69Bures199,05Gilchrist062310} for the standard proofs). Our
proof consists of a simple application of Theorem~\ref{thm:schoenberg}
due to Schoenberg.

\begin{proposition}\label{prop:CFmetric}
The quantities $B[\F(\rho,\sigma)]$ and
$C[\F(\rho,\sigma)]$ given in Eqs. (\ref{eq:BF}) and
(\ref{eq:CF}),  respectively, are metrics for the space of density matrices.
\end{proposition}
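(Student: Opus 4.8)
The plan is to deduce both claims from Schoenberg's theorem (Theorem~\ref{thm:schoenberg}), exactly along the lines used for $C[\Fn]$ in the proof of Proposition~\ref{Pro:metric:C(Fn)}. I would apply the theorem to the \emph{squared} quantities, i.e. to the kernels $K_B(\rho,\sigma)\defeq B^2[\F]=2-2\sqrt{\F(\rho,\sigma)}$ and $K_C(\rho,\sigma)\defeq C^2[\F]=1-\F(\rho,\sigma)$ on $\X=\mathcal{S}(\textsf{H}_{\rm d})$; a positive result then yields at once that $\sqrt{K_B}=B[\F]$ and $\sqrt{K_C}=C[\F]$ are metrics. The hypotheses of Theorem~\ref{thm:schoenberg} that are cheap to verify are symmetry and the ``nonnegative, vanishing iff equal'' clause: both follow immediately from Jozsa's axioms~1 and~2 of Sec.~\ref{sec:fidprelim}, since $\F$ (hence $\sqrt{\F}$) is symmetric and lies in $[0,1]$ with value $1$ precisely when $\rho=\sigma$.

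The substance is the conditional-negativity implication~\eqref{eq:ndk}. Given density matrices $\{\rho_i\}_{i=1}^n$ and reals $\{c_i\}_{i=1}^n$ with $\sum_i c_i=0$, the constant summand drops out because $\sum_{i,j}c_ic_j=(\sum_i c_i)^2=0$. Consequently~\eqref{eq:ndk} reduces, for $K_B$, to the inequality $\sum_{i,j}\sqrt{\F(\rho_i,\rho_j)}\,c_ic_j\ge 0$ and, for $K_C$, to $\sum_{i,j}\F(\rho_i,\rho_j)\,c_ic_j\ge 0$. Thus everything follows once the \emph{root-fidelity Gram matrix} $G\defeq[\sqrt{\F(\rho_i,\rho_j)}]_{i,j}$ is shown to be positive semidefinite (PSD): the $B$-case is then immediate, while the $C$-case follows from the Schur (Hadamard) product theorem, since the entrywise square $G\circ G=[\F(\rho_i,\rho_j)]_{i,j}$ of a PSD matrix is again PSD.

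The key step, and the main obstacle, is therefore proving that $G$ is PSD. I would first dispatch the pure-state case, where $\sqrt{\F}=|\langle\phi_i|\phi_j\rangle|$ and, more usefully, $\F=|\langle\phi_i|\phi_j\rangle|^2$: here $[\F(\rho_i,\rho_j)]=[\langle\phi_i|\phi_j\rangle]\circ[\langle\phi_j|\phi_i\rangle]$ is a Schur product of a Gram matrix with its conjugate, hence PSD with no further work. To lift this to mixed states I would invoke Uhlmann's variational characterization $\sqrt{\F(\rho,\sigma)}=\max|\langle\psi_\rho|\psi_\sigma\rangle|$ over purifications on an enlarged Hilbert space, and attempt to exhibit a \emph{single} family of purification amplitudes $\{W_i\}$ (operators with $W_iW_i^\dagger=\rho_i$) whose Hilbert-Schmidt overlaps $\tr(W_i^\dagger W_j)$ recover all pairwise root fidelities simultaneously; were this achievable, $G$ would literally be a Gram matrix in $\mathcal{M}_{\rm d}$ with the inner product of Eq.~\eqref{eq:vec_trace}, making PSD automatic. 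The delicate point is precisely this \emph{simultaneous} realizability, since Uhlmann's theorem optimizes each pair independently, so the argument must be organized around the global positive-definiteness of the root-fidelity kernel rather than a naive pairwise choice of phases. I expect all the genuine work to lie here; once $G\ge 0$ is established, the two metric claims close out instantly through the symmetry/identity checks and the Schur-product observation above.
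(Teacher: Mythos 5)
Your opening moves are sound and coincide with the paper's: apply Schoenberg's theorem to $B^2[\F]$ and $C^2[\F]$, dispose of symmetry and the identity of indiscernibles via Jozsa's axioms, and use $\sum_i c_i=0$ to reduce condition~\eqref{eq:ndk} to $\sum_{i,j}\sqrt{\F(\rho_i,\rho_j)}\,c_ic_j\geq 0$ (for $B$) and $\sum_{i,j}\F(\rho_i,\rho_j)\,c_ic_j\geq 0$ (for $C$). The fatal problem is the lemma you defer to: the root-fidelity matrix $G=[\sqrt{\F(\rho_i,\rho_j)}]$ is \emph{not} positive semidefinite in general --- not even for pure states, and not even in the weaker zero-sum (conditional) sense that you actually need. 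Concretely, take the four pure qubit states $\ket{\psi_k}=\cos\theta_k\ket{0}+\sin\theta_k\ket{1}$ with $\theta_k=0,\,\pi/4,\,\pi/2,\,3\pi/4$, and the zero-sum vector $c=(1,-1,1,-1)$. Here $\sqrt{\F(\rho_i,\rho_j)}=|\cos(\theta_i-\theta_j)|$, and a direct evaluation gives
\begin{equation}
\sum_{i,j=1}^{4}\sqrt{\F(\rho_i,\rho_j)}\,c_ic_j
=4+2\left(-4\cdot\tfrac{\sqrt{2}}{2}\right)=4-4\sqrt{2}<0
\end{equation}
(the underlying reason is that $|\cos\theta|$ has a negative Fourier coefficient, at $\cos 4\theta$, so it is not a positive definite function on the circle). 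This one example destroys your plan at every level: $G$ is not PSD; the simultaneous family of purification amplitudes $\{W_i\}$ with $\tr(W_i^\dagger W_j)=\sqrt{\F(\rho_i,\rho_j)}$ cannot exist, since it would exhibit $G$ as a Gram matrix; and, because the displayed sum equals $-\tfrac{1}{2}\sum_{i,j}B^2[\F(\rho_i,\rho_j)]\,c_ic_j$, the Schoenberg hypothesis~\eqref{eq:ndk} is itself violated by $B^2[\F]$. Hence no argument of this type can prove the metric property of $B[\F]$; that property is true, but it must be obtained by the classical purification argument (fix a purification of the middle state, choose Uhlmann-optimal purifications of the two outer states, and apply the Hilbert-space triangle inequality), which invokes Uhlmann's theorem only pairwise, see Refs.~\cite{69Bures199,05Gilchrist062310}.

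Two further points of comparison. First, for $C[\F]$ your Schur-product step routes the needed positivity of $[\F(\rho_i,\rho_j)]$ through the false statement $G\geq 0$, so it collapses as well; your pure-state observation survives (there $[\F(\rho_i,\rho_j)]=[\tr(\rho_i\rho_j)]$ is a Hilbert-Schmidt Gram matrix), but for mixed states your proposal leaves the required conditional positivity of the fidelity kernel both unproven and unreachable by your route. Second, the paper's own appendix does something different from what you attempt: it never claims $G\geq 0$, but instead bounds $K^2[\F]\leq 2(1-\EuScript{A})$ with $\EuScript{A}(\rho_i,\rho_j)\defeq\left(\tr\sqrt{\rho_i}\sqrt{\rho_j}\right)^2$, whose matrix genuinely is a Gram matrix of the operators $\sqrt{\rho_i}\otimes\sqrt{\rho_i}$ and hence PSD. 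You should be aware, however, that substituting an entrywise upper bound into a quadratic form whose coefficients $c_ic_j$ have indefinite signs is not a valid inference either, and the counterexample above shows that the intermediate claim~\eqref{eq:toprove} is actually false for $K=B$. So the obstruction you sensed (``all the genuine work lies here'') is real, and it is not a gap that more work can fill: for $B[\F]$ the Schoenberg strategy must be abandoned outright, and for $C[\F]$ it would require a direct proof of conditional positivity of $[\F(\rho_i,\rho_j)]$ that does not factor through the root-fidelity matrix.
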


\begin{proof}
For brevity, let $K[\F(\rho,\sigma)]$ represent either
$B[\F(\rho,\sigma)]$ or $C[\F(\rho,\sigma)]$.  As with $\F$, it is easy
to check that $K^2[\F(\rho,\sigma)]$ is symmetric in its two arguments,
and that $K^2[\F(\rho,\sigma)]\geq 0$ with saturation iff
$\rho=\sigma$. So, according to Theorem~\ref{thm:schoenberg},
$K[\F(\rho,\sigma)]$ is a metric if for any set of density matrices
$\{\rho_i\}_{i=1}^n$ ($n\geq 2$) and real numbers $\{c_i\}_{i=1}^n$
such that $\sum_{i=1}^n c_i=0$, it is true that
\begin{equation}\label{eq:toprove}
\sum_{i,j=1}^n K^2[\F(\rho_i,\rho_j)] c_i c_j \leq 0\,.
\end{equation}
To prove this, we derive an upper bound for $K^2[\F(\rho_i,\rho_j)]$
that can be easily seen to satisfy the condition above. First, note that
\begin{multline}\label{eq:seqofineqs}
\F(\rho_i,\rho_j)=\left(\tr|\sqrt{\rho_i}\sqrt{\rho_j}|\right)^2\geq|\tr\sqrt{\rho_i}\sqrt{\rho_j}|^2 =  \\\left(\tr\sqrt{\rho_i}\sqrt{\rho_j}\right)^2=\tr\left[\sqrt{\rho_i}\sqrt{\rho_j}\otimes\sqrt{\rho_i}\sqrt{\rho_j}\right]=\\
\tr\left[\left(\sqrt{\rho_i}\otimes\sqrt{\rho_i}\right)\left(\sqrt{\rho_j}
\otimes\sqrt{\rho_j}\right)\right]\equiv\EuScript{A}(\rho_i,\rho_j)\,,
\end{multline}
where the first equality follows from the definition
$|A|\defeq\sqrt{A^\dagger A}$ for every matrix $A$ and  the inequality
from the fact that $\tr|A|=\max_{U}|\tr UA|$ (the
maximization runs over unitary matrices $U$
\cite{94Jozsa2315,60Schatten}). Then, it follows that
\begin{align}
B^2[\F]&=2\left(1-\sqrt{\F}\right)\leq 2\left(1-\F\right)\leq 2\left(1-\EuScript{A}\right)\,,\\
C^2[\F]&=1-\F\leq 1-\EuScript{A} \leq 2\left(1-\EuScript{A}\right)\,,
\end{align}
or, in our more compact notation, $K^2[\F]\leq
2\left(1-\EuScript{A}\right)$.\par

Now, replacing $K^2[\F(\rho_i,\rho_j)]$ with the above upper bound in
the lhs of Eq. (\ref{eq:toprove}), it is easy to obtain the desired inequality:
\begin{equation}\label{eq:schoenbergA}
    \sum_{i,j=1}^n \left\{2-2\tr\left[\left(\sqrt{\rho_i}\otimes\sqrt{\rho_i}\right)\left(\sqrt{\rho_j}
    \otimes\sqrt{\rho_j}\right)\right]\right\}
    c_i c_j=-2\tr\left|\sum_{i=1}^n
    c_i\sqrt{\rho_i}\otimes\sqrt{\rho_i}\right|^2\leq 0\,,
\end{equation}
where the equality is obtained by using that $\sum_{i=1}^n c_i =0$, the
linearity of the trace operation and the hermiticity  of
$c_i\sqrt{\rho_i}\otimes\sqrt{\rho_i}$.\par
\end{proof}

Finally, let us just mention that besides establishing the metric
properties of $B[\F]$ and $C[\F]$, the present proof  also establishes
$\sqrt{2-2\left(\tr\sqrt{\rho}\sqrt{\sigma}\right)^2}$ as a metric
for the space of density matrices. In fact, by a similar application of
Schoenberg's theorem, the quantity $H(\rho,\sigma)\defeq\sqrt{2-2\tr\sqrt{\rho}\sqrt{\sigma}}$ can also be shown to be a metric.

\section{Proof of the Metric Property of $\H^2$}\label{app:metricHsq}
In Sec.~\ref{sec:HSD} we defined the Hilbert-Schmidt distance $\H$ between two density matrices $\rho$ and $\sigma$ as the Hilbert-Schmdit norm of the matrix $\rho-\sigma$. As shown on page~\pageref{proof:inducedmetric}, such a definition guarantees that $\H$ is a metric for the space of density matrices. In this appendix, we demonstrate that the function
\begin{equation}
\H^2(\rho,\sigma)=\tr\left[\left(\rho-\sigma\right)^2\right]\,,
\end{equation}
also defines a metric for the space of density matrices.

First note that properties (M1)-(M3) on page~\pageref{axioms:metric} are easily verified for $\H^2$, as a consequence of holding true for $\H$. We only have to check the triangle inequality (M4), which is necessarily satisfied if
\begin{equation}\label{eq:schoenbergHsq}
\sum_{i=1}^{n}{c_i}=0 \Rightarrow \sum_{i,j=1}^{n}{\H^4(\rho_i,\rho_j) c_i c_j}\leq 0\,,
\end{equation}
for every $n\geq 2$ and $c_i\in\mathbb{R}$. The above implication arises from Schoenberg's theorem (cf. Theorem~\ref{thm:schoenberg} on page~\pageref{thm:schoenberg}).\par

In order to prove Eq.~\eqref{eq:schoenbergHsq}, consider the inequalities
\begin{equation}\label{eq:ineqsHFA}
\left(\frac{\H}{\sqrt{2}}\right)^4\leq\left(\frac{\H}{\sqrt{2}}\right)^2\leq 2\left(1-\F\right)\leq 2\left(1-\A\right)\,,
\end{equation}
where the first inequality follows from the fact that $\H$ is bounded between $0$ and $\sqrt{2}$, the second is equivalent to Eq.~\eqref{eq:fuchs_like_bounds} (second inequality in the first line) and the third was established in Eq.~\eqref{eq:seqofineqs}, where the function $\A(\rho,\sigma)$ was defined. Inequalities~\eqref{eq:ineqsHFA} imply that Eq.~\eqref{eq:schoenbergHsq} is satisfied if the implication holds when $2(1-\A)$ replaces $\H^4$. That this is the case, has already been seen in Eq.~\eqref{eq:schoenbergA}.

\section{Matlab Codes}\label{app:codes}

In this Appendix, we present the Matlab codes that we have used to compute $\Fn$, $\F$, $\D$ and $Q$ in the numerical experiment presented in Sec.~\ref{sec:computability}. For completeness, we also present similar codes for the computation of the metrics $\H$ and $\O$.

For \lstinline|rho| and \lstinline|sigma| density matrices,
\begin{itemize}
\item $\Fn$ was computed using
\lstset{frame=none}
\begin{lstlisting}
Fn = real( rho(:)'*sigma(:) + ...
           sqrt((1 - rho(:)'*rho(:))* ...
           (1 - sigma(:)'*sigma(:))) );
\end{lstlisting}

\item $\F$ was computed using
\begin{lstlisting}
[V, D] = eig(rho);
sqrtRho = V*diag(sqrt(diag(D)))*V';
F = sum( sqrt(eig(Hermitize(sqrtRho*sigma*sqrtRho))) )^2;
\end{lstlisting}
Here \lstinline|sqrtRho*sigma*sqrtRho| is not quite Hermitian due to small numerical
errors.  We therefore employ the function \lstinline|Hermitize(M)=(M+M')/2| to turn
the almost-Hermitian matrix into a Hermitian one --- this causes Matlab to
select a more efficient algorithm for the diagonalization.

\item $\D$ was computed using\label{code:D}
\begin{lstlisting}
D=0.5*sum(abs( eig(rho-sigma) ));
\end{lstlisting}

\item $Q$ was computed using
\begin{lstlisting}
[Vr,Drho]=eig(rho); Dr=diag(Drho);
[Vs,Dsigma]=eig(sigma); Ds=diag(Dsigma);
A = abs(Vr'*Vs).^2;
[x,Q]=fminbnd(@(s) (Dr.'.^s)*A*(Ds.^(1-s)), 0, 1);
\end{lstlisting}
The algorithm used here follows from the formula for $\tr(\rho^s\sigma^{1-s})$
given in the section entitled \emph{convexity in s} of Ref.~\cite{07Audenaert160501}.

\item $\H$ can be computed using
\begin{lstlisting}
H = sqrt( (rho(:)-sigma(:))'*(rho(:)-sigma(:)) );
\end{lstlisting}

\item $\O$ can be computed using
\begin{lstlisting}
O=max(abs( eig(rho-sigma) ));
\end{lstlisting}

\end{itemize}

\chapter{Appendices to Chapter 5}\label{app:ch5}

In Sec.~\ref{sec:DDR} and \ref{sec:optimality}, the proposed
deterministic classical and quantum control schemes were shown to be optimal among the set of EBTP and CPTP maps, respectively.  Here, we provide constructive proofs of these results in further detail. All the appendices presented here appear in Ref.~\cite{07Branczyk012329}.

\section{Dual optimization for deterministic quantum control}\label{app:optdetq}

As demonstrated in Sec.~\ref{sec:optimality}, obtaining the maximum average fidelity can be expressed as the optimization problem~\eqref{eq:qprob_compact}.  For this problem (as for the classical problem which we address in the next section) the dual optimization proves to be straightforward to solve analytically and the results above can then be used to show optimality of the control scheme given by Eq.~\eqref{eq:QCasCPTP}.

We make use of some symmetry arguments to simplify the problem. This optimization problem has certain symmetry properties under the action of the group of transformations generated by the rotation $\mathfrak{C}\rightarrow (X\otimes X) \mathfrak{C} (X\otimes X)^\dagger$ and the transpose $\mathfrak{C}\rightarrow \mathfrak{C}^{\sf T}$. Specifically, the objective function is invariant under the action of this group since $\tr[R (X \otimes X)\mathfrak{C} (X \otimes X)]=\tr\left(R \mathfrak{C}\right)$ and $\tr\left(R \mathfrak{C}^{\sf T}\right)=\tr\left(R \mathfrak{C}\right)$,
because $(X \otimes X) R (X \otimes X) = R$ and $R^{\sf T}=R$,
respectively.  In addition, the constraints are covariant under the action of the group: Since conjugation with a unitary and
transposition preserve eigenvalues, $(X\otimes X) \mathfrak{C} (X\otimes X) \geq 0$ and $\mathfrak{C}^{\sf T}\geq 0$ if $\mathfrak{C} \geq 0$. To see that the equality constraints are covariant note that $\tr_2\mathfrak{C} = \openone_2$ is equivalent to the condition $\tr \left[(M\otimes \openone_2) \mathfrak{C} \right]= \tr M$ for all hermitian $M$. If $\mathfrak{C}$ obeys the partial trace constraint we have
\begin{equation}
  \tr [(M\otimes \openone_2)(X \otimes X) \mathfrak{C} (X \otimes X)] = \tr [(X M X\otimes \openone_2) \mathfrak{C} ] = \tr M \,,
\end{equation}
and
\begin{equation}
    \tr [(M\otimes \openone_2)\mathfrak{C}^{\sf T} ]
    = \tr [(M^{\sf T}\otimes \openone_2) \mathfrak{C} ]
    = \tr M \,,
\end{equation}
so both $(X \otimes X) \mathfrak{C} (X \otimes X) $ and $\mathfrak{C}^{\sf T}$ do also.
So both the objective function and the feasible set of (\ref{eq:qprob_compact}) are invariant under the action of the group. As a result there will be an invariant point $\mathfrak{C}^*_{\rm inv}=(X\otimes X)\mathfrak{C}^*_{\rm inv}(X\otimes X)=\mathfrak{C}^{*{\sf T}}_{\rm inv}$ that achieves the optimum $\mathfrak{p}^*$~\cite{04Boyd}. We do not need to optimize over the full set of $\mathfrak{C}$ but may restrict our attention to the set of invariant $\mathfrak{C}_{\rm inv}$. Gatermann and Parrilo~\cite{04Gatermann95} have investigated such invariant SDP's in detail.

The dual of our optimization problem (\ref{eq:qprob_compact}) has the form~\cite{02Audenaert030302}
\begin{equation}
  \begin{array}{rl}\label{eq:ineq_form2}
  \text{minimize}& \tr M \\
  \text{subject to}&M\otimes \openone_2 - R \geq 0
  \end{array}
\end{equation}
Notice that (as is generally the case) this semidefinite program is invariant under the same group of transformations as the original problem, under which
$M\rightarrow X M X$ and
$M\rightarrow M^{\sf T}$. For the dual problem we may likewise restrict attention to $M_{\rm inv}= b_0\openone_2+b_x X$ that are invariant under the
action of the group. This gives a simpler dual optimization
\begin{equation}\label{eq:qdual}
  \begin{array}{rl}
  \text{minimize}&2 b_0\\
  \text{subject to}&b_0 \openone_4+b_x X\otimes \openone_2 - R \geq
  0 \,,
  \end{array}
\end{equation}
where $b_0$ and $b_x$ are the new variables.  This problem is simple enough to solve analytically; the solution is
\begin{equation}
  b_0 =\frac{1}{4}+\frac{1}{4}\sqrt{\cos^2{\theta}+\frac{\sin^4{\theta}}{1-r_x^2}}\,,
\end{equation}
and $b_x = r_x b_0$ [with $r_x = (1-2p)\cos\theta$].  This may be
checked by verifying that the matrix $b_0 \openone_4+b_x
X\otimes \openone_2 - R$ is indeed positive semi-definite, hence $2 b_0$ is a valid dual feasible value. Because $2 b_0$ reproduces the fidelity of our proposed scheme, given by Eq. (\ref{eq:fid_qc}), this guess necessarily gives an optimal solution to the original problem (\ref{eq:CPTPoptimisation}).

\section{Dual optimization for deterministic classical control}\label{app:optdetc}

The same approach is used to solve the problem~(\ref{eq:EBTPoptimisation}). We start by mapping the set of trace-preserving entanglement breaking qubit
channels to bipartite states $\mathfrak{B}$. For these channels $\mathfrak{B}$ is positive, has partial trace equal to the identity, and is also \emph{separable} \cite{03Horodecki629}.  Because $\mathfrak{B}$ is an (unnormalised) state of two qubits, the separability condition is equivalent to
the positivity of the partial transpose~\cite{96Horodecki1}. We will denote the partial transpose of the operator $\mathfrak{B}$ on the second subsystem $\textsf{H}$ by $\mathfrak{B}^{T_2}$. Thus we may rephrase the optimization
problem~(\ref{eq:EBTPoptimisation}) in the form
\begin{equation}\label{eq:prob_compact}
  \begin{array}{rl}
  \text{maximize}&\tr\left(R\mathfrak{B}\right)\\
  \text{subject to}&\mathfrak{B} \geq 0\,,\quad \mathfrak{B}^{{\sf T}_2}\geq 0\\
  &\tr_2 \mathfrak{B} = \openone_2.
  \end{array}
\end{equation}
Note that the condition of positivity of the partial transpose guarantees that $\mathfrak{B}$ corresponds to an entanglement breaking map.

The new problem has the same symmetries as the full optimization (\ref{eq:qprob_compact}) with
one addition. Notice that $R^{{\sf T}_2}=R$ so the objective function of both problems is invariant under partial transpose. In our new problem the point $\mathfrak{B}^{{\sf T}_2}$ is feasible if $\mathfrak{B}$ is feasible, so the feasible set is also invariant under the partial transpose. [Note that since partial transpose does not preserve positivity this is not true of the problem~\eqref{eq:qprob_compact}]. Because of this symmetry we may restrict our attention to $\mathfrak{B}_{\rm inv}$ for which $\mathfrak{B}^{{\sf T}_2}_{\rm inv}=\mathfrak{B}_{\rm inv}$. Since the partial transpose
sends $A\otimes Y \rightarrow -A\otimes Y$ where $A$ is any Hermitian matrix, we can conclude that $\tr[(A \otimes Y)\mathfrak{B}_{\rm inv}]=0 $. It is sufficient to
check this condition for the full set of Pauli matrices
$\openone_2,X,Y,Z$ so the requirement of invariance under the partial transpose constitutes four new constraints. Notice however that the condition $\mathfrak{B}^{{\sf T}_2}_{\rm inv}\geq 0$ is now redundant since we are requiring that $\mathfrak{B}^{{\sf T}_2}_{\rm inv}=\mathfrak{B}_{\rm inv}$. So we can
replace the problem~(\ref{eq:prob_compact}) with
\begin{equation}\label{eq:prob_compact2}
  \begin{array}{rl}
  \text{maximize}&\tr\left[R\mathfrak{B}\right]\\
  \text{subject to}&\mathfrak{B} \geq 0\\
  &\tr_2 \mathfrak{B} = \openone_2 \\
& \tr[(A \otimes Y)\mathfrak{B}]=0\quad \forall A \in \{I,X,Y,Z\}.
  \end{array}
\end{equation}
Positivity of the partial transpose and hence the separability of $\mathfrak{B}$ is now guaranteed by the positivity of $\mathfrak{B}$ and the additional equality constraints.

The dual of the problem~(\ref{eq:prob_compact2}) is
\begin{equation}
  \begin{array}{rl}\label{eq:ineq_form_class}
  \text{minimize}& \tr M \\
  \text{subject to}& M\otimes \openone_2 + N\otimes Y - R \geq 0
  \end{array}
\end{equation}
This semidefinite program still has symmetries corresponding to the rotation $X\otimes X$ and the transpose (but not under the partial transpose.) These two symmetries lead to the transformations $N\rightarrow - XNX$ and $N\rightarrow -N^T$ respectively. The only invariant choices of $N$ are proportional to $Y$. As before we may restrict
attention to $M_{\rm inv}= a_0 \openone_2+a_x X$ that are invariant under the
action of the group and $N_{\rm inv}=a_y Y$. This gives a simpler dual optimization
\begin{equation}\label{eq:cdual}
  \begin{array}{rl}
  \text{minimize}&2 a_0\\
  \text{subject to}&a_0 \openone_4+a_x
  X\otimes \openone_2 +a_y Y\otimes Y - R \geq
  0 \,,
  \end{array}
\end{equation}
where $a_0,a_x$ and $a_y$ are the new variables. This problem should
 be compared to the analogous dual optimization in the quantum case~(\ref{eq:qdual}).
 Again, this
problem can be solved analytically, yielding the solution
\begin{align}
  a_0 &=\frac{1}{4}+\frac{1}{4}\sqrt{\cos^2{\theta}+\sin^{4}{\theta}}\,,\\
  a_x &=\frac{r_x}{4}+\frac{r_x}{4}
  \frac{\cos^2{\theta}}{\sqrt{\cos^2{\theta}+\sin^{4}{\theta}}}\,,\\
  a_y &=-\frac{r_x}{4}
  \frac{\cos{\theta}\sin^2{\theta}}{\sqrt{\cos^2{\theta}+\sin^{4}{\theta}}}\,.
\end{align}
Again, one can check that $a_0 \openone_4+a_x X\otimes \openone_2+a_y Y\otimes Y - R$ is positive semidefinite with these choices, which ensures that the objective function $2 a_0$ is indeed a dual feasible value. The proof
of optimality follows as before in the quantum case by: (i) observing that $2 a_0$ reproduces the fidelity $\aver{\F^{\rm DDR2}}$ of Eq.~\eqref{eq:fcl2} and (ii) applying the weak duality argument.

We note that the optimization techniques presented here may be
useful when applied to more general problems presented in Fuchs and Sasaki~\cite{03Fuchs377}.  However, when the map in question does not act on qubits, there are significant complications in characterizing the EBTP maps because the PPT condition is no longer sufficient.

\chapter{Appendices to Chapter 6}\label{app:ch6}

Except for Appendix~\ref{app:optSSmap}, all the appendices presented here appear in Ref.~\cite{08Mendonca012319}.

\section{Perfect Tracking Conditions}\label{app:ptc}

A theorem closely related to the aims of this paper has been proved by Alberti and Uhlmann \cite{80Alberti163}, consisting of a mathematical criterion for the existence of physical operations perfectly transforming between pairs of qubit states. In this appendix we briefly review this theorem and prove an important corollary that is used in a number of places in this paper (e.g., sections \ref{sec:indicator} and \ref{sec:ex3}).
\begin{theorem}[Alberti and Uhlmann]\label{teo1}
Let $\rho_1$, $\rho_2$, $\overline{\rho}_1$, $\overline{\rho}_2$ be $2 \times 2$ density matrices. Then there exists a CPTP map $\mathcal{A}$ such that
\begin{equation}
\overline{\rho}_1=\mathcal{A}(\rho_1)\qquad\mbox{and}\qquad\overline{\rho}_2=\mathcal{A}(\rho_2)\,,
\end{equation}
if and only if
\begin{equation}\label{eq:alberti}
\|\overline{\rho}_1-t\overline{\rho}_2\|_{\rm tr}\leq\|\rho_1-t\rho_2\|_{\rm tr}\quad\mbox{for all }t\in \mathbb{R}^+\,,
\end{equation}
where $\|\cdot\|_{\rm tr}$ denotes the trace norm. For higher dimensional density matrices, the above condition is necessary but not sufficient for the existence of $\mathcal{A}$.
\end{theorem}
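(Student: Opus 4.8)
The plan is to split the biconditional into its two directions, which are of very different character: necessity holds in every dimension and is nearly immediate, whereas sufficiency is the substantive Alberti--Uhlmann content and is special to the qubit case.

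For necessity, I would invoke contractivity of the trace norm under CPTP maps (the extension to arbitrary Hermitian arguments of Ruskai's contractivity of the trace distance, Sec.~\ref{sec:evaluatemetrics}). If a map $\mathcal{A}$ with $\mathcal{A}(\rho_i)=\overline{\rho}_i$ exists, then by linearity $\mathcal{A}(\rho_1-t\rho_2)=\overline{\rho}_1-t\overline{\rho}_2$ for every $t$, and contractivity applied to the Hermitian argument $\rho_1-t\rho_2$ yields $\|\overline{\rho}_1-t\overline{\rho}_2\|_{\rm tr}\le\|\rho_1-t\rho_2\|_{\rm tr}$. Since this argument never uses the dimension, it simultaneously establishes the necessity clause of the higher-dimensional statement.

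For sufficiency (qubits), I would pass to the Bloch representation $\rho_i=\tfrac12(\openone_2+\bm{r}_i\cdot\bm{\sigma})$, and similarly for the targets. A direct eigenvalue computation shows $\|\rho_1-t\rho_2\|_{\rm tr}=\max\big(|1-t|,\,\|\bm{r}_1-t\bm{r}_2\|\big)$, so condition~(\ref{eq:alberti}) reduces to a purely geometric inequality between the four Bloch vectors for all $t>0$. Exploiting that both sides of~(\ref{eq:alberti}) are invariant under independent unitaries on the source and target pairs (such unitaries can be absorbed into the $U$ and $V$ factors of the sought map), I would first bring the two pairs to a canonical configuration. The core step is then to construct an admissible affine Bloch map --- an element of the set parametrized by the King--Ruskai / Ruskai--Szarek--Werner inequalities reviewed in Sec.~\ref{sec:CPTPonM2} --- carrying $\bm{r}_i$ onto $\overline{\bm{r}}_i$, and to verify its complete positivity by checking those inequalities hold precisely when the reduced geometric condition does.

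The hard part will be this last construction: showing that the \emph{scalar} one-parameter family of trace-norm inequalities is enough to guarantee feasibility of the \emph{full} matrix problem. I expect this is cleanest to organize as an SDP feasibility argument --- the existence of $\mathcal{A}$ is feasibility of $\mathfrak{C}\ge 0$, $\tr_2\mathfrak{C}=\openone_2$ subject to the linear interpolation constraints $\tr_1[(\rho_i^{\sf T}\otimes\openone_2)\mathfrak{C}]=\overline{\rho}_i$ --- so that infeasibility yields a dual certificate, and the remaining work is to convert any such certificate into a violation of~(\ref{eq:alberti}) at some $t>0$. It is exactly here that the argument uses $d=2$: the dual certificate closes up into the scalar family only for qubits. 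Finally, to round out the statement I would note that sufficiency fails for $d\ge 3$ because the one-parameter family $\rho_1-t\rho_2$ probes too little of the state geometry, citing that explicit counterexamples are known; the necessity half, proved above dimension-free, is all that survives.
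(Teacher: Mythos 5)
The first thing to note is that the paper itself does not prove this theorem: it is stated in Appendix~\ref{app:ptc} as a review of a known result, with a citation to Alberti and Uhlmann's original 1980 paper, and only Corollary~\ref{cor2} (the pure-target specialization) is actually proved there. So your attempt is being compared against a citation, not a proof. With that said, your necessity direction is correct and complete: linearity gives $\mathcal{A}(\rho_1-t\rho_2)=\overline{\rho}_1-t\overline{\rho}_2$, and contractivity of the trace norm under trace-preserving (completely) positive maps on Hermitian arguments yields Eq.~(\ref{eq:alberti}); this argument is indeed dimension-free and settles the ``necessary'' clause in all dimensions. Your qubit trace-norm formula $\|\rho_1-t\rho_2\|_{\rm tr}=\max\bigl(|1-t|,\|\bm{r}_1-t\bm{r}_2\|\bigr)$ is also correct, and is implicitly the same computation as Eqs.~(\ref{eq:lhs})--(\ref{eq:rhs}) in the paper's proof of the corollary.

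The genuine gap is in the sufficiency direction, which is the entire substantive content of the theorem, and your proposal never actually proves it. You state two possible routes --- (i) construct an affine Bloch map and verify the Ruskai--Szarek--Werner inequalities ``precisely when the reduced geometric condition does,'' and (ii) set up the interpolation problem as an SDP feasibility question and convert any infeasibility certificate into a violation of Eq.~(\ref{eq:alberti}) at some $t>0$ --- but both are restatements of what must be shown, not arguments. In route (i), the claim that the one-parameter scalar family of inequalities implies satisfiability of the full set of matrix inequalities is exactly the theorem; nothing in your sketch explains why it holds. In route (ii), the dual certificate of infeasibility is a collection of Hermitian matrices $(Y_1,Y_2,\Lambda)$ satisfying an operator inequality, and the step of ``closing it up'' into a single scalar $t$ is precisely where all the work lies --- and where it must fail for $d\geq 3$, since the theorem is false there. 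You acknowledge this (``The hard part will be this last construction'') but then defer it entirely, so no mechanism specific to $d=2$ is ever exhibited. As it stands, the proposal is a correct proof of necessity plus a plausible but unexecuted plan for sufficiency; to be a proof of the theorem it would need either the explicit RSW-inequality verification in a canonical Bloch configuration, or the explicit dual-certificate-to-scalar reduction, neither of which is supplied.
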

As pointed out by Chefles, Jozsa and Winter \cite{04chefles11},
the condition (\ref{eq:alberti}) is equivalent
to the requirement that the target states are no more distinguishable than the source states by minimum error probability discrimination (Helstrom \cite{76Helstrom}), for any prior probabilities. In the particular case where $\overline{\rho}_1$ and $\overline{\rho}_2$ are pure states, this just means that the Bloch angle between $\overline{\rho}_1$ and $\overline{\rho}_2$ is smaller than the angle between $\rho_1$ and $\rho_2$. This is proved in the following.
\begin{corollary}\label{cor2}
Let $\overline{\rho}_1$ and $\overline{\rho}_2$ be any two pure distinct qubit states separated by an angle $\overline{\Theta}\in(0,\pi]$ in the Bloch representation. Let $\rho_1$ and $\rho_2$ be any (mixed or pure) qubit states separated by $\Theta\in(0,\pi]$. A CPTP map $\mathcal{A}$ such that
\begin{equation}
\overline{\rho}_1=\mathcal{A}(\rho_1)\qquad\mbox{and}\qquad\overline{\rho}_2=\mathcal{A}(\rho_2)
\end{equation}
exists if and only if $\rho_1$ and $\rho_2$ are also pure and $\overline{\Theta}\leq\Theta$.
\end{corollary}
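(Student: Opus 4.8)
The plan is to reduce the corollary to the Alberti--Uhlmann criterion of Theorem~\ref{teo1} by evaluating the relevant trace norms explicitly in the Bloch representation. Writing a qubit state as $\rho=\tfrac12(\openone_2+\bm{r}\cdot\bm{\sigma})$, the matrix $\rho_1-t\rho_2$ is Hermitian with eigenvalues $\tfrac{1-t}{2}\pm\tfrac12\|\bm{r}_1-t\bm{r}_2\|$, so a one-line computation using $|a+b|+|a-b|=2\max(|a|,|b|)$ gives
\begin{equation}
\|\rho_1-t\rho_2\|_{\rm tr}=\max\left(|1-t|,\,\|\bm{r}_1-t\bm{r}_2\|\right)\,,
\end{equation}
and likewise for the targets with Bloch vectors $\overline{\bm{r}}_1,\overline{\bm{r}}_2$ and $R_i\defeq\|\bm{r}_i\|$. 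First I would record that, since the targets are pure ($\|\overline{\bm{r}}_i\|=1$), the identity $\|\overline{\bm{r}}_1-t\overline{\bm{r}}_2\|^2-(1-t)^2=2t(1-\cos\overline{\Theta})$ shows the Bloch term strictly dominates for every $t>0$ (as $\overline{\Theta}\in(0,\pi]$), so that $\|\overline{\rho}_1-t\overline{\rho}_2\|_{\rm tr}=\sqrt{1-2t\cos\overline{\Theta}+t^2}$.

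For the ``if'' direction I would assume $R_1=R_2=1$ and $\overline{\Theta}\le\Theta$. Purity of the sources makes their Bloch term dominate as well, so the Alberti--Uhlmann inequality collapses to $\sqrt{1-2t\cos\overline{\Theta}+t^2}\le\sqrt{1-2t\cos\Theta+t^2}$, which holds for all $t>0$ precisely because $\cos\overline{\Theta}\ge\cos\Theta$; invoking Theorem~\ref{teo1} then produces the map $\mathcal{A}$.

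The substantive direction is the ``only if'' part, and extracting purity of the sources is where I expect the real work to lie. Assuming $\mathcal{A}$ exists, Theorem~\ref{teo1} gives $\sqrt{1-2t\cos\overline{\Theta}+t^2}\le\max(|1-t|,\|\bm{r}_1-t\bm{r}_2\|)$ for all $t>0$. The key observation is the large-$t$ asymptotics: if $R_2<1$ then $\|\bm{r}_1-t\bm{r}_2\|^2-(1-t)^2=(R_2^2-1)t^2+O(t)$ eventually turns negative, so the right-hand side equals $t-1$ for large $t$, and squaring forces $\cos\overline{\Theta}\ge1$, contradicting $\overline{\Theta}>0$; hence $R_2=1$. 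To obtain $R_1=1$ I would exploit the invariance of the criterion under $t\mapsto1/t$: multiplying the Alberti--Uhlmann inequality by $t^{-1}$ and relabelling recasts it as the same statement with the two states interchanged, so the identical argument yields $R_1=1$. With both sources pure, evaluating the inequality at a single value (say $t=1$, where it reads $2\sin(\overline{\Theta}/2)\le2\sin(\Theta/2)$) gives $\overline{\Theta}\le\Theta$. The only subtleties to watch are the case distinction inside the $\max$ and the strictness of the inequalities near the endpoints $\{0,\pi\}$, which the excluded range $(0,\pi]$ together with the standing assumption $\rho_1\ne\rho_2$ keeps under control.
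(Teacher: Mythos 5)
Your proof is correct and follows essentially the same route as the paper's: reduce to the Alberti--Uhlmann criterion of Theorem~\ref{teo1}, evaluate both trace norms in the Bloch representation, force purity of the sources from the behaviour of the resulting inequality over $t\in\mathbb{R}^+$, and then read off $\overline{\Theta}\leq\Theta$. The only cosmetic differences are that you write the right-hand side as a $\max$ and obtain $R_1=1$ via the $t\mapsto 1/t$ symmetry, whereas the paper handles the absolute value case-by-case and extracts $R_2=1$ from the strict convexity of the quadratic $F(t)$ and $R_1=1$ from the positivity of $G(t)$ near $t=0$ --- the same asymptotic facts in different clothing.
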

\begin{proof}
First note that the inequality (\ref{eq:alberti}) can be equivalently written with both sides
squared. Also, since $\overline{\rho}_1-t\overline{\rho}_2$ and $\rho_1-t\rho_2$ are hermitian matrices, their trace
norm can be computed as the sum of their eigenvalues. In terms of the Bloch parameters, a straightforward computation gives
\begin{equation}
\|\overline{\rho}_1-t\overline{\rho}_2\|^2=4(1+t^2-2t\cos{\overline{\Theta}})\,, \label{eq:lhs}\\
\end{equation}
where we have made use of the fact that $t\in\mathbb{R}^+$, and
\begin{align}
\|\rho_1-t\rho_2\|^2=&\;
2\left[(1-t)^2+(R_1^2+t^2R_2^2-2 t R_1 R_2\cos{\Theta})\right]\nonumber\\
&+2\left|(1-t)^2-(R_1^2+t^2 R_2^2-2 t R_1 R_2\cos{\Theta})\right|\,, \label{eq:rhs}
\end{align}
where $R_i$ gives the magnitude of the Bloch vector for $\rho_i$, $i=1,2$.\par

Now assume that the absolute value on the right hand side of Eq. (\ref{eq:rhs}) can be removed, then the
inequality (\ref{eq:alberti}) takes the form
\begin{equation}\label{eq:ineq_impossible}
1+t^2-2t\cos{\overline{\Theta}}\leq(1-t)^2\,,
\end{equation}
which for all $t\in\mathbb{R^+}$ is satisfied if and only if $\cos{\overline{\Theta}}=1$. However, as the (pure) target states are required to be distinct, we must have $\cos{\overline{\Theta}}<1$. As a result, the inequality (\ref{eq:ineq_impossible}) is never satisfied.\par

Assume then the complementary case (when the absolute value of Eq. (\ref{eq:rhs}) is removed
at the cost of a change of sign). Then (\ref{eq:alberti}) can be written as $F(t)\leq 0$ with
\begin{equation}
F(t)=(1-R_2^2)t^2-2t(\cos{\overline{\Theta}}-R_1 R_2\cos{\Theta})+(1-R_1^2)\,.
\end{equation}
If $R_2 \neq 1$, $F(t)$ is a strictly convex function of $t$, therefore cannot be bounded from above by $0$ for all $t\in\mathbb{R^+}$, so it is necessary that $R_2=1$ ($\rho_2$ must be pure). Then, define $G(t)=\left.F(t)\right|_{R_2=1}$, explicitly
\begin{equation}
G(t)=-2t(\cos{\overline{\Theta}}-R_1\cos{\Theta})+(1-R_1^2)\,,
\end{equation}
and require $G(t)\leq 0$.\par
If $R_1\neq 1$, $G(t)$ is a linear function of $t$ with strictly positive linear coefficient. Again, such a function cannot be bounded from above by $0$ for all $t\in\mathbb{R^+}$, so it is necessary to make $R_1=1$ ($\rho_1$ must be pure). Finally, define $H(t)=\left.G(t)\right|_{R_1=1}$, i.e.,
\begin{equation}
H(t)=-2t(\cos{\overline{\Theta}}-\cos{\Theta})\,,
\end{equation}
and require $H(t)\leq 0$. Clearly, this inequality is satisfied for all $t\in\mathbb{R^+}$ if and only if $\cos{\overline{\Theta}}\geq\cos{\Theta}$, or equivalently, $\overline{\Theta}\leq\Theta$.
\end{proof}

\section{Technical details}

\subsection{Properties of $S$ and $T$}\label{app:ST}
Here, we prove that $S>0$ and $S+T>0$ if and only if one of the following holds
 \begin{enumerate}
\item[i)] $\{\overline{\bm{R}}_1,\overline{\bm{R}}_2\}$ is linearly independent; or
\item[ii)] $\{\overline{\bm{R}}_1,\overline{\bm{R}}_2\}$ is linearly dependent with $T>0$.
 \end{enumerate}
 Moreover, we show that the complementary case
\begin{enumerate}
\item[iii)] $\{\overline{\bm{R}}_1,\overline{\bm{R}}_2\}$ is linearly dependent with $T\leq 0$,
\end{enumerate}
occurs only if $\Omega=0$.\par
 This result is useful to demonstrate that the coefficients $\mu_1$, $\mu_2$, $\mu_3$ and $s_1$ defined in Eq. (\ref{eq:nontrivial}) for $\Omega>0$ (procedure A) are always (a) well-defined, (b) real (c) within the range $[0,1]$. We start with the following lemma

\begin{lemma}\label{lemma1}
Let $\bm{R}_1$, $\bm{R}_2$ be real three dimensional vectors such that $R_i\leq 1$ $(i=1,2)$. Define $\bm{R}_-\mathrel{\mathop:}=\bm{R}_1-\bm{R}_2$. If $R_-\neq0$ (i.e., $\bm{R}_1$, $\bm{R}_2$ are distinct), then $R_-^2>R_\times^2$.
\end{lemma}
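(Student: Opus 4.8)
The plan is to reduce the vector inequality $R_-^2 > R_\times^2$ to a one-variable estimate. First I would set $a \defeq \|\bm{R}_1\|$, $b \defeq \|\bm{R}_2\|$ and let $\phi \in [0,\pi]$ be the angle between $\bm{R}_1$ and $\bm{R}_2$, so that $R_-^2 = a^2 + b^2 - 2ab\cos\phi$ and $R_\times^2 = a^2 b^2 \sin^2\phi$. The degenerate case in which one vector vanishes (say $b=0$) can be disposed of immediately: then $R_\times = 0$ while $R_-^2 = a^2 > 0$, the latter being strict precisely because $R_- \neq 0$. So I would assume $a,b>0$ from then on.

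The heart of the argument is to view the difference as a quadratic in $u \defeq \cos\phi \in [-1,1]$. Writing $\sin^2\phi = 1-u^2$ gives
\begin{equation}
g(u) \defeq R_-^2 - R_\times^2 = a^2 b^2 u^2 - 2ab\,u + \left(a^2 + b^2 - a^2 b^2\right)\,,
\end{equation}
an upward-opening parabola with vertex at $u^\ast = 1/(ab)$. This is exactly where the hypothesis $R_i \leq 1$ enters: since $ab \leq 1$ we have $u^\ast \geq 1$, so the vertex sits at or to the right of the interval $[-1,1]$. Consequently $g$ is strictly decreasing on $[-1,1)$ and attains its minimum over $[-1,1]$ at the right endpoint $u=1$, where a direct computation gives $g(1) = (a-b)^2 \geq 0$.

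Finally I would upgrade this to the required strict inequality. If $u<1$ (the vectors are not positively parallel), then $g(u) > g(1) = (a-b)^2 \geq 0$, hence $g(u)>0$ regardless of whether $a=b$. If $u=1$ (the vectors are positively parallel), then $g(1)=(a-b)^2$, and here the assumption $R_-\neq 0$ forces $\bm{R}_1 \neq \bm{R}_2$, hence $a\neq b$, so again $g(1)>0$. Thus $R_-^2 > R_\times^2$ in all cases. I do not expect any genuine obstacle here; the only point demanding care is the bookkeeping of strictness and the edge cases $a=b$, $u=1$, and $ab=1$ (forcing $a=b=1$), all of which collapse to the excluded equality $\bm{R}_1=\bm{R}_2$.
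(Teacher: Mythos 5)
Your proof is correct, but it takes a genuinely different route from the paper's. The paper argues geometrically: it views $\bm{R}_1$, $\bm{R}_2$, $\bm{R}_-$ as the sides of a triangle, notes that $R_\times = h R_-$ where $h$ is the altitude onto the side of length $R_-$, and then bounds $h \leq \min(R_1,R_2) \leq 1$ via the Pythagorean theorem; strictness is obtained by a saturation analysis showing that equality in both bounds would force $R_1=R_2=1$ together with $R_-^2=|R_1^2-R_2^2|=0$, contradicting $R_-\neq 0$. You instead work purely algebraically, writing $g(u)=R_-^2-R_\times^2 = a^2b^2u^2-2ab\,u+(a^2+b^2-a^2b^2)$ as an upward parabola in $u=\cos\phi$ and using $ab\leq 1$ to place the vertex $u^\ast = 1/(ab)$ at or beyond $u=1$, so the minimum over $[-1,1]$ is $g(1)=(a-b)^2$; strictness then splits cleanly into the cases $u<1$ (monotonicity gives $g(u)>g(1)\geq 0$) and $u=1$ (positive parallelism plus $R_-\neq 0$ forces $a\neq b$). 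Your treatment of the degenerate case $b=0$ is also a point the paper glosses over (its altitude argument silently presumes a nondegenerate triangle). What the paper's approach buys is brevity and geometric transparency — the inequality chain is two lines once the figure is drawn. What yours buys is self-containedness: no figure, no appeal to the area interpretation of the cross product, and the case analysis for strictness is mechanical rather than requiring one to identify exactly when two separate inequalities saturate simultaneously. Both hinge on the same essential use of the hypothesis $R_i\leq 1$ (in your case via $ab\leq 1$, in the paper's via $\min(R_1,R_2)\leq 1$), so they are the same estimate in different clothing, but the organization of the argument is distinct.
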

\begin{proof}
Consider the triangle defined by the vectors $\bm{R}_1$, $\bm{R}_2$ and $\bm{R}_-$ as shown in Fig. \ref{fig:triangle}.
\begin{figure}
\centering
\includegraphics[width=5cm]{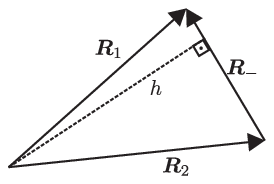}
\caption{Schematic for proof that $R_- > R_\times$}\label{fig:triangle}
\end{figure}
The magnitude of $\bm{R}_\times$ gives $2$ times the area of this triangle so that
\begin{equation}
R_\times^2=h^2R_-^2\,,
\end{equation}
where $h$ is the altitude relative to the side of length $R_-$. We write the following
\begin{equation}
R_\times^2=h^2R_-^2\leq \min{\left(R_1,R_2\right)}^2 R_-^2\leq R_-^2\,.\label{eq:ineqs}
\end{equation}

The first inequality is a direct consequence of the Pythagorean theorem, and the second follows from $R_i\leq 1$. This establishes that $R_-^2 \geq R_\times^2$. This inequality is trivially saturated if $R_-=0$. To see that this is the only case where saturation occurs, assume $R_-\neq 0$ and require saturation of both inequalities in Eq. (\ref{eq:ineqs}). The first inequality is saturated iff $R_-^2=|R_1^2-R_2^2|$ (by the Pythagorean theorem), and the second one iff $R_1=R_2=1$. Taken together, these conditions imply $R_-=0$, which contradicts the hypothesis. Therefore, if $R_-\neq 0$ (i.e., $\bm{R}_1\neq\bm{R}_2$), then  $R_-^2 > R_\times^2$.
\end{proof}
Now, recall that
\begin{align}
S = \sqrt{T^2+4\overline{R}_\times^2(R_-^2-R_\times^2)}\,.
\end{align}
Assume first linear independence of $\{\overline{\bm{R}}_1,\overline{\bm{R}}_2\}$ (i.e., $\overline{R}_\times\neq 0$). From Lemma \ref{lemma1}, it is immediate that $S>0$. Moreover,
\begin{align}
S+T = T+\sqrt{T^2+4\overline{R}_\times^2(R_-^2-R_\times^2)}>T+|T|\geq 0\,,
\end{align}
where the first inequality follows from Lemma \ref{lemma1} and the second is trivial. Therefore, $S>0$ and $S+T>0$ if condition (i) holds.\par

 For linearly dependent $\{\overline{\bm{R}}_1,\overline{\bm{R}}_2\}$, it is easy to see that $S=|T|$ and $S+T=|T|+T$, therefore $S>0$ and $S+T>0$ if condition (ii) holds.\par

  To prove the \emph{only if} part, consider the complementary case (iii). It is immediate that $S+T=0$ if $\{\overline{\bm{R}}_1,\overline{\bm{R}}_2\}$ are linearly dependent and $T \leq 0$, hence (i) and (ii) are the only situations where the premise holds.\par
  It follows trivially from the discussion above that $\Omega=0$ for condition (iii). Simply note that $S+T=0$ and the linear dependence of the targets Bloch vectors requires $2R_\times \overline{R}_\times=0$.

\subsection{Well-definedness of the dual feasible point}\label{app:welldef}

The proposed values for the coefficients $\mathfrak{x}_1$ and $\mathfrak{x}_3$ defined in Eqs. (\ref{eq:xi3p}), (\ref{eq:xi1m}) and (\ref{eq:xi3m}) have the quantities  $\Gamma_a$ and $\Gamma_b$ appearing in the denominator. In this appendix we show that this does not lead to any singularity as long as the indicated range of $\Omega$ is considered.

 To see that, note that $\Gamma_a=0$ if and only if $\overline{R}_+=\overline{R}_\times=0$. This, in turn, is equivalent to the statement that the targets have Bloch vectors of same magnitude $\overline{R}$ pointing to opposite directions, which used in Eq. (\ref{eq:T}) gives $T=-R_-^2\overline{R}^2$. In these circumstances, $\Omega$ can be easily computed to be $\Omega=T+|T|=0$. Therefore, no singularity can occur in Eq. (\ref{eq:xi3p}) in the range $\Omega>0$.\par

 Similarly, $\Gamma_b=0$ if and only if $\overline{R}_+^2=T-2R_\times\overline{R}_\times$, in which case we can write $\Omega=S+\overline{R}_+^2$. In the sequence we show that $S+\overline{R}_+^2>0$, thus no singularity can occur in Eqs. (\ref{eq:xi1m}) and (\ref{eq:xi3m}) in the range $\Omega\leq 0$.\par

  From the definition of $S$ in Eq. (\ref{eq:S}), it is immediate that the inequality $S+\overline{R}_+^2\geq 0$ holds, so we just need to show that $S+\overline{R}_+^2\neq 0$. Suppose, on the contrary, that $S=-\overline{R}_+^2$, which is possible only if $S=\overline{R}_+=0$. From Eq. (\ref{eq:S}), this can be seen to be equivalent to $T=\overline{R}_\times=\overline{R}_+=0$. To see that this leads to a contradiction, use once again the fact that $\overline{R}_\times=\overline{R}_+=0$ implies opposing target Bloch vectors of same magnitude $\overline{R}$, which gives $T=-R_-^2\overline{R}^2\neq 0$. The inequality follows from the conditions of the problem: the source states cannot be identical ($R_-\neq 0$), and the case where the two targets are identical to the maximally mixed states has been excluded from the analysis ($\overline{R}\neq 0$).

\subsection{Characteristic Polynomials for $F$}\label{app:charpoly}

In this appendix we compute the characteristic polynomials of the matrix $F$, Eq. (\ref{eq:lmi}), with the set of coefficients given in Eq. (\ref{eq:coefomegap}) (for $\Omega > 0$, procedure A) and Eq. (\ref{eq:coefomegam}) (for $\Omega \leq 0$, procedure B). By studying these polynomials, we show that $F\geq 0$, thus completing the proof of the optimality of our tracking strategy.

\subsubsection{Procedure A}
For the set of coefficients (\ref{eq:coefomegap}) (case $\Omega>0$), the characteristic equation for $F$ factorizes as
$\lambda^2 P_2(\lambda)=0$, where
\begin{equation}
P_2(\lambda)=\lambda^2-\Gamma_a\lambda+\upsilon\left[(R_-^2-R_\times^2)\Gamma_a^4-\Xi^2\right]\,,\label{eq:nont_poly}
\end{equation}
and
\begin{equation}
\upsilon=\frac{4 R_-^2 \overline{R}_\times^2+(S+T)^2}{8 R_-^2 \Gamma_a^2 S(S+T)}\,.
\end{equation}
Since both $\Gamma_a$ and $\upsilon$ are positive, the eigenvalues of $F$ are non-negative if the term in square brackets in the Eq. (\ref{eq:nont_poly}) is non-negative when $\Omega > 0$. We now show that this term is non-negative irrespective of the sign of $\Omega$.\par

First use Eq. (\ref{eq:xixi}) to substitute for $\Xi^2$, after some manipulation we find that
\begin{equation}\label{eq:computstep1}
(R_-^2-R_\times^2)\Gamma_a^4-\Xi^2=
R_-^2\left[a\left(\overline{R}_+^2+\frac{R_-^2\overline{R}_\times^2}{S+T}\right)+ R_-^2\overline{R}_\times^2+\overline{R}_+^2T\right]\,,
\end{equation}
where we have defined
\begin{equation}
a\mathrel{\mathop:}=\frac{4\overline{R}_\times^2}{S+T} \left(R_-^2-R_\times^2\right)=S-T\,,
\end{equation}
with the second equality following from Eq. (\ref{eq:S}). Note that the non-negativity of $(R_-^2-R_\times^2)\Gamma_a^4-\Xi^2$ cannot be immediately concluded from Eq. (\ref{eq:computstep1}) --- although the first and second summands are non-negative, the term $\overline{R}_+^2 T$ does admit negative values. However, using $a=S-T$ in Eq. (\ref{eq:computstep1}), after some rearrangement we get,
\begin{equation}
(R_-^2-R_\times^2)\Gamma_a^4-\Xi^2=
R_-^2\left[S\overline{R}_+^2+R_-^2\overline{R}_\times^2\left(1+\frac{S-T}{S+T}\right)\right]\geq 0\,,
\end{equation}
from which the fulfillment of the inequality is obvious. In conclusion, procedure A is optimal.\par

\subsubsection{Procedure B}
For the set of coefficients (\ref{eq:coefomegam}) (case $\Omega\leq 0$), the characteristic equation for $F$ is $\lambda P_3(\lambda)=0$, where
\begin{equation}
P_3(\lambda)=\lambda^3-\Gamma_b\lambda^2+\varpi\lambda+\omega\,,\label{eq:t_poly}
\end{equation}
and
\begin{align}
\varpi&=\tfrac{1}{4 R_-^4 \Gamma_b^2}\left[\left(1+\tfrac{R_\times \xi}{R_-^2\Gamma_b^2}\right)R_-^4\Gamma_b^4-(R_-^2+R_\times^2)(\xi^2+\Xi^2)\right],\label{eq:varpi}\\
\omega&=-\frac{(R_\times\Gamma_b^2-\xi)\left[R_-^2\Gamma_b^2\xi-R_\times(\xi^2+\Xi^2)\right]}{8 R_-^4\Gamma_b^3}\label{eq:omega}\,.
\end{align}
from which it follows that the eigenvalues of $F$ are non negative if $\varpi \geq 0$ \emph{and} $\omega \leq 0$ \emph{when} $\Omega \leq 0$. Next, we simplify Eqs. (\ref{eq:varpi}) and (\ref{eq:omega}) in order to make it clear that these conditions are satisfied.\par

It is just a matter of applying Eqs. (\ref{eq:xi}) and (\ref{eq:xixi}) to Eq. (\ref{eq:varpi}) to show that
\begin{equation}
\varpi=\tfrac{1}{4}\left(-\Omega+S+R_\times\overline{R}_\times\right)\geq 0\,,
\end{equation}
from which the inequality is clearly seen to hold if $\Omega \leq 0$.\par

To prove that $\omega\leq 0$ if $\Omega \leq 0$, consider first the term in the square brackets in Eq. (\ref{eq:omega}). Again, employing Eqs. (\ref{eq:xi}) and (\ref{eq:xixi}) this can be simplified to   $R_-^4\overline{R}_\times\Gamma_b^2$ which is obviously non-negative. Therefore, the validity of the inequality $\omega \leq 0$ if $\Omega\leq 0$ is now solely conditioned on the validity of the inequality
\begin{equation}\label{eq:ineqprocB}
R_\times\Gamma_b^2-\xi \geq 0\quad\mbox{for}\quad \Omega \leq 0\,.
\end{equation}
To see that this is so, first note that the only way to satisfy the conditions $\overline{R}_\times=0$ and $\Omega\leq 0$ is to have $S=T=\Omega=0$, which implies that (\ref{eq:ineqprocB}) is satisfied with saturation. Consider then the complementary case $\overline{R}_\times\neq 0$ and $\Omega \leq 0$. Using Eq. (\ref{eq:xi}) for $\xi$, and multiplying and dividing by $4\overline{R}_\times$, we get
\begin{equation}\label{eq:computstep2}
R_\times\Gamma_b^2-\xi=\frac{1}{4\overline{R}_\times}\left[-4 R_\times\overline{R}_\times(\Omega-S)-4\overline{R}_\times^2R_-^2\right]\,.
\end{equation}

Now, from Eq. (\ref{eq:S}), we know that $4\overline{R}_\times^2R_-^2=S^2-T^2+4R_\times^2\overline{R}_\times^2$, which used in Eq. (\ref{eq:computstep2}) gives, after some algebra,
 \begin{equation}
R_\times\Gamma_b^2-\xi=-\frac{\Omega}{4\overline{R}_\times}\left(S-T+2R_\times\overline{R}_\times\right)\geq 0\,.
 \end{equation}
Once again, the inequality is obviously true if $\Omega \leq 0$, thus establishing the optimality of procedure B.

\section{Choi matrix for optimal single-step tracking}\label{app:optSSmap}

In Sec~\ref{sec:Choimatrep}, we have seen that any quantum channel $\mathcal{C}$ on a single qubit can be represented with a Choi matrix $\mathfrak{C}$ of the form
\begin{equation}\label{eq:choigenapp}
2 \mathfrak{C} = \openone_{\rm 4} + \sum_{k=1}^{3}{s_k \openone_{\rm 2}\otimes\left(\bm{u}_k\cdot\bm{\sigma}\right)}+\sum_{k=1}^3 \mu_k \left(\bm{v}_k\cdot\bm{\sigma}^{\sf T}\right)\otimes\left(\bm{u}_k\cdot\bm{\sigma}\right)\,,
\end{equation}
where $\{\bm{v}_k\}_{k=1,2,3}$ and $\{\bm{u}_k\}_{k=1,2,3} \in \mathbb{R}^3$ are orthonormal sets of vectors and $\mu_k$ and $s_k$ are certain scalars within the range $[-1,1]$.

In this appendix, we derive an explicit formula for the vectors $\bm{v}_k$ and $\bm{u}_k$ that specify the Choi matrix of the optimal single-step tracking map introduced in Sec.~\ref{sec:probandstrategy}. Optimal values for the scalar parameters $\mu_k$ and $s_k$ have already been given in Sec.~\ref{sec:probandstrategy}, where we saw that $s_2=s_3=0$, while $s_1$ and $\mu_k$ are conditioned on the value of the indicator function $\Omega$ [cf. Eq.~\eqref{eq:IF}]: $s_1=0$ and $\mu_k=1$ if $\Omega\leq 0$, or otherwise they are given by Eqs.~\eqref{eq:nontrivial}.\par

Following Eqs.~\eqref{eq:V} and~\eqref{eq:U}, we can write
\begin{align}
V\left(\frac{\openone_{\rm 2}}{2}+\frac{\bm{R}_i\cdot\bm{\sigma}}{2}\right)V^\dagger&=\frac{\openone_{\rm 2}}{2}+\frac{R_\times}{2R_-}X+\frac{\left(\bm{R}_i\cdot\bm{R}_-\right)}{2 R_-}Z\,,\label{eq:VV}\\
U\left(\frac{\openone_{\rm 2}}{2}+\frac{\alpha}{2}X+\frac{\beta_i\overline{R}_\times}{2}Z\right)U^\dagger&=\frac{\openone_{\rm 2}}{2}+\frac{1}{2}\left(k_{i1}\overline{\bm{R}}_1+k_{i2}\overline{\bm{R}}_2\right)\cdot\bm{\sigma}\,,\label{eq:UU}
\end{align}
for $i=1,2$ and any source and target Bloch vectors $\bm{R}_i$ (with $\bm{R}_1\neq\bm{R}_2$) and $\overline{\bm{R}}_i$\footnote{Recall, however, that the optimal unitary $U$ is not of the form given in Eq.~\eqref{eq:U} when $\Omega\leq 0$ \emph{and} $\overline{R}_\times=0$. As a consequence, the vector $\bm{u}_k$ derived here is not the optimal one for this particular case.}. In the above, $k_{i1}$ and $k_{i2}$ were given in Eq.~\eqref{eq:kij}, while the symbols $\alpha$ and $\beta_i$ are shorthand notation for
\begin{equation}\label{eq:alphabeta}
\alpha=s_1+\mu_1 \frac{R_\times}{R_-}\quad\mbox{and}\quad \beta_i=\mu_3\frac{\bm{R}_i\cdot\bm{R}_-}{R_-\overline{R}_\times}\,.
\end{equation}
$\alpha$ and $\beta_i$  were more explicitly evaluated in Eqs.~\eqref{eq:alpha_nt}, \eqref{eq:beta_nt} for $\Omega>0$ and in Eqs.~\eqref{eq:alpha_t}, \eqref{eq:beta_t} for $\Omega\leq 0$.

With some vector algebra, Eqs.~\eqref{eq:VV} and~\eqref{eq:UU} can be converted into expressions for  $V^\dagger\sigma_j V$ and $U \sigma_j U^\dagger$ ($j=1,2,3$), which used in Eqs.~\eqref{eq:vk_intro} and~\eqref{eq:uk_intro} yield
 \begin{equation}\label{eq:vs}
 \bm{v}_2=\bm{R}_\times/R_\times\,,\qquad\bm{v}_3=\bm{R}_-/R_-\,,
 \end{equation}
\begin{equation}\label{eq:us}
\bm{u}_2=\overline{\bm{R}}_\times/\overline{R}_\times\quad\mbox{and}\quad
\bm{u}_3=\frac{1}{\Gamma}\left[\frac{\alpha}{\overline{R}_\times}\left(\overline{\bm{R}}_+\times \overline{\bm{R}}_\times\right)+\overline{R}_\times\left(\beta_1\overline{\bm{R}}_1+\beta_2\overline{\bm{R}}_2\right)\right]\,,
\end{equation}
with $\bm{v}_1=\bm{v}_2\times\bm{v}_3$ and $\bm{u}_1=\bm{u}_2\times\bm{u}_3$, as explained in Sec.~\ref{sec:Choimatrep}. In Eq.~\eqref{eq:us}, $\Gamma$ is a normalization factor given in Eq.~\eqref{eq:Gamma}, and repeated below
\begin{equation}\label{eq:gammaapp}
\Gamma=\sqrt{\alpha^2\overline{R}_+^2+\left[\|\beta_1\overline{\bm{R}}_1+\beta_2\overline{\bm{R}}_2\|^2+2\alpha\left(\beta_1-\beta_2\right)\right]\overline{R}_\times^2}\,.
\end{equation}
In Eq.~\eqref{eq:gamma_a} [resp. Eq.~\eqref{eq:gamma_b}] a more explicitly formula for $\Gamma$ was given in the case $\Omega>0$ [resp. $\Omega\leq 0$].

\backmatter

\chapter{List of Symbols}

The following list is neither exhaustive nor exclusive, but may be helpful.
\begin{list}{}{%
\setlength{\labelwidth}{24mm}
\setlength{\leftmargin}{35mm}}

\item[$\openone_{\rm d}$\dotfill] The identity matrix of dimension {\rm d};

\item[$X$, $Y$, $Z$\dotfill] The Pauli matrices;

\item[$\sigma_1,\sigma_2,\sigma_3$\dotfill] Alternative notation for the Pauli matrices $X$, $Y$, and $Z$, respectively.

\item[$\mathcal{M}_{\rm d}$\dotfill] The algebra of complex matrices of dimension {\rm d};

\item[$M^{\sf T}$\dotfill] The transpose of the matrix $M$;

\item[$M^{{\sf T}_k}$\dotfill] The partial transpose of the matrix $M$ with respect to the $k-$th subsystem;

\item[$\tr M$\dotfill] The trace of the matrix $M$;

\item[$\tr_k M$\dotfill] The partial trace of the matrix $M$ with respect to the $k-$th subsystem;

\item[$\mathcal{I}_{\rm d}$\dotfill] The identity map on $\mathcal{M}_{\rm d}$;

\item[$\mathcal{C}$\dotfill] An arbitrary linear map from $\mathcal{M}_{\rm d}$ to $\mathcal{M}_{\rm d}$;

\item[$\mathfrak{C}$\dotfill] The Choi matrix of the map $\mathcal{C}$;

\item[$\mathcal{K}_{\rm d}^{\rm set}$\dotfill] The set of completely positive maps from $\mathcal{M}_{\rm d}$ to $\mathcal{M}_{\rm d}$;

\item[$\mathfrak{K}$\dotfill] The Choi matrix of a map $\mathcal{K}\in\mathcal{K}_{\rm d}^{\rm set}$;

\item[$\mathcal{Q}_{\rm d}^{\rm set}$\dotfill] The subset of trace preserving maps of $\mathcal{K}_{\rm d}^{\rm set}$;

\item[$\mathfrak{Q}$\dotfill] The Choi matrix of a map $\mathcal{Q}\in\mathcal{Q}_{\rm d}^{\rm set}$;

\item[$\widetilde{\mathcal{B}}_{\rm d}^{\rm set}$\dotfill] The subset $\mathcal{Q}_{\rm d}^{\rm set}$ with elements of positive partial transposed Choi matrix;

\item[$\mathcal{B}_{\rm d}^{\rm set}$\dotfill] The set of entanglement breaking and trace preserving maps from $\mathcal{M}_{\rm d}$ to $\mathcal{M}_{\rm d}$;

\item[$\mathfrak{B}$\dotfill] The Choi matrix of a map $\mathcal{B}\in\mathcal{B}_{\rm d}^{\rm set}$;

\item[${\sf H}_{\rm d}$\dotfill] Hilbert space of dimension {\rm d};

\item[$\{\ket{h_{\rm d}^\alpha}\}_{\alpha=1}^{\rm d}$\dotfill] An orthonormal bases of $\textsf{H}_{\rm d}$;

\item[$\ket{\Psi}$\dotfill] The unnormalized maximally entangled state $\sum_{\alpha=1}^{\rm d}{\ket{h_{\rm d}^\alpha}\otimes \ket{h_{\rm d}^\alpha}}\in \textsf{H}_{\rm d}\otimes\textsf{H}_{\rm d}$;

\item[$\{H_{\rm d}^\alpha\}_{\alpha=1}^{\rm d}$\dotfill] An orthonormal bases for the Hermitian matrices of dimension {\rm d} [generators of SU({\rm d})];

\item[$\F$\dotfill] The Uhlmann-Jozsa fidelity;

\item[$A,B,C{[\F]}$\dotfill] Three metrics related to $\F$;

\item[$\Fn$\dotfill] An alternative fidelity measure between mixed states;

\item[$C{[\Fn]}$\dotfill] A metric related to $\Fn$;

\item[$Q$\dotfill] The non-logarithmic variety of the quantum Chernoff bound;

\item[$\|M\|_{\rm tr}$\dotfill] The trace norm of a matrix $M$;

\item[$\|M\|_{\rm HS}$\dotfill] The Hilbert-Schmidt (or Frobenius) norm of a matrix $M$;

\item[$\|M\|$\dotfill] The spectral (or operator) norm of a matrix $M$;

\item[$\D$\dotfill] The trace distance;

\item[$\H$\dotfill] The metric induced by the Hilbert-Schmidt norm;

\item[$\O$\dotfill] The metric induced by the Spectral norm;

\item[$\mathscr{D}$\dotfill] An arbitrary function measuring the distance between density matrices;

\item[${[\rho_i]}_{i=1}^I$\dotfill] A sequence of $I$ density matrices;

\item[$\aver{\mathscr{D}}_{1,2}$\dotfill] Two possible averaging schemes for quantifying the distance $\mathscr{D}$ between sequences of density matrices;
\end{list}

\end{document}